\documentclass[12pt, a4paper,oldfontcommands]{memoir}
\usepackage[english, science, hyperref, dropcaps, submissionstatement]{ku-frontpage}

\makepagestyle{myplain}
\makeevenfoot{myplain}{}{}{}
\makeevenhead{myplain}{\thepage}{}{\leftmark}
\makeoddfoot{myplain}{}{}{}
\makeoddhead{myplain}{\leftmark}{}{\thepage}
\nouppercaseheads

% \makeevenhead{ruled}{\leftmark}{}{\thepage}
% \makeoddhead{ruled}{\thepage}{}{\leftmark}
% \makeevenfoot{ruled}{}{}{}
% \makeoddfoot{ruled}{}{}{}

\usepackage[T1]{fontenc}    % use 8-bit T1 fonts
\usepackage{times}
\usepackage{hyperref}       % hyperlink
\setlrmarginsandblock{4cm}{3cm}{*}
\setulmarginsandblock{3.5cm}{3.5cm}{*}
\checkandfixthelayout
\usepackage{booktabs}       % professional-quality tables
\usepackage{amsfonts}       % blackboard math symbols
\usepackage{microtype}      % microtypography
\usepackage{graphicx}
\usepackage{amsmath} %maths
\usepackage{amsthm}
\theoremstyle{plain}
\newtheorem{theorem}{Theorem}[section]
\newtheorem{corollary}[theorem]{Corollary}
\newtheorem{lemma}[theorem]{Lemma}
\newtheorem{remark}[theorem]{Remark}
\newtheorem{definition}[theorem]{Definition}
\newtheorem{example}[theorem]{Example}
\DeclareMathOperator{\Tr}{Tr}
\DeclareMathOperator{\Prob}{Prob}

\DeclareMathOperator{\id}{id}
\DeclareMathOperator{\Loc}{Loc}
\DeclareMathOperator{\Spec}{Spec}

\DeclareMathOperator{\AVP}{AVP}
\DeclareMathOperator{\Dec}{\Gamma^{\mathcal{D}}}
\DeclareMathOperator{\DecI}{Dec}
\DeclareMathOperator{\Enc}{\Gamma^{\mathcal{E}}}
\DeclareMathOperator{\EncI}{Enc}

\DeclareMathOperator{\EC}{EC}
\DeclareMathOperator{\Dist}{Dist}

\DeclareMathOperator{\tr}{tr}
\DeclareMathOperator{\CNOT}{CNOT}
\DeclareMathOperator{\SWAP}{SWAP}
\DeclareMathOperator{\gpi}{gpi}
\DeclareMathOperator{\gpii}{gpi2}
\DeclareMathOperator{\ms}{ms}
\usepackage[utf8]{inputenc} %useful to type directly diacritic characters
\usepackage{graphicx} % for inserting graphics
\usepackage{braket} %for bras and kets
\usepackage{caption}
\usepackage{physics}
\usepackage{mathtools}
\usepackage{appendix}
\usepackage{subcaption} %for subplots
\usepackage{mathrsfs}
\usepackage{bbm} %identity operator
\usepackage{float} %to keep figures in place
\usepackage{wrapfig}
\usepackage{xcolor} %so that I can edit color of Refs
\usepackage{quantikz}
\usetikzlibrary{shapes}
\usetikzlibrary{positioning}
\usepackage{listings}
\usepackage{colortbl}

\setlength\arraycolsep{2 pt}
\setcounter{tocdepth}{2}
\setcounter{secnumdepth}{2}

 \assignment{PhD thesis}
 \author{Paula Belzig}
 \title{Quantum communication \\ and fault-tolerance}
% % The following are only needed if the \author, \title, \subtitle, and \date
% % commands are not patchable. See the readme for more information.
%  \frontpageauthor{Paula Belzig}
 %  \frontpagetitle{Quantum communication \\ and fault-tolerance}
   %\title{Quantum communication \\ and fault-tolerance}
%\frontpagesubtitle{ }
 %\frontpagedate{Submitted: July 31, 2023}

% %\title{A thesis title}
 \subtitle{  }
% %\date{Submitted: \today}
 \advisor{Advisor: Matthias Christandl}
% %\frontpageimage{example.png}

\kupdfsetup{Quantum communication and fault-tolerance}{}{Paula Belzig}

\begin{document}
\pagestyle{myplain}
\setlength{\parskip}{7pt}
\setlength{\parindent}{0pt}

\DeclareRobustCommand
  \Compactcdots{\mathinner{\cdotp\mkern-1mu\cdotp\mkern-1mu\cdotp}}

    %\bibliographyunit[\chapter]
   
\begingroup
  \fontencoding{T1}\fontfamily{cmr}\selectfont
  \maketitle
\endgroup
\null\thispagestyle{empty}
\\
\hspace*{-1cm}\textbf{Paula Belzig}\\
\hspace*{-1cm}QMATH\\
\hspace*{-1cm}Department of Mathematical Sciences\\
\hspace*{-1cm}University of Copenhagen\\
\hspace*{-1cm}\href{mailto:pb@math.ku.dk}{pb@math.ku.dk}\\
\hspace*{-1cm}\href{mailto:paulabelzig@gmail.com}{paulabelzig@gmail.com}
\vspace{1cm}
\\
\hspace*{-1cm}\large{\textbf{Advisor:}}\normalsize\\
\hspace*{-1cm}\textbf{Matthias Christandl}, University of Copenhagen, Copenhagen, Denmark\\
\vspace{1cm}
\\
\hspace*{-1cm}\large{\textbf{Assessment committee:}}\normalsize\\
\hspace*{-1cm}\textbf{Laura Man\v{c}inska} (Chair), University of Copenhagen, Copenhagen, Denmark\\
\hspace*{-1cm}\textbf{Omar Fawzi}, École Normale Supérieure de Lyon, Lyon, France\\
\hspace*{-1cm}\textbf{Mario Berta}, RWTH Aachen University, Aachen, Germany\\
\vspace{5cm}
\\
\hspace*{-1.12cm}\begin{tabular}{ll}
    Date of submission: &\qquad July 31, 2023\\
    Edited for print: &\qquad \today\\
    Date of defense: &\qquad  September 12, 2023\\
\end{tabular}\\
\vspace{2.5cm}
\\
\hspace*{-1cm} \footnotesize\copyright~ \normalsize 2023 by the author \\
\hspace*{-1cm}ISBN 978-87-7125-216-3

\chapter*{Abstract}
\addcontentsline{toc}{chapter}{Abstract}
\thispagestyle{empty}

%In this thesis, we hope to shed light on quantum communication with and without entanglement, and with and without noise assumptions on the coding scheme gates.

In this thesis, we are interested in the limits of quantum communication with and without entanglement, and with and without noise assumptions on the communication setup. When a sender and a receiver are connected by a communication line that is governed by noise which is modelled by a quantum channel, they hope to design a coding scheme, i.e. messages and message decoders, in such a way that they are robust to this noise. The amount of message bits per channel use is called the achievable rate of the coding scheme, and the maximal achievable rate for a given quantum channel is called capacity of the quantum channel. Here, we are interested in coding schemes and capacities under various assumptions, in particular in the case where the sender and the receiver share quantum entanglement, which turns out to be the most natural generalization of the classical communication analogue that lies at the basis of many of our modern technologies. 

In order to communicate with a quantum computer, the encoding and decoding of a message for a quantum channel on a quantum computer is implemented by a sequence of gates from a finite gate set. For classical channels, these gates are assumed to be noise-free at the timescales relevant for communication, but this assumption is likely not justified for quantum circuits. In the first part of this thesis, we show that the encoding and decoding for entanglement-assisted communication can be protected against noise by the use of techniques from quantum fault-tolerance and error correction without significantly decreasing the capacity. In particular, we recover the usual, noiseless case as the probability of errors in the circuit approaches zero.

In the second part, we study a proposal for fault-tolerance on trapped ion quantum devices in order to gauge their potential for implementing communication protocols and to assess our assumptions about noise models, with a particular focus on the preparation of maximally entangled states.

In the final part of the thesis, we revisit the comparison between quantum communication setups where the sender and the receiver have access to entanglement and setups where they do not. It is well-known that the communication rates are improved when entanglement is present. Here, we show that this improvement can be bounded for a large class of finite dimensional channels, taking a step towards resolving a conjecture from one of the earliest papers on entanglement-assisted communication.

\chapter*{Resumé}
\addcontentsline{toc}{chapter}{Resumé}
\thispagestyle{empty}

I denne afhandling er vi interesseret i grænserne for kvantekommunikation med og uden sammenfiltring, og med og uden støjantagelser på kommunikationsopsætningen. Når en afsender og en modtager er forbundet med en kommunikationslinje, der er styret af støj, som er modelleret af en kvantekanal, håber de at udvikle et kodningsskema, dvs. beskeder og beskedafkodere, på en sådan måde, at de er robuste over for denne støj. Mængden af meddelelsesbits pr. kanalbrug kaldes kodningsskemaets opnåelige rate, og den maksimalt opnåelige rate kaldes kvantekanalens kapacitet. Her er vi interesseret i kodningsskemaer og kapaciteter under forskellige forudsætninger, især i det tilfælde, hvor afsender og modtager deler sammenfiltring, hvilket viser sig at være den mest naturlige generalisering af den klassiske kommunikations-analog, der danner grundlaget for mange af vores moderne teknologier.

For at kommunikere med en kvantecomputer implementeres kodningen og afkodningen af en besked til en kvantekanal på en kvantecomputer ved hjælp af en sekvens af porte fra et begrænset portmængde. For klassiske kanaler antages disse porte at være støjfrie på de tidsskalaer, der er relevante for kommunikation, men denne antagelse er sandsynligvis ikke berettiget for kvantekredsløb. I afhandlingens første del viser vi, at indkodning og afkodning til sammenfiltringsassisteret kommunikation kan beskyttes mod støj ved brug af teknikker fra kvantefejltolerance og fejlkorrektion uden at reducere kapaciteten væsentligt. Især genopretter vi det støjfri tilfælde, da sandsynligheden for fejl i kredsløbet nærmer sig nul.

I den anden del studerer vi et forslag til fejltolerance på ion-baserede kvanteenheder for at måle deres potentiale for implementering af kommunikationsprotokoller og for at vurdere vores antagelser om støjmodeller, med særligt fokus på maksimalt sammenfiltrede tilstande.

I afhandlingens sidste del ser vi igen på sammenligningen mellem kvantekommunikationsopsætninger, hvor afsender og modtager har adgang til sammenfiltring og opsætninger hvor de ikke har. Det er velkendt, at kommunikationsrater er højere, når der er sammenfiltring. Her viser vi, at denne forbedring kan afgrænses til en stor klasse af endeligt-dimensionelle kanaler, og tager et skridt i retning af at løse en formodning fra en af de tidligste artikler om sammenfiltringsassisteret kommunikation.

\chapter*{Contributions and structure}
\addcontentsline{toc}{chapter}{Contributions and structure}
\thispagestyle{empty}

%Parts of this thesis following the introduction to quantum Shannon theory and quantum fault-tolerance have appeared online on a preprint server and in conference proceedings.

\begin{itemize}
    \item Chapter~\ref{chapter-fteacap} is based on joint work with Matthias Christandl and Alexander Müller-Hermes. It has appeared on a preprint server under the identification number arXiv:2210.02939 \cite{BCMH22},
and a shortened version features in the proceedings of the 2023 International Symposium on Information Theory \cite{BCMH23SHORT}.
\item 
Chapter~\ref{chapter-aws} is based on an ongoing work with Matthias Christandl.
\item 
Chapter~\ref{chapter-divrad} is based on discussions with Matthias Christandl, Alexander Müller-Hermes, Marco Tomamichel, Lasse H. Wolff, Bergfinnur Durhuus and Andreas Winter, and the results are in preparation for publication.
\item Some objects and concepts in Chapter~\ref{chapter-preliminaries} are also introduced in our work in \cite{BCMH22}, which is reproduced in Chapter~\ref{chapter-fteacap} for self-containedness of the manuscript. Due to this, the introduced notation, theorems and definitions may have overlap.
\end{itemize}

%\noindent\textcolor{red}{Update [26 August 2024]: The results in Chapter~\ref{chapter-fteacap} have since been published at }

%\noindent\textcolor{red}{Update [26 August 2024]:  Chapter~\ref{chapter-divrad}}

\newpage
\tableofcontents
\thispagestyle{empty}

% \chapter{Introduction}

% Modern communication infrastructure...

% Chapter~\ref{chapter-preliminaries} gives an introduction into quantum Shannon theory in Section~\ref{sec-intro-comm} and an introduction into quantum fault tolerance in Section~\ref{sec-ft-intro}.

% In Chapter~\ref{chapter-fteacap}, we present a coding theorem for entanglement-assisted fault-tolerant capacity.

% In Chapter~\ref{chapter-aws}, we explore current quantum devices and small-scale error detection protocols.

% In Chapter~\ref{chapter-divrad}, we take a step towards proving a twenty-year old conjecture for a large class of quantum channels.

%The remarks in Section~\ref{sec-other-codes} are based on an ongoing work with... Pawel Mazurek.?
\chapter{Introduction}
\label{chapter-preliminaries}

The mathematical description of a world with quantum effects gives rise to a theory of information and computation that fundamentally differs from the classical notion which facilitates and simplifies our everyday lives. Developing and improving our understanding of the fundamental limitations as well as the potential of a quantum world has been the goal of theoretical and experimental research for decades, and some important aspects of this theory are the subject of this introductory chapter. In the later chapters of this thesis, we hope to contribute to this effort.

% Attempts to explore the potential and the fundamental limitations of a quantum world have led to and continue to lead to interesting research in 

Section~\ref{sec-intro-comm} introduces the fundamental mathematical and information theoretical concepts that will be utilized in the subsequent chapters of this thesis.

Section~\ref{sec-ft-intro} gives an overview over quantum fault-tolerance, highlighting important aspects and state-of-the-art techniques for useful computation with quantum systems.

In Section~\ref{sec-ft-and-comm}, we discuss the possibilites and pitfalls of combining the techniques from quantum Shannon theory and quantum fault-tolerance in order to study communication in the presence of noise.

\section{Quantum information theory}
\label{sec-intro-comm}

 We will present a brief overview of the mathematical representation of quantum states and quantum channels, before delving into how the unique properties of quantum systems can be utilized for manipulating and encoding information. For comprehensive introductions to quantum information theory, we refer to \cite{NC00,Watrous,Wilde13}.

%\subsection{Notions from quantum information theory}

\subsection{Quantum states and quantum channels}

Let $\mathcal{M}_d$ denote the matrix algebra of complex $d\times d$-matrices, and let $\mathcal{V}_d$ denote a complex $d$-dimensional vector space, which we will refer to as a Hilbert space of $d$-level quantum state vectors. For all of this thesis, unless stated otherwise, we will be considering finite dimensional spaces $d<\infty$.

In Dirac notation, we denote an element of the Hilbert space by $\ket{\phi}\in\mathcal{V}_d$. On this vector space, we denote the inner product by $\braket{\phi}{\psi}\in \mathbbm{C}$, and we define the outer product: $\ketbra{\phi}{\psi} \in \mathcal{M}_d$. We assign the space $\mathcal{V}_d$ a standard basis
$\{\ket{x}\}_{x=0,1,...,d-1}$, which is often called the \emph{computational basis} in quantum information theory.

A matrix $A\in \mathcal{M}_d$ is \emph{unitary} if $A^{\dagger}A=\mathbbm{1}_d$, where $\mathbbm{1}_d$ denotes the identity matrix in $\mathcal{M}_d$.
A matrix $B\in \mathcal{M}_d$ is \emph{Hermitian} if $B^{\dagger}=B$.
A matrix $C\in \mathcal{M}_d$ is \emph{positive semidefinite}, which we denote by $C\geq 0$, if it is Hermitian and has only non-negative eigenvalues. A matrix is \emph{positive} if it is Hermitian and has only positive eigenvalues.

A quantum state $\rho\in\mathcal{M}_d$ is a positive semidefinite Hermitian matrix with $\trace(\rho)=1$. These conditions ensure that the $d$ eigenvalues of a quantum state $\rho\in\mathcal{M}_d$ are non-negative and add to $1$, meaning that the set of eigenvalues forms a probability distribution.

%eigenvlaue prob dist

The set of quantum states is convex, and therefore, any state in it can be written as a convex combination of at most $d^2$ extremal points of the set. The extremal points of the set of quantum states are states of rank $1$, which we refer to as \emph{pure states}. The eigenvalues of pure states correspond to deterministic probability distributions.
States that are not pure are referred to as \emph{mixed states}.
One state of particular importance is the \emph{maximally mixed state} $\frac{\mathbbm{1}}{d}\in\mathcal{M}_d$ with a uniform distribution of eigenvalues.

In quantum many-body theory, quantum systems are governed by interactions with a larger environment, and often understood as pure states on a larger Hilbert space.
Composite Hilbert spaces are mathematically described by the tensor product, where $\rho_{AB}\in \mathcal{M}_{d_A} \otimes \mathcal{M}_{d_B}$. The subsystem $B$'s individual state is governed by the partial trace 
$\trace_A= \trace \otimes \id_{d_B}$, and analogous for subsystem $A$. We will often denote the subsystem's states by $\tr_A (\rho_{AB}) =\rho_B$ and $\tr_B (\rho_{AB}) =\rho_A$.

For any quantum state $\rho\in \mathcal{M}_{d_A}$, it is possible to construct a pure state $\ketbra{\phi}\in\mathcal{M}_{d_A}\otimes \mathcal{M}_{d_B}$ such that $\rho =\trace_B (\ketbra{\phi})$, and $\ketbra{\phi}\in\mathcal{M}_{d_A}\otimes \mathcal{M}_{d_B}$ is referred to as a \emph{purification} of $\rho$.
%A pure state $\ket{\phi}\in\mathcal{M}_{d_A}\otimes \mathcal{M}_{d_B}$ with $\rho =\trace_A (\ketbra{\phi}) $ is called a \textit{purification} of $\rho\in \mathcal{M}_{d_B}$.

A quantum state $\rho_{AB}\in  \mathcal{M}_{d_A} \otimes \mathcal{M}_{d_B}$ is called a \emph{product state} if $\rho_{AB}=\rho_A \otimes \rho_B$. A quantum state is called \emph{separable} if is a probabilistic mixture of product states, $ \rho_{AB}=\sum_x p_X(x) \rho^x_A \otimes \rho^x_B$ where $p_X(x)$ is a probability distribution. The set of product states is thus a subset of the set of separable states.

If a quantum state is not separable, then it is referred to as \emph{entangled}. Entanglement of two quantum systems is a quantum effect that plays a prominent role in quantum information processing, and will lie at the center of many of the questions of this thesis. We will also frequently refer to a particular example of an entangled state, the \emph{maximally entangled state} of two qubits $\phi_+=\ketbra{\phi_+} \in \mathcal{M}_2\otimes \mathcal{M}_2$, where $\ket{\phi_+}=\frac{1}{\sqrt{2}} (\ket{00}+\ket{11})$. %We will often refer to a particular state, the maximally entangled state $\ket{\phi_+}$

We refer to a quantum state as a classical state if it is diagonal in the computational basis, i.e. $\rho=\sum_x p_X(x) \ketbra{x}{x}$ with a probability distribution $p_X(x)$. We refer to a quantum state as classical-quantum, or a \emph{cq-state}, if $\rho_{AB}=\sum_x p_X(x) \ketbra{x}{x} \otimes \rho_x$ for some probability distribution $p_X(x)$ and quantum states $\rho_x\in\mathcal{M}_{d_B}$.

The transformations that a quantum system can undergo are mathematically described by quantum operations, which are linear, completely positive and trace non-increasing maps between the spaces of quantum states. A map $T :\mathcal{M}_{d_A}\rightarrow \mathcal{M}_{d_B}$ is \emph{trace non-increasing} if $\trace(T(\rho))\leq\trace(\rho)\ \forall \rho \in \mathcal{M}_{d_A}$ and \emph{trace preserving} if $\trace(T(\rho))=\trace(\rho)\ \forall \rho \in \mathcal{M}_{d_A}$. A map $T$ is \emph{completely positive} if $\id_{d_R} \otimes T$ is a positive map $\forall d_R$, which means that it preserves positive semi-definiteness. 

We refer to the subset of quantum operations that is strictly trace preserving as quantum channels. % We define quantum channels as completely positive and trace preserving maps $T: \mathcal{M}_{d_A}\rightarrow \mathcal{M}_{d_B}$.
The conditions of complete positivity and trace preservation ensure that quantum channels do indeed transform quantum states into quantum states.

%A map $T$ is \emph{unital} if $T(\mathbbm{1})=\mathbbm{1}$.
 
%When we consider quantum maps between classical and quantum objects, we will employ the notation from \cite[Section~II.A]{CMH20}. Probability distributions of $d$ elements are vectors in $\mathbbm{C}^d$ where each entry is positive and the sum of all entries equals $1$. Channels with classical input are defined as linear maps from $\mathbbm{C}^d$ that yield unit-trace positive semi-definite Hermitian matrices, and channels with classical output map unit-trace positive semi-definite Hermitian matrices to elements of $\mathbbm{C}^d$.

\subsection{Measures of distance and information}
\label{sec-measures-of-distance-and-information}

\subsubsection{Measures of distance}

Here, we will go over some of the many interesting and useful matrix norms and quantifiers of similarity between quantum states and channels. We begin by defining Schatten matrix-norms.

\begin{definition}
For any real number $p\geq 1$ and any matrix $A\in \mathcal{M}_d$, the Schatten $p$-norm of $A$ is given by
\[
\|A\|_{p} := \left(\Tr\left((A^{\dagger}A)^{\frac{p}{2}}\right)\right)^{\frac{1}{p}},
\]
%where $A^*$ denotes the conjugate transpose of $A$ and $\Tr$ denotes the trace of a matrix.
\end{definition}

For any matrix $A$ and any $p\in[0,\infty)$, the Schatten p-norm is non-negative ($\|A\|_{p} \geq 0$) and homogeneous ($\|\alpha A\|_{p} = |\alpha| \|A\|_{p}$ for any scalar $\alpha$).
For any matrices $A$ and $B$, the Schatten p-norm fulfills the triangle inequality \[\|A + B\|_{p} \leq \|A\|_{p} + \|B\|_{p},\]
and is sub-multiplicative:\[\|AB\|_{p}\leq \|A\|_{p} \|B\|_{p}.\]

For $p = 1$, this norm corresponds to the nuclear norm or the trace norm, which is the sum of the singular values of the matrix $A$. For $p = 2$, the Schatten $p$-norm is the square root of the sum of the squares of the entries of the matrix $A$, and corresponds to the Frobenius norm or the Hilbert-Schmidt norm. For $p \rightarrow \infty$, this norm corresponds to the spectral norm or the operator norm, which is the maximum singular value of the matrix $A$.

On finite dimensional spaces, these norms are equivalent, meaning that, for any $p,q$, there exists real numbers $0< C_1\leq C_2$ (depending on the dimension) such that $C_1 \|\cdot\|_p \leq \|\cdot\|_q \leq C_2\|\cdot\|_p$ \cite{Johnson12}.

The distance between two matrices can be defined as the Schatten $p$-norm of their difference. For objects from quantum theory, there are some distinguished ways to quantify their distance; in particular, it is often the case that a notion of distance arises as an operationally justified quantifier of success of a quantum computing task. This is commonly referred to as "having an operational interpretation".

The trace distance, which is commonly defined as half of the Schatten 1-norm of the difference between the matrices:
\[\|\rho-\sigma\|_{\trace}=\frac{1}{2}\| \rho - \sigma \|_1\]
is one example of a distance with an operational interpretation, as it naturally quantifies the success probability of distinguishing two quantum states $\rho$ and $\sigma$ \cite[Section~2.2.4]{NC00}.
%Due to hypothesis testing, there is a relation between trace distance and the probability of distinguishing two quantum states.

A quantifier of similarity between two quantum states is the fidelity of two quantum states. This measure is not a distance of quantum states in the mathematical sense; for equal states, it is equal to one and not zero. In fact, it induces a distance which is referred to as the purified distance.
For this reason, and because it has a particularly simple form for pure states, it is the preferred measure of distance in some scenarios.

% The relevance of quantum state fidelity is intimately related to the way in which quantum theory relates to the notion of purification and reference systems.

\begin{definition}
The fidelity of two quantum states $\rho\in\mathcal{M}_d$ and $\sigma\in\mathcal{M}_d$ is given by
    \[F(\rho,\sigma)=\|\sqrt{\rho}\sqrt{\sigma}\|_{1}^2.\]
If one of the states is pure, e.g. $\sigma=\ketbra{\phi}$, it holds that 
    \[F(\rho,\ketbra{\phi})= |\bra{\phi} \rho \ket{\phi}|^2,\]
    and if both states are pure, i.e. $\rho=\ketbra{\psi}$ and $\sigma=\ketbra{\phi}$, we have
    \[F(\ketbra{\psi},\ketbra{\phi})= |\braket{\psi}{\phi} |^2.\]
\end{definition}

It is not uncommon to see an alternative notation where the fidelity is defined without the square, for example in \cite{NC00}, but here, we will use the above definition.

\begin{theorem}[Uhlmann's theorem, \cite{Uhlmann76}]
    Let $\rho\in\mathcal{M}_d$ and $\sigma\in\mathcal{M}_d$ be quantum states. Then,
    \[F(\rho,\sigma)= \max_{\ket{\psi_{\rho}},\ket{\phi_{\sigma}}} |\braket{\psi_{\rho}}{\phi_{\sigma}}|^2\]
    where the maximization is taken over all purifications $\ketbra{\psi_{\rho}}{\psi_{\rho}}\in \mathcal{M}_d\otimes \mathcal{M}_d $ of $\rho$, and $\ketbra{\phi_{\sigma}}{\phi_{\sigma}}\in \mathcal{M}_d\otimes \mathcal{M}_d $ of $\sigma$.
\end{theorem}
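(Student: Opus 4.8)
The plan is to exploit a canonical description of all purifications together with the variational characterization of the trace norm. First I would fix the (unnormalized) maximally entangled reference vector $\ket{\Omega}=\sum_{i=0}^{d-1}\ket{i}\ket{i}\in\mathcal{V}_d\otimes\mathcal{V}_d$ and record the elementary identity $\bra{\Omega}(X\otimes\mathbbm{1})\ket{\Omega}=\Tr(X)$, valid for every $X\in\mathcal{M}_d$ and proved by expanding in the computational basis. This lets me convert overlaps of bipartite vectors into traces of single operators, which is the computational backbone of the argument.

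Next I would parametrize the purifications. Every vector in $\mathcal{V}_d\otimes\mathcal{V}_d$ equals $(M\otimes\mathbbm{1})\ket{\Omega}$ for a unique $M\in\mathcal{M}_d$, and a short computation gives $\Tr_2\bigl[(M\otimes\mathbbm{1})\ket{\Omega}\bra{\Omega}(M^{\dagger}\otimes\mathbbm{1})\bigr]=MM^{\dagger}$. Hence $(M\otimes\mathbbm{1})\ket{\Omega}$ purifies $\rho$ exactly when $MM^{\dagger}=\rho$, which by the polar decomposition happens precisely for $M=\sqrt{\rho}\,W$ with $W$ unitary. Thus the purifications of $\rho$ and $\sigma$ are $\ket{\psi_\rho}=(\sqrt{\rho}\,W\otimes\mathbbm{1})\ket{\Omega}$ and $\ket{\phi_\sigma}=(\sqrt{\sigma}\,V\otimes\mathbbm{1})\ket{\Omega}$ as $W,V$ range over the unitary group.

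Then I would compute the overlap directly. Using the trace identity above,
\[
\braket{\psi_\rho}{\phi_\sigma}=\bra{\Omega}\!\left(W^{\dagger}\sqrt{\rho}\sqrt{\sigma}\,V\otimes\mathbbm{1}\right)\!\ket{\Omega}=\Tr\!\left(\sqrt{\rho}\sqrt{\sigma}\,VW^{\dagger}\right),
\]
and since $U:=VW^{\dagger}$ ranges over all unitaries, the maximization collapses to $\max_{U}\bigl|\Tr(\sqrt{\rho}\sqrt{\sigma}\,U)\bigr|^{2}$. The final ingredient is the variational formula $\max_{U\ \mathrm{unitary}}|\Tr(AU)|=\|A\|_{1}$: the singular value decomposition $A=\sum_i s_i\ketbra{a_i}{b_i}$ shows that $U=\sum_i\ketbra{b_i}{a_i}$ attains $\sum_i s_i=\|A\|_1$, while $|\Tr(AU)|\le\|A\|_1\|U\|_{\infty}=\|A\|_1$ supplies the matching upper bound. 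Applying this with $A=\sqrt{\rho}\sqrt{\sigma}$ yields $\max|\braket{\psi_\rho}{\phi_\sigma}|^{2}=\|\sqrt{\rho}\sqrt{\sigma}\|_1^{2}=F(\rho,\sigma)$, as claimed.

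I expect the main obstacle to be the parametrization step rather than the algebra: one must check that $M=\sqrt{\rho}\,W$ genuinely exhausts all solutions of $MM^{\dagger}=\rho$ (including the rank-deficient case, where the polar factor is determined only up to a partial isometry that can nonetheless be completed to a full unitary), and that a $d$-dimensional ancilla already realizes every purification, so that restricting to $\mathcal{M}_d\otimes\mathcal{M}_d$ loses no generality. The upper bound in the trace-norm formula also quietly relies on von Neumann's trace inequality (equivalently, H\"older's inequality for Schatten norms), which I would invoke as the one nontrivial external fact.
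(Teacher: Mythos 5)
The paper does not prove this statement; it is quoted as a classical result with a citation to Uhlmann, so there is no in-paper argument to compare against. Your proof is the standard and correct one: the parametrization of purifications as $(M\otimes\mathbbm{1})\ket{\Omega}$ with $MM^{\dagger}=\rho$, the reduction of the overlap to $\Tr(\sqrt{\rho}\sqrt{\sigma}\,U)$ over unitaries $U$, and the variational identity $\max_{U}|\Tr(AU)|=\|A\|_1$ together give exactly $\|\sqrt{\rho}\sqrt{\sigma}\|_1^2$, which matches the (squared) fidelity convention used in this thesis. Your flagged caveats are the right ones and are both resolvable: the map $M\mapsto(M\otimes\mathbbm{1})\ket{\Omega}$ is a linear bijection of $\mathcal{M}_d$ onto $\mathcal{V}_d\otimes\mathcal{V}_d$ by dimension count and injectivity, so a $d$-dimensional ancilla exhausts all purifications allowed in the statement, and for square matrices the polar factor in $M=\sqrt{\rho}\,W$ can always be completed to a genuine unitary even when $\rho$ is rank-deficient.
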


It is easily seen from Uhlmann's theorem that the fidelity is symmetric in its arguments, i.e. $F(\rho,\sigma)=F(\sigma,\rho)$, and "faithful" in the sense that $F(\rho,\sigma)=1 \iff \rho=\sigma$. The fidelity of two quantum states is related to the trace distance in the following way:

\begin{theorem}[Fuchs-van de Graaf inequality, \cite{FvdG97}]\label{thm-fidelity-and-trace-dist}  Let $\rho\in\mathcal{M}_d$ and $\sigma\in\mathcal{M}_d$ be quantum states. Then, we have
\begin{equation*}1-\sqrt{F(\rho,\sigma)}\leq \frac{1}{2}\|\rho-\sigma\|_{1} \leq \sqrt{1-F(\rho,\sigma)}. \end{equation*}
\end{theorem}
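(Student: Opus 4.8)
The plan is to treat the two inequalities separately, since they call for genuinely different tools. Throughout I would write $f := \sqrt{F(\rho,\sigma)} = \|\sqrt{\rho}\sqrt{\sigma}\|_1$ for brevity, so that the claim reads $1 - f \leq \frac{1}{2}\|\rho-\sigma\|_1 \leq \sqrt{1 - f^2}$. The right-hand inequality I would obtain from Uhlmann's theorem, and the left-hand one from a reduction to a classical statement about probability distributions.

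For the upper bound I would first record the closed form of the trace distance between two \emph{pure} states. Writing $\ket{\phi} = \cos\theta\,\ket{\psi} + \sin\theta\,\ket{\psi^\perp}$ with $\cos\theta = |\braket{\psi}{\phi}|$ and $\ket{\psi^\perp}$ orthogonal to $\ket{\psi}$, the operator $\ketbra{\psi} - \ketbra{\phi}$ is supported on the two-dimensional span of $\ket{\psi},\ket{\psi^\perp}$, is Hermitian and traceless, and hence has eigenvalues $\pm\sin\theta$; therefore $\frac{1}{2}\|\ketbra{\psi}-\ketbra{\phi}\|_1 = \sqrt{1 - |\braket{\psi}{\phi}|^2}$. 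Next I would invoke Uhlmann's theorem to choose purifications $\ket{\psi}$ of $\rho$ and $\ket{\phi}$ of $\sigma$ that attain $|\braket{\psi}{\phi}|^2 = F(\rho,\sigma) = f^2$. Since $\rho,\sigma$ are the partial traces of $\ketbra{\psi},\ketbra{\phi}$ and the trace distance is contractive under the partial trace (a completely positive trace-preserving map, so this follows from the distinguishing interpretation recorded above), I would conclude $\frac{1}{2}\|\rho-\sigma\|_1 \leq \frac{1}{2}\|\ketbra{\psi}-\ketbra{\phi}\|_1 = \sqrt{1 - f^2}$, which is exactly the right-hand inequality.

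For the lower bound I would reduce to the classical case using the variational characterizations of both quantities over measurements: the trace distance equals the maximal total-variation distance $D(p,q) = \frac{1}{2}\sum_i |p_i - q_i|$ of the outcome distributions over all POVMs, while the fidelity $f$ equals the \emph{minimal} classical fidelity $\sum_i \sqrt{p_i q_i}$ over all POVMs. Fixing a POVM $\{E_i\}$ that attains the fidelity and setting $p_i = \tr(E_i\rho)$, $q_i = \tr(E_i\sigma)$, the elementary pointwise bound $\sqrt{p_i q_i} \geq \min(p_i,q_i)$ gives $f = \sum_i \sqrt{p_i q_i} \geq \sum_i \min(p_i,q_i) = 1 - D(p,q)$, using $\sum_i p_i = \sum_i q_i = 1$. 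Since $D(p,q) \leq \frac{1}{2}\|\rho-\sigma\|_1$ by the trace-distance characterization, this rearranges to $1 - f \leq \frac{1}{2}\|\rho-\sigma\|_1$.

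The main obstacle, and the step I would be most careful about, is the lower bound: it genuinely requires the variational formula for the fidelity, i.e. that the classical fidelity is \emph{minimized} (not maximized) over measurements and that this minimum reproduces $f$. This does not follow from Uhlmann's theorem or from contractivity the way the upper bound does — indeed monotonicity only bounds the reduced-state distance from above, so it points the wrong way for a lower bound on $\frac{1}{2}\|\rho-\sigma\|_1$. If I did not wish to cite that formula as a black box, I would prove it by combining Uhlmann's theorem with an explicit near-optimal measurement built from the purifications, which is where the real effort lies; by contrast the pure-state computation, the contractivity step, and the classical $\sqrt{p_i q_i}\geq\min(p_i,q_i)$ estimate are all routine.
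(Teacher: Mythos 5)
Your proof is correct and is essentially the canonical argument from the cited reference \cite{FvdG97} (the paper itself states this theorem with a citation and gives no proof): the upper bound via Uhlmann purifications, the pure-state formula $\frac{1}{2}\|\ketbra{\psi}-\ketbra{\phi}\|_1=\sqrt{1-|\braket{\psi}{\phi}|^2}$, and contractivity of the trace norm under partial trace; the lower bound via the Fuchs--Caves measurement characterization of $\sqrt{F}$ together with $\sum_i\sqrt{p_iq_i}\geq\sum_i\min(p_i,q_i)=1-D(p,q)$. You also correctly identify the one genuinely nontrivial ingredient (that the classical fidelity is \emph{minimized} over POVMs and attains $\|\sqrt{\rho}\sqrt{\sigma}\|_1$), and your handling of the paper's squared-fidelity convention is consistent throughout.
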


The  fidelity obeys the following triangle inequality:

\begin{corollary}\label{thm-triangle_ineq_for_fidelity}
Let $\rho\in\mathcal{M}_d$, $\sigma\in\mathcal{M}_d$ and $\tau\in\mathcal{M}_d$ be quantum states. Then, we have
\begin{equation*}
F(\rho,\sigma) \geq 1- \sqrt{1-F(\rho,\tau)}-\sqrt{1-F(\tau,\sigma)} .\end{equation*}
\end{corollary}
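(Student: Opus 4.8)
The plan is to reduce the statement to a genuine triangle inequality for the quantity $\sqrt{1-F}$ and then close it with a one-variable estimate. Rearranging, the claim is equivalent to $1-F(\rho,\sigma)\le \sqrt{1-F(\rho,\tau)}+\sqrt{1-F(\tau,\sigma)}$. Since $F(\rho,\sigma)\in[0,1]$ we have $1-F(\rho,\sigma)\le\sqrt{1-F(\rho,\sigma)}$, so it suffices to prove the sharper inequality
\begin{equation*}
\sqrt{1-F(\rho,\sigma)}\le \sqrt{1-F(\rho,\tau)}+\sqrt{1-F(\tau,\sigma)} .
\end{equation*}
It is worth noting that the Fuchs--van de Graaf bounds (Theorem~\ref{thm-fidelity-and-trace-dist}) together with the triangle inequality for $\|\cdot\|_1$ give only $1-\sqrt{F(\rho,\sigma)}\le \sqrt{1-F(\rho,\tau)}+\sqrt{1-F(\tau,\sigma)}$, i.e. the weaker conclusion with $\sqrt{F(\rho,\sigma)}$ in place of $F(\rho,\sigma)$; this is not enough, because $\sqrt{F}\ge F$ on $[0,1]$. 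Hence I would not route the proof through Fuchs--van de Graaf, but through Uhlmann's theorem.

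Concretely, I would introduce the Bures angle $A(\rho,\sigma):=\arccos\sqrt{F(\rho,\sigma)}\in[0,\tfrac{\pi}{2}]$ and prove that it obeys the triangle inequality. Fix a purification $\ket{t}$ of the middle state $\tau$. By Uhlmann's theorem I can choose purifications $\ket{r}$ of $\rho$ and $\ket{s}$ of $\sigma$ with $|\braket{r}{t}|=\sqrt{F(\rho,\tau)}$ and $|\braket{t}{s}|=\sqrt{F(\tau,\sigma)}$, while the same theorem gives $\sqrt{F(\rho,\sigma)}\ge|\braket{r}{s}|$ for these particular purifications. The ray angle $\arccos|\braket{\cdot}{\cdot}|$ between unit vectors is a metric, so $\arccos|\braket{r}{s}|\le \arccos|\braket{r}{t}|+\arccos|\braket{t}{s}|$, and monotonicity of $\arccos$ then yields $A(\rho,\sigma)\le A(\rho,\tau)+A(\tau,\sigma)$. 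Finally, using $\sqrt{1-F}=\sin A$ (valid since $A\in[0,\tfrac{\pi}{2}]$) and subadditivity of the sine, $\sin A(\rho,\sigma)\le \sin\!\big(A(\rho,\tau)+A(\tau,\sigma)\big)\le \sin A(\rho,\tau)+\sin A(\tau,\sigma)$, which is exactly the sharper inequality above; substituting it into $1-F(\rho,\sigma)\le\sqrt{1-F(\rho,\sigma)}$ proves the corollary.

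The main obstacle is precisely the gap between $\sqrt{F}$ and $F$: the ``free'' Fuchs--van de Graaf argument lands one square root short, so the real content is the Bures-angle (ray-angle) triangle inequality obtained by threading optimal Uhlmann purifications through the common middle state $\tau$. A secondary technical point I would have to handle with care is the sine step: the estimate $\sin(x+y)\le\sin x+\sin y$ is immediate when $A(\rho,\tau)+A(\tau,\sigma)\le\tfrac{\pi}{2}$, and in the complementary case one instead argues that $\sin A(\rho,\tau)+\sin A(\tau,\sigma)>1\ge\sin A(\rho,\sigma)$, so the inequality holds trivially there.
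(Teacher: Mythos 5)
Your proof is correct, but it takes a genuinely different route from the paper, and the comparison is instructive. The paper argues via Fuchs--van de Graaf exactly along the lines you dismiss: it chains $\sqrt{F(\rho,\sigma)}\geq 1-\tfrac{1}{2}\|\rho-\sigma\|_1\geq 1-\tfrac{1}{2}\|\rho-\tau\|_1-\tfrac{1}{2}\|\tau-\sigma\|_1\geq 1-\sqrt{1-F(\rho,\tau)}-\sqrt{1-F(\tau,\sigma)}$ and then prepends the step $F(\rho,\sigma)\geq\sqrt{F(\rho,\sigma)}$ to remove the square root on the left. That prepended step is reversed for $F\in(0,1)$ (on $[0,1]$ one has $\sqrt{F}\geq F$), so the paper's written proof only establishes the weaker bound with $\sqrt{F(\rho,\sigma)}$ in place of $F(\rho,\sigma)$ --- precisely the ``one square root short'' gap you identify. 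Your detour through Uhlmann's theorem and the Bures angle closes that gap legitimately: threading optimal purifications through the middle state $\tau$ gives the triangle inequality for $\arccos\sqrt{F}$, hence for the sine distance $\sqrt{1-F}$, and combining with $1-F\leq\sqrt{1-F}$ yields the corollary as stated. Your handling of the case $A(\rho,\tau)+A(\tau,\sigma)>\tfrac{\pi}{2}$ is also sound, since for $x,y\in[0,\tfrac{\pi}{2}]$ with $x+y>\tfrac{\pi}{2}$ one has $\sin x+\sin y\geq\cos y+\sin y\geq 1$. In short, your argument is heavier machinery but actually proves the stated inequality (indeed the sharper sine-distance triangle inequality), whereas the paper's shorter route, as written, does not.
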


\begin{proof} We have
\begin{equation*}\begin{split}
F(\rho,\sigma) &\geq \sqrt{F(\rho,\sigma)}\\ &\geq 1-\frac{1}{2}\|\rho-\sigma\|_1 \\& \geq1-\frac{1}{2}\|\rho-\tau\|_1 -\frac{1}{2}\|\tau-\sigma\|_1 \\&\geq 1- \sqrt{1-F(\rho,\tau)}-\sqrt{1-F(\tau,\sigma)}.\end{split}\end{equation*}
The second and fourth inequality are consequences of Theorem~\ref{thm-fidelity-and-trace-dist}, and the third inequality follows because the Schatten 1-norm obeys the triangle inequality.
\end{proof}

Quantum channels map quantum states to quantum states. Therefore, the norm of a quantum channel can be quantified by looking at the norm of the quantum states which the channels output. %Since this is highly dependant on the input state, we usually consider a worst-case scenario, i.e. the input state such that the output states are as different as possible. In general, t
This induces a norm for quantum channels in the following way:

\begin{definition}
For any $1\leq p,q\leq \infty$, and for any linear map $\Lambda:\mathcal{M}_{d_{A}}\rightarrow \mathcal{M}_{d_B}$, the induced $q\rightarrow p$-norm is given by
    \[\|\Lambda\|_{q\rightarrow p}= \sup_{\substack{x\in\mathcal{M}_{d_{A}} \\ x\neq 0}} \frac{\|\Lambda (x) \|_p}{\|x\|_q}.\]
\end{definition}

The associated measure of distance between two quantum channels $S:\mathcal{M}_{d_{A}}\rightarrow \mathcal{M}_{d_B}$ and $T:\mathcal{M}_{d_{A}}\rightarrow \mathcal{M}_{d_B}$, $\| T-S\|_{1\rightarrow 1}$, quantifies the distance between quantum channels as the distance between channel outputs for "worst-case" input states.% Similarly, we can define the channel fidelity 
%\[F(T,S)=\sup_{\rho \in \mathcal{M}_{d_A}} F(T(\rho),S(\rho)). \]

However, the various induced measures of this kind do not capture all aspects of quantum theory. In many quantum information processing tasks, we leverage the quantum effect of entanglement by considering input which is part of an entangled state on a larger system (e.g., the environment, a reference system).  

%; in particular, the quantum effect of entanglement. This is a consequence of an important quantum effect: entanglement with a reference system. In many information processing tasks, . For example, sometimes success will also 

This is why the appropriate distance will often be a distance that takes an environment into consideration, as was first noted by Kitaev in \cite{Kitaev97}:

\begin{definition}
   For two quantum channels $T:\mathcal{M}_{d_{A'}}\rightarrow \mathcal{M}_{d_B}$ and $S:\mathcal{M}_{d_{A'}}\rightarrow \mathcal{M}_{d_B}$, we define their diamond distance as
    \[\|T-S\|_{\diamond} =\sup_{d_A}\| \id_{d_A} \otimes (T-S)\|_{1\rightarrow 1}. \]
     %\[\|T-S\|_{\diamond} = \sup_{d_A} \sup_{\rho_{AB}} \| \id_{d_A} \otimes (T-S) (\rho_{AA'})\|_1 \]
\end{definition}

In line with conventions from \cite{Watrous04,JKP07,CMW16}, we refer to such norms with an extra reference system as stabilized norms. The diamond distance stabilizes at $d_A=d_B$, meaning that a higher dimensional reference system does not change the value \cite{Kitaev97,Watrous04}. 

% Similarly, we can define a stabilized channel fidelity
% \[F_{stab}(T,S)=\sup_{\rho_{AA'}\in\mathcal{M}_{d_A}\otimes \mathcal{M}_{d_{A'}}} F( (\id\otimes T)(\rho_{AA'}), (\id\otimes S)(\rho_{AA'})) .\]

\subsubsection{Measures of information}

One concept from quantum information theory which is of great interest in the course of this thesis is a quantum state's \emph{von Neumann entropy}. This entropy quantifies the inherent randomness of a given quantum state, which emerges from its eigenvalue probability distribution.

In classical information theory, an information source is represented by a random variable $X$ that takes values $x$ from a discrete set $\{x\}_{x=0,1,...D-1}$ according to a probability distribution $\{p_X(x)\}_{x=0,1,...D-1}$. Then, the Shannon entropy quantifies the inherent randomness of the probability distribution:

\begin{definition}
The (Shannon) entropy of a discrete random variable $X$ with probability
distribution $p_X(x)$ is
\[H(X)_{p_X}=-\sum_x p_X(x) \log(p_X(x)).\]
\end{definition}
Unless stated otherwise, we always take $\log$ to mean the logarithm with base $2$, which corresponds to measuring information in units of bits. We further follow the convention that $0\log(0)=0$.
For the special case of binary probability distributions, i.e. when $X$ is a Bernoulli random variable with parameter $p$, we refer to the associated entropy
\[h(p)=-p\log(p)-(1-p)\log(1-p)\]
as the binary entropy.

For quantum systems, a notion of entropy is defined in the following way:
\begin{definition}
The von Neumann entropy of a quantum state $\rho\in\mathcal{M}_{d_A}$ is
\[H(A)_{\rho}=-\trace(\rho\log(\rho)).\]
\end{definition}

For a quantum state $\rho\in\mathcal{M}_{d_A}$ with eigenvalues $\{p_X(x)\}_{x=0,1,...,d_A-1}$, the von Neumann entropy is equal to the Shannon entropy of the eigenvalue distribution, i.e. $H(A)_{\rho}=\sum_x p_X(x) \log(p_X(x))$.

This entropy and many quantifiers of information based on this entropy turn out to characterize how well certain quantum states or ensembles perform in a given task, and we will elaborate on this in Section~\ref{sec-shannon-theory}.

As quantum entropy is an important property of quantum states, there is also a related measure that quantifies the degree to which two states differ with respect to their inherent randomness, the \emph{quantum relative entropy} of two states, also referred to as their \emph{quantum divergence} or their \emph{Umegaki relative entropy}.

\begin{definition}\label{def-q-divergence} Let $\rho\in\mathcal{M}_d$ and $\sigma\in\mathcal{M}_d$ be quantum states. Then, their quantum relative entropy is given by
\begin{equation*}
D(\rho||\sigma)= \begin{cases}
 \trace\Big(\rho\big(\log(\rho)-\log(\sigma)\big)\Big)&\text{ if } supp(\sigma)\subseteq supp(\rho) \\
\infty &\text{else.}
\end{cases}
\end{equation*}
\end{definition}
The relative entropy of two quantum states is faithful, i.e. $D(\rho||\sigma)=0 \iff \rho=\sigma$. It is not symmetric under exchanging of the arguments, and is therefore not a metric. It is a special case of a quantum $f$-divergences, which is a quantum generalization of the concept of $f$-divergences from classical probability theory.

The relative entropy of two quantum states is related to their trace distance in the following way:

\begin{theorem}[Quantum Pinsker's inequality, \cite{HOT81,Rouze19}]\label{thm-pinsker-ineq}
Let $\rho\in\mathcal{M}_d$ and $\sigma\in\mathcal{M}_d$ be quantum states. Then, we have 
    \[D(\rho||\sigma)\geq \frac{1}{2\ln(2)} \| \rho-\sigma\|^2_1 .\]
\end{theorem}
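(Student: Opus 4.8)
The plan is to reduce the quantum statement to its classical counterpart by applying an optimal measurement, and then to invoke the classical Pinsker inequality. Recall that for two probability distributions $p,q$ on a finite alphabet the classical Pinsker inequality states that $D(p\|q)\geq \frac{1}{2\ln 2}\big(\sum_x |p(x)-q(x)|\big)^2$, where $D(p\|q)=\sum_x p(x)\log\frac{p(x)}{q(x)}$; this in turn is proved by first establishing the bound for Bernoulli distributions by elementary calculus, and then reducing the general case to the binary one by coarse-graining the alphabet into the set $\{x:p(x)\geq q(x)\}$ and its complement, where both the classical data-processing inequality and the fact that the optimal partition realizes the total variation distance are used.

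First I would exploit that the trace distance is saturated by a two-outcome projective measurement. Writing the Jordan--Hahn decomposition $\rho-\sigma=\Delta_+-\Delta_-$ with $\Delta_\pm\geq 0$ of mutually orthogonal support, and letting $P$ be the projector onto the support of $\Delta_+$, the measurement $\{P,\mathbbm{1}-P\}$ produces the classical distributions $p=(\Tr(P\rho),\Tr((\mathbbm{1}-P)\rho))$ and $q=(\Tr(P\sigma),\Tr((\mathbbm{1}-P)\sigma))$. Because $\Tr(\rho-\sigma)=0$, a short computation gives $\sum_x|p(x)-q(x)|=2\,|\Tr(P(\rho-\sigma))|=2\Tr(\Delta_+)=\|\rho-\sigma\|_1$, so this particular measurement loses nothing in the $1$-norm.

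Second I would relate the quantum and classical relative entropies through the data-processing inequality, i.e. the monotonicity of the relative entropy under completely positive trace-preserving maps. Applying the measurement channel $\rho\mapsto\sum_x \Tr(M_x\rho)\,\ketbra{x}{x}$ associated with $\{P,\mathbbm{1}-P\}$ can only decrease the relative entropy, hence $D(\rho\|\sigma)\geq D(p\|q)$. Chaining this with the classical Pinsker inequality and the norm identity of the previous step gives
\[D(\rho\|\sigma)\geq D(p\|q)\geq \frac{1}{2\ln 2}\Big(\sum_x|p(x)-q(x)|\Big)^2=\frac{1}{2\ln 2}\|\rho-\sigma\|_1^2,\]
which is exactly the claim.

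The main obstacle is the data-processing inequality itself: it is the genuinely quantum and technically deepest ingredient (in the spirit of the results of Lindblad and Uhlmann), whereas everything else is either classical or an immediate consequence of the Jordan--Hahn decomposition. A secondary bookkeeping point concerns the support condition in Definition~\ref{def-q-divergence}: when $D(\rho\|\sigma)=+\infty$ the inequality holds trivially, so one may assume throughout that the relative entropy is finite, which is precisely the regime in which the measured divergence $D(p\|q)$ is well behaved and monotonicity applies.
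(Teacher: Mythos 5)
Your argument is correct. Note that the paper does not prove Theorem~\ref{thm-pinsker-ineq} at all --- it is imported from the literature (\cite{HOT81,Rouze19}) --- so there is no in-paper proof to compare against; your route (Jordan--Hahn decomposition to build the optimal two-outcome measurement, the identity $\sum_x|p(x)-q(x)|=\|\rho-\sigma\|_1$ from $\Tr(\rho-\sigma)=0$, monotonicity of the relative entropy under the measurement channel, then classical Pinsker) is exactly the standard proof of this result, and the constants check out for the base-$2$ logarithm convention used in the paper. Your identification of the data-processing inequality as the one genuinely quantum ingredient, and your handling of the infinite-divergence case, are both appropriate.
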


% \begin{proof}
% A proof of the classical Pinsker's inequality can be found in \cite[Theorem~4.5]{Wu}. For an alternative version and discussion of techniques, see \cite{Canonne22}. A proof of the quantum Pinsker's inequality can be found in \cite{HOT81,Rouze19}. 
% \end{proof}

For quantum channels $T:\mathcal{M}_{d_{A'}}\rightarrow \mathcal{M}_{d_B}$ and $S:\mathcal{M}_{d_{A'}}\rightarrow \mathcal{M}_{d_B}$, we also define a channel divergence
\[D(T||S)=\sup_{\rho\in\mathcal{M}_{d_{A'}}} D(T(\rho)||S(\rho)), \]
and a stabilized channel divergence
\[D_{stab}(T||S)=\sup_{\rho_{AA'} \in\mathcal{M}_{d_{A}} \otimes\mathcal{M}_{d_{A'}}} D( (\id\otimes T)(\rho_{AA'}) ||(\id\otimes S)(\rho_{AA'})) .\]
%Without loss of generality, these optimization problems can, in fact be
These notions of distance are intimately connected to the information theoretic objects we study in Chapter~\ref{chapter-divrad}.

There is one important special case of the quantum relative entropy, which quantifies the difference between a two-system quantum state and the closest product state. This is a notable quantity not only because of its fundamental ties to quantum entropy, but also because it turns out to have an operational interpretation in the context of (quantum) Shannon theory in Section~\ref{sec-shannon-theory} and \ref{sec-quantum-shannon-theory}.

It is a quantum generalization of the following quantity, which characterizes how independent two random variables are:

\begin{definition}[Classical mutual information]
Let $X$ and $Y$ be discrete random variables with the joint probability
distribution $p_{XY}(x,y)$. Then, the mutual information of $X$ and $Y$ is given by
\[I(X:Y)_{p_{XY}} = H(X)_{p_{X}}+H(Y)_{p_{Y}}-H(XY)_{p_{XY}}.\]
\end{definition}

For quantum systems, the analogue quantity is defined in terms of the von Neumann entropy rather than the classical Shannon entropy:

\begin{definition}[Quantum mutual information]
Let $\rho_{AB}\in\mathcal{M}_{d_A}\otimes\mathcal{M}_{d_B}$ be a quantum state. Then, the quantum mutual information of the systems $A$ and $B$ is given by
\[I(A:B)_{\rho_{AB}} = H(A)_{\rho_{A}}+H(B)_{\rho_{B}}-H(AB)_{\rho_{AB}}.\]
\end{definition}

This is equivalent to the following formulation in terms of a relative entropy:

\begin{theorem}
    Let $\rho_{AB}\in\mathcal{M}_{d_A}\otimes\mathcal{M}_{d_B}$ be a quantum state. Then, we have
\[I(A:B)_{\rho_{AB}} = D(\rho_{AB}||\rho_A\otimes \rho_B).\]
\end{theorem}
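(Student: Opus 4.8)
The plan is to verify the identity
\[
I(A:B)_{\rho_{AB}} = D(\rho_{AB}||\rho_A\otimes \rho_B)
\]
by expanding the right-hand side directly using the definition of the quantum relative entropy and then matching it against the definition of the quantum mutual information in terms of von Neumann entropies. First I would note that $\rho_A\otimes\rho_B$ has full support whenever $\rho_{AB}$ does on the relevant subspace, so we are in the finite case of Definition~\ref{def-q-divergence} and may write
\[
D(\rho_{AB}||\rho_A\otimes\rho_B) = \trace\Big(\rho_{AB}\big(\log(\rho_{AB})-\log(\rho_A\otimes\rho_B)\big)\Big).
\]
The first term is immediately $-H(AB)_{\rho_{AB}}$ by the definition of the von Neumann entropy.

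The key step is to handle the logarithm of the tensor product. I would use the algebraic fact that $\log(\rho_A\otimes\rho_B) = \log(\rho_A)\otimes\mathbbm{1}_{d_B} + \mathbbm{1}_{d_A}\otimes\log(\rho_B)$, which follows from diagonalizing each factor and computing the logarithm on eigenvalues. Substituting this into the cross term splits it into two traces,
\[
\trace\big(\rho_{AB}(\log(\rho_A)\otimes\mathbbm{1}_{d_B})\big) + \trace\big(\rho_{AB}(\mathbbm{1}_{d_A}\otimes\log(\rho_B))\big).
\]
The next step is to collapse each of these using the partial trace: since the operator $\log(\rho_A)\otimes\mathbbm{1}_{d_B}$ acts trivially on $B$, tracing out $B$ first gives $\trace(\rho_A\log(\rho_A)) = -H(A)_{\rho_A}$, and symmetrically the second trace equals $-H(B)_{\rho_B}$.

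Assembling the pieces, the right-hand side becomes $-H(AB)_{\rho_{AB}} - (-H(A)_{\rho_A}) - (-H(B)_{\rho_B}) = H(A)_{\rho_A}+H(B)_{\rho_B}-H(AB)_{\rho_{AB}}$, which is exactly $I(A:B)_{\rho_{AB}}$ by its definition. The only point requiring care -- and the main obstacle if one wants full rigour -- is the support condition in Definition~\ref{def-q-divergence}: one should check that $\mathrm{supp}(\rho_A\otimes\rho_B)\supseteq\mathrm{supp}(\rho_{AB})$, which holds because the support of $\rho_{AB}$ is contained in $\mathrm{supp}(\rho_A)\otimes\mathrm{supp}(\rho_B)$, so both sides are finite and the manipulations with $\log$ are well defined on the relevant support. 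Everything else is a routine consequence of linearity of the trace and the marginalization property of the partial trace.
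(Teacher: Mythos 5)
Your proof is correct, and it is the standard argument: the paper itself states this theorem without proof, so there is nothing to compare against directly. The two ingredients you use — the identity $\log(\rho_A\otimes\rho_B)=\log(\rho_A)\otimes\mathbbm{1}_{d_B}+\mathbbm{1}_{d_A}\otimes\log(\rho_B)$ followed by collapsing the cross terms via the partial trace — are exactly the device the paper deploys later in the proof of Theorem~\ref{thm-stab-div-rad} ("the state in the second logarithm is separable, and therefore the subparts commute"), so your argument is consistent with how the author handles the analogous computation. Your attention to the support condition $\mathrm{supp}(\rho_{AB})\subseteq\mathrm{supp}(\rho_A)\otimes\mathrm{supp}(\rho_B)$ is a genuine point of rigour that the paper glosses over (note in passing that the inclusion in Definition~\ref{def-q-divergence} as printed is reversed relative to the standard convention you correctly use).
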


The difference between functions of two states, such as their von Neumann entropy or their mutual information (with respect to the same state), is governed by continuity bounds, many of which have been found and refined in \cite{AF04,Shirokov17,AE05}.

% \begin{definition}
%     \[D_{\alpha}(\rho||\sigma)=\trace\Big( \Big)\]
% \end{definition}

%\subsubsubsection{Quantum channels}

\subsection{Communication via a classical channel}
\label{sec-shannon-theory}

The field of information theory was pioneered by the mathematician Claude Shannon \cite{Shannon48}, whose contributions laid the foundation for many of our modern digital technologies. 
Beyond revolutionizing the fields of classical communication theory, cryptography and circuit design, his ideas would later provide a theoretical foundation that has been adapted and extended to quantum information theory. 

Let $X$ be a random variable that takes values $x$ from a discrete set $\{x\}_{x=0,1,...D-1}$ according to a probability distribution $\{p_X(x)\}_{x=0,1,...D-1}$. Then, if the random variable is sampled according to the probability distribution $p_X(x)$ and a specific realization $x$ is obtained, our surprise upon observing $x$ can be quantified by $-\log(p_X(x))$, which is often referred to as $x$'s surprisal, or its information content, measured in the unit of bits. The average information content of the set is then given by the Shannon entropy $H(X)_{p_X}$ of the probability distribution $\{p_X(x)\}_{x=0,1,...,D-1}$. %, whereby Shannon recognized the bit as a unit of classical information, and entropy as a measure of information.

When sampling a sequence of length $n$ from this random variable, we denote by $f(x)$ the observed number of times the symbol $x$ occurs in the sequence. Then, we can define a notion of sample entropy $-\sum_x \frac{f(x)}{n} \log{\frac{f(x)}{n}}$. As $n$ becomes large, the number of times we observe the symbol $x$ in the sequence will approach $f(x)\rightarrow p_X(x)n$ by the law of large numbers, meaning that the sample entropy approaches the entropy of the underlying probability distribution $p_X(x)$.

In other words, for a large enough sequence length $n$, it is highly likely that the sample entropy of the sequence is close to the underlying entropy of the probability distribution. This idea is made mathematically rigorous by defining typical sequences, which have a sample entropy which is close to the underlying entropy (i.e. similar information content), see \cite[Chapter~14]{Wilde13} for more details. As it turns out, if the variable is not a uniform random variable, the size of the set of typical sequences (the typical set) is significantly smaller than the size of the set of all sequences \cite[Property~14.3.2]{Wilde13}.

This idea naturally proves useful for data compression, i.e. for sending or storing shorter messages with the same (or at least highly similar) average information content. The concept of typical sequences suggests that it might be a good idea to place a focus on encoding typical sequences, rather than sequences that are less likely to appear. We would like to choose shorter codewords for more likely sequences; for example, we could choose codeword lengths in relation to a symbol's information content. Then, when we find a strategy to encode the information content of a sequence of length $m$ (as generated by the source) in a (shorter) sequence of length $n\leq m$, we say that we can compress the source's output at a rate of $m/n$. As it turns out, the optimal rate at which we can compress information with vanishing error probability is exactly given by the Shannon entropy \cite[Theorem~9]{Shannon48}, which is referred to as \emph{Shannon's data compression theorem}, \emph{Shannon's source coding theorem} or \emph{Shannon's noiseless channel coding theorem}.

If the symbols additionally become subject to noise during transmission, this leads to another of Shannon's fundamental results referred to as \emph{Shannon's noisy channel coding theorem}. In this theorem, Shannon showed that the reliable communication over noisy classical channels can be achieved using \emph{channel coding}.

Let $T:\mathcal{A}\rightarrow\mathcal{B}$ denote the classical channel that describes the noise affecting individual symbols. The input to the channel is given by a symbol $a\in\{0,1\}$ of random variable $A$, which is distributed according to the probability distribution $p_A(a)$, and the channel outputs a symbol $b\in\{0,1\}$ of the random variable $B$ according to the probability distribution $p_{B|A}(b|a)$. Then, a coding scheme consists of an encoder and decoder that encode and decode messages in a way such that they are transmitted correctly with high probability (see also Figure~\ref{fig-shannon-cap}).

\begin{figure}[htbp]
    \centering
    \includegraphics[width=10cm]{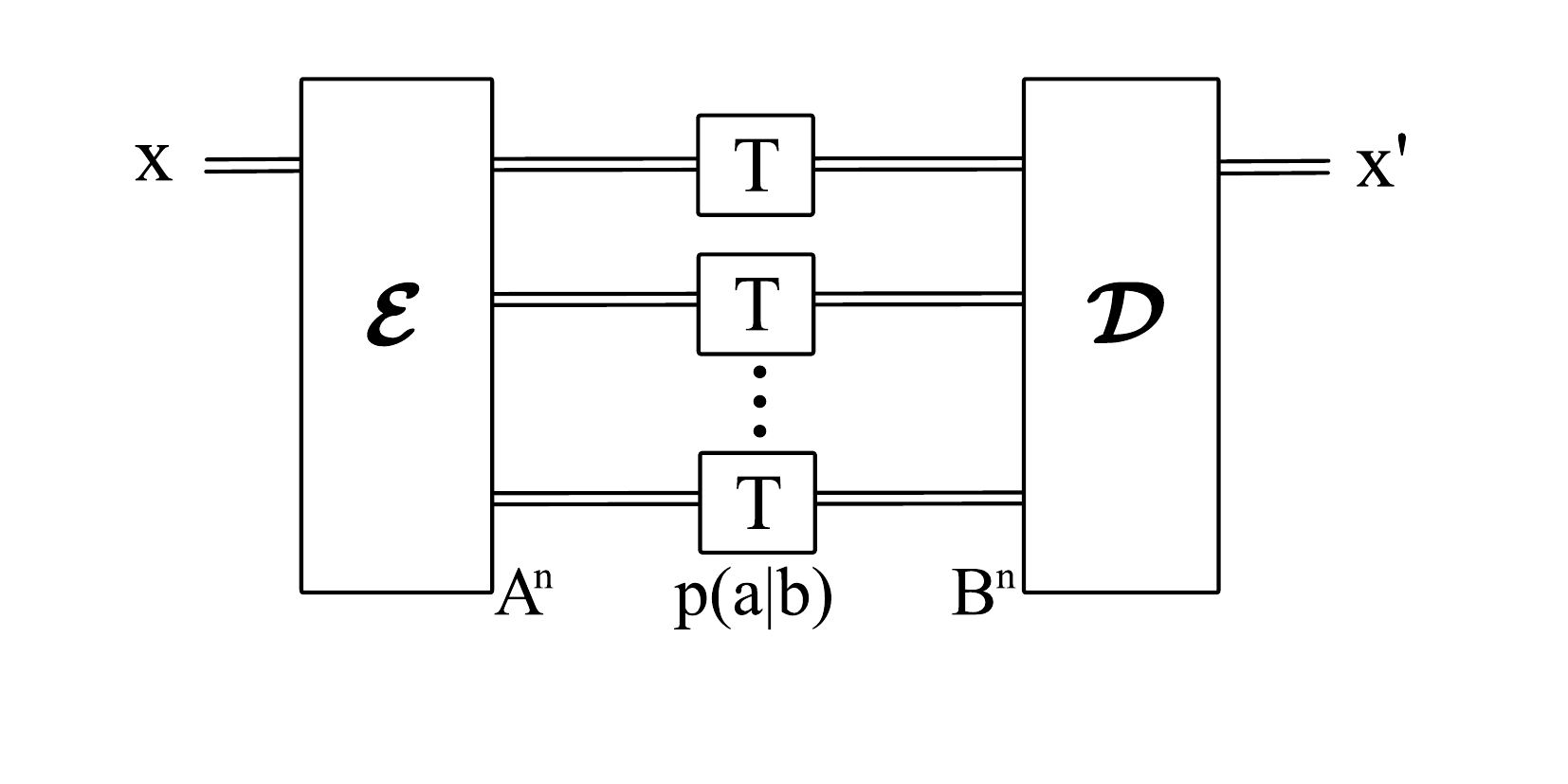}
    \caption{\textbf{Basic setup for classical communication over a classical channel.} An encoder encodes a message of $m$ bits, $x\in \{1,2,...,2^m\}$ into a codeword $a_n\in \mathcal{A}^n$ which is a bitstring of $n$ bits. The codeword is sent from the sender to the receiver by $n$ uses of the channel $T$. After receiving the noisy output $b_n  \in \mathcal{B}^n$ of $n$ applications of $T$, the decoder produces an estimate $x'\in\{1,2,...,2^m\}$, which should ideally be identical to the original message $x$. (All the information in this picture is classical, but we already use our convention where classical information transfer is indicated by double lines, while the transfer of quantum information is indicated by single lines.)}
    \label{fig-shannon-cap}
\end{figure}

\begin{definition}[Shannon channel coding scheme]
Let $T:\mathcal{A}\rightarrow\mathcal{B}$, be a classical channel, and let $n,m \in \mathbbm{N}$ and $\epsilon>0$.

Then, an $(n,m,\epsilon)$-coding scheme consists of an encoding operation $\mathcal{E}:\{1,2,...,2^m\}\rightarrow\mathcal{A}^n$ and a decoding operation $\mathcal{D}:\mathcal{B}^n\rightarrow \{1,2,...,2^m\}$ such that
\begin{align*}
    \Prob[x \neq \mathcal{D} \circ  T^{\otimes n } \circ \mathcal{E}  (x)] \leq \epsilon
\end{align*}
for all bit strings $x\in \{1,2,...,2^m\}$.
\end{definition}

A given coding scheme can then achieve a certain communication rate (although perhaps the rate is zero), and the tightest bound on the achievable rate is the channel's Shannon capacity:

\begin{definition}[Shannon capacity]
Let $T:\mathcal{A}  \rightarrow \mathcal{B} $ be a classical channel.

If, for every $n\in \mathbbm{N}$, there exists an $(n,m(n),\epsilon(n))$ Shannon channel coding scheme, then a rate $R\geq 0$ is called achievable for classical communication via the classical channel $T$ if
\begin{align*}
   R & \leq %\lim_
   \liminf_{n\rightarrow \infty} \Big\{ 
    \frac{m(n)}{n} \Big\}
    \end{align*}
and
\begin{align*}
   \lim_{n\rightarrow \infty} \epsilon(n) \rightarrow 0.
    \end{align*}
The Shannon capacity of $T$ is given by
\begin{align*}
  C_S(T) = \sup_{R} \{ R | & R \text{ achievable for classical communication via $T$} \}.
    \end{align*}
\end{definition}

Again, we may consider encoding messages in such a way that typical sequences have shorter codewords - but since the channel corrupts some symbols, we also have to consider which symbol sequences the channel may typically map to, and how the two sets are related. This gives rise to a notion of conditional typicality \cite[Section~14.6,14.9]{Wilde13}. It turns out that the size of the conditional typical set and the typical set of input sequences can be related to the rates at which we can communicate. As a consequence, the following non-asymptotic single-letter formula for the channel capacity can be obtained:

\begin{theorem}[{Shannon's noisy channel coding theorem, \cite[Section~13]{Shannon48}}] \label{thm-shannons-noisy-channel-coding}
Let $X$ denote a discrete random variable with a set of possible values $\{x\}$ that is distributed according to a probability distribution $p_X$. Let $T(X)=Y$ be the output of a classical channel $T$, which transmits the symbols according to a probability distribution $p_{Y|X}$. Then, we have:
    \[C_S(T) = \max_{p_X} I(X:Y)_{p_{XY}}.\]
\end{theorem}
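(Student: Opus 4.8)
The plan is to establish the two matching bounds $C_S(T) \le \max_{p_X} I(X:Y)_{p_{XY}}$ (the \emph{converse}) and $C_S(T) \ge \max_{p_X} I(X:Y)_{p_{XY}}$ (the \emph{achievability} or direct part) separately. For achievability I would fix an input distribution $p_X$, pick a target rate $R < I(X:Y)_{p_{XY}}$, set $m = \lfloor nR \rfloor$, and use a \emph{random coding} argument: draw a codebook of $2^m$ codewords $a_n(x) \in \mathcal{A}^n$, each with its $n$ symbols sampled i.i.d.\ from $p_X$. The decoder, on receiving $b_n$, outputs the unique message $x'$ whose codeword is jointly typical with $b_n$ and declares failure otherwise. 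Averaged over random codebooks, the error probability splits into two events. First, the transmitted pair $(a_n(x), b_n)$ may fail to be jointly typical; by the law of large numbers underlying the definition of conditional typicality this probability vanishes as $n \to \infty$. Second, some wrong codeword $a_n(x'')$, $x'' \ne x$, may be jointly typical with $b_n$; since such a codeword is independent of $b_n$, a counting argument over the jointly typical set bounds each such coincidence by $\approx 2^{-n I(X:Y)_{p_{XY}}}$, and a union bound over the $2^m - 1$ competitors yields a contribution $\approx 2^{n(R - I(X:Y)_{p_{XY}})}$, which vanishes precisely because $R < I(X:Y)_{p_{XY}}$. Thus the average error is below $\epsilon$ for large $n$, so some fixed codebook achieves average error $\le \epsilon$; an \emph{expurgation} step discarding the worst half of the codewords then produces a scheme with \emph{maximal} error $\le 2\epsilon$ — which is the per-message guarantee demanded by the coding-scheme definition — while reducing $m$ by only one bit, hence leaving the rate unchanged. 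Optimizing over $p_X$ makes every $R < \max_{p_X} I(X:Y)_{p_{XY}}$ achievable.

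For the converse I would take any sequence of $(n, m(n), \epsilon(n))$-schemes with $\epsilon(n) \to 0$, let the message $W$ be uniform so that $H(W) = m$, and consider the chain $W \to A^n \to B^n \to \hat W$ formed by encoding, transmission through $T^{\otimes n}$, and decoding. Writing $m = I(W:\hat W) + H(W \mid \hat W)$, I would bound $H(W \mid \hat W)$ by Fano's inequality in terms of $\epsilon(n)$, apply the data-processing inequality $I(W:\hat W) \le I(A^n : B^n)$, and finally use memorylessness of $T^{\otimes n}$ to get $I(A^n : B^n) \le \sum_{i=1}^n I(A_i : B_i) \le n \max_{p_X} I(X:Y)_{p_{XY}}$. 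Combining these yields $m(1 - \epsilon(n)) \le n \max_{p_X} I(X:Y)_{p_{XY}} + 1$; dividing by $n$ and letting $n \to \infty$ forces any achievable rate to satisfy $R \le \max_{p_X} I(X:Y)_{p_{XY}}$.

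The main obstacle lies in the achievability direction, specifically in making the estimate $2^{-n I(X:Y)_{p_{XY}}}$ for the second error event rigorous. This requires a careful \emph{joint typicality lemma} controlling both the size of the jointly typical set ($\approx 2^{n H(XY)}$) and the mass that the product marginal $p_X^{\otimes n} \otimes p_Y^{\otimes n}$ assigns to each jointly typical pair ($\approx 2^{-n(H(X)+H(Y))}$), whose ratio produces exactly the $I(X:Y)_{p_{XY}} = H(X)+H(Y)-H(XY)$ exponent. The converse, by contrast, is a clean chain of the entropic identities already assembled above, with Fano's inequality and the data-processing inequality as the only additional standard ingredients.
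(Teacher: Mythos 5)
Your proposal is correct and is essentially the standard random-coding/joint-typicality achievability argument (with expurgation to convert average into maximal error, as the paper's per-message fidelity definition requires) combined with the Fano-plus-data-processing converse. The paper itself does not reproduce a proof but defers to Shannon and to Wilde's Theorem~14.10.1, whose argument via conditional typicality is exactly the route you outline, so there is nothing further to compare.
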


The original proof was provided by Shannon in \cite[Section~13]{Shannon48}, but a detailed version of the arguments can also be found in \cite[Theorem~14.10.1]{Wilde13}.

This theorem establishes the mutual information between the the random variable $X$ and the output $Y=T(X)$ as the fundamental rate at which we can communicate via a channel $T$. 

\subsection{Communication via a quantum channel}\label{sec-quantum-shannon-theory}

When quantum effects occur, it is natural to ask how they change compression and  communication. On the one hand, for example, a transmission line may naturally have quantum properties. On the other hand, some quantum effects between distant systems, like quantum entanglement, suggest a notion of communication which is fundamentally different from classical communication.

%Many of the concepts and frameworks Shannon introduced have interesting quantum analogues and allow us to study communication in the presence of quantum effects. Quantum information theory thus became an important subfield of information theory that explores the fundamental limits of this model of communication.

Many ideas from classical information theory have quantum generalizations and analogues that appear in the study of quantum communication, such as a notion of quantum typicality \cite[Chapter~15]{Wilde13}. Based on these concepts, Schumacher \cite{Schumacher95} devised a scheme that achieves quantum data compression of a quantum information source at a rate which is equal to the von Neumann entropy of the source, suggesting the interpretation of the qubit as a unit of quantum information, and the von Neumann entropy as a measure of quantum information.

Some classical results, such as Theorem~\ref{thm-shannons-noisy-channel-coding}, do not have a straightforward counterpart in quantum information theory. Suitable methods and achievable rates for quantum channel coding depend on the communication setup and can vary significantly. Two such setups will be the main players in this thesis and will be described in the rest of this section.

\subsubsection{Classical communication via a quantum channel}

The ability of quantum systems to transmit classical information can be analyzed in a setup where the information is encoded into a quantum system's state instead of a classical bitstring, which serves as input into the channel (see also Figure~\ref{fig-clcap}).

\begin{definition}[Classical communication coding scheme]
Let $T:\mathcal{M}_{d_A} \rightarrow \mathcal{M}_{d_B}$ be a quantum channel, and let $n,m \in \mathbbm{N}$ and $\epsilon>0$.

Then, an $(n,m,\epsilon)$-coding scheme for classical communication consists of quantum channels $\mathcal{E}:\mathbbm{C}^{2^m}\rightarrow\mathcal{M}_{d_A}^{\otimes n}$ and $\mathcal{D}:\mathcal{M}_{d_A}^{\otimes n}\rightarrow \mathbbm{C}^{2^m}$ such that
\begin{align*}
    F\Big(X,\mathcal{D} \circ  T^{\otimes n } \circ \mathcal{E}  (X)\Big) \geq 1-\epsilon
\end{align*}
where $X=\ketbra{x}$, for all bit strings  $x\in \{0,1\}^m$.
\end{definition}

When a coding scheme exists for a channel $T$ for some $\epsilon>0$ and some combination of $m$ and $n$, then we can communicate via this channel at a rate of $m/n$ and with error $\epsilon$. The maximum such rate in the asymptotic limit with vanishing communication error is called the classical capacity of a quantum channel $T$.

\begin{figure}[!t]
\centering
       \includegraphics[width=10cm]{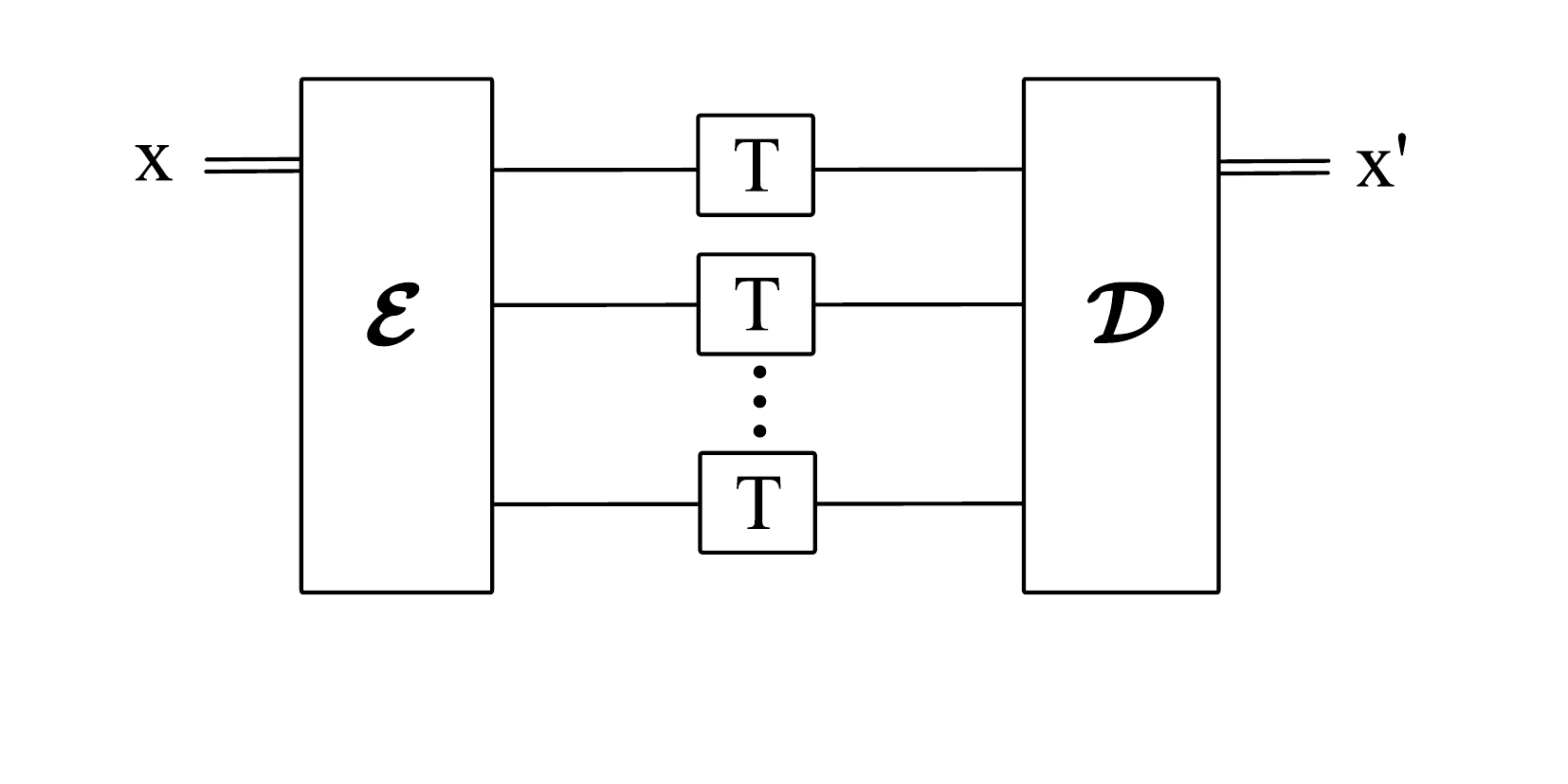}
\caption{\textbf{Basic setup for classical communication over a quantum channel.} The encoding map $\mathcal{E}$ maps a bit string $x$ of length $m$ to a quantum state in $\mathcal{M}_{d_A}^{\otimes n}$. The quantum channel $T$ acts on each of the $n$ subsystems, and the decoder $\mathcal{D}$ decodes the received quantum state to a bit string $x'$, which should be identical to the input bit string $x$. Note that classical information transfer is indicated by double lines (input into encoder, output of decoder), while the transfer of quantum states is indicated by single lines.}
\label{fig-clcap}
\end{figure}

\begin{definition}[Classical capacity of a quantum channel]
Let $T:\mathcal{M}_{d_A} \rightarrow \mathcal{M}_{d_B}$ be a quantum channel.

If, for every $n\in \mathbbm{N}$, there exists an $(n,m(n),\epsilon(n))$-coding scheme for classical communication, then a rate $R\geq 0$ is called achievable for classical communication via the quantum channel $T$ if
\begin{align*}
   R & \leq %\lim_
   \liminf_{n\rightarrow \infty} \Big\{ 
    \frac{m(n)}{n} \Big\}
    \end{align*}
and
\begin{align*}
   \lim_{n\rightarrow \infty} \epsilon(n) \rightarrow 0.
    \end{align*}
The classical capacity of $T$ is given by
\begin{align*}
  C(T) = \sup_{R} \{ R | & R \text{ achievable for classical communication via $T$} \}.
    \end{align*}
\end{definition}

To study whether the quantum nature of the channel could aid in information transfer or not, Holevo introduced the following helpful quantity, which is closely connected to the classical capacity because of the Holevo-Schumacher-Westmoreland theorem \cite{Holevo96,SW97} (restated in Theorem~\ref{thm-hsw-thm}).

\begin{definition}[Holevo quantity and Holevo capacity]\label{def-holevo-capacity}
Let $\{p_i,\rho_i\}$ be an ensemble of quantum states $\rho_i\in\mathcal{M}_{d_B}$ drawn according to a probability distribution $\{p_i\}$. The Holevo quantity of this ensemble is given by
\[\chi_H (\{p_i,\rho_i\}) =  H(B)_{\sum_i p_i \rho_i } -\sum_i p_i H(B)_{\rho_i}   .\]
Let $T:\mathcal{M}_{d_A} \rightarrow \mathcal{M}_{d_B}$ be a quantum channel.
The Holevo capacity of a quantum channel is given by
\[C_H (T) =\sup_{\{p_i,\rho_i\},\rho_i \in\mathcal{M}_{d_A}}  \Big\{H(B)_{\sum_i p_i T( \rho_i) } -\sum_i p_i H(B)_{ T(\rho_i)} ) \Big\} .\]
\end{definition}

\begin{theorem}[{\cite[Section~5]{SW01}}]
    The ensemble that maximizes $C_H(T)$ always exists for any quantum channel $T:\mathcal{M}_{d_A} \rightarrow \mathcal{M}_{d_B}$.
\end{theorem}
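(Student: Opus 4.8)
The plan is to show that the supremum defining $C_H(T)$ is attained by reducing the optimization to that of a continuous function on a compact domain, and then invoking the extreme value theorem. The first step is to observe that it suffices to consider ensembles of \emph{pure} input states. Writing the Holevo quantity in its relative-entropy form $\chi_H(\{p_i,\rho_i\}) = \sum_i p_i\, D\big(T(\rho_i)\,\|\,T(\bar\rho)\big)$ with $\bar\rho = \sum_i p_i \rho_i$, I would refine each $\rho_i$ into a pure-state decomposition $\rho_i = \sum_j q_{j|i}\,\psi_{ij}$. This leaves the average output state $\sum_i p_i T(\rho_i)$ unchanged, so the first entropy term $H(B)_{\sum_i p_i T(\rho_i)}$ is untouched, while concavity of the von Neumann entropy gives $\sum_{i,j} p_i q_{j|i}\, H(B)_{T(\psi_{ij})} \le \sum_i p_i\, H(B)_{T(\rho_i)}$. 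Hence passing to pure states never decreases $\chi_H$, and the supremum may be taken over pure-state ensembles only.

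Second, I would bound the number of signal states required. The set of pure states of $\mathcal{M}_{d_A}$ lives inside the real vector space of Hermitian matrices, of dimension $d_A^2$, and for a \emph{fixed} average input state $\bar\rho$ the quantity to be maximized reduces to $-\sum_i p_i\, H(B)_{T(\rho_i)}$ subject to the affine constraint $\sum_i p_i \rho_i = \bar\rho$. A Carathéodory/Fenchel--Eggleston argument applied to the graph $\{(\rho,-H(B)_{T(\rho)})\}$ over the (connected) set of pure states then shows that the optimum over this constraint set is realized using at most $d_A^2$ pure states. Consequently the full supremum equals the supremum over ensembles $\{p_i,\psi_i\}_{i=1}^{N}$ with a fixed finite cardinality $N = d_A^2$.

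Third, with $N$ fixed the optimization domain is compact: it is the product of $N$ copies of the space of pure states (a compact manifold, the complex projective space $\mathbb{CP}^{d_A-1}$) with the probability simplex $\Delta_{N-1}$. The map sending such an ensemble to $\chi_H$ is continuous, since the von Neumann entropy is continuous on the compact set of density matrices, the channel $T$ is continuous, and each contribution $p_i\, H(B)_{T(\psi_i)}$ vanishes continuously as $p_i \to 0$ because $H$ is bounded. The extreme value theorem then guarantees that this continuous function attains its maximum on the compact domain, which is exactly the asserted optimal ensemble.

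I expect the main obstacle to be the second step, namely rigorously establishing the finite bound on the number of ensemble states: the pure-state reduction is a short concavity computation and the compactness/continuity argument is routine once $N$ is finite, but without a cardinality bound the space of all ensembles is not compact and the extreme value theorem does not apply directly. Some care is also needed at the boundary of the simplex, to confirm that $\chi_H$ extends continuously when some probabilities vanish, so that no optimizing weight can escape into an unbounded number of vanishing-weight states.
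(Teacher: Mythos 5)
Your proof is correct. Note that the thesis does not actually prove this statement itself --- it is stated with a citation to Schumacher and Westmoreland --- so there is no in-paper argument to compare against; your write-up supplies the standard existence proof in essentially the form found in the cited literature: reduction to pure-state ensembles by concavity of the output entropy, a Fenchel--Eggleston--Carath\'eodory bound of $d_A^2$ on the ensemble size (applied to the connected set $\left\{\left(\psi,-H(B)_{T(\psi)}\right)\right\}$ so that both the average input and the average output entropy are preserved), and then compactness of $(\mathbb{CP}^{d_A-1})^{N}\times\Delta_{N-1}$ together with continuity of $\chi_H$. All three steps are sound, and you correctly identify that the cardinality bound is the load-bearing step without which the domain is not compact. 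One minor remark: invoking the relative-entropy form $\chi_H=\sum_i p_i D\big(T(\rho_i)\,\|\,T(\bar\rho)\big)$ in the first step is unnecessary, since your actual argument proceeds entirely from the entropy-difference form (the averaged output is unchanged and the subtracted term can only decrease under refinement); also, the boundary of the simplex needs no special treatment, as $\sum_i p_i H(B)_{T(\psi_i)}$ is already continuous on the whole product domain as a finite sum of products of continuous bounded functions.
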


% \begin{proof}
%     A proof can be found in \cite[Section~5]{SW01}.
% \end{proof}

There have been alternative formulations of the Holevo capacity. Here, we will introduce two important alternative formulations and prove that they are equivalent to the definition. The first is a formulation in terms of a mutual information with a supremum over cq-state, highlighting the similarity and difference to the classical scenario. The second is a formulation in terms of a relative entropy which will play an integral role in Chapter~\ref{chapter-divrad}.

\begin{theorem}[Alternative formulation 1]\label{thm-class-cap-1}
   For any quantum channel $T:\mathcal{M}_{d_{A'}} \rightarrow \mathcal{M}_{d_B}$, we have
\[C_H(T)=\max_{\rho_{AA'}\text{ a cq state}} I(B:A)_{(\id\otimes T)(\rho_{AA'})}.\]
\end{theorem}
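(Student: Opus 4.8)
The plan is to exhibit an explicit, value-preserving correspondence between the ensembles appearing in the definition of $C_H(T)$ and the cq-states appearing on the right-hand side, so that the two optimizations range over the same set of numbers. Given an ensemble $\{p_i,\rho_i\}$ with $\rho_i\in\mathcal{M}_{d_{A'}}$, I would associate to it the cq-state
\[\rho_{AA'}=\sum_i p_i \ketbra{i}{i}_A \otimes (\rho_i)_{A'},\]
where $\{\ket{i}\}$ is an orthonormal basis of a classical register $A$ whose dimension equals the number of ensemble elements. Conversely, every cq-state is of exactly this form, and reading off $\{p_i,\rho_i\}$ from its block structure recovers an ensemble of states in $\mathcal{M}_{d_{A'}}$. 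This makes the correspondence a bijection between the two feasible sets, so it suffices to match the objective values.

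The key computation is to show that under this correspondence the mutual information of the channel output equals the Holevo quantity of the image ensemble, i.e.
\[I(B:A)_{(\id\otimes T)(\rho_{AA'})}=\chi_H(\{p_i,T(\rho_i)\}).\]
Applying $\id\otimes T$ gives $(\id\otimes T)(\rho_{AA'})=\sum_i p_i \ketbra{i}{i}_A\otimes T(\rho_i)_B$, which I would decompose term by term via $I(A:B)=H(A)+H(B)-H(AB)$. The marginal on $A$ is $\sum_i p_i\ketbra{i}{i}$, so $H(A)=-\sum_i p_i\log(p_i)$, and the marginal on $B$ is the average output $\sum_i p_i T(\rho_i)$, contributing $H(B)_{\sum_i p_i T(\rho_i)}$.

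The main step, and the only place where a genuine calculation is needed, is the joint entropy $H(AB)$. Since the state is block-diagonal across the orthogonal blocks labelled by $i$, its eigenvalues are $\{p_i\lambda_{ij}\}$, where $\{\lambda_{ij}\}_j$ are the eigenvalues of $T(\rho_i)$. Expanding $-\sum_{i,j}p_i\lambda_{ij}\log(p_i\lambda_{ij})$ and separating the logarithm yields $H(AB)=-\sum_i p_i\log(p_i)+\sum_i p_i H(B)_{T(\rho_i)}$, that is, $H(A)$ plus the average of the output entropies. Substituting the three entropies into $I(A:B)$ then cancels the $H(A)$ contribution and leaves exactly $\chi_H(\{p_i,T(\rho_i)\})$, as claimed.

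With this identity in hand the theorem follows immediately: taking the supremum of both sides over all ensembles (equivalently, over all cq-states) turns the right-hand side into $C_H(T)$ by Definition~\ref{def-holevo-capacity} and the left-hand side into $\max_{\rho_{AA'}} I(B:A)$, with the supremum attained because an optimal ensemble exists. I expect the only real subtlety to be bookkeeping: verifying that allowing an arbitrary-dimensional classical register $A$ on the cq-state side corresponds precisely to allowing arbitrarily many ensemble elements on the Holevo side, so that neither optimization is secretly larger than the other.
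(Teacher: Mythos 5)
Your proposal is correct and follows essentially the same route as the paper: both identify ensembles with cq-states, expand $I(A:B)=H(A)+H(B)-H(AB)$, and show that $H(AB)$ decomposes as $H(\{p_i\})$ plus the average output entropy so that the $H(\{p_i\})$ terms cancel, leaving the Holevo quantity. The only cosmetic difference is that you justify the joint-entropy step by an explicit eigenvalue computation for the block-diagonal state, whereas the paper invokes orthogonality of the classical register states and purity of $\ketbra{i}{i}$.
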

\begin{proof}
This is a consequence of the properties of cq-states:
   \begin{align*}
      & \max_{\rho\text{ a cq state}} I(B:A)_{(\id\otimes T)(\rho)} \\&\hspace{1cm} = \max_{\{p_i,\rho_i\}} I(B:A)_{(\id\otimes T)(\sum_i p_i \ketbra{i}\otimes  \rho_i)}
       \\& \hspace{1cm} =\max_{\{p_i,\rho_i\}} \Big\{ H(A)_{\sum_i p_i \ketbra{i}} + H(B)_{\sum_i p_i T( \rho_i) } - H(AB)_{\sum_i p_i \ketbra{i}\otimes  T(\rho_i))}  \Big\}\\&\hspace{1cm} 
       =\max_{\{p_i,\rho_i\}}  \Big\{ H(\{p_i\} ) + H(B)_{\sum_i p_i T( \rho_i) } -  \sum_i p_i H(AB)_{ \ketbra{i}\otimes T(\rho_i)} -H(\{p_i\} )  \Big\}
       \\&\hspace{1cm} 
       =\max_{\{p_i,\rho_i\}}  \Big\{H(\{p_i\} ) + H(B)_{\sum_i p_i T( \rho_i) } -\sum_i p_i H(B)_{ T(\rho_i)} -H(\{p_i\} ) \Big\}
       \\&\hspace{1cm} 
       =\max_{\{p_i,\rho_i\}}  \Big\{  H(B)_{\sum_i p_i T( \rho_i) } -\sum_i p_i H(B)_{ T(\rho_i)}   ) \Big\}\\& \hspace{1cm}=C_H(T)
   \end{align*}
   The first equality is by definition of the set of cq-states; the second equality is by definition of the quantum mutual information. The third equality is true due to the orthogonality of the classical states on system $A$. The fourth equality is true because $\ketbra{i}\otimes T(\rho_i)$ is separable, which implies $H(AB)_{\ketbra{i}\otimes T(\rho_i)}=H(A)_{\ketbra{i}}+ H(B)_{T(\rho_i)}= H(B)_{T(\rho_i)}$, because $\ketbra{i}$ is pure.
\end{proof}

In Theorem~\ref{thm-class-cap-2}, we give an alternative formulation in terms of relative entropy that was first studied in \cite{SW01}, and subsequently in \cite{TT15}. The proof is less straight-forward and will require two extra lemmata (\ref{thm-chi-as-D} and \ref{thm-equal-distance-prop}) that are given below:

\begin{lemma}[Average of relative entropies] \label{thm-chi-as-D} Let $\{p_i,\rho_i\}$ be an ensemble of quantum states $\rho_i\in\mathcal{M}_d$, associated with a probability $p_i$. Then, we have
    \[\chi_H(\{p_i,\rho_i\}) =\sum_i p_i D(\rho_i||\sum_j p_j \rho_j) .\]
\end{lemma}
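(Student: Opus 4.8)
The plan is to expand the right-hand side directly from Definition~\ref{def-q-divergence} and then collapse the resulting sums using linearity of the trace. Write $\bar\rho := \sum_j p_j\rho_j$ for the average state of the ensemble, so that the claim reads $\chi_H(\{p_i,\rho_i\}) = \sum_i p_i D(\rho_i\|\bar\rho)$. Before computing, I would check that each term is finite: since $\bar\rho$ is a convex combination with the $p_i$ as weights, one has $\mathrm{supp}(\rho_i)\subseteq\mathrm{supp}(\bar\rho)$ for every $i$ in the support of the distribution, so each $D(\rho_i\|\bar\rho)$ is finite and all the logarithms appearing below are defined on the relevant subspaces.

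Next, substituting the definition of the relative entropy and splitting the trace gives
\[
\sum_i p_i D(\rho_i\|\bar\rho) = \sum_i p_i \trace\!\big(\rho_i\log\rho_i\big) - \sum_i p_i \trace\!\big(\rho_i\log\bar\rho\big).
\]
The first sum is exactly $-\sum_i p_i H(B)_{\rho_i}$ by the definition of the von Neumann entropy. The key step is the second sum: by linearity of the trace the factor $\log\bar\rho$ can be pulled outside the $i$-sum, so that $\sum_i p_i\trace(\rho_i\log\bar\rho) = \trace\!\big(\bar\rho\log\bar\rho\big) = -H(B)_{\bar\rho}$. Combining the two contributions yields $H(B)_{\bar\rho} - \sum_i p_i H(B)_{\rho_i}$, which is precisely $\chi_H(\{p_i,\rho_i\})$ by Definition~\ref{def-holevo-capacity}.

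Since the argument is essentially a one-line algebraic rearrangement, there is no real obstacle in the computation itself; the only point requiring care is the finiteness/support bookkeeping noted above, which guarantees that the manipulation of $\log\bar\rho$ is legitimate and that no infinite terms are silently discarded.
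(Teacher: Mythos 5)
Your proof is correct and follows essentially the same route as the paper's: both sides reduce to the identity $\sum_i p_i\trace(\rho_i\log\bar\rho)=\trace(\bar\rho\log\bar\rho)$ via linearity of the trace, with the paper simply running the computation from the Holevo quantity to the sum of divergences rather than the reverse. Your explicit remark that $\mathrm{supp}(\rho_i)\subseteq\mathrm{supp}(\bar\rho)$ guarantees finiteness is a small but welcome addition that the paper leaves implicit.
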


\begin{proof}
For any ensemble $\{p_i,\rho_i\}$, we have that
    \begin{equation*}\begin{split}
     \chi_H(\{p_i,\rho_i\}) & = H(B)_{\sum_i p_i  \rho_i } -\sum_i p_i H(B)_{ \rho_i}   ) \\&   =  - \trace\Big(\sum_i p_i \rho_i \log(\sum_j p_j\rho_j)\Big) +\sum_k p_k \trace\Big(\rho_k\log(\rho_k) \Big) 
        \\&   = \sum_i p_i ( - \trace\Big(\rho_i \log(\sum_j p_j  \rho_j)\Big) + \trace\Big(\rho_i\log(\rho_i) \Big)
        \\&   =    \sum_i p_i ( \trace\Big( \rho_i \big(\log(\rho_i)- \log(\sum_j p_j \rho_j) \big) \Big) 
        \\&  =  \sum_i p_i D(\rho_i||\sum_j p_j \rho_j)  .
\end{split}\end{equation*}
The first two equalities are obtained by explicitly writing out the von Neumann entropy. The third and fourth equalities are due of the linearity of the trace, and the final equality is true by the definition of the quantum state divergence, Definition~\ref{def-q-divergence}.
\end{proof}

\begin{lemma}[{Equal distance property, \cite[Section~5]{SW01}}] \label{thm-equal-distance-prop}
    Suppose $\rho^*_k\in\mathcal{M}_d$ is one of the states in the ensemble $\{p^*_i,\rho^*_i\}$ that maximizes the Holevo quantity, i.e. we have $\chi_H(\{p^*_i,\rho^*_i\})=\max_{\{p_i,\rho_i\}}\chi_H(\{p_i,\rho_i\})$.
    Then, we have that \[D(\rho^*_k||\sum_j p^*_j \rho^*_j)= \chi_H(\{p^*_i,\rho^*_i\}).\]
This also implies that $D(\rho^*_k||\sum_j p^*_j \rho^*_j)= \chi_H(\{p^*_i,\rho^*_i\})$ $\forall k$, since $\chi_H(\{p^*_i,\rho^*_i\})$ is equal to the average in Lemma~\ref{thm-chi-as-D}.
\end{lemma}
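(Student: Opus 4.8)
The plan is to reduce the whole statement to a single one-sided bound, namely
$D(\rho^*_k || \bar\rho^*) \le \chi_H(\{p^*_i,\rho^*_i\})$ for every $k$, where I abbreviate $\bar\rho^* := \sum_j p^*_j \rho^*_j$, and then to upgrade it to equality by averaging. Indeed, Lemma~\ref{thm-chi-as-D} gives $\chi_H(\{p^*_i,\rho^*_i\}) = \sum_k p^*_k\, D(\rho^*_k || \bar\rho^*)$: if each summand is at most $\chi_H$ while their $p^*_k$-weighted mean equals $\chi_H$, then every term with $p^*_k>0$ must attain the bound with equality. So the content of the lemma is exactly the one-sided inequality, which is a first-order optimality condition for the maximizing ensemble. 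Since dropping a state with $p^*_k=0$ changes neither $\chi_H$ nor $\bar\rho^*$, I may assume $p^*_k>0$ for all $k$; then $\mathrm{supp}(\rho^*_k)\subseteq\mathrm{supp}(\bar\rho^*)$, so $D(\rho^*_k || \bar\rho^*)$ is finite.

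To prove the inequality I would fix $k$ and perturb the optimal ensemble by shifting weight toward $\rho^*_k$: for small $t\ge 0$ set $p_i(t) = (1-t)p^*_i + t\,\delta_{ik}$, which is again a probability distribution, with average state $\bar\rho(t) = (1-t)\bar\rho^* + t\,\rho^*_k$. Writing $\chi(t) := \chi_H(\{p_i(t),\rho^*_i\}) = H(\bar\rho(t)) - \sum_i p_i(t)\,H(\rho^*_i)$, optimality of $\{p^*_i,\rho^*_i\}$ among all reweightings of the same set of states forces $\chi(t)\le\chi(0)$, hence $\chi'(0^+)\le 0$. The remaining step is to evaluate $\chi'(0)$. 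Using the derivative of a trace function, $\frac{d}{dt}\trace f(\bar\rho(t)) = \trace\!\big(f'(\bar\rho(t))\,\dot{\bar\rho}\big)$ with $f(x)=-x\log x$, together with $\trace(\dot{\bar\rho})=0$, one finds $\frac{d}{dt}H(\bar\rho(t))\big|_{0} = -\trace\!\big((\rho^*_k-\bar\rho^*)\log\bar\rho^*\big)$, while the linear term contributes $-\frac{d}{dt}\sum_i p_i(t)H(\rho^*_i)\big|_{0} = \sum_i p^*_i H(\rho^*_i) - H(\rho^*_k)$. Collecting terms and using $\trace(\rho^*_k\log\rho^*_k)=-H(\rho^*_k)$ gives precisely $\chi'(0) = D(\rho^*_k || \bar\rho^*) - \chi_H(\{p^*_i,\rho^*_i\})$, so $\chi'(0^+)\le 0$ yields the bound. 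In fact, since $p^*_k>0$ one may perturb in both directions along this line, making $t=0$ an interior maximum with $\chi'(0)=0$, which already delivers the equality directly.

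The main obstacle is making the derivative computation rigorous when $\bar\rho^*$ is rank-deficient, so that $\log\bar\rho^*$ is a priori undefined and the trace-function derivative formula need not apply. I would handle this by restricting all operators to $\mathrm{supp}(\bar\rho^*)$: because every $p^*_i>0$, each $\bar\rho(t)$ for $|t|$ small is a convex combination with strictly positive weights and hence has exactly the support of $\bar\rho^*$, and the inclusion $\mathrm{supp}(\rho^*_k)\subseteq\mathrm{supp}(\bar\rho^*)$ keeps the perturbation direction $\dot{\bar\rho}=\rho^*_k-\bar\rho^*$ inside this subspace. On it $\bar\rho^*$ is invertible, the eigenvalues stay bounded away from $0$ for small $t$, and the formula $\frac{d}{dt}\trace f(\bar\rho(t))=\trace(f'(\bar\rho(t))\dot{\bar\rho})$ holds verbatim. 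With this care the first-order argument is complete, and Lemma~\ref{thm-chi-as-D} converts the inequality into the stated equality for all $k$.
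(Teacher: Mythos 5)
The paper does not actually prove this lemma: it is stated with a pointer to \cite[Section~5]{SW01} and used as a black box, so there is no in-paper argument to compare yours against. Your proof is correct and is essentially the standard variational argument from that reference: perturbing only the weights of the optimal ensemble in the direction of $\delta_{ik}$, computing $\chi'(0)=D(\rho^*_k\|\bar\rho^*)-\chi_H(\{p^*_i,\rho^*_i\})$, and invoking first-order optimality. Your derivative computation checks out (the $-1/\ln 2$ term from $f'(x)$ dies against $\trace(\dot{\bar\rho})=0$), and the two delicate points are handled properly: discarding zero-weight states is the right convention (for a state kept in the ensemble with $p^*_k=0$ only the one-sided bound $D\le\chi_H$ would follow, and the equality can genuinely fail), and restricting to $\mathrm{supp}(\bar\rho^*)$, where all $\bar\rho(t)$ for small $|t|$ have full support because every $p_i(t)>0$, legitimizes the trace-function derivative. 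One small remark: once you observe that $p^*_k>0$ permits a two-sided perturbation, $\chi'(0)=0$ already gives the equality for every $k$ directly, so the averaging upgrade via Lemma~\ref{thm-chi-as-D} is redundant in your write-up — though it is exactly the mechanism the paper itself gestures at in the last sentence of the lemma statement, so keeping both is harmless.
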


With these two lemmata, we prove that the Holevo capacity is equal to the solution of the following optimization problem:

\begin{theorem} [Alternative formulation 2, \cite{SW01}]
\label{thm-class-cap-2}
For any quantum channel $T:\mathcal{M}_{d_A} \rightarrow \mathcal{M}_{d_B}$, we have
\[C_H (T)= \min_{\sigma\in\mathcal{M}_{d_{B}}} \max_{\rho\in\mathcal{M}_{d_A}} D( T(\rho) ||\sigma) .\]
\end{theorem}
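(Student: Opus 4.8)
The plan is to prove the two inequalities of the minimax identity separately, using Lemma~\ref{thm-chi-as-D} to move between the entropic and the divergence pictures, and Lemma~\ref{thm-equal-distance-prop} to pin down the optimal $\sigma$. Throughout I write $\bar\sigma=\sum_j p_j T(\rho_j)$ for the average output of an ensemble $\{p_i,\rho_i\}$, so that Lemma~\ref{thm-chi-as-D} reads $\chi_H(\{p_i,T(\rho_i)\})=\sum_i p_i D(T(\rho_i)||\bar\sigma)$.

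For the inequality $C_H(T)\le \min_\sigma\max_\rho D(T(\rho)||\sigma)$, I would first record the elementary identity
\[
\sum_i p_i D(\tau_i||\sigma)=\sum_i p_i D(\tau_i||\bar\tau)+D(\bar\tau||\sigma),
\]
valid for any ensemble $\{p_i,\tau_i\}$ with average $\bar\tau$ and any state $\sigma$; it follows by writing out the divergence as in Definition~\ref{def-q-divergence} and using linearity of the trace together with $\sum_i p_i\Tr(\tau_i\log\bar\tau)=\Tr(\bar\tau\log\bar\tau)$. Since $D(\bar\tau||\sigma)\ge 0$, this gives $\chi_H(\{p_i,T(\rho_i)\})\le\sum_i p_i D(T(\rho_i)||\sigma)\le\max_\rho D(T(\rho)||\sigma)$ for every fixed $\sigma$. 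Taking the supremum over ensembles and then the infimum over $\sigma$ yields the bound.

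For the reverse inequality I would exhibit a single $\sigma$ attaining it. Let $\{p^*_i,\rho^*_i\}$ be an ensemble maximizing the Holevo quantity (which exists by the preceding existence theorem), and set $\sigma^*=\sum_j p^*_j T(\rho^*_j)$. It then suffices to prove that $D(T(\rho)||\sigma^*)\le C_H(T)$ for \emph{every} input $\rho$, as this gives $\min_\sigma\max_\rho D(T(\rho)||\sigma)\le\max_\rho D(T(\rho)||\sigma^*)\le C_H(T)$. By the equal distance property (Lemma~\ref{thm-equal-distance-prop}) this already holds with equality on the states $\rho^*_k$ of the optimal ensemble. To extend it to an arbitrary $\rho_0$, I argue by contradiction: assuming $D(T(\rho_0)||\sigma^*)>C_H(T)$, I perturb the optimal ensemble by assigning weight $\epsilon$ to $\rho_0$ and weight $(1-\epsilon)p^*_i$ to $\rho^*_i$, whose average output is $\sigma_\epsilon=(1-\epsilon)\sigma^*+\epsilon T(\rho_0)$, and form $\chi(\epsilon)=(1-\epsilon)\sum_i p^*_i D(T(\rho^*_i)||\sigma_\epsilon)+\epsilon D(T(\rho_0)||\sigma_\epsilon)$ via Lemma~\ref{thm-chi-as-D}. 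Differentiating at $\epsilon=0$ and using $\Tr\big(\sigma^*\,\tfrac{d}{d\epsilon}\log\sigma_\epsilon\big|_0\big)=\Tr\big(T(\rho_0)-\sigma^*\big)=0$, the contributions of the $\rho^*_i$-terms cancel against the equal-distance value $C_H(T)$, leaving $\chi'(0)=D(T(\rho_0)||\sigma^*)-C_H(T)>0$. Hence $\chi(\epsilon)>C_H(T)$ for small $\epsilon>0$, contradicting optimality of $\{p^*_i,\rho^*_i\}$.

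The main obstacle is this last perturbation step. The delicate points are justifying differentiation of the operator logarithm $\epsilon\mapsto\log\sigma_\epsilon$, which I would handle through the integral representation $\tfrac{d}{d\epsilon}\log\sigma_\epsilon=\int_0^\infty(\sigma_\epsilon+s)^{-1}\dot\sigma_\epsilon(\sigma_\epsilon+s)^{-1}\,ds$, from which the trace identity above drops out after integrating $\int_0^\infty \lambda_a/(\lambda_a+s)^2\,ds=1$ in the eigenbasis of $\sigma^*$; and the edge case $D(T(\rho_0)||\sigma^*)=\infty$, which must be excluded separately by noting that the optimal $\sigma^*$ has support containing that of every channel output. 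Modulo these analytic details, the easy direction together with the equal-distance-plus-perturbation argument closes the identity.
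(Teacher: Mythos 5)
Your proposal is correct, and the first inequality is handled exactly as in the paper: both arguments rest on the decomposition $\sum_i p_i D(\tau_i\|\sigma)=\sum_i p_i D(\tau_i\|\bar\tau)+D(\bar\tau\|\sigma)$ together with nonnegativity of the second term, followed by $\sup$ over ensembles and $\inf$ over $\sigma$. For the reverse inequality, however, you take a genuinely different and in fact stronger route. The paper invokes the equal distance property (Lemma~\ref{thm-equal-distance-prop}, quoted from \cite{SW01} without proof) only for the states $\rho^*_k$ of the optimal ensemble, and then closes with the chain $C_H(T)=D(T(\rho^*_k)\|\sigma^*)\geq\min_\sigma D(T(\rho^*_k)\|\sigma)=\min_\sigma\max_\rho D(T(\rho)\|\sigma)$, whose final equality is not justified there (taken literally, $\min_\sigma D(T(\rho^*_k)\|\sigma)=0$ by faithfulness, so the step only makes sense with additional argument). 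You instead prove the \emph{maximal} distance property --- that $D(T(\rho)\|\sigma^*)\leq C_H(T)$ for \emph{every} input $\rho$, not just the ensemble members --- via the first-order perturbation $\chi'(0)=D(T(\rho_0)\|\sigma^*)-C_H(T)$, which is precisely the content of the Schumacher--Westmoreland argument and is what the identity actually requires. Your derivative computation is correct (the cross term vanishes because $\Tr\big(\sigma^*\,\tfrac{d}{d\epsilon}\log\sigma_\epsilon\big|_0\big)=\Tr(T(\rho_0)-\sigma^*)=0$ via the integral representation, at least on the support of $\sigma^*$), and you rightly flag the two analytic caveats: differentiability of the operator logarithm and the support condition ruling out $D(T(\rho_0)\|\sigma^*)=\infty$. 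The trade-off is that your argument is longer and needs these analytic details, but it is self-contained where the paper's version leans on an external lemma and an unexplained final step; in effect your route supplies the justification the paper's last equality is missing.
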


\begin{proof}
To prove this, we use the fact that we can write $\chi_H(\{p_i,\rho_i\})$ of an ensemble $\{p_i,\rho_i\}$ of quantum states $\rho_i\in\mathcal{M}_{d_B}$ associated with a probability $p_i$ as an average of quantum divergences as in Lemma~\ref{thm-chi-as-D}. For any quantum state $\sigma\in\mathcal{M}_{d_B}$, we can rewrite this in the following way:
\begin{align*}
 \chi_H(\{p_i,\rho_i\}) & = \sum_i p_i D(\rho_i||\sum_j p_j \rho_j)  \\ &  =    \sum_i p_i ( \trace\Big( \rho_i \big(\log(\rho_i)- \log(\sum_j p_j \rho_j) \big) + \log(\sigma)-\log(\sigma) \Big) 
        \\& =  \sum_i p_i D(\rho_i||\sigma) - D(\sum_i p_i \rho_i || \sigma) 
    \end{align*}
As a consequence, we have that 
$\chi_H (\{p_i,\rho_i\})\leq  \sum_i p_i D(\rho_i||\sigma) \leq \max_i D(\rho_i||\sigma)$ for all quantum states $\sigma$, and therefore also for the quantum state that minimizes the expression. Therefore, we have
\[\chi_H (\{p_i,\rho_i\})\leq  \min_{\sigma}\max_{i} D(\rho_i||\sigma).\]
This is also true for the ensemble that maximizes the Holevo quantity in the expression for $C_H(T)$, and therefore, we have
\[C_H(T)= \chi_H (\{p^*_i,T(\rho^*_i)\})\leq \min_{\sigma}\max_{\rho} D(T(\rho)||\sigma) . \]
On the other hand, due to the equal distance property in Lemma~\ref{thm-equal-distance-prop}, we have that, for all $k$,
\begin{align*}
   C_H(T)&= \chi_H (\{p^*_i,T(\rho^*_i)\}) \\&= D(T(\rho^*_k)||\sum_j p^*_j \rho^*_j) \\& \geq \min_{\sigma} D(T(\rho^*_k)||\sigma)  \\& =\min_{\sigma} \max_{\rho} D(T(\rho)||\sigma)  
\end{align*}
In combination, we have equality between the Holevo capacity and the optimization problem above.

\end{proof}

The classical capacity and the Holevo capacity are related in the following way:

\begin{theorem}[{Holevo–Schumacher–Westmoreland theorem, \cite[Chapter~20.3]{Wilde13}}]\label{thm-hsw-thm}
Let $T:\mathcal{M}_{d_A} \rightarrow \mathcal{M}_{d_B}$ be a quantum channel. Let $C(T)$ denote the channel's classical capacity, and $C_H(T)$ denote the channel's Holevo capacity. For any channel $T$, we have
    \[C(T) = \lim_{n\rightarrow \infty} \frac{1}{n} C_H(T^{\otimes n}).\]
    This also implies that $C_H(T)\leq C(T)$ for any channel $T$.
\end{theorem}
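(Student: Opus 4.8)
The plan is to establish the two inequalities $C(T)\geq \lim_{n\to\infty}\tfrac1n C_H(T^{\otimes n})$ and $C(T)\leq \lim_{n\to\infty}\tfrac1n C_H(T^{\otimes n})$ separately, the former being the \emph{achievability} (direct coding) part and the latter the \emph{converse}. The decisive ingredient is the single-letter achievability bound $C_H(T)\leq C(T)$, from which the regularized identity and the stated corollary both follow; note that regularization is genuinely needed because $C_H$ is in general only superadditive, so the single-letter bound is not expected to be tight on its own.

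\textbf{Achievability.} I would first show that the rate $C_H(T)$ is achievable for every channel $T$. Fix an ensemble $\{p_i,\rho_i\}$ attaining $C_H(T)$ (such an ensemble exists, as noted above). Drawing codewords $x=(x_1,\dots,x_n)$ with each $x_j$ sampled i.i.d.\ from $\{p_i\}$, one encodes a message as the product state $\rho_{x_1}\otimes\cdots\otimes\rho_{x_n}$, which $T^{\otimes n}$ sends to $T(\rho_{x_1})\otimes\cdots\otimes T(\rho_{x_n})$. The decoding POVM is built from the typical-subspace projector of the average output $\sum_i p_i T(\rho_i)$ together with the conditionally typical projectors of the individual codeword outputs, combined via the square-root (``pretty good'') construction. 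Bounding the expected error probability with the Hayashi--Nagaoka operator inequality and the standard properties of typical projectors, one shows that the average error vanishes as $n\to\infty$ whenever the rate lies below $\chi_H(\{p_i,\rho_i\})=C_H(T)$. Expurgating the worse half of the codewords converts the average guarantee into a genuine $(n,m,\epsilon)$ scheme (in the fidelity sense of the coding-scheme definition, using that for the classical inputs $X=\ketbra{x}$ the fidelity criterion amounts to a lower bound on the decoding success probability) with $m/n\to C_H(T)$, so $C_H(T)\leq C(T)$. I expect this to be the main obstacle, since it requires the full typicality and operator-inequality machinery rather than a short algebraic manipulation.

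Applying this bound to the channel $T^{\otimes n}$, viewed as a single quantum channel, gives $C_H(T^{\otimes n})\leq C(T^{\otimes n})$. Since $k$ uses of $T^{\otimes n}$ are exactly $nk$ uses of $T$, one has $C(T^{\otimes n})=n\,C(T)$; dividing by $n$ and letting $n\to\infty$ yields $\lim_{n\to\infty}\tfrac1n C_H(T^{\otimes n})\leq C(T)$, while the case $n=1$ gives the corollary $C_H(T)\leq C(T)$.

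\textbf{Converse.} I would take an arbitrary $(n,m,\epsilon)$ coding scheme with a message $X$ drawn uniformly from $\{0,1\}^m$, write $B^n$ for the output of $T^{\otimes n}$ on the encoded states, and let $X'$ be the decoder output. Fano's inequality gives $H(X\mid X')\leq 1+\epsilon\,m$, so that $m=H(X)=I(X:X')+H(X\mid X')\leq I(X:X')+1+\epsilon\,m$. The data-processing inequality bounds $I(X:X')\leq I(X:B^n)$, and since the encoder induces a cq-state at the input of $T^{\otimes n}$, Theorem~\ref{thm-class-cap-1} gives $I(X:B^n)\leq C_H(T^{\otimes n})$. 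Rearranging yields $\tfrac{m}{n}\leq \tfrac{C_H(T^{\otimes n})+1}{n(1-\epsilon)}$; letting $n\to\infty$ and $\epsilon\to 0$ shows that every achievable rate obeys $R\leq \lim_{n\to\infty}\tfrac1n C_H(T^{\otimes n})$. Combining the two inequalities proves the theorem.
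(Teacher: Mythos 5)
The paper does not prove this theorem; it states it as a known result and cites \cite[Chapter~20.3]{Wilde13}. Your proposal is a correct sketch of exactly the standard argument given there: achievability by random product-state codes decoded with a square-root measurement controlled by typical and conditionally typical projectors (the packing lemma), converted to a worst-case guarantee by expurgation, and the converse by Fano's inequality, data processing, and the cq-state characterization of $C_H$. The only point worth making explicit is that the limit $\lim_{n\to\infty}\tfrac1n C_H(T^{\otimes n})$ exists by Fekete's lemma via the superadditivity of $C_H$, which you allude to but do not spell out.
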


For a while, it was an open question whether or not the Holevo quantity is additive (in the tensor-product sense) for all quantum channels. This so called \emph{additivity conjecture} was widely believed to be true by many researchers, because it holds true for many commonly considered classes of quantum channels \cite{King02,Shor02_2}, and because numerical searches did not turn up any counter-example. In 2009, Hastings \cite{Hastings09} provided a non-constructive counter-example to the additivity conjecture; it was further shown that this counter-example appears in rather high dimensions, and for a high number of channel copies \cite[Proposition~3]{FKM10}. It is an open question whether counter-examples to the additivity conjecture exist for small input and output channel dimension and small numbers of tensor products.

%of this can be found in several papers, for example \cite{}.
%It is highly notable that the above quantity does not regularize. This is fundamentally different from classical information theory. 

\begin{theorem}[\cite{Hastings09}]
There exists a quantum channel $T:\mathcal{M}_{d_A} \rightarrow \mathcal{M}_{d_B}$ and a number $N\in\mathbbm{N}_+$ with
    \[C_H(T)< \frac{1}{N} C_H(T^{\otimes N}).\]
\end{theorem}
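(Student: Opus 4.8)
The plan is not to attack the Holevo capacity directly, but to pass to the equivalent language of \emph{minimum output entropy}. By Shor's reduction relating the various additivity questions to one another, additivity of the Holevo capacity is equivalent to additivity of the minimum output entropy $S_{\min}(T):=\min_{\rho} H(B)_{T(\rho)}$, so it suffices to exhibit channels with $S_{\min}(T_1\otimes T_2)<S_{\min}(T_1)+S_{\min}(T_2)$. Feeding a product input $\rho_1\otimes\rho_2$ into $T_1\otimes T_2$ already gives $S_{\min}(T_1\otimes T_2)\leq S_{\min}(T_1)+S_{\min}(T_2)$, so the entire content is to produce an \emph{entangled} input whose output entropy strictly undercuts this product bound. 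To package this as a single-channel statement of the stated form, I would finally embed the two channels below as two flagged blocks of one channel $T$, so that $T^{\otimes 2}$ contains their tensor product as a block and the pairwise violation yields $C_H(T)<\tfrac{1}{2}C_H(T^{\otimes 2})$, i.e.\ $N=2$.

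First I would fix the channel family. Let $\mathcal{N}(\rho)=\frac{1}{D}\sum_{i=1}^{D} U_i\,\rho\, U_i^{\dagger}$ be a random mixed-unitary channel on $\mathcal{M}_d$ with the $U_i$ drawn independently from the Haar measure, and let $\bar{\mathcal{N}}$ be its entrywise complex conjugate, with Kraus operators $\bar{U}_i$. Taking $T_1=\mathcal{N}$ and $T_2=\bar{\mathcal{N}}$, the upper bound on $S_{\min}(\mathcal{N}\otimes\bar{\mathcal{N}})$ comes from a single well-chosen input, the maximally entangled state $\ket{\Phi}=\frac{1}{\sqrt{d}}\sum_k\ket{k}\ket{k}$. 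The key algebraic fact is the fixed-point identity $(U\otimes\bar{U})\ket{\Phi}=\ket{\Phi}$, so in $(\mathcal{N}\otimes\bar{\mathcal{N}})(\ketbra{\Phi})=\frac{1}{D^2}\sum_{i,j}(U_i\otimes\bar{U}_j)\ketbra{\Phi}(U_i\otimes\bar{U}_j)^{\dagger}$ the $D$ diagonal terms $i=j$ all collapse onto $\ketbra{\Phi}$. This forces an eigenvalue of at least $1/D$ to sit on $\ket{\Phi}$, and this concentration of weight pushes the output entropy strictly below the maximal value $2\log d$ by a definite amount.

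The matching lower bound on the single-copy quantity $S_{\min}(\mathcal{N})$ is where the work lies, and it rests on concentration of measure. For a fixed pure input the output $\mathcal{N}(\ketbra{\psi})$ is an average of $D$ independent Haar-rotated rank-one projectors, whose entropy concentrates sharply around a mean close to $\log d$ by L\'evy's lemma on the unitary group; a union bound over an $\epsilon$-net of input states then upgrades this to a uniform lower bound $S_{\min}(\mathcal{N})\geq \log d-\delta$ valid for \emph{all} inputs with high probability, provided $d$ and $D$ are chosen so that the Gaussian concentration tail beats the net cardinality. Since conjugation does not change the entropy spectrum, $S_{\min}(\bar{\mathcal{N}})=S_{\min}(\mathcal{N})$, and combining this with the entangled upper bound gives $S_{\min}(\mathcal{N}\otimes\bar{\mathcal{N}})<2\,S_{\min}(\mathcal{N})$ on a common event of positive probability, which proves existence of the channels.

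The hard part will be the precision of the entropy estimates. The R\'enyi-$p$ output entropies for $p>1$ are controlled by elementary moment computations and were known to violate additivity earlier, but the von Neumann ($p=1$) entropy is far less forgiving: the gap produced above is only of order $(\log d)/d$, so both the concentration bound $\delta$ and the entangled-input deficit must be pinned down to this accuracy, and the two channels must be built from the \emph{same} random unitaries so that the fixed-point cancellation and the uniform concentration estimate refer to one event of positive probability. Making these two competing small quantities line up is the crux of Hastings' argument, and it is precisely why the counterexample is non-constructive and appears only in high dimension and for large numbers of Kraus operators.
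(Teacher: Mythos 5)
The paper does not actually prove this statement anywhere --- it is quoted with a citation to \cite{Hastings09} --- so there is no in-paper argument to compare against; I can only assess your sketch on its own terms. You have correctly identified the standard outline: pass to the minimum output entropy $S_{\min}$, take a random mixed-unitary channel $\mathcal{N}(\rho)=\frac{1}{D}\sum_i U_i\rho U_i^{\dagger}$ together with its conjugate built from the \emph{same} Haar unitaries, and use the fixed-point identity $(U\otimes\bar{U})\ket{\Phi}=\ket{\Phi}$ to force an eigenvalue of at least $1/D$ onto the output of $\mathcal{N}\otimes\bar{\mathcal{N}}$, giving an entropy deficit of order $(\log D)/D$. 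The first genuine gap is in your proposed lower bound on $S_{\min}(\mathcal{N})$: L\'evy's lemma plus a union bound over an $\epsilon$-net of inputs is precisely the Hayden--Winter argument, and it is known to be sufficient only for the R\'enyi entropies with $p>1$. At $p=1$ the deviation $\delta$ you must rule out is $o((\log D)/D)$, and the concentration rate one gets for the entropy functional on $U(d)^{\times D}$ does not beat the net cardinality at that precision. Hastings' actual proof replaces the net argument with a different estimate (a careful analysis of the probability that the output state has an atypically large eigenvalue and of the resulting entropy correction), and the later simplified proofs also rely on tools beyond a plain net-plus-concentration bound. So the step you flag as ``the hard part'' is not merely a matter of sharpening constants inside your method; the method itself must change.

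The second gap is the conversion to the Holevo capacity. Your flagged-block channel does yield $S_{\min}(T^{\otimes 2})<2S_{\min}(T)$ from $S_{\min}(\mathcal{N}\otimes\bar{\mathcal{N}})<S_{\min}(\mathcal{N})+S_{\min}(\bar{\mathcal{N}})$, but that is a statement about output entropies, not about $C_H$. The identity $C_H(T)=\log d-S_{\min}(T)$ that would make the two violations equivalent copy-for-copy holds only for irreducibly covariant channels, which random mixed-unitary channels are not, and Shor's equivalence between the additivity conjectures is an asymptotic construction whose proof passes through arbitrarily many channel copies and auxiliary flagged channels. Consequently the $N$ in the theorem is not known to be $2$; indeed the paper's own surrounding discussion cites \cite{FKM10} for the fact that the resulting Holevo counterexample lives in high dimension \emph{and} requires a large number of tensor copies, and explicitly lists small $N$ as open. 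Your final sentence claiming $N=2$ therefore overstates what the argument delivers.
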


% \begin{proof}
%     A proof can be found in \cite{Hastings09}.
% \end{proof}

\subsubsection{Entanglement-assisted communication via a quantum channel}

One of the most important basic quantum communication protocols is superdense coding \cite{BW92}, which allows a sender and a receiver with access to one maximally entangled state to transmit two classical bits through the use of one perfect quantum channel by the procedure sketched in Figure~\ref{fig-superdense}. In other words, a single identity channel $\id_2:\mathcal{M}_2\rightarrow \mathcal{M}_2$, when assisted by entanglement, can transmit classical information at a rate of $2$ bits per channel use. Without entanglement, the channel can only transmit one classical bit per channel use.

\begin{figure}[htbp]
    \centering
    \includegraphics[width=10cm]{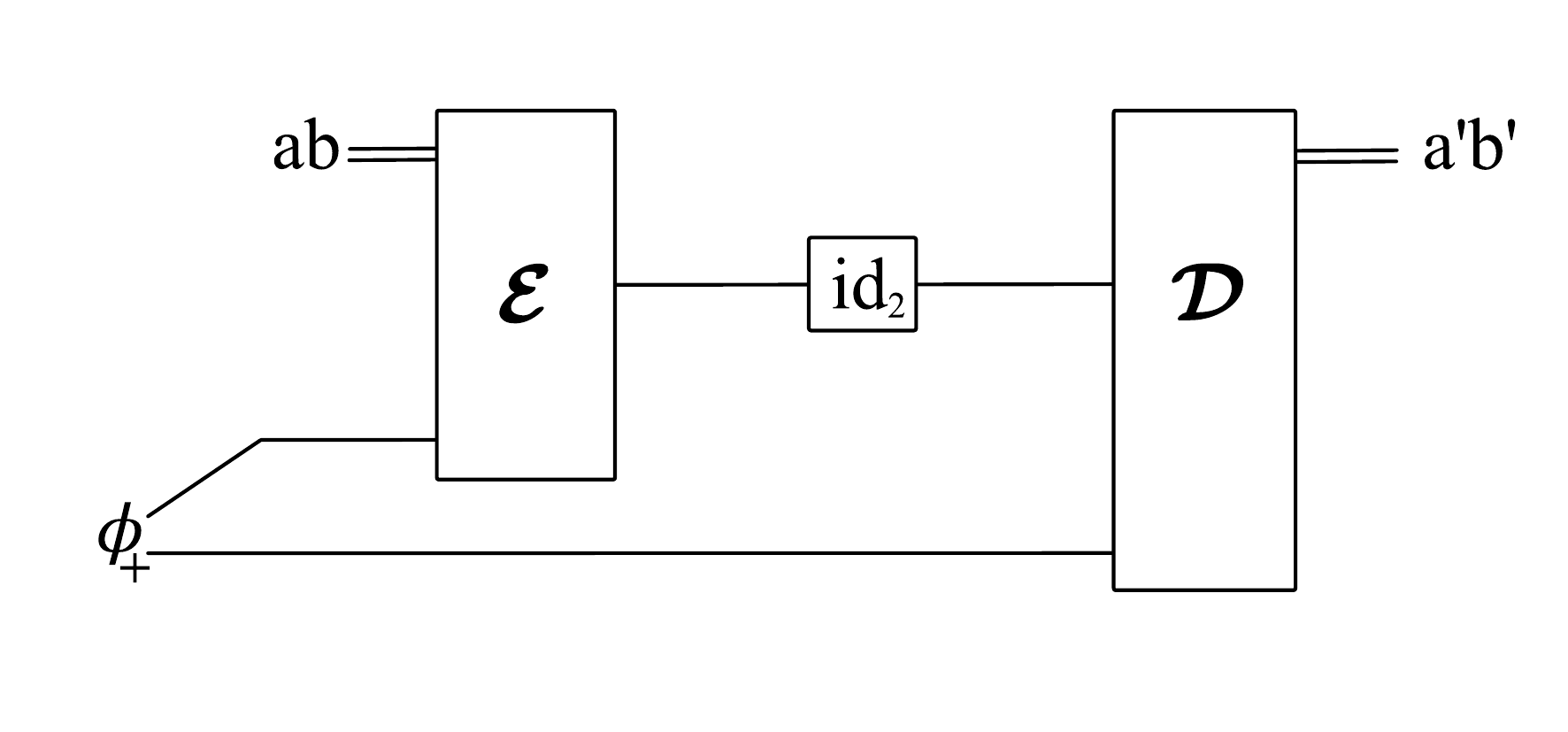}
    \caption{\textbf{Basic setup of the superdense coding protocol.} Let the sender and the receiver each hold one qubit from a maximally entangled state $\ket{\phi_+}=\frac{1}{\sqrt{2}} (\ket{00}+\ket{11})$.
The sender performs an encoding $\mathcal{E}:\mathbbm{C}^2\otimes \mathcal{M}_2\rightarrow \mathcal{M}_2$ of two classical bits $a\in\{0,1\}$ and $b\in\{0,1\}$ by performing one of the 4 quantum gates $G_{ab}=\{\mathbbm{1}_2,\sigma_x,\sigma_y,\sigma_z\}$ on their part of the maximally entangled state. Subsequently, they send their part of the entangled state through the channel $\id_2$, so that the receiver now holds the entire 2-qubit state $(G_{ab}\otimes\mathbbm{1}_2) \ket{\phi_+}$. By performing a measurement in the Bell basis as a decoder $\mathcal{D}:\mathcal{M}_2\otimes\mathcal{M}_2\rightarrow \mathbbm{C}^2$, the receiver can deduce which message the sender intended to send from the measurement output.}
    \label{fig-superdense}
\end{figure}

This protocol highlights how entanglement between the sender and the receiver can enhance communication. Based on this idea, we can generally consider a coding scheme where the sender and receiver have entanglement (see also Figure~\ref{fig-ea-cap}):

\begin{definition}[Entanglement-assisted coding scheme]\label{def-ea-coding-scheme}
Let $T:\mathcal{M}_{d_A} \rightarrow \mathcal{M}_{d_B}$ be a quantum channel, and let $n,m \in \mathbbm{N}$, $R_{ea}\in\mathbbm{R}^+$ and $\epsilon>0$.

Then, an $(n,m,\epsilon,R_{ea})$-coding scheme for entanglement-assisted communication consists of quantum channels $\mathcal{E}:\mathbbm{C}^{2^m}\otimes\mathcal{M}_2^{ \otimes \lfloor nR_{ea} \rfloor}\rightarrow\mathcal{M}_{d_A}^{\otimes n}$ and $\mathcal{D}:\mathcal{M}_{d_A}^{\otimes n}\otimes \mathcal{M}_2^{ \otimes \lfloor nR_{ea} \rfloor} \rightarrow \mathbbm{C}^{2^m}$ such that
\begin{align*}
    F\Big(X,\mathcal{D} \circ  \big( (T^{\otimes n } \circ \mathcal{E} )\otimes \id_2^{\otimes \lfloor nR_{ea} \rfloor} \big) (X \otimes \phi_+^{\otimes \lfloor nR_{ea} \rfloor })\Big) \geq 1-\epsilon
\end{align*}
where $X=\ketbra{x}$, for all bit strings  $x\in \{0,1\}^m$.
\end{definition}

\begin{remark}
Here, $\phi_+=\ketbra{\phi_+}{\phi_+}$, where $\ket{\phi_+}=\frac{1}{\sqrt{2}}( \ket{00}+\ket{11})$ denotes a maximally entangled state of two qubits. Like in \cite{BSST02}, we define entanglement-assistance with respect to copies of the maximally entangled state $\phi_+$. Without loss of generality, we could allow assistance by copies of arbitrary pure entangled states, since they can be prepared efficiently from maximally entangled states by the process of entanglement dilution \cite{LP99,BSST02}  using of a sublinear amount of classical communication from one party to the other \cite[Theorem~1]{HL04}. It turns out that even entirely arbitrary entangled states (not of product form) cannot increase the communication rate \cite{BDHSW14} and it is therefore sufficient to use maximally entangled states as the entanglement resource.

For a rate of entanglement-assistance $R_{ea}=0$ in the above definition, the scenario reduces to the scheme for classical communication with no entanglement-assistance as introduced in \cite{SW97,Holevo96}. For $R_{ea} \geq  \sup_{\varphi} H(A)_{\varphi}$, where the supremum goes over pure bipartite states $\varphi$, the entanglement-assisted capacity does not increase with more entangled states \cite{HL04}, and we will henceforth focus on this scenario.
%Here, $H(A)_{\rho}=- \Tr (\rho \log \rho )$ denotes the von Neumann entropy of a quantum state $\rho \in \mathcal{M}_{A}$.
\end{remark}

\begin{figure}[!t]
\centering
       \includegraphics[width=10cm]{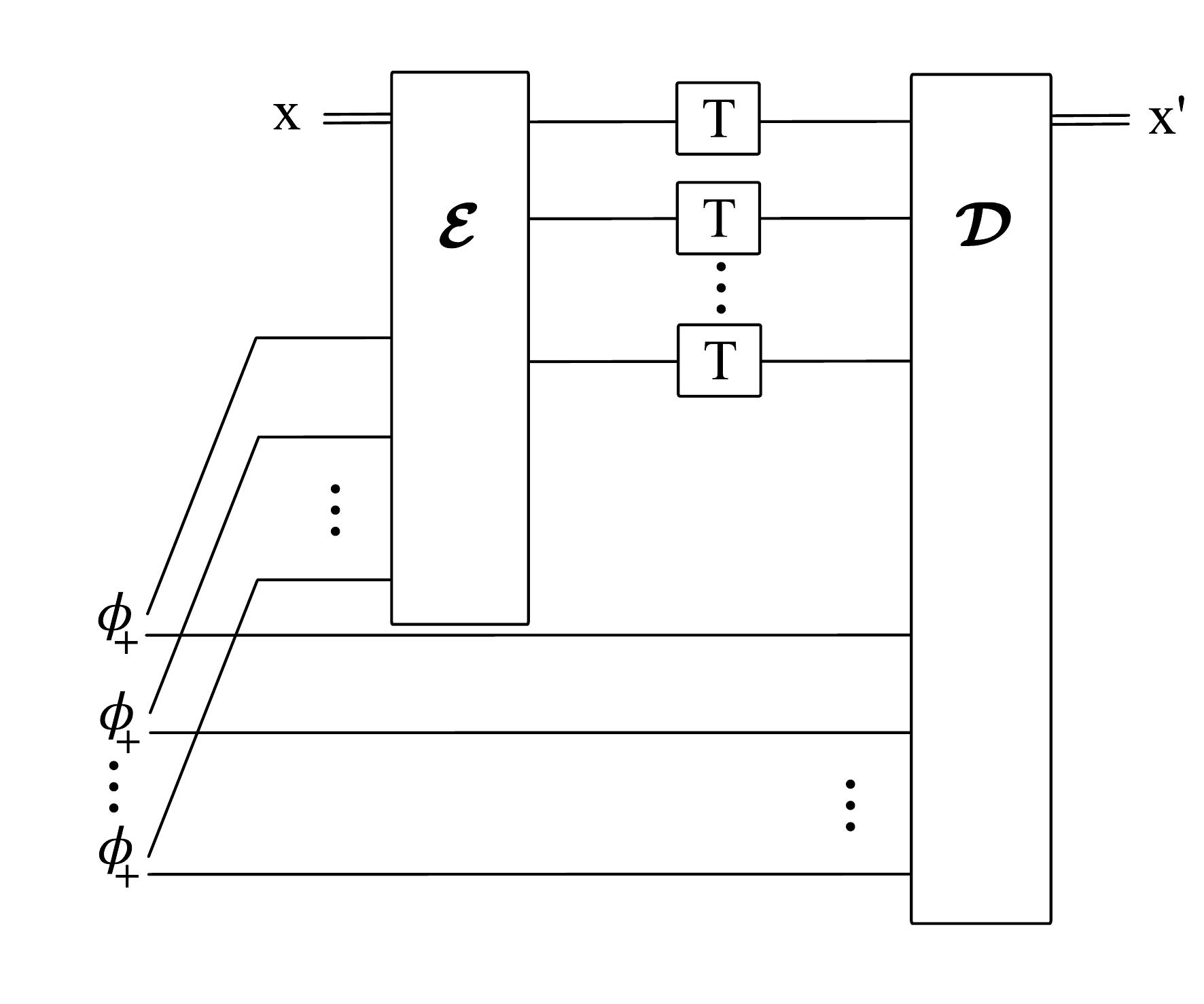}
\caption{\textbf{Basic setup for entanglement-assisted communication.} The encoding map $\mathcal{E}$ maps a bit string $x$ of length $m$ and one part of each entangled state $\phi_+$ to a quantum state in $\mathcal{M}_{d_A}^{\otimes n}$. The quantum channel $T$ acts on each of the $n$ subsystems, and the decoder $\mathcal{D}$ uses the other part of each maximally entangled state to decode the received quantum state to a bit string $x'$, which should be identical to the input bit string $x$. Note that classical information transfer is indicated by double lines (input into encoder, output of decoder), while the transfer of quantum states is indicated by single lines.}
\label{fig-ea-cap}
\end{figure}

\begin{definition}[Entanglement-assisted capacity]
Let $T:\mathcal{M}_{d_A} \rightarrow \mathcal{M}_{d_B}$ be a quantum channel and let $R_{ea}\in \mathbbm{R}_+$ be the rate of entanglement-assistance.

If, for some $R_{ea}$ and for every $n\in \mathbbm{N}$, there exists an $(n,m(n),\epsilon(n),R_{ea})$-coding scheme for entanglement-assisted communication, then a rate $R\geq 0$ is called achievable for entanglement-assisted communication via the quantum channel $T$ if

\begin{align*}
   R & \leq %\lim_
   \liminf_{n\rightarrow \infty} \Big\{ 
    \frac{m(n)}{n} \Big\}
    \end{align*}
and
\begin{align*}
   \lim_{n\rightarrow \infty} \epsilon(n) \rightarrow 0.
    \end{align*}
The entanglement-assisted capacity of $T$ is given by
\begin{align*}
  C^{ea}(T) = \sup_{R} \{ R | & R \text{ achievable for entanglement-assisted communication via $T$} \}.
    \end{align*}
% \begin{align*}
%   C^{ea}(T) = \sup_{R} \{ R | & R \text{ achievable} \}.
%     \end{align*}
\end{definition}

Importantly, the entanglement-assisted capacity is equal to the quantum mutual information of the channel input and output:

\begin{theorem}[{Entanglement-assisted capacity Theorem, \cite{BSST99,BSST02}}]\label{thm-ent-ass-cap-bsst}
Let $T:\mathcal{M}_{d_{A'}} \rightarrow \mathcal{M}_{d_B}$ be a quantum channel. Then, we have: 
    \[C^{ea}(T) = \sup_{\rho_{AA'}} I(B:A)_{(\id \otimes T)(\rho_{AA'})}\]
    where the supremum is over all pure quantum states $\rho_{AA'}\in\mathcal{M}_{d_A}\otimes \mathcal{M}_{d_{A'}}$.
\end{theorem}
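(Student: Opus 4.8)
The plan is to prove the two inequalities separately, writing $\chi(T):=\sup_{\rho_{AA'}}I(B:A)_{(\id\otimes T)(\rho_{AA'})}$ for the single-letter quantity on the right-hand side. A useful preliminary reduction is that restricting the supremum to pure inputs loses nothing: for a pure input $I(B:A)$ depends only on the channel-input marginal $\rho_{A'}$ (any two purifications differ by an isometry on $A$, which commutes with $T$), and purifying a mixed input by enlarging the reference can only increase $I(B:A)$ since discarding the added reference is a channel on the $A$-side and data processing applies. Thus $\chi(T)$ equals the supremum over all states of the channel-input marginal, which is the object we will actually manipulate.

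Converse ($C^{ea}(T)\le\chi(T)$). The engine is additivity of the channel mutual information, $\chi(T^{\otimes n})=n\,\chi(T)$. The bound ``$\ge$'' is immediate from product inputs; for ``$\le$'' I would take an arbitrary input on $RA_1'A_2'$, write $I(R:B_1B_2)=I(R:B_1)+I(R:B_2\mid B_1)$, and use $I(B_1:B_2)\ge0$ to get $I(R:B_2\mid B_1)\le I(RB_1:B_2)$, so that $I(R:B_1B_2)\le I(R:B_1)+I(RB_1:B_2)\le\chi(T_1)+\chi(T_2)$; this rests only on the chain rule and positivity of mutual information, i.e.\ on strong subadditivity. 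Given additivity I would run a Fano/Holevo converse: for an $(n,m,\epsilon,R_{ea})$-scheme with uniform message $X$ and Bob's final systems (channel outputs $B^n$ together with his share $\tilde A$ of the entanglement), high fidelity forces small decoding error, so Fano's inequality gives $m\le I(X:B^n\tilde A)+1+\epsilon m$. Because Alice's $x$-dependent encoding is local and the message is initially uncorrelated with the entanglement, $I(X:\tilde A)=0$, whence $I(X:B^n\tilde A)=I(X:B^n\mid\tilde A)\le I(X\tilde A:B^n)\le\chi(T^{\otimes n})=n\,\chi(T)$, the middle step treating $X\tilde A$ as a legitimate reference that never enters the channel. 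Dividing by $n$ and letting $n\to\infty$, $\epsilon\to0$ yields $R\le\chi(T)$.

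Achievability ($C^{ea}(T)\ge\chi(T)$). Here I would use a random-coding argument built on shared entanglement and Heisenberg--Weyl (generalized Pauli) encodings. Fix a near-optimal pure input $\phi_{AA'}$; Alice and Bob share $\phi_{AA'}^{\otimes n}$ with Bob holding the reference $A^n$. To send message $x$, Alice applies a tensor product $W_x$ of Heisenberg--Weyl operators to her half and transmits it through $T^{\otimes n}$, so Bob ends up holding a bipartite state $\sigma_x$ on $A^nB^n$. This is now exactly a classical-communication problem in which Bob discriminates the ensemble $\{\tfrac1M,\sigma_x\}$ by a measurement on his entire system, so the achievability (packing) argument underlying Holevo--Schumacher--Westmoreland (Theorem~\ref{thm-hsw-thm}) yields a reliable rate equal to the per-copy Holevo quantity of the ensemble. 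The point of Heisenberg--Weyl encodings is that averaging $W_x$ uniformly twirls the channel input to the maximally mixed state and decouples Bob's two registers, so this Holevo quantity equals $I(B:A)_{(\id\otimes T)(\phi_{AA'})}$.

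The main obstacle is that the Heisenberg--Weyl twirl only cleanly reproduces $I(B:A)$ when the optimal channel input is the maximally mixed state, since it is the Pauli group that stabilizes $\mathbbm{1}/d$. For a general optimal $\rho_{A'}$ I would insert the standard typical-subspace flattening: restrict the shared entanglement to the frequency-typical subspace of $\rho_{A'}^{\otimes n}$, on which $\rho_{A'}^{\otimes n}$ is approximately maximally mixed, apply Heisenberg--Weyl operators within that subspace, and track the resulting approximation errors (in trace distance, via Theorem~\ref{thm-fidelity-and-trace-dist} and Corollary~\ref{thm-triangle_ineq_for_fidelity}) to confirm they vanish as $n\to\infty$. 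Carefully controlling these typicality errors, together with the post-twirl Holevo-quantity computation, is the technical heart of the argument; the converse, by contrast, is routine once additivity is in hand.
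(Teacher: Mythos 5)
The paper does not prove this theorem itself: it is stated as a cited result from \cite{BSST99,BSST02}, with the remark that a detailed proof can be found in \cite[Section~21.4-21.5]{Wilde13}. Your sketch correctly reproduces exactly that standard argument --- the converse via additivity of the channel mutual information (chain rule plus strong subadditivity) followed by a Fano/Holevo bound using $I(X:\tilde A)=0$, and achievability via Heisenberg--Weyl encodings on shared entanglement reduced to an HSW ensemble-discrimination problem, with typical-subspace flattening to handle a non-maximally-mixed optimal input. All the individual steps you outline are sound (including the reduction to pure inputs and the bound $I(X:B^n\mid\tilde A)\le I(X\tilde A:B^n)\le\chi(T^{\otimes n})$); the only thing left unexecuted is the quantitative control of the typicality and flattening errors in the direct part, which you correctly identify as the technical heart and which is carried out in the cited references.
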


For an alternative, detailed proof, see also \cite[Section~21.4-21.5]{Wilde13}.

This theorem highlights that the entanglement-assisted capacity can be understood as the true quantum analogue of the completely classical version from Shannon's noisy channel coding theorem~\ref{thm-shannons-noisy-channel-coding}. In particular, this means that the entanglement-assisted capacity does indeed satisfy additivity.

In Chapter~\ref{chapter-divrad}, we will show that the entanglement-assisted capacity can be rewritten in terms of a quantum divergence, which resembles the divergence radius formulation for the Holevo capacity in Theorem~\ref{thm-class-cap-2}.

These three notions of capacity (the classical capacity, the Holevo capacity and the entanglement-assisted capacity) are the ones who appear most in this thesis; other notions of capacity, such as quantum capacity and private capacity have also been defined and studied in many interesting contexts.

In our work in Chapter~\ref{chapter-fteacap}, we consider communication setups for entanglement-assisted communication which confirm and supplement the practical relevance of Theorem~\ref{thm-ent-ass-cap-bsst}. We study a notion of capacity where we take noise effects into account which arise from the quantum nature of the encoding and decoding operation, see Section~\ref{sec-ft-cap-setup}, and we study a notion of capacity for channels with highly correlated noisy input, which we refer to as capacity under arbitrarily varying perturbations, see Section~\ref{sec-avp-model}. While both of these communication scenarios are not covered by standard techniques, we show that we can still achieve rates that are close to the rates from the standard scenario in Theorem~\ref{thm-ent-ass-cap-bsst}.

%One additional facet of communication is that it usually considers the encoder and decoder that appear in the maximization to be fault-less objects; in a real quantum computer, gate errors during computation imply that no quantum circuit is ever perfect :D Therefore, there is a notion of protecting these theoretical encodings against such noise; this is something we begin to tackle in Section~\ref{sec-ft-and-comm} and Chapter~\ref{chapter-fteacap}.

Given all of these different notions of capacity in the quantum realm, it is also an interesting question of how they compare to one another; some such relationships have been shown in \cite{BSST02}. For example, it is straightforward to see that the classical capacity of a given channel is smaller or equal to the entanglement-assisted capacity of the same channel as the maximization of the mutual information runs over a smaller set of states. It is not known how much larger the entanglement-assisted capacity can be, but it has been conjectured that it is upper-bounded by the classical capacity multiplied by a dimension-dependent factor in \cite{BSST02}. In Chapter~\ref{chapter-divrad}, we prove this conjecture for a large class of channels.

\subsubsection{Examples of quantum channels and their capacities}

There are some important examples of quantum channels which will frequently be referred to in this thesis. Many of these channel models have closed-form expressions for capacities which will play a role in Chapter~\ref{chapter-divrad}. Here, we will introduce some such quantum channels and review what is known about their capabilities to transmit quantum information.

\begin{example}[Erasure channel]
An erasure channel with erasure probability $p$ is a quantum channel which applies an identity with probability $1-p$, and with probability $p$, it erases the state and returns an erasure flag in the form of an orthogonal state $\ketbra{e}{e}$, i.e. \[\rho\rightarrow (1-p)\rho + p \ketbra{e}{e}.\]
For this channel, the Holevo capacity is \cite[Exercise~8.7]{Holevo13}:
\[C_H(T_p) = (1-p) C_H(\id) = (1-p) \log d\]
and the entanglement-assisted capacity is \cite[Exercise~9.2]{Holevo13}:
\[C^{ea}(T_p)=2(1-p)\log d .\]
\end{example}

\begin{example}[Depolarizing channel]\label{ex-depo-channel}
    
Let $T_p$ be the depolarizing channel, which acts on quantum states in the following way:
\[T_p(\rho)=(1-p) \rho + p  \frac{\mathbbm{1}_d}{d} \trace(\rho) .\]
This channel is completely positive for $0\leq p \leq \frac{d^2}{d^2-1}$. 
%See also \cite[Eq.~6.49]{Holevo13} - with $0\leq p \leq \frac{d^2}{d^2-1}$. For $p\leq 1$, this is a mixture of completely depolarizing and ideal/identity channel. But it is also CPTP for $1<p \leq  \frac{d^2}{d^2-1}$ (looking at the eigenvalues of the Choi).

This channel's Holevo capacity is given by \[ C_H(T_p) = \log d + (1- p \frac{d-1}{d}) \log (1- p \frac{d-1}{d}) +p \frac{d-1}{d} \log(\frac{p}{d}) \] as can also be found in \cite[Exercise~8.9]{Holevo13}, \cite[Eq.~1]{BSST99} and \cite[Theorem~20.4.3]{Wilde13}.
The entanglement-assisted classical capacity of the depolarizing channel is given by \cite[Exercise~9.4]{Holevo13}, \cite[Eq.~3]{BSST99} as \[ C^{ea}(T_p) = \log d^2 + (1- p \frac{d^2-1}{d^2}) \log (1- p \frac{d^2-1}{d^2}) +p \frac{d^2-1}{d^2} \log(\frac{p}{d^2}) . \]
\end{example}

\begin{example}[Qubit Pauli channels]
    As a special case of highly symmetric quantum channels, Pauli channels \cite{NR07} can be interpreted as depolarizing quantum channels with a probability of depolarization that is axis dependent. In the qubit case, Pauli channels act on states in the following way: \[T_{p_0,p_1,p_2,p_3} (\rho)= p_0 \rho+ p_1 X\rho X +p_2 Y \rho Y +p_3 Z\rho Z\]
    where $p_i\geq 0 \ \forall i\in\{0,1,2,3\}$, and $\sum_i p_i=1$, described by three independent parameters.
    
    When two parameters are equal, we refer to the resulting Pauli channel as a covariant qubit Pauli channel, e.g. 
\[T_{p_0,p_3} (\rho)= p_0 \rho+ \frac{(1-p_0-p_3)}{2} (X\rho X + Y \rho Y) +p_3 Z\rho Z.\] 
This channel is covariant under rotations around the Z-axis; since X and Y have the same coefficient, this channel is described by two independent parameters.

The qubit depolarizing channel with depolarizing probability $p$ is a special case of a qubit Pauli channel with $p_0=(1-\frac{3p}{4})$, $p_1=p_2=p_3=(1-p_0)=p/4$.

For qubit Pauli channels, an expression for the Holevo capacity can be found in \cite{Siudzinska20}; for covariant Pauli channels, this expression can be made more explicit and was computed in \cite{PK22}. The classical capacity of a covariant qubit Pauli channel is given by
\begin{equation*}
C_H (T_{p_0,p_3})= \begin{cases}
C_1(T_{p_0,p_3}) & \text{ if } (p_0-p_3)^2 \geq (2p_0+2p_3-1)^2 \\
C_2(T_{p_0,p_3}) &\text{else}
\end{cases}
\end{equation*}
where
\[C_1(T_{p_0,p_3})=1+\frac{1-p_0+p_3}{2}\log(\frac{1-p_0+p_3}{2}) + \frac{1+p_0-p_3}{2}\log(\frac{1+p_0-p_3}{2}),\]
\[C_2(T_{p_0,p_3})=1+(p_0+p_3)\log(p_0+p_3) + (1-p_0-p_3)\log(1-p_0-p_3) .\]

The entanglement-assisted capacity of a covariant qubit Pauli channel is
\[C^{ea}(T_{p_0,p_3})=2+p_0 \log(p_0)+p_3 \log(p_3) + (1-p_0-p_3)\log( \frac{1-p_0-p_3}{2}) .\]
\end{example}

% \begin{example}[Entanglement-breaking channel]
%     entanglement breaking
% \cite{HSR03}
% \end{example}

\begin{example}[Qubit amplitude damping channel]
Let $\rho=\begin{pmatrix}\rho_{00} & \rho_{01}\\  \rho_{10} &  \rho_{11}\end{pmatrix}$ be a qubit quantum state. Then, the amplitude damping channel with damping parameter $\gamma\in[0,1]$
transforms the state in the following way:
\[T_{\gamma}(\rho)=\begin{pmatrix}\rho_{00}+\gamma \rho_{11}& \sqrt{(1-\gamma)}\rho_{01}\\ \sqrt{(1-\gamma)}\rho_{10} & (1-\gamma)\rho_{11}\end{pmatrix}.\]
% amplitude damping Kraus?
% \begin{align}
%     E_0 = \begin{pmatrix}
%         1 & 0 \\
%         0 & \sqrt{1-\gamma}
%     \end{pmatrix}, \qquad E_1 = \begin{pmatrix}
%         0 & \sqrt{\gamma} \\
%         0 & 0
%     \end{pmatrix}
% \end{align}

The Holevo capacity of the amplitude damping channel is given by \cite[Eq.~90]{BSST02} as
\begin{align*} C_{H}(T_{\gamma})& =  \max_{p\in[0,1]}  h(\gamma p) - h( \frac{1}{2} (1+\sqrt{1-4\gamma (1-\gamma)p^2})),
\end{align*}
and the entanglement-assisted capacity of the qubit amplitude damping channel is given by \cite[Eq.~88]{BSST02} as
\begin{align*} C^{ea}(T_{\gamma})& =  \max_{p\in[0,1]} h(p)+h((1-\gamma)p)- h(\gamma p) .
\end{align*}
%The limit is 4 \cite{BSST02}.
\end{example}

\section{Quantum fault-tolerance}
\label{sec-ft-intro}

Due to the potential of leveraging quantum effects for information transmission and computation, there have been significant theoretical and experimental steps towards building and manipulating quantum systems in a controlled manner. One major challenge in this is that quantum systems are highly susceptible to fluctuations and noise, and mitigating this is the objective of designing a quantum fault-tolerant architecture. In this section, we give an overview over state-of-the-art techniques for fault-tolerance and quantum error correction. More comprehensive reviews can be found in \cite{Preskill97,Gottesman09,Steane06}.

%One significant roadblock comes in the form of high susceptibility to noise and fluctuations, which is the object of fault-tolerant quantum architecture.
%Due their interesting properties in information transmission, and the possibility of leveraging quantum effects for solving computational problems, there have been many theoretical and experimental advances in building and manipulating quantum systems in a controlled manner.

%Due to increased interest in leveraging quantum systems for computation, there has been a lot of progress towards building quantum computers. However, these 

%Beyond the interesting properties of quantum systems for transmitting information, there is also interest in understanding how quantum mechanics may be useful for solving computational problems in the realm of complexity theory and quantum algorithms.

%they also exhibit usefulness more generally for solving computational problems.

%\section{Fault-tolerant implementations of quantum circuits, in general }
\subsection{Noise in a quantum computer}\label{sec-intro-noise-and-errors}

%Quantum computers are based on the principles of quantum mechanics, which allows them to perform certain calculations more efficiently than classical computers. However, quantum systems are also inherently prone to errors due to interactions with their environment and sources of noise. These errors can cause the quantum state of the system to become corrupted, leading to incorrect outputs of the quantum computation.

In a quantum computing device, such as a superconducting qubit platform, a given quantum computing task is typically realised by a quantum circuit, which is a sequence of elementary quantum gates applied to a quantum system, typically one or multiple qubits. In other words, quantum circuits are quantum channels that can be written as a combination of elementary operations from a gate set.

In the course of this thesis, we will refer to the following list of elementary operations: 

\begin{itemize}
    \item identity gate $\mathbbm{1}=\begin{pmatrix}
    1&0\\0&1\end{pmatrix}$
    \item the three Pauli gates $X=\begin{pmatrix}
    0&1\\1&0\end{pmatrix}$, $Y=\begin{pmatrix}
    0&-i\\ i&0\end{pmatrix}$ and $Z=\begin{pmatrix}
    1&0\\0&-1\end{pmatrix}$
    \item Hadamard gate $H=\begin{pmatrix}
    1&1\\1&-1\end{pmatrix}$
    \item T gate $T=\begin{pmatrix}
    1&0\\0&e^{i\pi/4} \end{pmatrix}$
    \item CNOT gate $\CNOT_{12}=\begin{pmatrix}
    1&0&0&0\\0&1&0&0\\0&0&0&1\\0&0&1&0 \end{pmatrix}$
    \item preparation in the computational basis
    \item discarding of a qubit (i.e. performing a trace of a subsystem)
    \item measurements in the computational basis
\end{itemize}

We assume that a quantum circuit $\Gamma$ is specified by a particular decomposition, which we may call its \emph{circuit diagram}. The set of elementary operations in the circuit diagram is the circuit's \emph{set of locations} $\Loc(\Gamma)$, and the number of elementary operations in the decomposition is then denoted by $| \Loc(\Gamma)|$.

The physical setup of qubits and gates that makes up the circuit is then subject to the noise that governs quantum effects.

%composed of in terms of a sequence of elementary quantum gates. 

%Quantum computing tasks are performed on a quantum computer via a realisation as a quantum circuit, meaning a list of gates?. Quantum circuits are the dense subset of quantum channels which can be written as a composition of the following elementary gate operations:  identity gate, Pauli gates, Hadamard gate H, and the T gate, CNOT gate, discarding of a qubit (i.e. performing a trace of a subsystem) and measurements and preparations in the computational basis. These gates are then used to manipulate the quantum state over the duration of the computation and can realize a quantum computing protocol.

One source of noise in a quantum system is decoherence \cite{Zeh70,Schlosshauer19}. Decoherence occurs when a quantum system interacts with its environment and loses its coherence, or its ability to maintain a superposition of states. Experiments observing and studying decoherence effects have been performed for a variety of quantum computing setups, including superconducting qubits \cite{CNHM03,VAC+02} and trapped ion qubits \cite{HRB08,MKT+00,TMK+00}.

Another source of error during the course of a quantum computation is the noise caused by manipulating the quantum system's state by imperfect quantum gates \cite{DiVincenzo2000}. These effects can occur due to imperfect control of the gates, or due to fluctuations in the system's parameters during the time duration of the gate operation, and they have also been observed for many quantum computing platforms, including superconducting qubits \cite{GCS17,WNJRM17} and trapped ion qubits \cite{BWC+11}.

In our work in this thesis, we will largely consider circuits that are subject to noise described by the i.i.d. Pauli noise model. In this model, each gate of the circuit is affected locally by noise in the form of additional Pauli channels being inserted into the circuit diagram according to the following rules (see also Definition II.1 in \cite{CMH20}):
\begin{itemize}
\item For single qubit gates, a Pauli channel is applied after the gate.
\item For two qubit gates, a tensor product of two Pauli channels is applied after the gate.
\item For preparation gates, a Pauli channel is applied after the gate.
    \item For a measurement gate or a trace gate, a Pauli channel is applied before the gate.
\end{itemize}
The locations where a Pauli channel is inserted, and which Pauli channel is inserted, are specified by a \emph{fault pattern} $F$. Any possible fault pattern $F$ is selected with a (classical) probability $P(F)=(1-p)(p/3)^{l_x+l_y+l_z}$ where $l_k$ is the number of locations where a fault of type $k=\{\id,x,y,z\}$ appears. We will often represent a fault-affected quantum circuit as a mixture of circuits affected by different Pauli fault patterns $F$ with probability $P(F)$. In our model, each gate is faulty with probability $p$.

This model for the noise in a quantum circuit is frequently employed in theoretical works such as \cite{AGP05,Gottesman16}, and has been found to be a decent approximation to the noise observed on the experimental data of real quantum devices, for example in \cite{Google21} for superconducting qubits and \cite[Fig.~S2]{ZLZCLZK20} for trapped ion qubits. In Section~\ref{sec-overhead-and-scaling}, will briefly remark on the use of this model for our work on communication theory in Chapter~\ref{chapter-fteacap} .

We denote a quantum circuit $\Gamma$ which is affected by noise where a fault pattern $F$ occurs with a probability according to a distribution $\mathcal{F}$ by $[\Gamma]_{\mathcal{F}}$ ("the circuit $\Gamma$ affected by the noise model $\mathcal{F}$"). 

The prevalence of errors in a quantum system dictates whether or not useful long computations can be performed on a quantum computer. If all gates, preparations and measurements are noisy, it becomes almost inevitable that an error will occur during a long computation, which can have devastating effects on the result. It is therefore integral that we construct our quantum systems and quantum operations so as to protect against this noise - for example, it would seemingly be unwise to perform a gate, immediately followed by its inverse. At least to some degree, our implementation of the protocol can be either well-protected or badly protected from noise. And it would be even better if we could implement our computation in such a way that it is robust against quantum errors, so that we can detect or correct them!

\emph{Quantum fault-tolerance} is a set of techniques and methods used in quantum computing to protect against errors and ensure the reliable operation of quantum computers. The basic idea behind fault-tolerance is to design quantum algorithms and architectures in such a way that they can tolerate a certain amount of noise and errors in the system without compromising the accuracy of the computation.

This is typically achieved by employing techniques from \emph{quantum error correction} in order to make the quantum information highly redundant so that errors which affect a low enough number of qubits become correctable.

There are two crucial aspects to this.
One is the assumption that errors are local, in the sense that errors on single qubits or gates dominate over errors that affect mutual qubits at once. The other is that the redundancy of the quantum information should be achieved with a small number of resources; this is often referred to as overhead. The more qubits are used, the more likely it becomes that errors will affect the system.

%Crucially, there is another trade-off involved in the construction of a good code implementation.
%Crucially, this overhead should be small. Adding redundancy / qubits helps? It can only be better when we can do sth on more qubits; but doing something also has errors?

This means that simply adding more qubits is not the solution; instead, the methods of quantum error correction rely on clever tricks that simultaneously correct or detect the errors while keeping the overhead low. %It is perhaps not surprising that the list of quantum error correcting codes is pretty finite ;)
In the well-known stabilizer formalism, this corresponds to finding a subgroup of the Pauli group that is well-suited for protecting the information \cite{Gottesman97}. Then, a computation can be implemented in a protected subspace with the help of fault-tolerant procedures or gadgets, which limit the propagation of error. We will introduce the mathematical formalism related to these concepts more precisely in Section~\ref{sec-error-correction-intro}.

Even as experimental setups and devices keep improving, it is believed that some quantum effects such as decoherence will never make it possible to execute gates quickly enough so as to avoid that any error happens at all \cite{Preskill18,Terhal23}. It is not unreasonable to suppose that fault-tolerance is a concept that will likely have relevance beyond near-term devices and algorithms (at least, if quantum computers will have relevance).% ;) ).

%It is suspected that this noise will never truly go away. Luckily, there exist techniques and methods that can deal with it.

%necessitates the use of techniques and methods that can counter-act such errors in order to make long computations useful.

%Quantum error correction works by making quantum information redundant so that errors affecting a low enough number of qubits become correctable.

%Terhal:
%Even if we have noisy gates, we can do reliable quantum computation with not an exponential overhead - threshold theorem. (Concatenation.)
%Adding redundancy / qubits helps? It can only be better when we can do sth on more qubits; but doing something also has errors?
%(Google paper actually)
%(keeping it alive) (vs gates) (storage)

%Locality assumption!!!

%To counter-act such errors, 

%Quantum fault-tolerance is a set of techniques and methods used in quantum computing to protect against errors and ensure the reliable operation of quantum computers. The basic idea behind fault-tolerance is to design quantum algorithms and architectures in such a way that they can tolerate a certain amount of noise and errors in the system without compromising the accuracy of the computation. This is typically done by making the quantum information highly redundant so that errors affecting a low enough number of qubits become correctable.

There exist several proposals for quantum error correcting codes with unique advantages and challenges for near-term implementation. Each quantum error correcting code comes with an overhead, connectivity, thresholds, a set of transversal gates, and various other aspects which can make them more or less suited for a computing task. We will introduce some examples of quantum error correcting codes in Section~\ref{sec-error-correction-codes-examples}.

\subsection{Fault-tolerant implementation of quantum circuits}
\label{sec-error-correction-intro}

In order to protect a quantum circuit against noise, it is integral to choose a clever implementation (by which we mean qubit setup and gate setup) that protects against errors as much as possible. Quantum error correction provides one framework for clever circuit implementation.

Given a specific quantum circuit, we will refer to its qubits as logical qubits, and its gates as logical gates. The state of a logical qubit is encoded in a fault-tolerant way into a larger number of physical qubits. For this section, we will consider codes where one single logical qubit is encoded in the quantum state of $K$ physical qubits. This can be referred to as "implementing a quantum circuit in a quantum error correcting code".

With a well-selected construction, the whole computation can then be protected against noise. This is sometimes referred to as a "fault-tolerant implementation of a quantum circuit in a quantum error correcting code" or a "fault-tolerant simulation of the quantum circuit".

The notion of a "fault-tolerant implementation" by itself is not necessarily tied to implementations in a quantum error correcting code; sometimes, a circuit implementation may be fault-tolerant without any redundancy or code-based construction, and/or due to some completely different reason. Fault-tolerance chiefly refers to the ability of an implemented circuit which is subject to some noise model to perform like the ideal, noiseless circuit with high probability, as long as the noise is not too bad. This is formally described by a \emph{threshold theorem}, which states that a circuit is fault-tolerant (logical errors occur with low probability) if the probability of a physical error occuring is below a certain threshold. (As an example, consider Theorem~\ref{thm-threshold} for the threshold theorem of the concatenated 7-qubit Steane code.)

In essence, quantum error correction works because, when assuming a local error model such as the i.i.d. Pauli model at the level of the physical qubits, the probability of our logical qubits being corrupted (decoded incorrectly) is significantly smaller. This means that well-constructed implementations in a quantum error correcting code often fulfill a threshold theorem, and can thus achieve fault-tolerance.

Let $\Pi_K$  be the Pauli group on $K$ qubits. A \emph{stabilizer code} is obtained by selecting a commuting (abelian) subgroup $\mathcal{S}$ of $\Pi_K$ that does not contain $-\mathbbm{1}^{\otimes K}$, and any stabilizer code has an associated simultaneous $+1$-eigenspace $\mathcal{C}\in(\mathbbm{C}^{2})^{\otimes K}$, which is called the \emph{code space}. This commuting subgroup $\mathcal{S}$ is referred to as the \emph{stabilizer group} of this code.
 
 Stabilizer codes with $L=\log(\dim(\mathcal{C}))$ logical qubits encoded into $K$ physical qubits, which have code distance $\delta$ are denoted by $[[K,L,\delta]]$ codes in double brackets. The code distance $\delta$ refers to the minimum weight (i.e. number of physical qubits on which it acts non-trivially) of an operator in the $N(\mathcal{S})/\mathcal{S}$, where $N(\mathcal{S})$ is the normalizer of the stabilizer group. A code with distance $\delta$ can detect up to $\delta-1$ errors, and it can correct up to $\frac{\delta-1}{2}$ errors. Therefore, in order to correct one error, a code must thus have a distance of $\delta=3$ or more.
 
 Unless stated otherwise (e.g. in Chapter~\ref{chapter-aws}), we will assume in this thesis that the code space has dimension $2$, i.e. a single qubit is encoded ($L=1$), and the stabilizer subgroup is generated by $K-1$ independent elements. We will denote these generators by $g_1,\ldots ,g_{K-1}$, and we will denote the basis of the code space by $\left\lbrace\ket{\overline{0}},\ket{\overline{1}}\right\rbrace$.

Any product Pauli operator $E:(\mathbbm{C}^{2})^{\otimes K}\rightarrow (\mathbbm{C}^{2})^{\otimes K}$ either commutes or anti-commutes with the elements of this stabilizer group and can be associated to a vector
\[
s = (s_1,\ldots , s_{K-1})\in \{0,1\}^{K-1} ,
\]
% \[
% s = (s_1,\ldots , s_{K-1})\in\mathbbm{F}^{K-1}_2 ,
% \]
where $s_i=0$ if the Pauli operator $E$ commutes with the generator $g_i$, or $s_i=1$ if they anti-commute. This vector $s$ is referred to as the \emph{syndrome} corresponding to the error $E$.

% \begin{example}
% Let $\mathcal{C}$ be a code space with basis $\{\ket{\bar{0}},\ket{\bar{1}}\}$. Consider an error $E$ happening on a system in state $\ket{\bar{0}}$. As a result, after the error occured, the system is in the state $E\ket{\bar{0}}$. Now, if we compute the expectation value of a generator $g_j$, we have $\langle g_j \rangle = \bra{\bar{0}} E g_j E \ket{\bar{0}}$. If $E$ and $g_j$ commute, this is $+1$, but if they anti-commute, this is $-1$. This is one example of a procedure for associating a syndrome to a Pauli operator $E$, where the syndrome bit string could be extracted by measuring the expectation value of the generators. %a measurement of the expectation value of the generators can be used to extract a list of syndromes.
% \end{example}

A general quantum state can be decomposed in terms of eigenspaces associated to the syndromes, as
\[
(\mathbbm{C}^{2})^{\otimes K} = \bigoplus_{s\in\{0,1\}^{K-1}} W_s ,
\]
where $W_s$ is the common eigenspace of the operators $g_1,\ldots ,g_{K-1}$ where we have an eigenvalue $(-1)^{s_i}$ for $g_i$ for each $i$. Each $W_s$ has the same dimension as the codespace (which we take to be equal to $2$) and can be associated (non-uniquely) to some Pauli operator $E_s$ such that 
\begin{equation}  \label{eq-choosing-Paulis}
W_s = \text{span}\left\lbrace E_s\ket{\overline{0}},E_s\ket{\overline{1}}\right\rbrace.
\end{equation}
Then, we can find a basis transformation $D:(\mathbbm{C}^{2})^{\otimes K}\rightarrow \mathbbm{C}^2\otimes (\mathbbm{C}^{2})^{\otimes (K-1)}$ such that 
\begin{equation}\label{eq-choosing-Dec}
    D\left( E_s\ket{\overline{b}}\right) = \ket{b}\otimes \ket{s}
\end{equation}
for any $b\in \left\lbrace 0,1\right\rbrace$ and any $s\in\{0,1\}^{K-1}$. This selection of $E_s$ determines which errors will be correctable by the code, and we illustrate this with an example in Section~\ref{sec-error-correction-codes-examples}. We will frequently denote the zero syndrome corresponding to the error $E_0=\mathbbm{1}_2^{\otimes K}$ (informally, "no error") by $\ketbra{0}{0}\in\mathcal{M}_2^{\otimes (K-1)}$.

To map between the space of the logical qubits (containing ancillas/syndrome space) and the physical qubits, we define the quantum channels $\DecI^*:\mathcal{M}_2^{\otimes K}\rightarrow \mathcal{M}_2^{\otimes K}$ and its inverse $\EncI^*:\mathcal{M}_2^{\otimes K}\rightarrow \mathcal{M}_2^{\otimes K}$ (see also \cite[Section~IIC]{CMH20}), acting as $\DecI^*(X)=D X D^{\dagger}$, and an ideal error correcting channel:
\[\EC^*=\EncI^*\circ (\id_2 \otimes\ketbra{0}{0} \trace )\circ \DecI^*.\]

These channels are mathematical objects which are useful to our analysis of noisy gates rather than real, implemented circuits. They conveniently allow us to relate the ideal, noiseless gate decomposition of a circuit to its implementation in the error correcting code.

Gates from the ideal circuit are implemented as \emph{gadgets} on physical qubits, as defined more precisely below:
\begin{definition}[{Gadgets, \cite[Definition~II.3]{CMH20}}]\label{def-gadget} We define gadgets for 4 distinct cases of elementary gates:
\begin{enumerate}
    \item For each elementary single qubit gate $G:\mathcal{M}_2\rightarrow \mathcal{M}_2$, we define a gadget $\overline{G}:\mathcal{M}_2^{\otimes K}\rightarrow \mathcal{M}_2^{\otimes K}$ such that
\[\DecI^*\circ \ \overline{G} \circ \EncI^*( \cdot \otimes \ketbra{0}{0})= G(\cdot) \otimes \ketbra{0}{0} . \]
\item For each elementary two qubit gate $G:\mathcal{M}_2^{\otimes 2}\rightarrow \mathcal{M}_2^{\otimes 2}$, we define a gadget $\overline{G}:\mathcal{M}_2^{\otimes 2K}\rightarrow \mathcal{M}_2^{\otimes 2K}$ such that
\[(\DecI^*)^{\otimes 2}\circ \overline{G} \circ (\EncI^*)^{\otimes 2}( \cdot \otimes \ketbra{0}{0}^{\otimes 2})= G(\cdot) \otimes \ketbra{0}{0}^{\otimes 2} . \]
\item Let $G_P:\mathbbm{C} \rightarrow  \mathcal{M}_2 $ be a preparation gate. Then, we define a preparation gadget $\overline{G}_P:\mathbbm{C} \rightarrow  \mathcal{M}_2^{\otimes K}$ such that
\[\DecI^*\circ\ \overline{G}_P= G_P \otimes \ketbra{0}{0}  .\]
\item Let $G_M:\mathcal{M}_2 \rightarrow \mathbbm{C} $ be a measurement gate. Then, we define a measurement gadget $\overline{G}_M:\mathcal{M}_2^{\otimes K}\rightarrow \mathbbm{C}$ such that
\[ \overline{G}_M \circ \EncI^*( \cdot \otimes \ketbra{0}{0})= G_M(\cdot) . \]
\end{enumerate}
\end{definition}

% \begin{figure}[htbp]
%    \centering
% %       \includegraphics[width=9cm]{gadgets.PNG}
% \caption{Gadgets
% }
% \label{fig-gadgets}
% \end{figure}

To implement a circuit in an error correcting code such that its behaviour matches with a noiseless version of the same circuit with high probability, an error correction gadget can be performed between every operation to stop errors from accumulating. The concatenated object obtained from a gate gadget and an error correction gadget is referred to as a rectangle of an elementary operation in \cite{AGP05,CMH20}.

\begin{example} \label{ex-error-corr-implementation}
    This sketch illustrates how a small example circuit is implemented in an error-correcting code. This circuit has two logical qubits, one logical single-qubit gate $G$ and one logical two-qubit gate $G'$.
    In the implementation in an error-correcting code, each gate is replaced by the concatenation of its corresponding gadget (e.g. $\overline{G}$ or $\overline{G}'$) and an error-correction gadget $\EC$, except the measurement gate, which is just replaced by its gadget. 

\centering
\begin{tikzcd}
 \gate[style={draw, shape=semicircle, minimum size=0.5cm, xscale=0.7,yscale=0.7, inner sep=0pt, outer sep=0pt, shape border rotate=90}]{}& \gate{G} & \gate[wires=2,nwires=2]{G'} &\gate[style={draw, shape=semicircle, minimum size=0.5cm,  xscale=0.7,yscale=0.7, inner sep=0pt, outer sep=0pt, shape border rotate=270}]{}  \\
\gate[style={draw, shape=semicircle, minimum size=0.5cm, xscale=0.7,yscale=0.7, inner sep=0pt, outer sep=0pt, shape border rotate=90}]{}&\qw&\qw&\gate[style={draw, shape=semicircle, minimum size=0.5cm,  xscale=0.7,yscale=0.7, inner sep=0pt, outer sep=0pt, shape border rotate=270}]{}  \\
\end{tikzcd} \textbf{$\mapsto$}
\begin{tikzcd}
 \gate[wires=2,nwires=2,style={draw, shape=semicircle, minimum size=0.5cm, xscale=1,yscale=1, inner sep=0pt, outer sep=0pt, shape border rotate=90}]{}& \gate[wires=2,nwires=2]{\EC} & \gate[wires=2,nwires=2]{\overline{G}} & \gate[wires=2,nwires=2]{\EC} &\gate[wires=4,nwires=4]{\overline{G}'} &\gate[wires=2,nwires=2]{\EC} &\gate[wires=2,nwires=2,style={draw, shape=semicircle, minimum size=0.5cm,  xscale=1,yscale=1, inner sep=0pt, outer sep=0pt, shape border rotate=270}]{}  \\
 \qw& \qw& \qw& \qw& \qw& \qw&\qw\\
\gate[wires=2,nwires=2,style={draw, shape=semicircle, minimum size=0.5cm, xscale=1,yscale=1, inner sep=0pt, outer sep=0pt, shape border rotate=90}]{}&\gate[wires=2,nwires=2]{\EC} &\qw&\qw&\qw&\gate[wires=2,nwires=2]{\EC} &\gate[wires=2,nwires=2,style={draw, shape=semicircle, minimum size=0.5cm,  xscale=1,yscale=1, inner sep=0pt, outer sep=0pt, shape border rotate=270}]{}  \\
 \qw& \qw& \qw& \qw& \qw& \qw&\qw&\\
\end{tikzcd}
\end{example}

For any code, we can think of ways to construct such gadgets. However, this does not automatically mean that a construction is fault-tolerant; the fault-tolerance of a given implementation has to be analyzed carefully. The need for this arises from the fact that a rectangle may receive noisy input; then, limiting the number of faults we allow in each rectangle to 1 may still lead to uncorrectable errors. Aliferis, Gottesman and Preskill \cite{AGP05} therefore introduce a method wherein gadgets are grouped together with the error-correction gadgets preceding and following it, limiting the amount of errors in the rectangle \emph{and} its input. This grouping is referred to as \emph{extended rectangle (exRec)}. Then, the circuit can be regarded as a construction made up of (overlapping) exRecs, and the characteristics of this construction determine whether the implementation of a quantum circuit protects well against noise.

 %When the input to a rectangle is unconstrained it might already include a high level of noise. Then, a single additional fault occurring with probability p under the noise model Fπ (p) in the rectangle could cause the accumulated faults to be uncorrectable and the overall circuit ot to agree with its desired output. As a consequence the quantum circuit containing the rectangle would not be correct. To avoid this problem it makes sense to derive “correctness” from properties of extended rectangle combining the rectangle and the error correction preceeding it. By designing the error orrection in a certain way [5] its output can be controlled even if there are faults, and a meaningful condition can be defined

% This is said to be a \emph{well-behaved circuit} if it behaves like the ideal circuit in tensor product with some decoupled noise on the syndrome space, which is traced out in the final step. 
% To analyze whether a circuit is fault-tolerant can be achieved by analyzing smaller parts of the circuit. Grouping gates together with the error corrections preceding and following it (or in case of Meas/Trace, with the EC preceding it, and only with the EC following it for Prep), gives us the notion of an extended rectangle (exRec). 

\begin{example}
    Consider the encoded circuit from Example~\ref{ex-error-corr-implementation}. Here, we illustrate how this circuit can be understood as a construction of exRecs, which are illustrated as blue, yellow and red boxes. 
    
\begin{center}
    \begin{quantikz}
\gate[wires=2,nwires=2,style={draw, shape=semicircle, minimum size=0.5cm, xscale=1,yscale=1, inner sep=0pt, outer sep=0pt, shape border rotate=90}]{}\gategroup[wires=2,steps=2,style={dashed,rounded
corners,draw=blue,fill=blue!20,fill opacity=0.5, inner
xsep=2pt,xscale=1.2,yscale=0.9,xshift=-0.1cm},background]{} & \gate[wires=2,nwires=2]{\EC}\gategroup[wires=2,steps=3,style={dashed,rounded
corners,fill=yellow,fill opacity=0.2, inner
xsep=2pt},background]{} & \gate[wires=2,nwires=2]{\overline{G}} & \gate[wires=2,nwires=2]{\EC} \gategroup[wires=4,steps=3,style={dashed,rounded
corners,draw=red,fill=red!20,fill opacity=0.5, inner
xsep=2pt,xscale=1,yscale=1.1},background]{} &\gate[wires=4,nwires=4]{\overline{G}'} &\gate[wires=2,nwires=2]{\EC} \gategroup[wires=2,steps=2,style={dashed,rounded
corners,draw=blue,fill=blue!20,fill opacity=0.5, inner
xsep=2pt,xscale=1.2,yscale=0.9,xshift=0.1cm},background]{} &\gate[wires=2,nwires=2,style={draw, shape=semicircle, minimum size=0.5cm,  xscale=1,yscale=1, inner sep=0pt, outer sep=0pt, shape border rotate=270}]{}  \\
  \qw& \qw& \qw&\qw& \qw& \qw& \qw\\
 \gate[wires=2,nwires=2,style={draw, shape=semicircle, minimum size=0.5cm, xscale=1,yscale=1, inner sep=0pt, outer sep=0pt, shape border rotate=90}]{}\gategroup[wires=2,nwires=2,steps=4,style={dashed,rounded
corners,draw=blue,fill=blue!20,fill opacity=0.5,xscale=1.05,yscale=0.9,xshift=-0.15cm},background]{} &\qw&\qw&\gate[wires=2,nwires=2]{\EC} &\qw&\gate[wires=2,nwires=2]{\EC} \gategroup[wires=2,steps=2,style={dashed,rounded
corners,draw=blue,fill=blue!20,fill opacity=0.5, inner
xsep=2pt,xscale=1.2,yscale=0.9,xshift=0.1cm},background]{} &\gate[wires=2,nwires=2,style={draw, shape=semicircle, minimum size=0.5cm,  xscale=1,yscale=1, inner sep=0pt, outer sep=0pt, shape border rotate=270}]{}  \\
&\qw&\qw&\qw\qw& \qw& \qw&\qw&
\end{quantikz}
\end{center}

It can clearly be seen that the exRecs corresponding to the gadgets overlap, and that their overlap contains the error correction gadgets between them.
\end{example}

Then, in order to analyze a circuit's behaviour under noise, we define a notion of \emph{well-behavedness} for exRecs under a fault pattern $F$. This is the key characteristic that exRecs in a circuit must fulfill in order to ensure that the implementation protects the information. It is defined in terms of the following transformation rules:

\begin{definition}[{Well-behaved exRecs, \cite[Definition~II.5]{CMH20}}]
\label{def-well-behaved}
%Let $G^*$ be an elementary gate. Let $E_G$ be the corresponding exRec. 
For each elementary gate type $G$, we denote the corresponding gadget as in Definition~\ref{def-gadget} by $\overline{G}$. Then, we define a notion of well-behavedness for each case:
\begin{enumerate}
\item For each elementary single qubit gate $G:\mathcal{M}_2\rightarrow \mathcal{M}_2$, we define a gadget $\overline{G}:\mathcal{M}_2^{\otimes K}\rightarrow \mathcal{M}_2^{\otimes K}$. Let $E_G=\EC \ \circ \overline{G} \circ \EC $ be the corresponding exRec and let $F$ be a i.i.d. Pauli fault pattern on the quantum circuit $E_G$. Then, we call $E_G $ well-behaved under $F$ if
\[(\DecI^*) \circ [E_G]_F = (G\otimes S_G^F) \circ (\DecI^*) \circ [\EC ]_F\]
with some quantum channel $S_G^F:\mathcal{M}_2^{\otimes (K-1)}\rightarrow \mathcal{M}_2^{\otimes (K-1)}$.
\item Let $G:\mathcal{M}_2^{\otimes 2}\rightarrow \mathcal{M}_2^{\otimes 2}$ be an elementary two qubit gate, and denote the corresponding gadget by $\overline{G}:\mathcal{M}_2^{\otimes 2K}\rightarrow \mathcal{M}_2^{\otimes 2K}$. Let $E_G=\EC^{\otimes 2}\circ \overline{G} \circ \EC^{\otimes 2}$ be the corresponding exRec and let $F$ be a i.i.d. Pauli fault pattern on the quantum circuit $E_G$. Then, we call $E_G $ well-behaved under $F$ if
\[(\DecI^*)^{\otimes 2} \circ [E_G]_F = (G\otimes S_G^F) \circ (\DecI^*)^{\otimes 2} \circ [\EC^{\otimes 2}]_F\]
with some quantum channel $S_G^F:\mathcal{M}_2^{\otimes2(K-1)}\rightarrow \mathcal{M}_2^{\otimes2(K-1)}$.
\item Let $G_P:\mathbbm{C} \rightarrow  \mathcal{M}_2 $ be a preparation gate, and denote the corresponding gadget by $\overline{G}_P:\mathbbm{C} \rightarrow  \mathcal{M}_2^{\otimes K}$. Let $E_{G_P}=\EC\circ \overline{G}_P$ be the corresponding exRec and let $F$ be a i.i.d. Pauli fault pattern on the quantum circuit $E_{G_P}$. Then, we call $E_{G_P}$ well-behaved under $F$ if
\[\DecI^*\circ [E_{G_P}]_F = (G_P\otimes \sigma_{G_P}^F)\]
with some state $\sigma_{G_P}^F\in\mathcal{M}_2^{\otimes (K-1)}$.
\item Let $G_M:\mathcal{M}_2 \rightarrow \mathbbm{C} $ be a measurement gate, and denote the corresponding gadget by $\overline{G}_M:\mathcal{M}_2^{ \otimes K}\rightarrow \mathbbm{C}$. Let $E_{G_M}= \overline{G}_M \circ \EC$ be the corresponding exRec and let $F$ be a i.i.d. Pauli fault pattern on the quantum circuit $E_{G_M}$. Then, we call $E_{G_M}$ well-behaved under $F$ if
\[[E_{G_M}]_F = ({G_M} \otimes\trace) \circ \DecI^*\circ [\EC]_F.\]
\end{enumerate}
Note that the above transformation rules are sketched in Figure~\ref{fig-well-behaved}.
\end{definition}

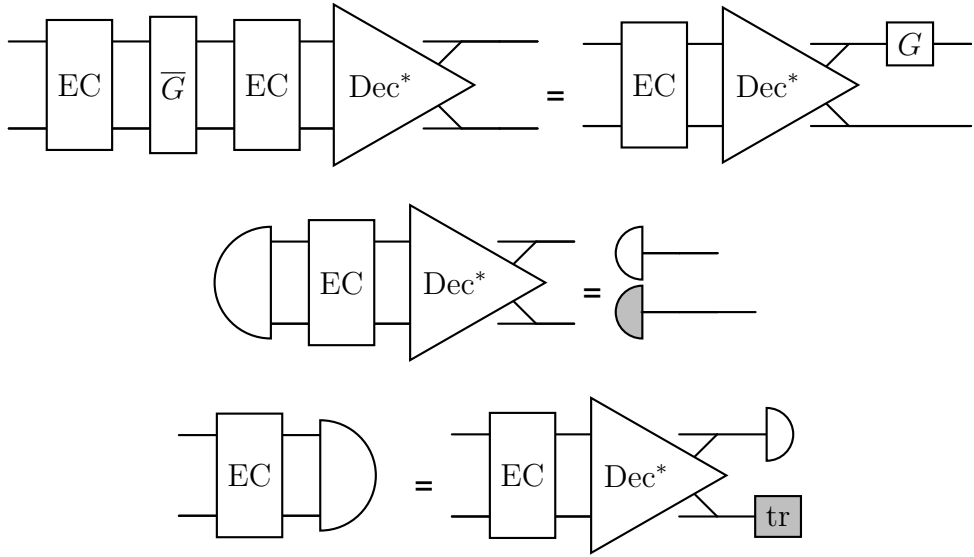
\begin{figure}[htbp]
   \centering
\begin{tikzcd}
\qw & \gate[wires=2,nwires=2]{\EC} &\gate[wires=2,nwires=2]{\overline{G}}  &\gate[wires=2,nwires=2]{\EC} & \gate[wires=2,nwires=2,style={draw, shape=regular polygon, regular polygon sides=3, minimum size=0.5cm, xscale=0.5,yscale=0.5, inner sep=0pt, outer sep=0pt, shape border rotate=-90}]{\DecI^*}& \makeebit{}&\qw&\qw \\
\qw &\qw&\qw&\qw& \qw& & \qw&\qw\\
\end{tikzcd}
\textbf{=}
\begin{tikzcd}
\qw & \gate[wires=2,nwires=2]{\EC} & \gate[wires=2,nwires=2,style={draw, shape=regular polygon, regular polygon sides=3, minimum size=0.5cm, xscale=0.5,yscale=0.5, inner sep=0pt, outer sep=0pt, shape border rotate=-90}]{\DecI^*}& \makeebit{}&\gate{G}&\qw \\
\qw &\qw&\qw& & \qw&\qw  \\
\end{tikzcd}

\begin{tikzcd}
\gate[wires=2,nwires=2,style={draw, shape=semicircle, minimum size=0.5cm,  xscale=1,yscale=1, inner sep=0pt, outer sep=0pt, shape border rotate=80}]{} & \gate[wires=2,nwires=2]{\EC}& \gate[wires=2,nwires=2,style={draw, shape=regular polygon, regular polygon sides=3, minimum size=0.5cm, xscale=0.5,yscale=0.5, inner sep=0pt, outer sep=0pt, shape border rotate=-90}]{\DecI^*}&
\makeebit{} &\qw  \\
&\qw&\qw& &\qw\\
\end{tikzcd}
\textbf{=}
\begin{tikzcd}
\gate[style={draw, shape=semicircle, minimum size=0.5cm,  xscale=0.7,yscale=0.7, inner sep=0pt, outer sep=0pt, shape border rotate=90}]{} &\qw &\qw   \\
\gate[style={draw, shape=semicircle, minimum size=0.5cm,  xscale=0.7,yscale=0.7, inner sep=0pt, outer sep=0pt, shape border rotate=90,fill=gray, fill opacity=0.5}]{} &\qw&\qw&\\
\end{tikzcd}

 \begin{tikzcd}
\qw & \gate[wires=2,nwires=2]{\EC}& \gate[wires=2,nwires=2,style={draw, shape=semicircle, minimum size=0.5cm,  xscale=1,yscale=1, inner sep=0pt, outer sep=0pt, shape border rotate=270}]{} \\
\qw &\qw&\qw\\
\end{tikzcd}\hspace{0.5cm}\textbf{=}
\begin{tikzcd}
\qw & \gate[wires=2,nwires=2]{\EC}& \gate[wires=2,nwires=2,style={draw, shape=regular polygon, regular polygon sides=3, minimum size=0.5cm, xscale=0.5,yscale=0.5, inner sep=0pt, outer sep=0pt, shape border rotate=-90}]{\DecI^*}&
\makeebit{}  &\gate[style={draw, shape=semicircle, minimum size=0.5cm,  xscale=0.7,yscale=0.7, inner sep=0pt, outer sep=0pt, shape border rotate=270}]{}  \\
\qw&\qw&\qw&&\gate[style={fill=gray, fill opacity=0.5}]{\trace} \\
\end{tikzcd}

\caption{Pictorial representation of well-behaved exRecs for cases 1,3 and 4 from Definition~\ref{def-well-behaved}. We highlight the operations on the syndrome space in grey.}
\label{fig-well-behaved}
\end{figure}

% \begin{remark}
%     The channels $S_G^F$
% \end{remark}

\begin{definition}[Well-behaved quantum circuits]
If every exRec in an implementation of a circuit $\Gamma$ in an error correcting code $\mathcal{C}$ which is affected by some Pauli fault pattern $F$ is well-behaved, we call the implementation of the circuit $[\Gamma_{\mathcal{C}}]_{F}$ well-behaved under $F$.
\end{definition}

If a circuit is well-behaved, 
its subparts can be transformed in the manner represented by Figure~\ref{fig-well-behaved}, and the whole circuit can be related to the corresponding ideal circuit decoupled from the syndrome space:

\begin{lemma}[{\cite[Lemma~4]{AGP05} and \cite[Lemma~II.6]{CMH20}}]
 Let $\mathcal{C}\in (\mathbbm{C}^2)^{\otimes K}$ be a stabilizer code with $\dim(\mathcal{C})=2$, and let $\DecI^*$ be the ideal decoding operation for this code. Let $F$ be a Pauli fault pattern. Let $\Gamma:\mathcal{M}_2^{\otimes n}\rightarrow \mathcal{M}_2^{\otimes m}$ be a quantum circuit. If $[\Gamma_{\mathcal{C}} \circ \EC^{\otimes n}]_{F}$ is well behaved under $F$, there exists a quantum channel $S_F:\mathcal{M}_2^{\otimes n(K-1)}\rightarrow \mathcal{M}_2^{\otimes m(K-1)}$ such that 
    \[(\DecI^* )^{\otimes m} \circ [\Gamma_{\mathcal{C}} \circ \EC^{\otimes n} ]_F = (\Gamma \otimes S_F) \circ(\DecI^* )^{\otimes n}\circ [ \EC^{\otimes n} ]_F .\]
\end{lemma}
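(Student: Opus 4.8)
The plan is to peel off the gadgets of $\Gamma_{\mathcal{C}}$ one at a time, using the local transformation rules of Definition~\ref{def-well-behaved} to commute the ideal decoders $\DecI^*$ past each gadget, and to establish the identity by induction on the number $L=|\Loc(\Gamma)|$ of elementary locations in $\Gamma$. First I would fix a linear extension $G_1,\dots,G_L$ of the circuit's dependency order, so that $\Gamma=G_L\circ\cdots\circ G_1$ and the output wires of the last operation $G_L$ are output wires of the whole circuit. Since in the implementation each gate is replaced by its gadget followed by an error-correction gadget (as in Example~\ref{ex-error-corr-implementation}), every gadget $\overline{G_i}$ is flanked by error-correction gadgets: its trailing $\EC$ comes from its own replacement, and its leading $\EC$ is either the trailing $\EC$ of the operation producing its input or one of the factors of the initial $\EC^{\otimes n}$. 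This is exactly the overlapping extended-rectangle structure, with the shared $\EC$ between consecutive gates serving as a trailing EC for the earlier gate and a leading EC for the later one.

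For the inductive step I would process $G_L$ first. Because its outputs are circuit outputs, the decoder factors $(\DecI^*)^{\otimes m}$ supply a $\DecI^*$ sitting immediately after $G_L$'s trailing $\EC$, so its extended rectangle $E_{G_L}$ appears in precisely the form required by the relevant case of Definition~\ref{def-well-behaved}. Applying the corresponding rule rewrites $\DecI^*\circ[E_{G_L}]_F$ as $(G_L\otimes S^F_{G_L})\circ\DecI^*\circ[\EC]_F$ on the wires of $G_L$: the trailing $\EC$ and the gadget are consumed, the gadget becomes the ideal gate $G_L$ on the logical subsystem, a channel $S^F_{G_L}$ is produced on the syndrome subsystems, and a single $\DecI^*$ is left in front of the leading $\EC$. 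That leftover decoder is exactly what the next step needs, since the leading $\EC$ of $G_L$ is the trailing $\EC$ of its predecessor. Removing $G_L$ thus yields a circuit with $L-1$ gates whose output decoders are again correctly positioned, and the (strong) induction hypothesis produces the ideal gate $G_L\circ\cdots\circ G_1=\Gamma$ on the logical registers together with a syndrome channel obtained by composing and tensoring the individual $S^F_{G_i}$ according to the wire structure of $\Gamma$.

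The base of the induction is the all-identity circuit, where $\Gamma_{\mathcal{C}}\circ\EC^{\otimes n}=\EC^{\otimes n}$ and the claim holds with $S_F=\id$; preparation gadgets terminate input branches via case~3 of Definition~\ref{def-well-behaved} (which leaves no leading $\EC$), while the genuine input wires contribute the factor $(\DecI^*)^{\otimes n}\circ[\EC^{\otimes n}]_F$ appearing on the right-hand side. Because the rules of Definition~\ref{def-well-behaved} all share the same \emph{push the decoder past the gadget} form---single-qubit, two-qubit, preparation and measurement gates differing only in the tensor placement and in whether a syndrome channel, a syndrome state, or a trace is produced---each case slots into the same argument, and the resulting $S_F$ is automatically a quantum channel since it is built from compositions and tensor products of the completely positive and trace-preserving maps $S^F_{G_i}$. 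I expect the main obstacle to be organizational rather than conceptual: keeping track of the tensor-factor placement as gates of different arities act on different wires, and verifying that the shared error-correction gadgets are consumed and exposed consistently, so that at every stage a decoder is available in front of exactly the $\EC$ gadget that the next well-behavedness rule requires. A careful choice of the linear extension, together with an inductive invariant recording the position of every $\DecI^*$, should make this bookkeeping routine.
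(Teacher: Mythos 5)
Your proposal is correct and follows essentially the same argument as the paper and its cited sources (\cite{AGP05,CMH20}): the paper does not spell out a proof but illustrates exactly this sequential peeling of well-behaved exRecs via the transformation rules of Definition~\ref{def-well-behaved} in Example~\ref{ex-if-every-exrec-is-wb} and Figure~\ref{fig-if-every-exrec-is-wb}. Your inductive bookkeeping of the shared error-correction gadgets and of where each leftover $\DecI^*$ lands is the right way to make that illustration rigorous.
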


\begin{example}\label{ex-if-every-exrec-is-wb}
   As an example of applying the transformation rules for exRecs in Definition~\ref{def-well-behaved} to a well-behaved circuit, we choose the circuit from Example~\ref{ex-error-corr-implementation} and illustrate each step of the transformation in Figure~\ref{fig-if-every-exrec-is-wb}.
\begin{figure}[htbp]
\centerfloat
       \includegraphics[width=1.25\textwidth,angle=270]{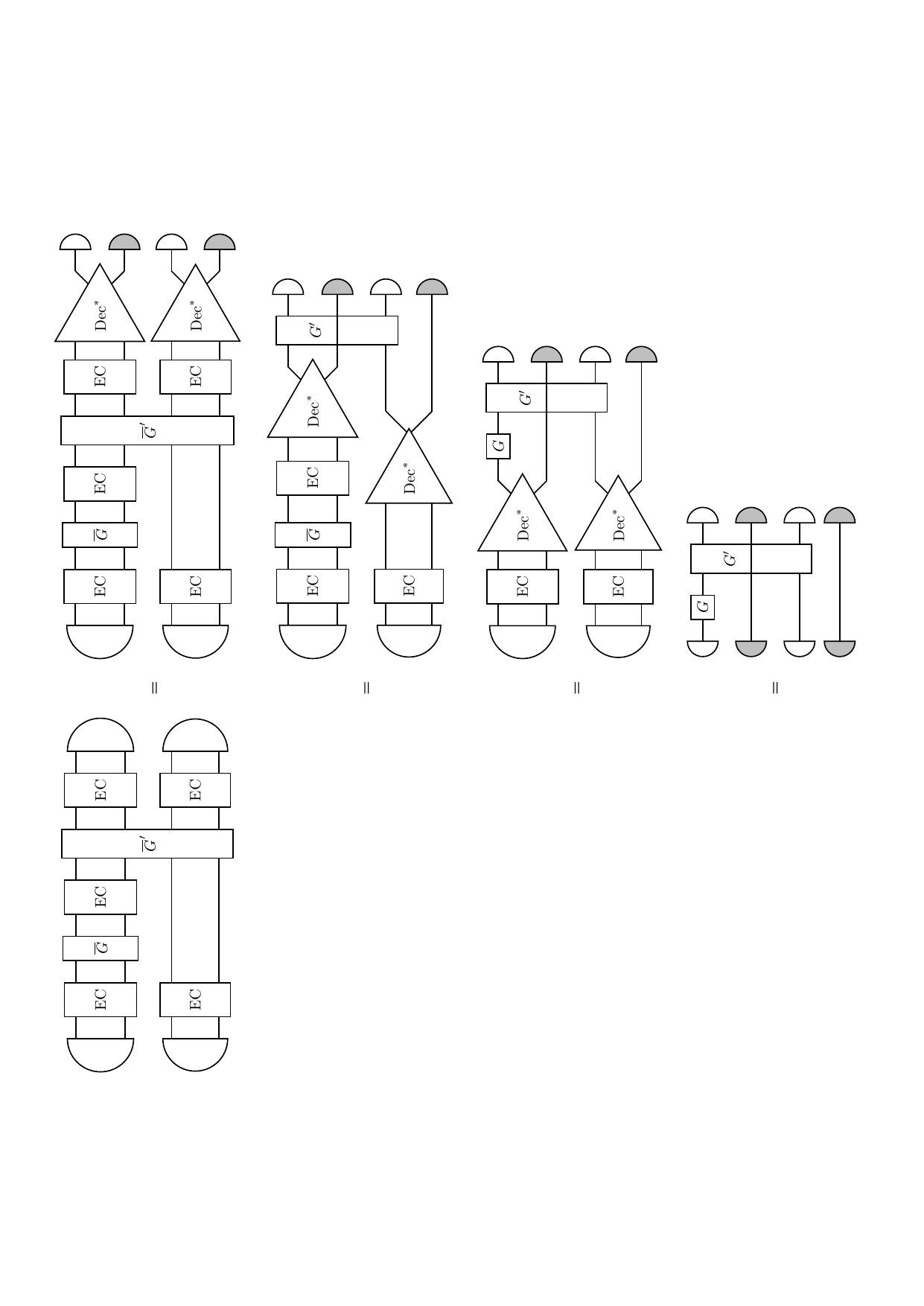}
 \caption{If every exRec in a circuit is well-behaved, the transformation rules ensure that all of the circuits above are equivalent. Then, the encoded circuit simulation in the beginning is equivalent to the unencoded circuit on the data qubits, and the data qubits are completely decoupled from the syndrome space (highlighted in grey).
 }
 \label{fig-if-every-exrec-is-wb}
\end{figure}
 
\end{example}

%A quantum circuit which is implemented in an error correcting code and affected by noise is considered fault-tolerant if its behaviour matches that of the same circuit without any noise with high probability.

For the above implementation in terms of gadgets and rectangles, the probability of a given quantum circuit being fault-tolerant can be bounded by a union bound over the probability that each exRec in the circuit is well-behaved. This probability, if it is high enough for a specific implementation of exRecs, may then allow us to formulate a threshold theorem, e.g. Theorem~\ref{thm-threshold} for the concatenated 7-qubit Steane code.

\subsection{Examples of error correcting codes}
\label{sec-error-correction-codes-examples}

A large variety of quantum error correcting codes has been proposed based on existing classical schemes and on the unique properties of quantum systems. It is an object of highly active research to develop these codes and adapt them to the needs of near-term quantum hardware. The examples of error correcting codes in this section and many others can be found in \cite{Steane06,Gottesman09,ErrorCorrectionZoo}, and an overview over state-of-the-art experimental implementations of different quantum error correction and detection codes can be found in \cite[Table~IV]{AAA+22}.

\subsubsection{The 3-qubit Repetition code}

The 3-qubit repetition code is associated to the stabilizer generators $Z_1Z_2$ and $Z_2 Z_3$.
Associated with these stabilizers are the following simultaneous eigenstates with eigenvalue +1:
\[\ket{\overline{0}}=\ket{000}, \]
\[\ket{\overline{1}}=\ket{111}, \]
forming a basis of the codespace, which is in this case a 2-dimensional subspace of the 8-dimensional space. Then, for $\alpha,\beta\in \mathbbm{C}$, $|\alpha|^2+|\beta|^2=1$, a state $\alpha\ket{0}+\beta\ket{1}$ would be encoded as $\alpha\ket{\overline{0}}+\beta\ket{\overline{1}}=\alpha\ket{000}+\beta\ket{111}$.

On this code, the logical $X$-gate is the gate  $\overline{X}:\ket{\overline{0}}\mapsto\ket{\overline{1}},\ket{\overline{1}}\mapsto\ket{\overline{0}}$, which would be implemented by the gate $X_1X_2X_3$ applied to the physical qubits.

This code can correct any $X$-error, so long as only one physical qubit is affected. It is therefore not a very practical code, as it is ineffective against phase errors. Nonetheless, we will use it to illustrate some of the abstract objects from Section~\ref{sec-error-correction-intro}.

\begin{table}[htbp]
    \centering
    \begin{tabular}{c|c|c|c}
         Syndrome & $X_1$ & $X_2$ &$X_3$   \\ \hline
         $Z_1Z_2$ & 1 & 1& 0  \\
$Z_2 Z_3$      &0 & 1 & 1 \\  
        \end{tabular}
    \caption{Table outlining the relation between the generators and single-qubit $X$-errors. $X_i$ means that a Pauli $X$-error occurs on the i-th qubit, while $\mathbbm{1}_2$ is applied the other two qubits. A value of 0 signifies that the error and the generator commute, while 1 signifies that they anti-commute.}
    \label{tab-repetition-code}
\end{table}

If a single qubit $X$-error happens on the second physical qubit, our system would be in the state $\alpha\ket{010}+\beta\ket{101}$, and our syndrome measurement would indicate that the parity of the first and second physical qubit (associated to the $Z_1Z_2$ stabilizer, with$(\alpha\bra{010}+\beta\bra{101})Z_1Z_2 (\alpha\ket{010}+\beta\ket{101})=-|\alpha|^2-|\beta|^2=-1$) is $1$ and that the parity of the second and third qubit is $1$, giving a bitstring $11$ as the resulting syndrome.

Table~\ref{tab-repetition-code} indicates that each single qubit $X$-error leads to a distinct syndrome pattern, and thus allows us to produce a guess about the error that happened. If our syndrome measurement returns the bitstring $11$, we would associate this with an $X$-error on the second qubit, and we would apply $X_2$ to the state in order to correct it. This error would thus be a correctable error.

%The single qubit $X$-errors are thus correctable.

There are different error patterns that could give rise to the same syndrome string - for example, $X_1X_3$ would also give rise to the syndrome $11$. We thus have some freedom in associating a Pauli error $E_s$ to each syndrome, as outlined in Eq.~\eqref{eq-choosing-Paulis}. Depending on our choice, different errors become classified as correctable. Due to our assumption of locality, it follows that an error on two qubits (in this case qubit 1 and qubit 3) is less likely ($\sim p^2$) than an error on one single qubit (in this case qubit 2, with probability $\sim p$), and would thus be inclined to associate the latter error with the syndrome $11$. In an alternative error model, where we might suppose that errors on 2 physical qubits are much more likely than single-qubit errors, the opposite choice would be preferable.

For each of the four possible syndromes, we thus associate them with the following $2$-dimensional subspaces:
\begin{itemize}
    \item $W_{00}=\{ \ket{\overline{0}},\ket{\overline{1}} \}$, 
    \item $W_{10}=\{ X_1 \ket{\overline{0}}, X_1 \ket{\overline{1}} \}$, 
    \item $W_{11}=\{ X_2 \ket{\overline{0}}, X_2 \ket{\overline{1}} \}$, 
    \item $W_{01}=\{ X_3 \ket{\overline{0}}, X_3 \ket{\overline{1}} \}$.
\end{itemize}

The space $\bigoplus_{i,j} W_{ij}$ contains every 3-qubit state.

We define a unitary change of basis such that \begin{equation*}
    \begin{split}
      D: &  \ket{000}\mapsto\ket{000},\ket{100}\mapsto\ket{001},\ket{010}\mapsto\ket{011},\ket{001}\mapsto\ket{001},\\& \ket{111}\mapsto\ket{100},\ket{011}\mapsto\ket{101},\ket{101}\mapsto\ket{111},\ket{110}\mapsto\ket{101}
    \end{split}
\end{equation*}
which can be used to construct the decoder in Eq.~\eqref{eq-choosing-Dec}.

Consider now the error $X_1 X_3$ happening on the state $\ket{\overline{0}}$ (logical $\ket{0}$). Then, the map from Eq.~\eqref{eq-choosing-Dec} would map this to $D(X_1X_3 \ket{\overline{0}})=D(\ket{101})=D(X_2\ket{\overline{1}})=\ket{1}\otimes \ket{11}$. Now, we would observe the syndrome $11$, supposing that an $X_2$ error happened. To correct this error, we would thus apply $X_2$ - leading to an overall logical error, where our system would be in state $\ket{\overline{1}}$ after the correction, when it should be in $\ket{\overline{0}}$. This error would still be referred to "correctable", and the corrected state is still in the code space. We would thus like this error (and other two-qubit or three-qubit $X$-errors) to be sufficiently unlikely to occur, and the performance of our error correcting code is linked to the likelihood of such logical errors.% Essentially, this quantum error correction is designed to be successful protection against the most likely errors (which are no errors, or a single-qubit error, in this case).

%which Pauli errors we select in our subspace splitting; in an alternative error model, where we might suppose that errors on 2 physical qubits are much more likely than single-qubit errors.

%For the performance of an error-correcting code, it is therefore necessary to suppose that these errors are sufficiently unlikely to occur. In an alternative error model, where we might suppose that errors on 2 physical qubits are much more likely than single-qubit errors, we would thus like to choose two-qubit errors as the default, and single-qubit errors would lead to logical errors.

Any $Z$-error on one, two or three physical qubits commutes with the generators, and would thus be an uncorrectable error.

In a model where phase flips occur with the same probability as bit flips, this code is therefore not an ideal candidate for achieving fault-tolerance. By change of basis, we can also define a 3-qubit phase-flip repetition code that corrects all single qubit $Z$-errors, but no $X$-errors; to correct both at the same time, more qubits are needed. Among the first codes to be proposed which can correct all single qubit errors were the 7-qubit Steane code [[7,1,3]] \cite{Steane96} (see below) and the 9-qubit Shor code [[9,1,3]] \cite{Shor95}, followed by the 5-qubit code [[5,1,3]] \cite{LMPZ96,BDSW96}.

%The first code to be proposed that could correct $X$ and $Z$-errors (and thereby arbitrary single qubit errors \cite{Steane06}) was the 9-qubit Shor code \cite{}. Since then, codes with smaller overheads have been proposed, such as 
%The first codes to be proposed that could correct $X$ and $Z$-errors are the 7-qubit Steane code \cite{Steane96} (see also Section~\ref{}) and the 5-qubit code.

%\subsubsection{Surface code?}

\subsubsection{The [[4,2,2]] error detection code}

The [[4,2,2]] error detection code \cite{VGW96,GBP97} encoded two logical qubits into 4 physical qubits and can detect all single-qubit errors. The code states are given by:

\[\ket{\overline{00}}= \frac{1}{\sqrt{2}} (\ket{0000}+\ket{1111}),\]
\[\ket{\overline{01}}= \frac{1}{\sqrt{2}} (\ket{1100}+\ket{0011}),\]
\[\ket{\overline{10}}= \frac{1}{\sqrt{2}} (\ket{1010}+\ket{0101}),\]
\[\ket{\overline{11}}= \frac{1}{\sqrt{2}} (\ket{0110}+\ket{1001}).\]

These states span a subspace of dimension 4 of the 4-qubit space of dimension 16, which is the subspace associated to the $+1$-eigenvectors of the operators $X_1X_2X_3X_4$ and $Z_1Z_2Z_3Z_4$, which form this code's stabilizer group for error detection.

Logical $\overline{X}_1$ on the first qubit is implemented by $X_1 X_3$ on the physical qubits, and logical $\overline{X}_2$ acting the second qubit is implemented by $X_1 X_2$ on the physical qubits. Since there are at least two qubits involved in these logical gates, the code distance is $2$; it can thus not correct any errors, and detect exactly one.

Because of its low qubit number, and also because implementing error detection is currently more feasible than implementing error correction, the [[4,2,2]] code is particularly well-suited for demonstrations of the principles of fault-tolerance on real-life current quantum devices. In Chapter~\ref{chapter-aws}, the state-of-the-art of implementations of this code is discussed in more detail and supplemented with our own
investigations for trapped ion quantum computers.

\begin{remark} A [[4,1,2]] Bacon-Shor subsystem error detection code has been proposed in \cite{SPS21}. While stabilizer codes rely on measurements of the stabilizer group, subsystem codes construct a gauge group with the stabilizer group as its center. For complicated stabilizers, this can simplify error detection or correction because these gauge operators can be easier to measure in terms of how many gates and qubit connections are needed. For example, the parity measurements proposed in the [[4,2,2]] code cannot be performed fault-tolerantly; after the measurement has been performed, the state is not guaranteed to be in the code space. Based on the [[4,2,2]] Bacon-Shor error detection circuit, the [[4,1,2]] Bacon-Shor subsystem code can only protect one logical qubit instead of two, but has fault-tolerant error detection circuits \cite{SPS21}. Due to this, these error detection circuits can be performed intermittently, for example after every logical gate, which would prohibit some combinations of errors that would not be detected in the [[4,2,2]] code. \end{remark} 

%{just as a side note, this error detecting code can be viewed as a surface code with 4 data qubits and one ancilla qubit. Further, they can also be viewed as a polar code of length 4.}

\subsubsection{The 7-qubit Steane code}

\begin{figure}[htpb]
    \centering
    \includegraphics[width=8cm]{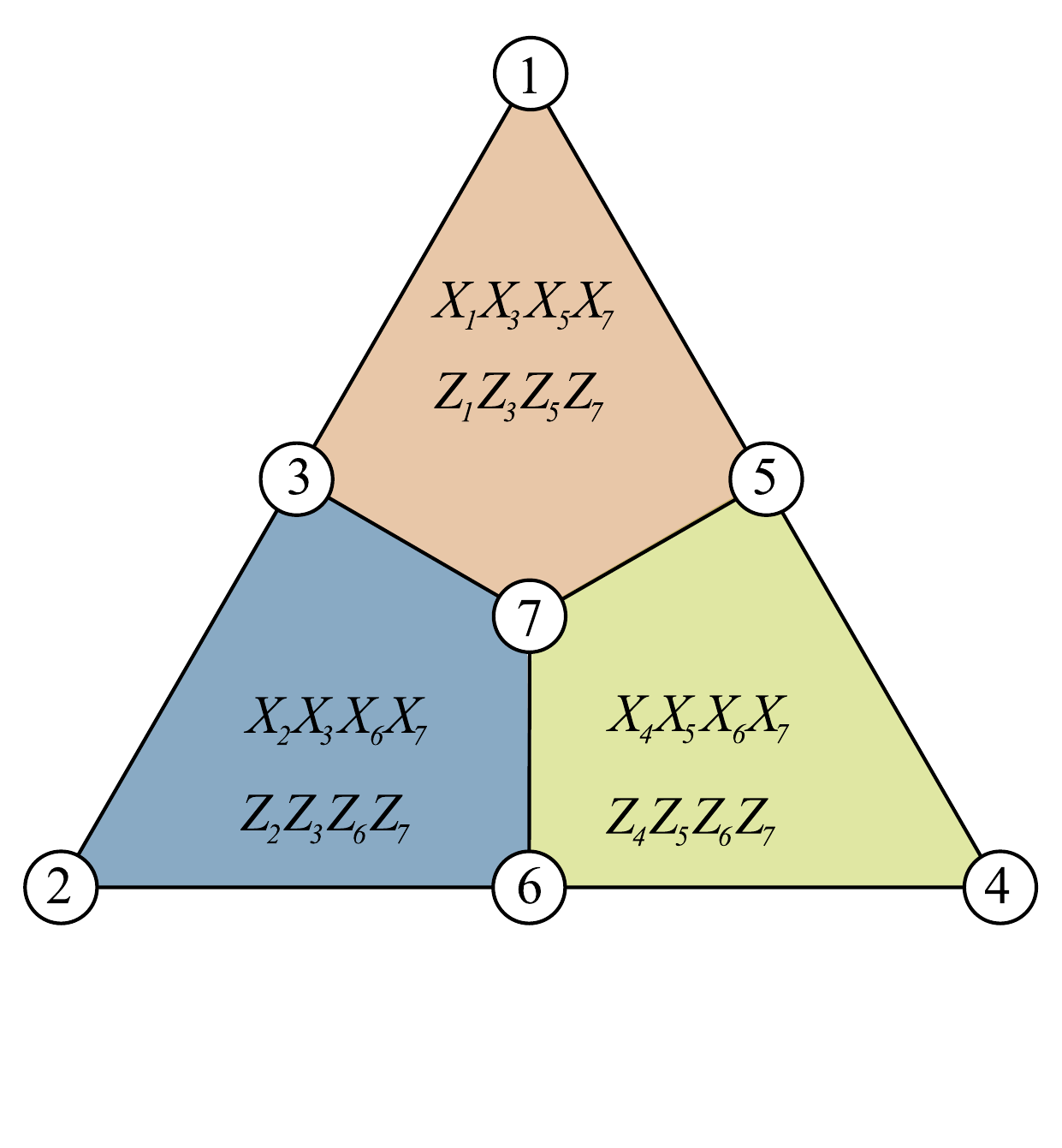}
    \caption{Illustration of the connectivity layout for the 7-qubit Steane code because of the qubits that have to interact with the same ancilla for the stabilizer measurement.}
    \label{fig:enter-label}
\end{figure}

The 7-qubit Steane code [[7,1,3]] \cite{Steane96} encodes one logical qubit into 7 physical qubits, and can correct all single-qubit errors. The stabilizer generators for the code are given by the following set:
\[g_1=X_4 X_5 X_6 X_7, \]
\[g_2=X_2 X_3 X_6 X_7, \]
\[g_3=X_1 X_3 X_5 X_7, \]
\[g_4= Z_4 Z_5 Z_6 Z_7, \]
\[g_5=Z_2 Z_3 Z_6 Z_7, \]
\[g_6 =Z_1 Z_3 Z_5 Z_7 .\]
% \[g_1= \mathbbm{1} \otimes \mathbbm{1}\otimes \mathbbm{1}\otimes X\otimes X\otimes X\otimes X=X_4 X_5 X_6 X_7 \]
% \[g_2= \mathbbm{1} \otimes X\otimes X\otimes \mathbbm{1}\otimes \mathbbm{1}\otimes X\otimes X =X_2 X_3 X_6 X_7 \]
% \[g_3= X\otimes \mathbbm{1}\otimes X\otimes \mathbbm{1}\otimes X \otimes \mathbbm{1}\otimes X =X_1 X_3 X_5 X_7 \]
% \[g_4= \mathbbm{1} \otimes \mathbbm{1}\otimes \mathbbm{1}\otimes Z\otimes Z\otimes Z\otimes Z = Z_4 Z_5 Z_6 Z_7 \]
% \[g_5= \mathbbm{1} \otimes Z\otimes Z\otimes \mathbbm{1}\otimes \mathbbm{1}\otimes Z\otimes Z =Z_2 Z_3 Z_6 Z_7 \]
% \[g_6= Z\otimes \mathbbm{1}\otimes Z\otimes \mathbbm{1}\otimes Z \otimes \mathbbm{1}\otimes Z =Z_1 Z_3 Z_5 Z_7 \]

The stabilizers mutually commute, $[g_i,g_j]=0$ $\forall i, j$, and have the joint eigenvectors $\ket{\bar{0}}$ and $\ket{\bar{1}}$ with eigenvalue $+1$, where
\begin{equation*}\begin{split}  \ket{\bar{0}}_{7} &=
 \frac{1}{8} \Big(\ket{0000000}+\ket{0001111} + \ket{0110011}+\ket{1010101} \\& \hspace{1cm} +\ket{0111100}+\ket{1011010}+\ket{1100110} +\ket{1101001}\Big) \end{split}\end{equation*}
and
\begin{equation*}\begin{split}   \ket{\bar{1}}_{7} 
&= \frac{1}{8} \Big(\ket{1111111}+\ket{1110000} + \ket{1001100}+\ket{0101010} \\& \hspace{1cm} +\ket{1000011}+\ket{0100101}+\ket{0011001} +\ket{0010110}\Big) .\end{split}\end{equation*}
These two states thus span the code space of the 7-qubit Steane code. Any $7$-qubit state can then be written as a combination in the basis of the code space and the subspaces from Eq.~\eqref{eq-choosing-Paulis}.

% \begin{example}
%     Choosing subspaces so that they are spanned by single-qubit Pauli, we note that
% \begin{equation*}\begin{split}&\ket{0000000}= \frac{1}{8} (\ket{\bar{0}}_7 + \sum_j Z_j \ket{\bar{0}}_7 ) %\\& \overset{\DecI^*}{\rightarrow} (\ket{0} \otimes \frac{1}{8}(\ket{000000}+\ket{001000} + \ket{010000}+\ket{011000} +\ket{100000}+\ket{101000}+\ket{110000} +\ket{111000} ) )
% \end{split}\end{equation*}
% Applying $\DecI^*$, we would obtain:
% \begin{equation*}\begin{split}\ket{0000000} &\overset{\DecI^*}{\rightarrow} (\ket{0} \otimes \frac{1}{8}(\ket{000000}+\ket{001000} + \ket{010000}+\ket{011000} \\& \qquad +\ket{100000}+\ket{101000}+\ket{110000} +\ket{111000} ) )
% \end{split}\end{equation*}
%\end{example}

Logical Pauli gates are given by $\overline{X}=X^{\otimes 7}$ and $\overline{Z}=Z^{\otimes 7}$.

Any single qubit error applied to these states can be corrected, and the syndromes associated with $X$ and $Z$-errors are listed in Table~\ref{tab-steane-x} and \ref{tab-steane-z}. 

\begin{table}[h!]
    \centering
    \begin{tabular}{c|c|c|c|c|c|c|c}
         Syndrome & $X_1$ & $X_2$ &$X_3$ &$X_4$ &$X_5$ &$X_6$ &$X_7$  \\ \hline
         %$g_1$ & 0 & 0& 0& 0& 0&0&0 \\
           %       $g_2$ & 0 & 0& 0& 0& 0&0&0 \\
            %               $g_3$ & 0 & 0& 0& 0& 0&0&0 \\
                                   $ g_4$ & 0 & 0& 0& 1&1&1&1 \\
                                             $g_5$ & 0 & 1& 1& 0& 0&1&1 \\
                                                      $g_6$ & 1 & 0& 1& 0& 1&0&1 \\
                                                      \end{tabular}
    \caption{Table outlining the relation between the generators and various single qubit errors. $X_i$ means that a Pauli $X$-error occurs on the i-th qubit, while $\mathbbm{1}$ is applied the other (6) remaining qubits. A value of 0 signifies that the error and the generator commute, while 1 signifies that they anti-commute.}
\label{tab-steane-x}
\end{table}

\begin{table}[h!]
    \centering
    \begin{tabular}{c|c|c|c|c|c|c|c}
         Syndrome & $Z_1$ & $Z_2$ &$Z_3$ &$Z_4$ &$Z_5$ &$Z_6$ &$Z_7$  \\ \hline
         %$g_1$ & 0 & 0& 0& 0& 0&0&0 \\
           %       $g_2$ & 0 & 0& 0& 0& 0&0&0 \\
            %               $g_3$ & 0 & 0& 0& 0& 0&0&0 \\
                                   $ g_1$ & 0 & 0& 0& 1&1&1&1 \\
                                             $g_2$ & 0 & 1& 1& 0& 0&1&1 \\
                                                      $g_3$ & 1 & 0& 1& 0& 1&0&1 \\
                                                      \end{tabular}
    \caption{Table outlining the relation between the generators and various single qubit errors. $Z_i$ means that a Pauli $Z$-error occurs on the i-th qubit, while $\mathbbm{1}$ is applied the other (6) remaining qubits. A value of 0 signifies that the error and the generator commute, while 1 signifies that they anti-commute.}
    \label{tab-steane-z}
\end{table}

\subsubsection{The concatenated 7-qubit Steane code}

The 7-qubit Steane code can be concatenated in order to tolerate higher amounts of noise \cite{KL96}. In essence, concatenation refers to a single qubit being encoded into 7 qubits, and each of the 7 qubits being encoded again into 7 qubits each, so that a single logical state is encoded into $7^l$ qubits at concatenation level $l$. For a concatenated code with level $l$, the ideal decoder described in Eq.~\eqref{eq-choosing-Dec} is defined recursively as $\DecI_l^*:\mathcal{M}_2^{7^l}\rightarrow \mathcal{M}_2^{7^l}$ with \[\DecI_l^* = (\DecI_1^*\otimes \id_2^{\otimes 7(7^{l-1}-1)} )\circ (\DecI_{l-1}^*)^{\otimes 7}\]
where $\DecI_1^*=\DecI^*$ is the ideal decoder associated to the 7-qubit Steane code. An analogous recursive construction can be found for the encoder $\EncI_l^*:\mathcal{M}_2^{7^l}\rightarrow \mathcal{M}_2^{7^l}$ with 
\[\EncI_l^* =  (\EncI_{1}^*)^{\otimes 7^{l-1}} \circ (\EncI_{l-1}^*\otimes \id_2^{\otimes (7-1)(7^{l-1}-1)} ) \]
and $\EncI_1^*=\EncI^*$.
Figure~\ref{fig-dec} illustrates this concatenation structure further.

\begin{figure}[h!]
    \centering
\includegraphics[width=12cm]{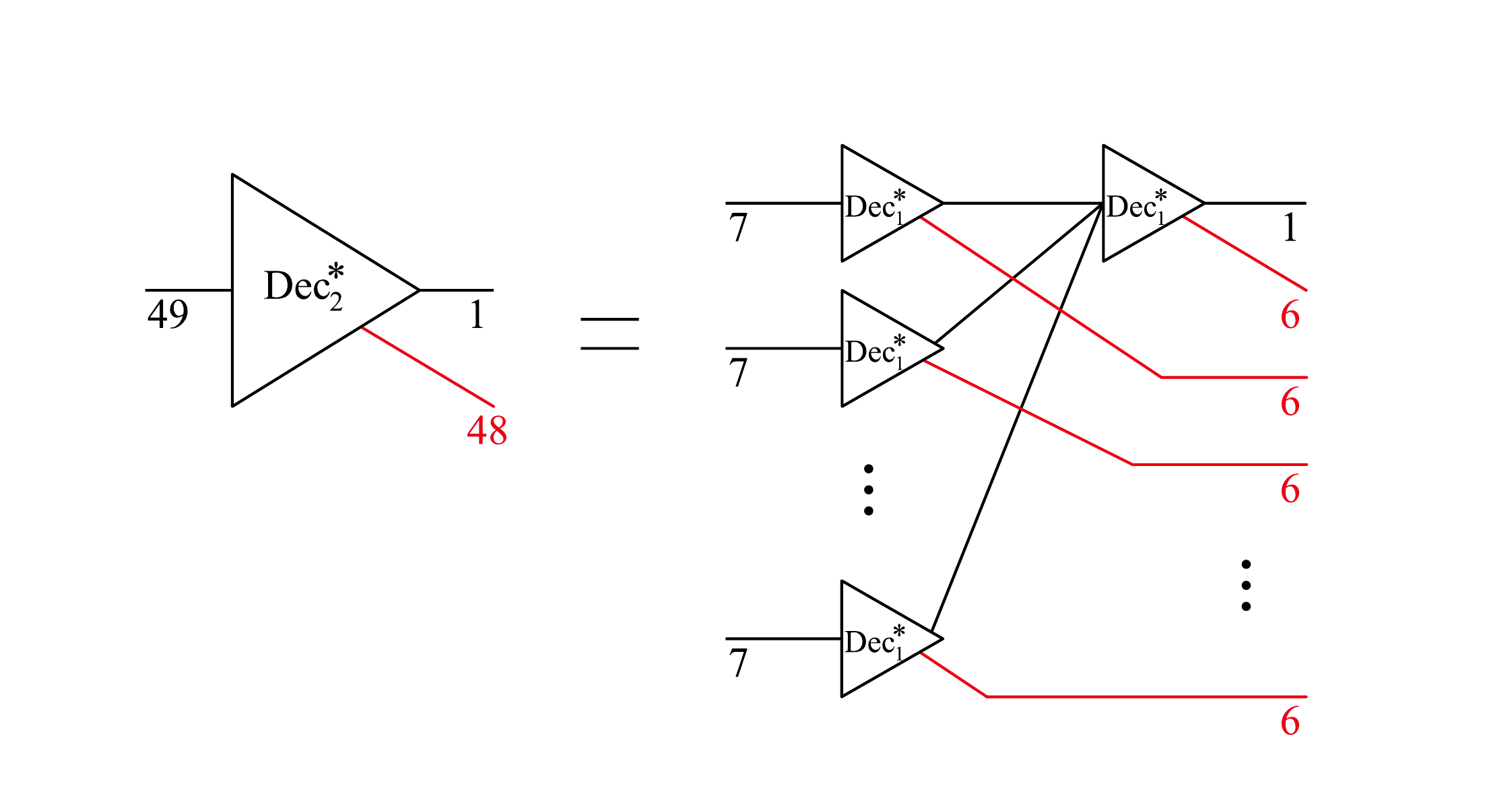}
    \caption{Sketch of the blockwise application of $\DecI^*$ for the concatenated 7-qubit Steane code in order to mathematically analyze the circuit. Here, we draw the construction for the second level $l=2$, where $7^2=49$ qubits are employed to encode one qubit. 
    }
    \label{fig-dec}
\end{figure}

As discovered in \cite{AGP05}, the exRecs in the concatenated 7-qubit Steane code are likely to be well-behaved, making a circuit implemented in the concatenated 7-qubit Steane code fault-tolerant with a high degree of protection if the level of concatenation is sufficient. This is made more precise by the following threshold theorem:

\begin{theorem}[{Threshold theorem for the concatenated 7-qubit Steane code, \cite{AGP05,CMH20}}]
\label{thm-threshold} For each $l\in \mathbbm{N}$, let ${\mathcal{C}_l}$ denote the $l$-th level of the concatenated 7-qubit Steane code with threshold $p_0$. For any quantum circuit $\Gamma:\mathcal{M}_{d_A} \rightarrow \mathcal{M}_{d_B}$, for any level $l$ and any $0\leq p <p_0$, we have 
\[P(\text{An exRec in $[\Gamma_{\mathcal{C}_l}\circ \EC]_{\mathcal{F}(p)}$ is not well-behaved}) \leq2p_0(\frac{p}{p_0})^{2^l} |\Loc(\Gamma)| . \]
The probability is taken over the distribution of $F$ according to the fault model $\mathcal{F}(p)$.

For a circuit with classical input and output, we have
\[ | \Gamma-[\Gamma_{\mathcal{C}_l}]_{\mathcal{F}(p)} | \leq 2p_0(\frac{p}{p_0})^{2^l} |\Loc(\Gamma)|.\]
\end{theorem}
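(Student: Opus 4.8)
The plan is to follow the extended-rectangle (exRec) analysis of Aliferis, Gottesman and Preskill: establish a base case at the lowest level of concatenation by counting faults, promote it to all levels by an inductive \emph{level-reduction} argument that produces the doubly-exponential suppression, and finally convert the resulting well-behavedness probability into a distance using the decoupling Lemma.

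First I would treat concatenation level $l=1$. Because the $7$-qubit Steane code has distance $\delta = 3$ and its gadgets are fault-tolerantly constructed, any exRec containing at most one fault is well-behaved in the sense of Definition~\ref{def-well-behaved}: a single Pauli fault is corrected by the enclosing error-correction gadgets and, after decoding, the block factors as the ideal gate tensored with a syndrome channel $S_G^F$. Hence an exRec can fail to be well-behaved only if two or more of its locations are faulty. Letting $A$ be the number of pairs of locations in a level-$1$ exRec and using that each location is independently faulty with probability $p$, a union bound over pairs gives $P(\text{bad at level }1) \le A p^2$; defining the threshold by $p_0 = 1/A$ this becomes $p^{(1)} \le p_0 (p/p_0)^2$.

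The heart of the argument, and the step I expect to be the main obstacle, is the recursive level-reduction. A level-$l$ exRec is assembled from level-$(l-1)$ exRecs, and the key claim is that whenever all of its constituent level-$(l-1)$ exRecs are well-behaved, the whole block collapses to a single effective level-$1$ exRec acting on encoded data. Verifying this requires carefully tracking the overlapping exRec structure and checking that the trailing error-correction gadgets absorb the residual syndrome channels, so that only genuinely \emph{malignant} multi-fault configurations survive. Granting it, the effective per-block failure probability satisfies the same quadratic recursion with $p^{(l-1)}$ in place of $p$, namely $p^{(l)} \le p_0 (p^{(l-1)}/p_0)^2$. Iterating from the base case yields
\[
p^{(l)} \le p_0 \left(\frac{p}{p_0}\right)^{2^l}
\]
for all $0 \le p < p_0$. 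A union bound over the exRecs of the circuit—at most $|\Loc(\Gamma)|$ of them, one rooted at each location—then gives the first inequality, the constant $2$ absorbing the base-case and overlap bookkeeping.

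For the second inequality I would expand the noisy circuit as the mixture $[\Gamma_{\mathcal{C}_l}]_{\mathcal{F}(p)} = \sum_F P(F)\,[\Gamma_{\mathcal{C}_l}]_F$ over fault patterns and split the sum according to whether every exRec is well-behaved. For each good pattern the decoupling Lemma (\cite[Lemma~4]{AGP05}, \cite[Lemma~II.6]{CMH20}) guarantees that after decoding the logical action equals $\Gamma$ tensored with a syndrome channel; as the circuit has classical input and output the syndrome register is traced out and the contribution is exactly $\Gamma$. Writing $P_{\mathrm{bad}}$ for the total weight of the remaining patterns, the good ones contribute $(1 - P_{\mathrm{bad}})\Gamma$, so
\[
\Gamma - [\Gamma_{\mathcal{C}_l}]_{\mathcal{F}(p)} = \sum_{F\text{ bad}} P(F)\,\big(\Gamma - [\Gamma_{\mathcal{C}_l}]_F\big).
\]
Bounding each difference of classical channels by $1$ in the relevant norm and using the first inequality to control $P_{\mathrm{bad}}$ yields $|\Gamma - [\Gamma_{\mathcal{C}_l}]_{\mathcal{F}(p)}| \le 2p_0(p/p_0)^{2^l}|\Loc(\Gamma)|$, as claimed.
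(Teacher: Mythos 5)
Your proposal follows the standard Aliferis--Gottesman--Preskill exRec/level-reduction argument, which is precisely the proof given in the sources \cite{AGP05,CMH20} that this thesis cites for the theorem (the thesis itself states it without proof). The only cosmetic imprecision is that the threshold $p_0$ in \cite{AGP05} is defined by counting \emph{malignant} pairs of locations rather than all pairs of locations, which yields a larger (better) threshold constant but does not change the structure or validity of the argument, and your handling of the classical-output case via the decoupling lemma and the bound on the total-variation distance of the bad fault patterns is correct.
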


\section{Fault-tolerance and communication}
\label{sec-ft-and-comm}

In Shannon theory, it is an important objective to find the best possible encoder and decoder for a given classical channel in order to achieve the capacity as the communication rate. These encoder and decoder maps can be implemented in circuits, and it is usually assumed that these circuits consist of noise-free gates. This is largely justified for modern classical computers, where gate errors are negligible, in particular at the timescales relevant for communication protocols. In quantum computers, as we have discussed in Section~\ref{sec-ft-intro}, it is not expected that the error-rates of quantum gates will vanish in the near- and even longer term \cite{Preskill18,Terhal23}. Due to this feature of quantum noise, the same assumption as in the classical case of error-free gates, which has been made in most communication scenarios studied in quantum Shannon theory so far, may not be realistic.

Furthermore, many techniques from quantum fault-tolerance often cannot directly be applied to the problem of communication, or will only allow for significantly reduced communication rates. Naive strategies with one (large) fault-tolerant implementation, where the communication channel is considered as part of the circuit noise, will only give rates approaching zero due to their high overhead encodings, and they will only work for channels which are very close to the identity (i.e. with noise below the threshold).

Fortunately, we can indeed construct coding schemes with a fault-tolerant encoder and a fault-tolerant decoder that allow for communication at rates that almost achieve capacity for all quantum channels in the case of classical and quantum capacity \cite{CMH20}, and we extend these results to the case of entanglement-assisted capacity in our work in Chapter~\ref{chapter-fteacap}. This crucially means that the noise in the communication line does not have to be below a threshold and can be treated separately from the noise that locally affects the encoder circuit and the noise that locally affects the decoder circuit. Thereby, these results have implications not only for communication over large distances but also for on-chip communication within a quantum computer, where different parts of the computer may experience varying levels of noise.

A key player in our strategy is an additional circuit which we call \emph{interface}, which is applied in between our fault-tolerant circuit implementations and the communication channel.

\begin{definition}[Interfaces] \label{def-interfaces}
 Let $\mathcal{C}\in (\mathbbm{C}^2)^{\otimes K}$ be a stabilizer code with $\dim(\mathcal{C})=2$, and let $\ketbra{0}{0}\in \mathcal{M}_2^{\otimes K-1}$ denote the state corresponding to the zero-syndrome. Let $\EncI^*:\mathcal{M}_2^{\otimes K}\rightarrow\mathcal{M}_2^{\otimes K}$ and $\DecI^*:\mathcal{M}_2^{\otimes K}\rightarrow\mathcal{M}_2^{\otimes K}$ be the ideal encoding and decoding operations. Then, we have:
\begin{enumerate}
    \item An encoding interface $\EncI:\mathcal{M}_2\rightarrow\mathcal{M}_2^{\otimes K}$ for a code $\mathcal{C}$ is a quantum circuit with an error correction as a final step, and fulfilling \[\DecI^*\circ \EncI =\id_2 \otimes \ketbra{0}{0} .\]
    \item A decoding interface $\DecI:\mathcal{M}_2^{\otimes K}\rightarrow\mathcal{M}_2$ is a quantum circuit fulfilling \[\DecI \circ \EncI^* (\cdot \otimes \ketbra{0}{0}) =\id_2 (\cdot) .\]
    \end{enumerate}
\end{definition}

%In contrast to $\DecI^*$ and $\EncI^*$, which are objects used for the mathematical analysis of the circuits and not implemented in practice, the interfaces $\EncI$ and $\DecI$ are quantum circuits consisting of gates that can be affected by faults. Since this can lead to faulty inputs to a quantum channel, we will need interfaces that are tolerant against such faults. Unfortunately, it is impossible to make the overall failure probability of interfaces arbitrarily small, since they will always have a first (or last) gate that is executed on the physical level and not protected by an error correcting code, resulting in a failure with a probability of at least gate error $p$. Fortunately, it is possible to construct qubit interfaces for concatenated codes which fail with a probability of at most $2cp$ for some constant $c$, which are good enough for our purposes~\cite{MGHHLPP14,CMH20}.

These interfaces are quantum circuits which are subject to noise in the same way as the encoder and decoder circuit. Since this can lead to faulty inputs to a quantum channel, we will need interfaces that are protected enough against quantum errors. For concatenated codes, such interfaces have been constructed and proven to fail with low enough probability in \cite{MGHHLPP14,CMH20}:

\begin{theorem}[{Correctness of interfaces for the concatenated 7-qubit Steane code, see \cite[Theorem~III.3]{CMH20}}] \label{thm-correct-interfaces}
%Let $P$ denote the probability distribution over Pauli fault patterns induced by the fault model $\mathcal{F}(p)$. 
For each $l\in \mathbbm{N}$, let ${\mathcal{C}_l}$ denote the $l$-th level of the concatenated 7-qubit Steane code with threshold $p_0$. Then,
there exist interface circuits $\EncI_l:\mathcal{M}_2 \rightarrow \mathcal{M}_2^{\otimes 7^l}$ and $\DecI_l:\mathcal{M}_2^{\otimes 7^l}\rightarrow\mathcal{M}_2$  for the $l$-th level of this code such that for any $0\leq p \leq \frac{p_0}{2}$, we have
\begin{enumerate}
    \item \[\Prob( \big[\EncI_l \big]_F \text{ is not correct}) \leq 2cp,\] where $\EncI_l$ is correct under a Pauli fault pattern $F$ if there exists a quantum state $\sigma_S(F)$ on the syndrome space such that $\DecI^*\circ \big[\EncI \big]_F = \id_2 \otimes \sigma_S(F)$. The probability is taken over the distribution of $F$ according to the fault model $\mathcal{F}(p)$.
    \item \[\Prob( \big[\DecI_l \big]_F \text{ is not correct}) \leq 2cp,\] where $\DecI$ is correct under a Pauli fault pattern $F$ if $  \big[\DecI_l\circ \EC_l \big]_F = (\id_2 \otimes \Tr_S)\circ \DecI_l^*\circ  \big[\EC_l \big]_F $ where $\Tr_S$ traces out the syndrome space.
\end{enumerate}
Here, $c=p_0 \max \{|\Loc(\EncI_1)|,|\Loc(\DecI_1\circ \EC)|\}$ is a constant that does not depend on $l$ or $p$.
\end{theorem}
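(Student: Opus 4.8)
The plan is to construct the interfaces recursively and to prove correctness by isolating a single unprotected part of each circuit --- the level-1 interface attached to the bare physical qubit --- from a protected bulk whose failure is governed by the threshold theorem. Concretely, I would build $\EncI_l$ so that it first applies the level-1 encoding interface $\EncI_1 \colon \mathcal{M}_2 \to \mathcal{M}_2^{\otimes 7}$ and then fault-tolerantly encodes each of the seven resulting qubits up to level $l-1$ using encoded gadgets followed by error correction; the decoding interface $\DecI_l$ is built in the time-reversed way, with $\DecI_1 \circ \EC$ playing the role of the exposed part. The point of this decomposition is that the bare input (resp.\ output) qubit --- the only location carrying the logical information without surrounding code protection --- is touched only inside $\EncI_1$ (resp.\ $\DecI_1 \circ \EC$), whose size is an $l$-independent constant.

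Next I would reduce the correctness of $[\EncI_l]_F$ to two local conditions: (i) the level-1 interface $[\EncI_1]_F$ is itself correct, and (ii) every exRec in the encoded bulk is well-behaved under $F$. Using the transformation rules of Definition~\ref{def-well-behaved} together with the recursive form of the ideal decoder $\DecI_l^* = (\DecI_1^* \otimes \id_2^{\otimes 7(7^{l-1}-1)}) \circ (\DecI_{l-1}^*)^{\otimes 7}$, I would push $\DecI_l^*$ successively through the well-behaved bulk, peeling off one concatenation layer at a time and collecting the syndrome factors into a single state on the syndrome space. This reduces the target relation $\DecI_l^* \circ [\EncI_l]_F = \id_2 \otimes \sigma_S(F)$ to the level-1 interface identity $\DecI^* \circ \EncI_1 = \id_2 \otimes \ketbra{0}{0}$ of Definition~\ref{def-interfaces}, establishing that conditions (i)--(ii) imply correctness.

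The probability estimate then follows from a union bound. I would bound the probability that condition (i) fails by the size of the level-1 interface times the per-location fault rate, a term that is linear in $p$ and controlled by the $l$-independent constant $c = p_0 \max\{|\Loc(\EncI_1)|, |\Loc(\DecI_1 \circ \EC)|\}$; the probability that condition (ii) fails is bounded directly by the threshold theorem (Theorem~\ref{thm-threshold}) applied to the bulk, giving a super-exponentially suppressed term of order $2p_0 (p/p_0)^{2^{l}} |\Loc(\EncI_l)|$. For $p \le p_0/2$ this bulk term is dominated by the neck term, so the two contributions sum to at most $2cp$, uniformly in $l$. The decoder case is entirely analogous by time reversal.

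The main obstacle is precisely this $l$-independence --- showing that deeper concatenation does not let the failure probability accumulate. It rests on the fact that the exposed, constant-size neck is the sole source of an unsuppressed linear-in-$p$ failure, while every additional concatenation layer in the bulk is protected by a round of error correction and so contributes only a doubly-exponentially small amount that the threshold theorem keeps summable. Making the peeling argument of the second step airtight --- in particular verifying that well-behavedness of each layer genuinely decouples the data qubit from the syndrome space, so that the collected $\sigma_S(F)$ is a bona fide state and the logical action is exactly $\id_2$ --- is the technically delicate part; the remaining counting is routine.
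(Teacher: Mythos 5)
Your proposal is correct and follows essentially the same route as the proof in the cited source: the thesis does not reprove Theorem~\ref{thm-correct-interfaces} but imports it from \cite{MGHHLPP14,CMH20}, where the interfaces are built exactly as you describe --- a constant-size, unprotected level-1 neck touching the bare physical qubit, composed with a threshold-protected recursive encoding/decoding of the remaining layers, with well-behavedness used to peel the ideal decoder through the bulk and collect the syndrome into $\sigma_S(F)$. The failure probability is then dominated by the linear-in-$p$ contribution of the neck, the bulk terms being doubly-exponentially suppressed and summable for $p\le p_0/2$, which is precisely what yields the $l$-independent bound $2cp$.
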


The fact that the interfaces for the concatenated 7-qubit Steane code are correct with high probability is crucial for \cite[Lemma~III.8]{CMH20} as well as Theorem~\ref{thm-effective-encoder}
and \ref{thm-eff-decoder}, which allows us to regard the communication channel and the interfaces together as an effective channel, for which we design a coding scheme in Chapter~\ref{chapter-fteacap}.

This implication is not straightforward because the notions of fault-tolerance of a circuit and correctness of an interface have overlap in an error-correction gadget. More precisely, in the case of the communication scheme's encoder, the well-behavedness condition (according to Definition~\ref{def-well-behaved}) of the last exRec before the interface refers to an error-correction gadget that also appears in the correctness condition for decoding interfaces in Theorem~\ref{thm-correct-interfaces}. This overlap requires careful analysis and prevents us from seeing the fault-tolerant circuit and the interfaces as separate. As a consequence of this overlap, the interactions in the fault-tolerant implementation of the coding scheme's encoder may also lead to correlations between separate decoding interfaces. In Theorem~\ref{thm-effective-encoder}, we see that this can be modelled by an i.i.d. effective channel with correlated input, which is crucial in order to apply known techniques from quantum Shannon theory in our work in Chapter~\ref{chapter-fteacap}.

The potential for correlated syndrome states was first noted in \cite{AGP05,AGP07}, where it was also shown that the correlations do not prevent us from formulating a threshold theorem for implementations of quantum circuits that begin and end in classical information. In our communication setup, our circuit produces quantum states that are sent through the quantum channel, where the quantum channel may now be influenced by these correlations, which has to be studied more closely; to what extent this affects the communication depends on the structure of the effective syndrome state. A highly entangled syndrome state does not necessarily lead to an uncorrectable syndrome, but this depends on the underlying circuits. In our work, we prove a coding theorem under worst-case assumptions for a highly correlated syndrome state; the result could perhaps be improved for more specific noise models or circuits, where more is known about the structure of the syndrome state.

\begin{example}[A correlated syndrome state]
Consider two logical qubits in the state $\ket{\bar{0}} \otimes \ket{\bar{0}}$. Assume a Hadamard error happens on one of the physical qubits, say $H_1=\frac{X_1+Z_1}{2}$, followed by a logical $\CNOT$ gate with the control on the first logical qubit. The $\CNOT$ gate is transversal in the 7-qubit Steane code, and the error would thus spread via the physical $\CNOT$ between the first qubits of the code block to become $\frac{1}{\sqrt{2}} (X_1 \otimes X_1 + Z_1\otimes \mathbbm{1})$. Therefore, the state after the $\CNOT$ gate would be $\frac{1}{\sqrt{2}} (X_1 \otimes X_1 + Z_1\otimes \mathbbm{1})\ket{\bar{0}} \otimes \ket{\bar{0}}$.
%If an error in the form of $H$ on one of the physical happens right before a CNOT, e.g. $H_1$.
%This error spreads via CNOT, to produce $\frac{1}{\sqrt{2}} (X_1 \otimes X_1 + Z_1\otimes \mathbbm{1})\ket{\bar{0}} \otimes \ket{\bar{0}}$. Now, 
An application of $(\DecI^*)^{\otimes 2}$ to this state leads to $\ket{00}\otimes \frac{1}{\sqrt{2}}(\ket{s_{x_1} s_{x_1}}+\ket{s_{z_1} s_{0}} )$, a pure entangled syndrome state.

It is notable that such an error cannot happen in the i.i.d. Pauli error model, which is the main model in our work in Chapter~\ref{chapter-fteacap}, and that our example seems to demand a more exotic noise model. For example, one could think of a model with Hadamard errors or a model in which the execution of gates might fail with some probability, and instead of applying a Hadamard, nothing is applied to the state. To the best of our knowledge, it is an open question whether correlated syndrome states can appear in the i.i.d. Pauli model.
\end{example}

Even though we state our results in Chapter~\ref{chapter-fteacap} with the same fault-tolerant architecture for the encoder and decoder circuit, this is not a necessity for our construction. We could choose to use the same code with different levels of concatenation, or even two different codes (as long as they have the appropriate interfaces). This sort of separation may be particularly crucial for quantum repeaters or larger quantum communication networks. In such setups with multiple parties or long distances, it would perhaps be even more important that the communication lines between devices does not have to be subject to threshold constraints in the same way as the localized computation has to be.

%It is important to differentiate between theoretical separations and actual error correction. In particular, nothing is actually being measured here, yet - we are inserting a theoretical operation. In an actual interface, correlation is removed. Correlation is input into effective channel, not real $T$.

\chapter{Fault-tolerant entanglement-assisted communication}
\label{chapter-fteacap}

This chapter (excluding Section~\ref{sec-extra}, and with minor revisions) is a reproduction of arXiv:2210.02939 \cite{BCMH22}, which has also appeared as a shortened version in the proceedings of the IEEE International Symposium on Information Theory 2023 \cite{BCMH23SHORT}. It explores the question of entanglement-assisted communication in the presence of gate errors during the encoder and decoder circuit. % Assuming that each gate is affected by an error with probability $p$, we show a threshold-like coding theorem for entanglement-assisted communication under the assumption of noisy quantum devices based on techniques from \cite{AGP05} and \cite{CMH20}. 

In particular, we show that the achievable rates for entanglement-assisted communication with noise-affected gates can be bounded from below in terms of the quantum mutual information reduced by a continuous function in the single gate error $p$, and the usual faultless entanglement-assisted capacity is recovered for small probabilities of local gate error, which confirms and substantiates the practical relevance of quantum Shannon theory. This is not only relevant for communication between spatially separated quantum computers, but also for communication between distant parts of a single quantum computing chip, where the communication line may be subject to a larger amount of noise than the local gates.
\newpage

\noindent\rule[0.5ex]{\linewidth}{1pt}
\vspace{0.5cm}
\textbf{arXiv:2210.02939:}\\
\textbf{Fault-tolerant entanglement-assisted communication}
\vspace{0.5cm}

\emph{Paula Belzig, Matthias Christandl, Alexander Müller-Hermes}

\vspace{0.5cm}
\noindent\rule[0.5ex]{\linewidth}{1pt}

% \section*{Abstract}

% Channel capacities quantify the optimal rates of sending information reliably over noisy channels. Usually, the study of capacities assumes that the circuits which sender and receiver use for encoding and decoding consist of perfectly noiseless gates. In the case of communication over quantum channels, however, this assumption is widely believed to be unrealistic, even in the long-term, due to the fragility of quantum information, which is affected by the process of decoherence. Christandl and Müller-Hermes have therefore initiated the study of fault-tolerant channel coding for quantum channels, i.e. coding schemes where encoder and decoder circuits are affected by noise, and have used techniques from fault-tolerant quantum computing to establish coding theorems for sending classical and quantum information in this scenario. Here, we extend these methods to the case of entanglement-assisted communication, in particular proving that the fault-tolerant capacity approaches the usual capacity when the gate error approaches zero. A main tool, which might be of independent interest, is the introduction of fault-tolerant entanglement distillation. We furthermore focus on the modularity of the techniques used, so that they can be easily adopted in other fault-tolerant communication scenarios.

\section{Introduction}

\label{sec-intro}

The successful transfer of information via a communication infrastructure is of crucial importance for our modern, highly-connected world. This process of information transfer, e.g., by wire, cable or broadcast, can be modelled by a communication channel $T$ which captures the noise affecting individual symbols. Instead of sending symbols individually, the sender and receiver typically agree to send messages using codewords made up from many symbols. With a well-suited code, the probability of receiving a wrong message can be made arbitrarily small. How well a given channel $T$ is able to transmit information can be quantified by the asymptotic rate of how many message bits can be transmitted per channel use with vanishing error using the best possible encoding and decoding procedure. This asymptotic rate is a characteristic of the channel, called its \emph{capacity} $C(T)$.

In \cite{Shannon48}, Shannon introduced this model for communication and derived a formula for $C(T)$ in terms of the mutual information between the input and output of the channel:
\[C(T)=\sup_{p_X} I(X:Y).\]
Here, $I(X:Y)=H(X)+H(Y)-H(XY)$ denotes the mutual information between the random variable $X$ and the output $Y=T(X)$, where $H(X)$ is the Shannon entropy of the discrete random variable $X$ with a set of possible values $x$ that is distributed according to a probability distribution $p_X$, given by $H(X)=-\sum_x p_X(x)\log(p_X(x) )$.

Various generalizations of this communication scenario to quantum channels lead to different notions of capacity. 
Two important examples are the classical capacity of a quantum channel \cite{Holevo96,SW97} , which quantifies how well a quantum channel can transmit classical information encoded in quantum states, and the quantum capacity \cite{Lloyd97,Shor02,Devetak03}, where quantum information itself is to be transmitted through the channel. Both of these notions of capacity have entropic formulas. However, they are not known to admit a characterization which is independent of the number of channel copies $n$, a so-called single-letter characterization, which would simplify their calculation.

The \emph{entanglement-assisted capacity}, where the encoding and decoding machines have access to arbitrary amounts of entanglement, does not only admit such a single-letter characterization, but it can in fact be regarded as the only direct formal analogue of Shannon's original formula, since the classical mutual information is simply replaced by its quantum counterpart \cite{BSST02}:
\[C^{ea}(T)= \sup_{\substack{\varphi_{AA'} \in \mathcal{M}_{d_A}\otimes\mathcal{M}_{d_{A'}}  \\  \varphi_{AA'} \text{ a pure quantum state} }} I(A':B)_{(T\otimes \id)(\varphi_{AA'})} .\]
Here, $I(A:B)_{\rho}=H(A)_{\rho}+H(B)_{\rho}-H(AB)_{\rho}$ denotes the quantum mutual information with the von Neumann entropy $H(AB)_{\rho}=-\Tr\left[\rho\log(\rho)\right]$ for a quantum state $\rho$.

In order to communicate with a given channel $T$, the encoding and decoding procedures need to be decomposed into quantum circuits as a sequence of quantum gates. The next step in a real-world scenario would be to implement these circuits on a quantum device so that we can realize an actual quantum communication system. However, this scenario generally does not consider one of the major obstacles of quantum computation: the high susceptibility of quantum circuits to noise and faults. In classical computers, the error rates of individual logical gates are known to be effectively zero in standard settings and at the time-scales relevant for communication \cite{Nicolaidis11}. The assumption of noiseless gates implementing the encoder and decoder circuit is therefore realistic in many scenarios. Real-life quantum gates, however, are affected by non-negligible amounts of noise. This is certainly a problem in near-term quantum devices, and it is generally assumed that it will continue to be a problem in the longer term \cite{Preskill18}.

Considering the encoder and decoder circuits as specific quantum circuits affected by noise therefore leads to potentially more realistic measures of how well information can be transferred via a quantum channel: \emph{fault-tolerant capacities}, which quantify the optimal asymptotic rates of transmitting information per channel use in the presence of noise on the individual gates. To construct suitable encoders and decoders for this scenario, we build on Christandl and Müller-Hermes' work \cite{CMH20}, which has introduced and analyzed fault-tolerant versions of the classical and quantum capacity, combining techniques from fault-tolerant quantum computing \cite{AB99,KLZ98,Kitaev03,AGP05} and quantum communication theory \cite{Wilde13}.

% More precisely, we extend their work to entanglement-assisted communication. In particular, we show that entanglement-assisted communication is still possible under the assumption of noisy quantum devices at almost the same achievable rates as the usual, faultless scheme: For any $\eta>0$, there exists a threshold such that, for any gate error probability $p$ below the threshold, we have 
% \[C^{ea}_{\mathcal{F}(p)}(T) \geq C^{ea}(T)-\eta,\]
% where $C^{ea}_{\mathcal{F}(p)}(T)$ denotes the fault-tolerant entanglement-assisted capacity, and $\lim_{p\rightarrow 0} C^{ea}_{\mathcal{F}(p)}(T) = C^{ea}(T)$.

More precisely, we extend their work to entanglement-assisted communication. In particular, we show that entanglement-assisted communication is still possible under the assumption of noisy quantum devices, with achievable rates given by \[C^{ea}_{\mathcal{F}(p)}(T) \geq C^{ea}(T)-f(p)\]
where $C^{ea}_{\mathcal{F}(p)}(T)$ denotes the fault-tolerant entanglement-assisted capacity for gate error probability $p$ below a threshold, and with $\lim_{p\rightarrow 0} f(p) \rightarrow 0$.

%More precisely, we extend their work to entanglement-assisted communication. In particular, we show that entanglement-assisted communication is still possible under the assumption of noisy quantum devices, with achievable rates given by \[C^{ea}_{\mathcal{F}(p)}(T) \geq C^{ea}(T)-f(p)\]
%where $C^{ea}_{\mathcal{F}(p)}(T)$ denotes the fault-tolerant entanglement-assisted capacity for gate error probability $p$ below a threshold, and with $\lim_{p\rightarrow 0} f(p) \rightarrow 0$.

In other words, the achievable rates for entanglement-assisted communication with noise-affected gates can be bounded from below in terms of the quantum mutual information reduced by a continuous function in the single gate error $p$. The usual faultless entanglement-assisted capacity is recovered for small probabilities of local gate error, which confirms and substantiates the practical relevance of quantum Shannon theory. 
This is not only relevant for communication between spatially separated quantum computers, but also for communication between distant parts of a single quantum computing chip, where the communication line may be subject to higher levels of noise than the local gates. In particular, the noise level for the communication line does not have to be below the threshold of the gate error.

This chapter is structured around the building blocks needed to achieve this result. In Section~\ref{sec-ft}, we briefly review concepts from fault-tolerance of quantum circuits used for communication. In Section~\ref{sec-ea-cap}, we outline how the fault-tolerant communication setup can be reduced to an information-theoretic problem which generalizes the usual, faultless entanglement-assisted capacity. In Section~\ref{sec-avp}, we prove a coding theorem for this information-theoretic problem. One important facet of communication with entanglement-assistance in our scenario comes in the form of noise affecting the entangled resource states, for which we introduce a scheme of fault-tolerant entanglement distillation in Section~\ref{sec-ft-ent-dist}. Finally, these techniques will be combined to obtain a threshold-type coding theorem for fault-tolerant entanglement-assisted capacity in Section~\ref{sec-coding-thm}.

\section{Fault-tolerant encoder and decoder circuits for communication}
\label{sec-ft}

\noindent Here, we review some aspects of common techniques for fault-tolerance, but for a detailed overview of the relevant concepts, we refer to \cite{AGP05} and \cite{CMH20}, as well as Section~\ref{sec-ft-intro}.

Note that our notation for mathematical objects from quantum theory is the same as in \cite[Section~II.A]{CMH20}. We define quantum channels as completely positive and trace preserving maps $T: \mathcal{M}_{d_A}\rightarrow \mathcal{M}_{d_B}$ where $\mathcal{M}_d$ denotes the matrix algebra of complex $d\times d$-matrices.
Probability distributions of $d$ elements are vectors in $\mathbbm{C}^d$ where each entry is positive and the sum of all entries equals $1$. Channels with classical input are defined as linear maps from $\mathbbm{C}^d$ that yield unit-trace positive semi-definite Hermitian matrices, and channels with classical output map unit-trace positive semi-definite Hermitian matrices to elements of $\mathbbm{C}^d$.

\subsection{Fault-tolerance for quantum circuits}
\label{sec-ft-circuits}

In this work, we will use the notation for circuits, errors and stabilizer codes as introduced in Section~\ref{sec-ft-intro}.

We recall from Section~\ref{sec-intro-noise-and-errors} that quantum circuits are the dense subset of quantum channels which can be written as a composition of the elementary gate operations. It should be emphasized that the linear map realized by a quantum circuit might be written in different ways as a composition of elementary gates. As in \cite{CMH20}, we will assume that each circuit is specified by a particular circuit diagram detailing which elementary gates are to be executed at which time and place in the quantum circuit. Given such a circuit diagram, we can model the noise affecting the resulting quantum circuit. For simplicity, we will always consider the \emph{i.i.d. Pauli noise model}, where one of the Pauli channels (with single Kraus operator $\sigma_x,\sigma_y$ or $\sigma_z$) is applied with probability $\frac{p}{3}$ in between the gates, as introduced in Section~\ref{sec-intro-noise-and-errors}.

To protect against noise, a quantum circuit can be implemented in a stabilizer error correcting code, where single, potentially fault-affected qubits (\emph{logical qubits}) can be encoded in a quantum state of $K$ physical qubits for each logical qubit, as explored in Section~\ref{sec-error-correction-intro}. Using this formalism, we can choose a unitary transformation in relation to a convenient basis for decoupling noise and data, which specifies an ideal decoder for the code, as described in more detail in Eq.~\eqref{eq-choosing-Dec} and Section~\ref{sec-error-correction-intro}.

Throughout this chapter, we will frequently use notation where an operation marked with a star should be considered an ideal operation that is useful for circuit analysis, and not a fault-location. In particular, we will sometimes write $\id_2^*$ to denote an identity map between qubits, which should not be taken to be a fault-affected storage.

%Gates on logical qubits are implemented as so-called gadgets on physical qubits, using the operations $\EncI^*$ and $\DecI^*$ to map between the spaces (see also \cite[Definition~II.3]{CMH20}). For a circuit $\Gamma:\mathcal{M}_2^{\otimes n}\rightarrow \mathcal{M}_2^{\otimes m}$, its implementation in a code $\mathcal{C}$, $\Gamma_{\mathcal{C}}:\mathcal{M}_2^{\otimes nK}\rightarrow\mathcal{M}_2^{\otimes mK}$, is obtained by replacing each gate by its corresponding gadget and inserting error correction gadgets in between the gadgets.
%If the physical qubits of this implementation are subject to the noise model $\mathcal{F}(p)$, then we denote this fault-affected implementation of the circuit $\Gamma$ by $[\Gamma_{\mathcal{C}}]_{\mathcal{F}(p)}:\mathcal{M}^{\otimes nK}\rightarrow\mathcal{M}^{\otimes mK}$. 

In this work, like in \cite{CMH20}, we will consider implementations in the concatenated 7-qubit Steane code. The 7-qubit Steane code introduced in \cite{Steane96} is an error correcting code that can correct all single-qubit errors, and that can be concatenated to improve protection against errors \cite{KL96}. For the concatenated 7-qubit Steane code, as shown in \cite{AGP05}, an implementation where the error correction's gadget is performed between each operation minimizes the accumulation of errors. Under this implementation, the concatenated 7-qubit Steane code fulfills a threshold theorem for computation, see also Theorem~\ref{thm-threshold}. %More precisely, it has been shown that the difference between a quantum circuit $\Gamma$ with classical input and output and its fault-affected implementation $[\Gamma_{\mathcal{C}}]_{\mathcal{F}(p)}$ in the concatenated 7-qubit Steane code is bounded by $p_0 \Big(\frac{p}{p_0}\Big)^{2^l} |\Loc(\Gamma)|$ for any $p$ below a threshold $p_0$ and concatenation level $l$ \cite{AGP05, CMH20}. By choosing the level $l$ large enough, this error can be made arbitrarily small.

Fault-tolerance can in principle be achieved by other quantum error correcting codes \cite{AB99,KLZ98,Kitaev03,AGP05}. One could also consider using two different quantum error correcting codes for the encoder and decoder circuit in our setup. For simplicity, we restrict ourselves to using the concatenated 7-qubit Steane code \cite{Steane96,KL96} with the same level of concatenation for both circuits, but our definitions for fault-tolerant communication can straightforwardly be extended to the more general case.

\subsection{Fault-tolerance for communication}
\label{sec-ft-interfaces}

\noindent By performing error correction, a quantum circuit with classical input and output that is affected by faults at a low rate can thus be implemented in a way such that it behaves like an ideal circuit (i.e. a circuit without faults) by threshold-type theorems. These code implementations cannot, however, be directly used in the encoding and decoding circuits for communication, as they require classical input and output, whereas the encoder's output in our communication setup, for instance, serves as input into the noisy quantum channel. The fault-tolerant implementation of an encoder and decoder in a communication setting therefore leads to the message being encoded in the corresponding code space. In the case of the concatenated 7-qubit Steane code, the number of physical qubits increases by a factor of $7$ for each level of concatenation. To obtain our results for communication rates, we therefore perform an additional circuit mapping information in the code space to the physical system where the quantum channel acts. This circuit will be referred to as \emph{decoding interface} $\DecI$. Similarly, another circuit can be performed to transfer the channel's output into the code space where it can be processed by the fault-tolerantly implemented decoder. This circuit is called \emph{encoding interface} $\EncI$. These circuits, introduced in Definition~\ref{def-interfaces}, and discussed further in Section~\ref{sec-ft-and-comm}, are also affected by faults.

In contrast to $\DecI^*$ and $\EncI^*$, which are objects used for the mathematical analysis of the circuits and not implemented in practice, the interfaces $\EncI$ and $\DecI$ are quantum circuits consisting of gates that can be affected by faults. Since this can lead to faulty inputs to a quantum channel, we will need interfaces that are tolerant against such faults. Unfortunately, it is impossible to make the overall failure probability of interfaces arbitrarily small, since they will always have a first (or last) gate that is executed on the physical level and not protected by an error correcting code, resulting in a failure with a probability of at least gate error $p$. Fortunately, it is possible to construct qubit interfaces for concatenated codes which fail with a probability of at most $2cp$ for some constant $c$, which are good enough for our purposes, as found in~\cite{MGHHLPP14}, \cite[Theorem~III.3]{CMH20} and restated in Theorem~\ref{thm-correct-interfaces}.

In combination with the threshold theorem from \cite{AGP05}, this can be used to prove the following extensions of Lemma III.8 from \cite{CMH20} for the combination of circuit and interface with additional quantum input (cf. Figure~\ref{fig-eff-channel}):

\begin{lemma}[Effective encoding interface]
\label{thm-effective-encoder}
Let $m,n,k\in \mathbbm{N}$ and let $\Gamma:\mathcal{M}_2^{\otimes n+k} \rightarrow \mathbbm{C}^{2^m}$ be a quantum circuit with quantum input and classical output.
For each $l\in \mathbbm{N}$, let ${\mathcal{C}_l}$ denote the $l$-th level of the concatenated 7-qubit Steane code with threshold $p_0$. Moreover, let $\EncI_l: \mathcal{M}_2 \rightarrow  \mathcal{M}_2^{\otimes 7^l}$ 
be the encoding interface circuit for the $l$-th level of the concatenated 7-qubit Steane code with threshold $p_0$.

Then, for any $0\leq p \leq \frac{p_0}{2}$ and any $l\in\mathbbm{N}$, there exists a quantum channel $N_l:\mathcal{M}_2\rightarrow \mathcal{M}_2$, which only depends on $l$ and the interface circuit $\EncI_l$, such that:
\begin{align*}
 &\|  \big[\Gamma_{\mathcal{C}_l} \circ (\EncI_l^{\otimes n} \otimes \EC_l^{\otimes k}) \big]_{\mathcal{F}(p)} \\ &\hspace{1cm}  -(\Gamma\otimes \Tr_S) \circ \left(N_{enc,p,l}^{\otimes n} \otimes (\DecI_l^*\circ [\EC_l]_{\mathcal{F}(p)} )^{\otimes k})\right) \|_{1 \rightarrow 1}  \\& \hspace{1cm}\hspace{1cm}   \leq 2 p_0 \Big(\frac{p}{p_0}\Big)^{2^l} |\Loc(\Gamma)| 
\end{align*}
with
\[N_{enc,p,l} = (1-2cp) \id_2 +2cpN_l\]
where $c=p_0 \max \{|\Loc(\EncI_1)|,|\Loc(\DecI_1\circ \EC)|\}$.
\end{lemma}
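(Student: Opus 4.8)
The plan is to mirror the proof of \cite[Lemma~III.8]{CMH20}, writing the fault-affected circuit as a classical mixture over i.i.d.\ Pauli fault patterns $F$ with weights $P(F)$ and then analyzing it on the event $W$ that every exRec inside the encoded circuit $\Gamma_{\mathcal{C}_l}$ --- including the error-correction gadgets acting on the $k$ code-space slots and the first exRec on each interface slot --- is well-behaved in the sense of Definition~\ref{def-well-behaved}. On $W$ I would invoke the transformation lemma (\cite[Lemma~4]{AGP05}, \cite[Lemma~II.6]{CMH20}) to pull the logical circuit $\Gamma$ out so that it acts directly on the decoded data registers while the syndrome space is completely decoupled and collected into a residual channel $S_F$; because $\Gamma$ has classical output, no decoding of the output is needed and $S_F$ is simply traced away. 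On the $k$ slots this immediately reproduces the factor $(\DecI_l^*\circ[\EC_l]_{\mathcal{F}(p)})^{\otimes k}$ on the right-hand side, since those inputs supply exactly the input error-correction gadgets required by the transformation lemma.

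For the $n$ interface slots, the next step is to identify the effective single-qubit channel of one noisy encoding interface. Using the correctness characterization of Theorem~\ref{thm-correct-interfaces}, I would split $\sum_F P(F)\,\DecI_l^*\circ[\EncI_l]_F$ into a correct part, which by $\DecI_l^*\circ[\EncI_l]_F=\id_2\otimes\sigma_S(F)$ acts as the identity on the data (up to a syndrome state removed by $\Tr_S$), and an incorrect part of total weight $P_{ic}\le 2cp$. After tracing out the syndrome this yields an effective data channel of the form $P_c\,\id_2+P_{ic}\,\overline{N}$; padding the identity weight down from $P_c$ to $1-2cp$ then lets me write it exactly as $N_{enc,p,l}=(1-2cp)\,\id_2+2cp\,N_l$ with $N_l$ a genuine channel, which is the asserted form and manifestly recovers $N_{enc,p,l}\to\id_2$ as $p\to0$.

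To conclude I would assemble the pieces with the triangle inequality: on $W$ the encoded implementation coincides by construction with the target map $(\Gamma\otimes\Tr_S)\circ(N_{enc,p,l}^{\otimes n}\otimes(\DecI_l^*\circ[\EC_l]_{\mathcal{F}(p)})^{\otimes k})$, so the entire $\|\cdot\|_{1\rightarrow1}$-distance is carried by the complementary event, whose probability the threshold theorem (Theorem~\ref{thm-threshold}) bounds by $2p_0(p/p_0)^{2^l}|\Loc(\Gamma)|$. It is worth stressing that the interface failure probability $2cp$ does \emph{not} enter this estimate: the interfaces are not approximated but replaced by their exact effective channel $N_{enc,p,l}$, so only the circuit's well-behavedness failure contributes error.

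The hard part, exactly the overlap flagged in Section~\ref{sec-ft-and-comm}, is that the trailing error-correction gadget of each encoding interface is shared with the leading gadget of the first exRec of $\Gamma_{\mathcal{C}_l}$, so the interface-correctness event of Theorem~\ref{thm-correct-interfaces} and the circuit event $W$ are not a priori independent and the shared gadget must not be double-counted. I would resolve this by assigning the shared gadget unambiguously to the circuit side, so that $N_{enc,p,l}$ depends only on $l$ and $\EncI_l$ and the interface-fault average becomes statistically independent of $W$; this independence is what makes the replacement on the $n$ slots factorize as the tensor power $N_{enc,p,l}^{\otimes n}$ and the agreement on $W$ exact, and it is also why the potential correlations between distinct syndrome states noted in the introduction are harmless here --- they live entirely in the $S_F$ and syndrome registers that $\Tr_S$ discards.
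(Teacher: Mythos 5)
Your proposal follows essentially the same route the paper takes (it defers to \cite[Lemma~III.8]{CMH20} and spells out the technique in the proof of Theorem~\ref{thm-extra-effective-encoder}): average over i.i.d.\ Pauli fault patterns, apply the exRec transformation lemma on the well-behaved event, define the effective interface channel by splitting the interface-fault average into correct and incorrect parts of weight $\leq 2cp$ and padding to exactly $2cp$, and charge the whole $1\to1$ error to the not-well-behaved event via the threshold theorem, with the factor $2$ coming from converting a probability into a norm bound. You also correctly observe that $2cp$ does not enter the error estimate.

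The one soft spot is your treatment of the overlap. The shared error-correction gadget cannot be "assigned unambiguously to the circuit side": the effective channel $N_{enc,p,l}$ must be the average of $\DecI_l^*\circ[\EncI_l]_F$ over the \emph{full} interface including its trailing $\EC$ (that is the object Theorem~\ref{thm-correct-interfaces} controls and the source of the constant $c$), while that same $\EC$ is the leading gadget of the first exRec and so enters the definition of the good event $W$. Consequently $W$ is \emph{not} statistically independent of the interface faults, and "exact coincidence on $W$" with the target map (which uses the unconditional interface average) is not literally true. The argument still closes, but the correct bookkeeping is the one in the paper's Theorem~\ref{thm-extra-effective-encoder}: on the good set the transformation lemma leaves the faulty input $\EC$ gadgets in place, so they recombine with $[\tilde{\EncI}_l]$ to reconstitute $[\EncI_l]_{F}$; one then compares the true circuit (good part plus bad part) with the transformed expression summed over \emph{all} fault patterns, which factorizes exactly into $N_{enc,p,l}^{\otimes n}$ by independence of the interfaces among themselves, and the difference is supported on the bad set, giving $2p_0(p/p_0)^{2^l}|\Loc(\Gamma)|$. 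With that adjustment your proof is sound.
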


\begin{lemma}[Effective decoding interface]
\label{thm-eff-decoder}
Let $m,n,k\in \mathbbm{N}$ and $\Gamma: \mathbbm{C}^{2^m} \otimes \mathcal{M}_2^{\otimes k}\rightarrow \mathcal{M}_2^{\otimes n} $ be a quantum circuit with quantum and classical input and quantum output.
For each $l\in \mathbbm{N}$, let ${\mathcal{C}_l}$ denote the $l$-th level of the concatenated 7-qubit Steane code with threshold $p_0$.  Moreover, let %$\EncI_l: \mathcal{M}_2 \rightarrow  \mathcal{M}_2^{\otimes 7^l}$ and 
$\DecI_l: \mathcal{M}_2^{\otimes 7^l} \rightarrow \mathcal{M}_2$ be the decoding interface circuit for the $l$-th level of the concatenated 7-qubit Steane code with threshold $p_0$.

Then, for any $0\leq p \leq \frac{p_0}{2}$ and any $l\in\mathbbm{N}$, there exists a quantum channel $N_l:\mathcal{M}_2 \otimes \mathcal{M}_2^{\otimes(7^l-1)} \rightarrow \mathcal{M}_2$, which only depends on $l$ and the interface circuit $\DecI_l$, such that:
    \begin{align*}
      &\| \big[ \DecI_l^{\otimes n}  \circ \Gamma_{\mathcal{C}_l} \circ (\id_{cl}^{\otimes m} \otimes \EC_l^{\otimes k}) \big]_{\mathcal{F}(p)}  \\ &\hspace{1cm} -N_{dec,p,l}^{\otimes n} \circ \left(\Gamma\otimes S_{S} \right)\circ (\id_{cl}^{\otimes m} \otimes (\DecI^* \circ [\EC_l]_{\mathcal{F}(p)})^{\otimes k}) \|_{1 \rightarrow 1} \\&\hspace{2cm} \leq  2 p_0 \Big(\frac{p}{p_0}\Big)^{2^l} |\Loc(\Gamma)| + 2np_0|\Loc(\EncI_1)| \Big(\frac{p}{p_0}\Big)^{2^l-1}  ,
    \end{align*}
where $S_S: \mathcal{M}_2^{\otimes k(7^l-1) }\rightarrow  \mathcal{M}_2^{\otimes n(7^l-1) }$ is some quantum channel on the syndrome space, and with
 %where $\id_S$ denotes an identity on the entire syndrome space, and with
\[N_{dec,p,l} = (1-2cp) \id_2 \otimes \Tr_{S} +2cp N_l\]
where $c=p_0 \max \{|\Loc(\EncI_1)|,|\Loc(\DecI_1\circ \EC)|\}$.
\end{lemma}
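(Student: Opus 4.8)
The plan is to follow the same template as the proof of the effective encoding interface, Lemma~\ref{thm-effective-encoder}, but to run the analysis backwards through the circuit, with the extra care demanded by the fact that the decoding interfaces sit at the output and overlap with the final error-correction gadgets of $\Gamma_{\mathcal{C}_l}$. First I would expand the noisy channel as a convex mixture $\sum_F P(F)\,[\,\cdot\,]_F$ over i.i.d.\ Pauli fault patterns $F$ drawn from $\mathcal{F}(p)$, so that all estimates reduce to bounding the probability of fault patterns on which the implementation deviates from the desired ideal behaviour, using that the $1\rightarrow 1$ distance between any two channels is at most $2$.

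Next I would isolate the ``bulk'' circuit $\Gamma_{\mathcal{C}_l}\circ(\id_{cl}^{\otimes m}\otimes\EC_l^{\otimes k})$ from the trailing interfaces. On the event that every exRec in the bulk is well-behaved under $F$, I apply the well-behavedness lemma (\cite[Lemma~4]{AGP05}, \cite[Lemma~II.6]{CMH20}) to replace $(\DecI_l^*)^{\otimes n}\circ[\Gamma_{\mathcal{C}_l}\circ(\id_{cl}^{\otimes m}\otimes\EC_l^{\otimes k})]_F$ by $(\Gamma\otimes S_F)$ acting on the ideally decoded data and decoupled from the syndrome space; the syndrome channel $S_S$ in the statement is obtained by averaging the $S_F$ over the good fault patterns. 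The threshold theorem (Theorem~\ref{thm-threshold}) bounds the probability that some exRec fails by $2p_0(p/p_0)^{2^l}|\Loc(\Gamma)|$, which is exactly the first term of the claimed error bound.

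I then need to convert the ideal decoders $(\DecI_l^*)^{\otimes n}$ produced above into the physically implemented noisy interfaces $\DecI_l^{\otimes n}$ appearing on the left-hand side. Here I would invoke the correctness of the decoding interfaces (Theorem~\ref{thm-correct-interfaces}, part~2): each interface, grouped with the error-correction gadget immediately preceding it, satisfies $[\DecI_l\circ\EC_l]_F=(\id_2\otimes\Tr_S)\circ\DecI_l^*\circ[\EC_l]_F$ except on a fault set of probability at most $2cp$. Modelling each such failure by the fixed channel $N_l$ supplied by that theorem yields precisely the effective single-qubit map $N_{dec,p,l}=(1-2cp)\,\id_2\otimes\Tr_S+2cp\,N_l$ tensored $n$ times, and the fact that $N_l$ depends only on $l$ and $\DecI_l$ is inherited directly from Theorem~\ref{thm-correct-interfaces}.

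The hard part will be reconciling the two groupings used in the preceding steps: the final layer of error-correction gadgets is claimed simultaneously by the last exRecs of the bulk (needed for well-behavedness) and by the interface-correctness condition $[\DecI_l\circ\EC_l]_F$. To avoid double-counting faults and to keep the two rewritings compatible, I would split each such gadget, track how a fault inside it propagates into both the exRec analysis and the interface analysis, and bound the extra contribution coming from the unprotected level-$1$ locations of the interface. This bookkeeping is what produces the second, interface-specific term $2np_0|\Loc(\EncI_1)|(p/p_0)^{2^l-1}$, whose reduced exponent $2^l-1$ reflects the single unprotected physical level of the interface and whose factor $n$ reflects the $n$ independent code blocks being decoded. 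Finally, a union bound over all failure events, combined with the triangle inequality for $\|\cdot\|_{1\rightarrow1}$, assembles the two error terms and gives the stated inequality.
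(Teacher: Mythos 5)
Your proposal is correct and follows essentially the same route the paper takes: the paper states this lemma without a dedicated proof, deferring to the argument of \cite[Lemma~III.8]{CMH20}, and the fully worked analogue appears as Theorem~\ref{thm-extra-effective-encoder} in Section~\ref{sec-alternative-interface} — which proceeds exactly as you describe (convex decomposition over fault patterns, well-behavedness plus the threshold theorem for the bulk yielding the first error term, interface correctness yielding $N_{dec,p,l}$, and a careful treatment of the error-correction gadget shared between the last exRecs and the interface-correctness condition yielding the second, $n$-dependent term with the weakened exponent). You have correctly identified the overlap of that trailing $\EC$ gadget as the genuinely delicate step.
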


Here, $\id_{cl}:\mathbbm{C}^2\rightarrow \mathbbm{C}^2$ denotes the identity map on a classical bit, and $\| T-S \|_{1 \rightarrow 1} := \sup \{ \| (T-S)(\rho) \|_{\Tr}   | \rho\in \mathcal{M}_{d_A}\text{a quantum state}\}$ denotes the 1-to-1 distance of two quantum channels $T:\mathcal{M}_{d_A}\rightarrow \mathcal{M}_{d_B} $ and $S:\mathcal{M}_{d_A}\rightarrow \mathcal{M}_{d_B} $, where $ \| \rho - \sigma \|_{\Tr} $ denotes the trace distance induced by the trace norm $\| \rho \|_{\Tr} := \frac{1}{2} \| \rho\|_{1} =\frac{1}{2}  \Tr(\sqrt{ \rho ^{\dagger} \rho}) $.

\begin{figure*}[!t]
  \centering
       \includegraphics[width=\textwidth]{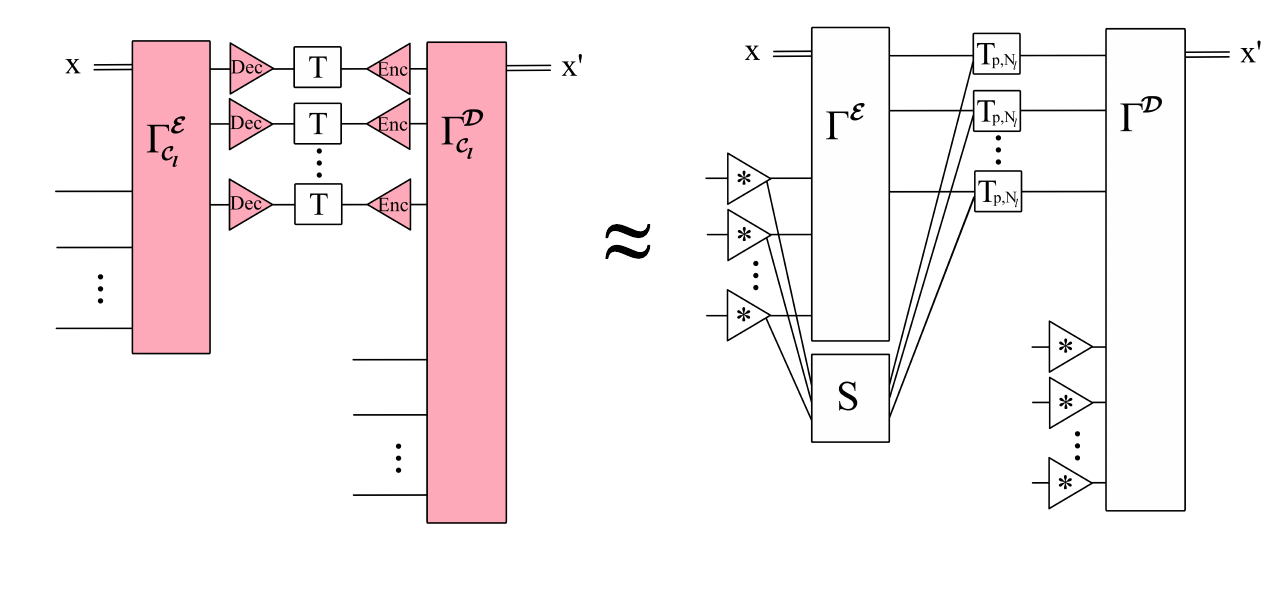}
\caption{\textbf{Sketch of the setup for the effective channel.} The fault-tolerantly implemented encoder $\Gamma_{\mathcal{C}_l}^{\mathcal{E}}$ takes input in the form of $m$ classical bits $x$ and $r$ physical qubits which are encoded in the code space. The resulting codewords are sent through $n$ copies of the quantum channel $T$, preceeded by the decoding interface. The output of the channel is fed into the encoding interface, whose output serves as input to the fault-tolerantly implemented decoder $\Gamma_{\mathcal{C}_l}^{\mathcal{D}}$, which also receives additional quantum input in the form of $s$ qubits in the code space. Theorem~\ref{thm-eff-channel} shows that this setup is very close to a faultless setup with an effective channel, where the quantum systems are transformed by the perfect decoding operation $\DecI^*$, represented by a triangle marked with a star. The effective channel $T_{p,N_l}$ receives input in the form of data qubits and a potentially correlated syndrome state.
}
\label{fig-eff-channel}
\end{figure*}

Lemma~\ref{thm-effective-encoder} and Lemma~\ref{thm-eff-decoder} can be combined to obtain Theorem~\ref{thm-eff-channel}, which is a modified version of Theorem III.9 from \cite{CMH20} that we will use in our analysis of entanglement-assisted capacity.
This theorem links the fault-affected scenario to a communication problem with faultless encoder and decoder circuits connected by an effective noisy channel of a special form, as illustrated in Figure~\ref{fig-eff-channel}.

It is important to note that our setup for fault-tolerant communication considers the operation of encoding information into a quantum channel $T$ and subsequent decoding of this information as two fault-affected circuits connected by $T$. The channel $T$ itself can be taken to be a model a noisy communication channel, however, we do not consider it as a noise-affected circuit with well-defined fault-locations (hence its white color in the figures). In particular, the noise affecting $T$ can be very different from the noise affecting the encoding and decoding circuits, and does not have to be below threshold. 

\begin{theorem}[Effective channel with quantum input] \label{thm-eff-channel}
Let $T:\mathcal{M}_2^{\otimes j_1} \rightarrow \mathcal{M}_2^{\otimes j_2}$ be a quantum channel, and let $\Enc: \mathbbm{C}^{2^m} \otimes \mathcal{M}_2^{\otimes r}\rightarrow \mathcal{M}_2^{\otimes nj_1}$ be a quantum circuit with $m$ bits of classical input and $r$ qubits of quantum input and let $\Dec:\mathcal{M}_2^{\otimes nj_2} \otimes \mathcal{M}_2^{\otimes s} \rightarrow \mathbbm{C}^{2^m}$ be a quantum circuit with classical output of $m$ bits. For each $l\in\mathbbm{N}$, let $\mathcal{C}_l$ denote the $l$-th level of the concatenated 7-qubit Steane code with threshold $0<p_0\leq 1$. Let $\EncI_l: \mathcal{M}_2 \rightarrow  \mathcal{M}_2^{\otimes 7^l}$ and $\DecI_l: \mathcal{M}_2^{\otimes 7^l} \rightarrow \mathcal{M}_2$ be the interface circuits for the $l$-th level of the concatenated 7-qubit Steane code with threshold $p_0$.

Then, for any $l\in\mathbbm{N}$ and any $0\leq p \leq \min\{p_0/2,1/4c\}$, there exists a quantum channel $N_l: \mathcal{M}_2^{\otimes j_1 7^l }\rightarrow \mathcal{M}_2^{\otimes j_2}$ and a quantum channel $S_S: \mathcal{M}_2^{\otimes j_1(7^l-1) }\rightarrow  \mathcal{M}_2^{\otimes j_1(7^l-1) }$ on the syndrome space such that 
\begin{align*}
&\big\| \big[\Gamma_{\mathcal{C}_l}^{\mathcal{D}} \circ \bigg( \Big( \big( \EncI_l^{\otimes nj_2} \circ T^{\otimes n} \circ \DecI_l^{\otimes nj_1} \circ\Gamma_{\mathcal{C}_l}^{\mathcal{E}} \big) \circ   (\id_{cl}^{\otimes m} \otimes\EC_l^{\otimes r}) \Big)\otimes\EC_l^{\otimes s}
\bigg) \big]_{\mathcal{F}(p)}  \\  &\hspace{1cm}  - ({\Dec}\otimes \Tr_S) \circ \Big(\big( T_{p,N_l}^{\otimes n} \circ ({\Enc}\otimes  S_S )\big)\otimes \id_2^{\otimes s} \Big)  \circ \Compactcdots \\&\hspace{5cm} \Compactcdots  \circ \Big(\id_{cl}^{\otimes m} \otimes (\DecI_l^*\circ [\EC_l]_{\mathcal{F}(p)})^{\otimes (r+s)}\Big)\big\|_{1 \rightarrow 1}  \\
&\hspace{1cm}  \leq 2 p_0 \Big(\frac{p}{p_0}\Big)^{2^l} (|\Loc(\Enc)|+|\Loc(\Dec)|)  +2p_0|\Loc(\EncI_1)| \Big(\frac{p}{p_0}\Big)^{2^l-1} j_1n
  \end{align*}
with
\[T_{p,N_l}=(1-2(j_1+j_2) cp) (T\otimes \Tr_S) + 2(j_1+j_2)cpN_l\]
with $c=p_0 \max \{|\Loc(\EncI_1)|,|\Loc(\DecI_1\circ EC)|\}$.
$S_S$ may depend on $l, \Enc$ and $\DecI_l$, while $N_l$ may depend on $ l, \EncI_l$ and $\DecI_l$.
\end{theorem}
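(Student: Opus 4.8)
The plan is to obtain Theorem~\ref{thm-eff-channel} by cutting the full fault-affected circuit at the channel layer $T^{\otimes n}$ into an \emph{encoder side} $\DecI_l^{\otimes nj_1}\circ\Enc_{\mathcal{C}_l}\circ(\id_{cl}^{\otimes m}\otimes\EC_l^{\otimes r})$ and a \emph{decoder side} $\Dec_{\mathcal{C}_l}\circ(\EncI_l^{\otimes nj_2}\otimes\EC_l^{\otimes s})$, and then applying Lemma~\ref{thm-eff-decoder} to the former and Lemma~\ref{thm-effective-encoder} to the latter. Because $T$ is not a fault location and the i.i.d.\ Pauli faults on the encoder gates are independent of those on the decoder gates, the noisy circuit factorizes as $[\text{decoder side}]_{\mathcal{F}(p)}\circ T^{\otimes n}\circ[\text{encoder side}]_{\mathcal{F}(p)}$. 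First I would match the encoder side with the object of Lemma~\ref{thm-eff-decoder} under the substitution $n\mapsto nj_1$, $k\mapsto r$, $\Gamma\mapsto\Enc$, and the decoder side with that of Lemma~\ref{thm-effective-encoder} under $n\mapsto nj_2$, $k\mapsto s$, $\Gamma\mapsto\Dec$; note that the $s$ ancilla blocks are handled inside the decoder-side lemma as its $k$-part, so no stray reference branch is left dangling.

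I would then combine the two approximations by a hybrid argument: replacing first the encoder side and then the decoder side by its effective form, the triangle inequality bounds the total error by the sum of the two lemma errors. The two substitutions are legitimate because left- and right-composition with a fixed quantum channel (here $T^{\otimes n}$ together with the remaining noisy or effective side) is contractive in the $1\to1$ norm, as CPTP maps are trace-norm contractions. Inserting the lemma bounds with $|\Loc(\Enc)|$ and $|\Loc(\Dec)|$ yields exactly $2p_0(p/p_0)^{2^l}(|\Loc(\Enc)|+|\Loc(\Dec)|)+2p_0|\Loc(\EncI_1)|(p/p_0)^{2^l-1}j_1n$, where only the encoder side, through the decoding interface $\DecI_l$, contributes the extra term of exponent $2^l-1$ (its $n$-parameter being $nj_1$).

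After the substitution the channel branch reads $N_{enc,p,l}^{\otimes nj_2}\circ T^{\otimes n}\circ N_{dec,p,l}^{\otimes nj_1}$, and the central algebraic step is to identify this with $T_{p,N_l}^{\otimes n}$. Regrouping the $nj_1$ copies of $N_{dec,p,l}$ into $n$ blocks of $j_1$ feeding each $T$, and the $nj_2$ copies of $N_{enc,p,l}$ into $n$ blocks of $j_2$ emerging from it, factorizes the branch into $n$ identical copies of $N_{enc,p,l}^{\otimes j_2}\circ T\circ N_{dec,p,l}^{\otimes j_1}$. Writing $q=2cp$ and using the lemma decompositions $N_{dec,p,l}=(1-q)(\id_2\otimes\Tr_S)+q\,M$ and $N_{enc,p,l}=(1-q)\id_2+q\,M'$ with failure channels $M,M'$, each block is a mixture whose ``all interfaces correct'' component has weight $(1-q)^{j_1+j_2}$ and reproduces $T\otimes\Tr_S$. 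Since $(1-q)^{j_1+j_2}\geq1-(j_1+j_2)q$ by Bernoulli's inequality, the surplus weight can be absorbed into the failure part, rewriting each block \emph{exactly} as $(1-(j_1+j_2)q)(T\otimes\Tr_S)+(j_1+j_2)q\,N_l$ for a suitable channel $N_l$ (a convex combination once $(j_1+j_2)q\leq1$). This is precisely $T_{p,N_l}=(1-2(j_1+j_2)cp)(T\otimes\Tr_S)+2(j_1+j_2)cp\,N_l$.

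The main obstacle I anticipate is not any single estimate but the careful bookkeeping of the syndrome registers through the whole composition, so that the combined expression collapses to the exact form in the statement (cf.\ Figure~\ref{fig-eff-channel}). Concretely, the channel $S_S$ delivered by Lemma~\ref{thm-eff-decoder} must feed syndrome states into the $N_{dec,p,l}$ blocks, the $\Tr_S$ inside $\Dec\otimes\Tr_S$ from Lemma~\ref{thm-effective-encoder} must discard the residual decoder-side syndrome, and the $s$ ancilla blocks must be routed past the effective channel by $\id_2^{\otimes s}$ while their input processing $(\DecI_l^*\circ[\EC_l]_{\mathcal{F}(p)})^{\otimes s}$ is pulled to the front and merged with the encoder's $(\DecI_l^*\circ[\EC_l]_{\mathcal{F}(p)})^{\otimes r}$ into the single factor $(\DecI_l^*\circ[\EC_l]_{\mathcal{F}(p)})^{\otimes(r+s)}$. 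Tracking these disjoint tensor factors is what certifies the claimed dependencies, namely that $S_S$ depends only on $l,\Enc,\DecI_l$ and that $N_l$ depends only on $l,\EncI_l,\DecI_l$.
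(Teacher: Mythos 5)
Your proposal is correct and follows essentially the route the paper intends: the paper gives no explicit proof of Theorem~\ref{thm-eff-channel} beyond stating that it is obtained by combining Lemma~\ref{thm-effective-encoder} and Lemma~\ref{thm-eff-decoder}, and your decomposition at the channel layer, the matching of the two sides to the two lemmas with the substitutions $n\mapsto nj_1,\,nj_2$, the hybrid/triangle-inequality argument using contractivity of CPTP composition in the $1\to1$ norm, and the regrouping of the $j_1+j_2$ effective interface channels around each copy of $T$ into $T_{p,N_l}$ via Bernoulli's inequality is exactly that combination. The error terms you collect also reproduce the stated bound, so nothing further is needed.
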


 This theorem is formulated for quantum channels which map from a quantum system composed of $j_1$ qubits to a quantum system composed of $j_2$ qubits because we consider interfaces between qubits. However, any quantum channel can always be embedded into a quantum channel between systems composed of qubits, such that Theorem~\ref{thm-eff-channel} and subsequent results apply to general quantum channels.

\section{Entanglement-assisted communication with faultless or faulty devices}
\label{sec-ea-cap}

\noindent When a sender and a receiver are connected by many copies of a quantum channel $T$ and have access to entanglement, they can use this setup to transmit a classical message via entanglement-assisted communication. Then, one can identify the best possible operations for the sender and receiver to perform in order to maximize their transmission rate. This section includes a short introduction into entanglement-assisted communication in Section~\ref{sec-ea-cap-normally}, which will serve as a basis for the coding scheme in our main result, followed by a description of the setup for fault-tolerant entanglement-assisted communication and our strategy for its analysis in Section~\ref{sec-ft-cap-setup}. 

\subsection{The entanglement-assisted capacity}
\label{sec-ea-cap-normally}

\noindent Using the superdense coding protocol \cite{BW92}, two classical bits can be communicated by sending only one qubit over a noiseless quantum channel assisted by entanglement. It is therefore natural to study a noisy channel's classical capacity with entanglement assistance \cite{BSST02}, which we introduce in detail in Section~\ref{sec-quantum-shannon-theory}.

To model entanglement-assisted classical communication, recall that we consider a scheme with classical input and output, where quantum entanglement is available to the sender and the receiver.
As sketched in Figure~\ref{fig-ea-cap}, the sender encodes a classical message of $m$ bits into a quantum state of $n$ qudits by performing an encoding map $\mathcal{E}$. The resulting quantum state serves as input into the tensor product of $n$ copies of a quantum channel $T$, which is equivalent to $n$ independent uses of a quantum wire modelled by $T$. Then, the transformed quantum state is decoded by the receiver applying a decoding map $\mathcal{D}$ which converts the channel's output back into a bit string of length $m$. The performance of such a scheme can be quantified by the probability that this resulting bit string and the original message are identical, as formalized in Definition~\ref{def-ea-coding-scheme}. 
Because of superdense coding \cite{BW92} and teleportation \cite{BBCJPW93}, the classical entanglement-assisted capacity of a channel is exactly double its quantum entanglement-assisted capacity.

We will need a more explicit characterization of the communication error that can be reached by using certain entanglement-assisted coding schemes achieving rates close to capacity. The specific bound can be obtained from \cite[Section~20.4]{Wilde13}, and its error term follows from the packing lemma \cite{HDW08}, notions from weak typicality, and Hoeffding's bound \cite{Hoeffding63}.

\subsection{The fault-tolerant entanglement-assisted capacity}
\label{sec-ft-cap-setup}

\noindent In Section~\ref{sec-ea-cap-normally}, the encoder and decoder are assumed to be ideal quantum channels. In order to perform these channels on some given quantum device, they have to be implemented by quantum circuits, i.e., compositions of finitely many elementary gates.
It is well known that quantum devices (unlike classical computers) are notoriously susceptible to faults at the single-gate level which can have devastating effects on the whole computation. This is also true for the circuits encoding and decoding the information that we want to send between different devices or computers. Through clever and protective implementation, the computation within the encoding and decoding devices can be made robust against such faults, raising the question of a channel's fault-tolerant entanglement-assisted capacity.

A coding scheme for a setup affected by noise is defined as follows:

\begin{definition}[Fault-tolerant entanglement-assisted coding scheme]\label{defn:FTEACS}
Let $T:\mathcal{M}_{d_A} \rightarrow \mathcal{M}_{d_B}$ be a quantum channel, and let $n,m \in \mathbbm{N}$, $R_{ea}\in\mathbbm{R}^+$ and $\epsilon>0$.
For $0\leq p \leq 1$, let $\mathcal{F}(p)$ denote the i.i.d. Pauli noise model.

Then, an $(n,m,\epsilon,R_{ea})$-coding scheme for fault-tolerant entanglement-assisted communication consists of quantum circuits $\mathcal{E}:\mathbbm{C}^{2^m}\otimes\mathcal{M}_2^{ \otimes  \lfloor nR_{ea}\rfloor }\rightarrow\mathcal{M}_{d_A}^{\otimes n}$ and $\mathcal{D}:\mathcal{M}_{d_A}^{\otimes n}\otimes \mathcal{M}_2^{ \otimes \lfloor nR_{ea}\rfloor } \rightarrow\mathbbm{C}^{2^m}$ such that
\begin{align*}
F\big(X,\big[\mathcal{D} \circ  \big( (T^{\otimes n} \circ \mathcal{E}\big) \otimes \id_2^{*\otimes  \lfloor nR_{ea}\rfloor }  \big)  \big]_{\mathcal{F}(p)} (X \otimes \phi_+^{\otimes  \lfloor nR_{ea}\rfloor })\big) \geq 1-\epsilon\end{align*}
where $X=\ketbra{x}$, for all bit strings  $x\in \{0,1\}^m$.
\end{definition}

\begin{remark} If the entanglement resource in our setup was arbitrary, we could consider a setup assisted by an arbitrary pure entangled state in the code space.  In this scenario, the entanglement resource would directly be available to the fault-tolerantly implemented encoding and decoding circuits, without being corrupted by noisy encoding interfaces, and without necessitating the additional step of entanglement distillation. Achievable rates for such a scenario can be inferred from the expression from our main result (Theorem~\ref{thm-final-coding-thm}) with $f_1(p)=0$.
Under this model, we would assume that the state was prepared and stored (for the duration of the encoding and decoding circuits) within the code space without incurring any faults through the preparation gates and time steps. Here, we choose to consider the more practically relevant scenario where the entanglement resource becomes subject to noise as soon as it enters the code space of the encoding and decoding circuits. The form and amount of entanglement-assistance we consider is as in the standard setup, given by $nR_{ea}$ copies of physical maximally entangled qubits. This scenario also covers situations where $\phi_+$ may be prepared and stored in some highly noise-tolerant and well-suited way until it is needed for computation.
\end{remark}

The asymptotically best possible fault-tolerantly achievable rate defines the channel's fault-tolerant entanglement-assisted capacity, fundamentally characterizing how much information the channel can transmit under this noise model.

\begin{definition}[Fault-tolerant entanglement-assisted capacity]
Let $T:\mathcal{M}_{d_A} \rightarrow \mathcal{M}_{d_B}$ be a quantum channel, and let $R_{ea}\in \mathbbm{R}_+$ be the rate of entanglement-assistance. %\sup_{\varphi  } H(A)_{\varphi}$ be some rate of entanglement-assistance where the supremum runs over pure states $\varphi$.
For $0\leq p \leq 1$, let $\mathcal{F}(p)$ denote the i.i.d. Pauli noise model.

If, for some $R_{ea}$ and for every $n\in \mathbbm{N}$, there exists an $(n,m(n),\epsilon(n),R_{ea})$-coding scheme for fault-tolerant entanglement-assisted communication under the noise model $\mathcal{F}(p)$, then a rate $R\geq 0$ is called achievable for fault-tolerant entanglement-assisted communication via the quantum channel $T$ if

\begin{align*}
   R & \leq %\lim_
   \liminf_{n\rightarrow \infty} \Big\{ 
    \frac{m(n)}{n} \Big\}
    \end{align*}
and
\begin{align*}
   \lim_{n\rightarrow \infty} \epsilon(n) \rightarrow 0
    \end{align*}
The fault-tolerant entanglement-assisted capacity of $T$ is given by
% \begin{align*}
%   C^{ea}_{\mathcal{F}(p)}(T) =\sup \{R | R \text{ achievable fault-tolerantly}\}.
%   \end{align*}
  \begin{equation*}
\begin{split}
  &C^{ea}_{\mathcal{F}(p)}(T) =\sup \{R | R \text{ achievable rate for fault-tolerant}  \\ &\hspace{5cm} \text{ entanglement-assisted communication via $T$}\}.
  \end{split}
\end{equation*}
\end{definition}

\begin{figure}[htbp]
  \centering
       \includegraphics[width=10cm]{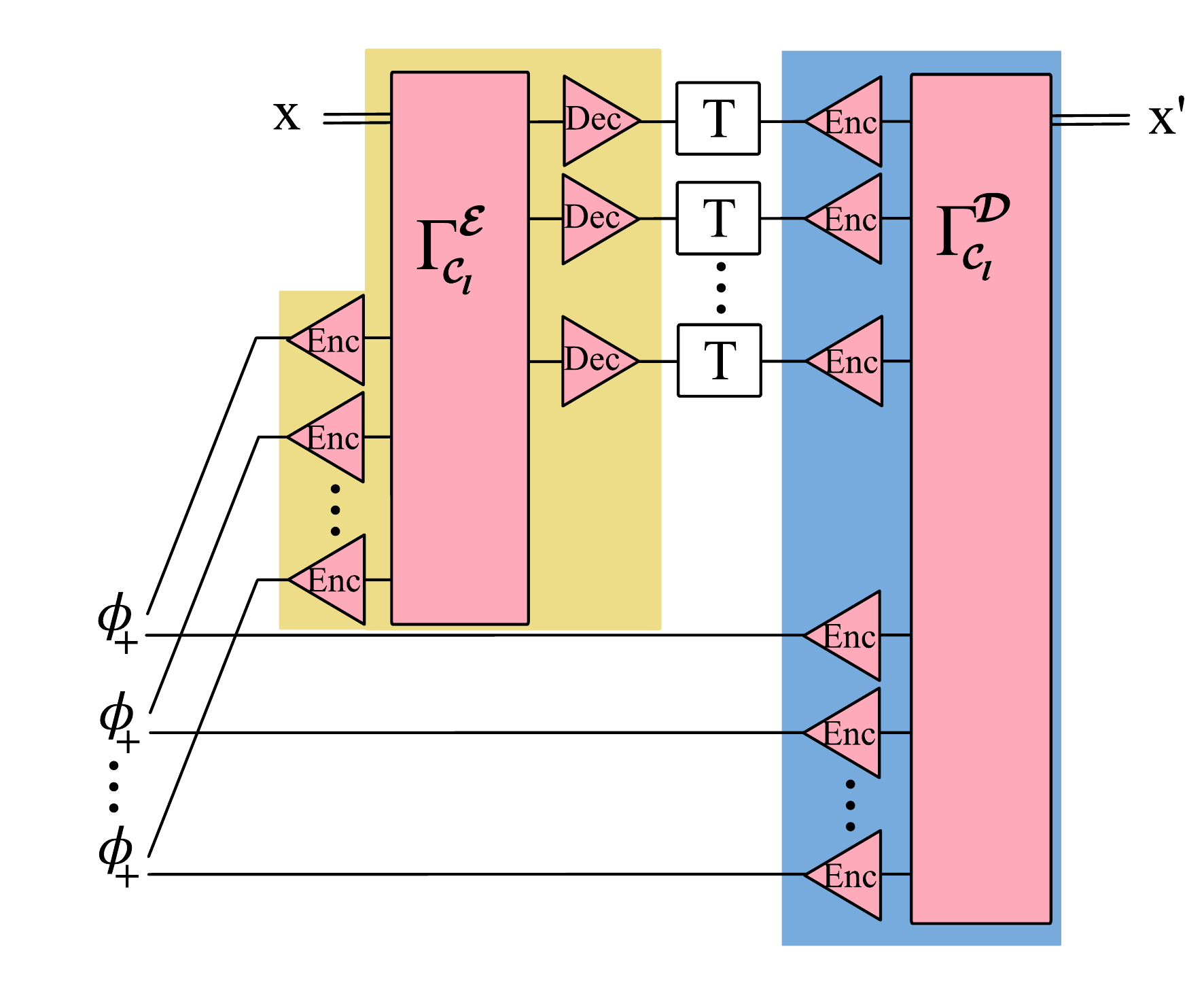}
\caption{\textbf{Basic setup for our coding scheme for fault-tolerant entanglement-assisted communication, see Definition~\ref{defn:FTEACS}.} The encoding map $\mathcal{E}$ (yellow) encodes a bit string of length $m$ into a quantum state that serves as input into $n$ copies of the quantum channel $T$, and the decoding map $\mathcal{D}$ (blue) decodes the received quantum state back to a bit string. In the entanglement-assisted scenario, $\mathcal{E}$ and $\mathcal{D}$ are connected by $\sim nR_{ea}$ maximally entangled states. To make the communication fault-tolerant, the encoding and decoding circuits are implemented fault-tolerantly in an error correcting code $\mathcal{C}_l$ as $\Gamma_{\mathcal{C}_l}^{\mathcal{E}}$ and $\Gamma_{\mathcal{C}_l}^{\mathcal{D}}$, and combined with interfaces $\EncI$ and $\DecI$ mapping between the quantum states serving as input and output for $T$, and the quantum states being transformed in the fault-tolerantly implemented encoding and decoding circuits.
}
\label{fig-ft-ea-cap-setup}
\end{figure}

Formally, any quantum circuits $\mathcal{E}$ and $\mathcal{D}$ may be chosen in Definition~\ref{defn:FTEACS}, leading to a coding scheme for fault-tolerant entanglement-assisted classical communication. To prove lower bounds to the fault-tolerant capacity $C^{ea}_{\mathcal{F}(p)}(T)$ for a quantum channel $T$ in terms of the capacity $C^{ea}(T)$, we will use a particular construction that is similar to constructions in \cite{CMH20}. 

Consider some coding scheme not affected by noise for entanglement-assisted classical communication over the channel $T$. We can turn this coding scheme into a fault-tolerant coding scheme by first approximating it by quantum circuits, and then implementing these quantum circuits in a high level of the concatenated $7$-qubit Steane code. Crucially, we will use the interface circuits from \cite{MGHHLPP14,CMH20} to convert between physical qubits and logical qubits in the code space, e.g., when qubits from the output of the channel $T$ are brought into the code space. Unfortunately, these interfaces fail with a probability $2cp$, where $p$ is the gate error probability of the noise model and $c$ some interface-dependent constant (from Theorem~\ref{thm-correct-interfaces}) and the fault-tolerant implementation of the coding scheme affected by faults will not be equivalent to the original coding scheme for the quantum channel $T$. Instead it will be equivalent to the coding scheme for a certain effective quantum channel $T_{p,N}$ as in Theorem~\ref{thm-eff-channel}. 

Our strategy starts by considering a coding scheme for entanglement-assisted classical communication for channels that include our effective communication channels $T_{p,N}$. We refer to this channel model as \emph{arbitrarily varying perturbation} (AVP) and we will discuss it in detail in Section~\ref{sec-avp}. This model has been introduced in \cite{CMH20} in the cases of unassisted classical and quantum communication, and it is closely related to the fully-quantum arbitrarily varying channels studied in \cite{BDNW18}. As described in the preceeding paragraphs, we then obtain a fault-tolerant coding scheme by implementing the coding scheme under AVP in a high level of the concatenated $7$-qubit Steane code. For the fault-tolerant entanglement-assisted capacity, the setup crucially includes a supply of maximally entangled states that are connected to the fault-tolerantly implemented encoder and decoder circuit via additional interfaces, as illustrated in Figure~\ref{fig-ft-ea-cap-setup}. Because of the effective probability of failure of these interfaces, when transferring the maximally entangled states into the code space, they are only correctly transmitted with a probability of approximately $1-4cp$ (since there is one interface for each qubit). Subsequently, the entanglement inserted into the code space is noisy and in a mixed state. To counteract this, we show that this entanglement can still be made usable by transforming it back into pure state entanglement in the code space by performing (fault-tolerant) entanglement distillation in Section~\ref{sec-ft-ent-dist}. Since entanglement distillation requires classical communication, we will need to use a subset of the channels $T$ to run the fault-tolerant protocol from \cite{CMH20} to send classical information. Thereafter, with slightly fewer copies of $T$ remaining, an analysis similar to \cite{CMH20} is carried out in Section~\ref{sec-coding-thm} to arrive at a coding theorem describing rates of fault-tolerant entanglement-assisted communication that are achievable with vanishing error.

\section{Entanglement-assisted communication under arbitrarily varying perturbation}
\label{sec-avp}

\noindent As described in Section~\ref{sec-ft-cap-setup}, we find a correspondence between the capacity of a fault-affected setup and an information-theoretic communication setup under non-i.i.d. perturbations which we outline in Section~\ref{sec-avp-model}. Based on similar channel models in \cite{CMH20}, we introduce a generalized version of an entanglement-assisted capacity which allows for arbitrarily varying syndrome input and prove a coding theorem for this model in Section~\ref{sec-avp-coding-thm}.

\subsection{The entanglement-assisted capacity under arbitrarily varying perturbation}\label{sec-avp-model}

\noindent One key feature of the communication problem emerging from Theorem~\ref{thm-eff-channel} is that the effective channel takes input from the space of channel coding symbols as well as the syndrome space. Since the syndrome state can be correlated across different channel uses, the effective communication problem is not covered by standard i.i.d. communication scenarios, but instead defines a communication problem of its own as introduced in \cite{CMH20} and with similarities to communication scenarios studied in \cite{BDNW18}.
 
For $T:\mathcal{M}_{d_A}\rightarrow \mathcal{M}_{d_B}$ and any quantum channel $N:\mathcal{M}_{d_A}\otimes \mathcal{M}_{d_S} \rightarrow \mathcal{M}_{d_B}$ let  $T_{p,N}:\mathcal{M}_{d_A}\otimes \mathcal{M}_{d_S} \rightarrow \mathcal{M}_{d_B}$ denote the quantum channel $T_{p,N}=(1-p) (T\otimes \Tr_S) + pN $. Here, we consider the problem of communicating via a channel of the form
$T_{p,N}^{\otimes n} (\cdot\otimes \sigma_S):\mathcal{M}_{d_A}^{\otimes n} \rightarrow \mathcal{M}_{d_B}^{\otimes n}$ for arbitrary syndrome states $\sigma_S$ and arbitrary quantum channels $N$. We refer to this model as \emph{communication under arbitrarily varying perturbation} (AVP). 

Note that the effective channel model emerging from Theorem~\ref{thm-eff-channel} takes the form $T_{p,N}^{\otimes n} (\cdot\otimes \sigma_S)$ for some syndrome state $\sigma_S$ and some quantum channels $N$, where the dimension $d_S$ depends on the level of concatenation. Since the level of concatenation has to increase with the number $n$ of channel uses if we want the overall communication error to vanish in a fault-tolerant communication scenario, we will allow $d_S$ to be arbitrary and possibly dependent on $n$ in the definition of the capacity under AVP. Previous results in \cite{BDNW18} for models with fixed syndrome state dimension are therefore not directly applicable in this setting. Notably, $\sigma_S$ can be arbitrarily entangled between the $n$ spaces. If $\sigma_S$ is a separable state, this communication model is a special case of a channel model studied in \cite{ABBN12}. Our definitions here consider general syndrome states and can thereby be taken to apply to the worst-case scenario with arbitrarily correlated syndrome states.

\begin{definition}[Entanglement-assisted coding scheme under arbitrarily varying perturbation]
\label{def-avp-coding-scheme}
Let $T:\mathcal{M}_{d_A} \rightarrow \mathcal{M}_{d_B}$ be a quantum channel, and let $n,m \in \mathbbm{N}$, $R_{ea}\in\mathbbm{R}^+$ and $\epsilon>0$.

Then, an $(n,m,\epsilon,R_{ea})$-coding scheme for entanglement-assisted communication under AVP of strength $p$ consists of
a pure bipartite quantum state $\varphi\in \mathcal{M}_2^{\otimes 2}$ and the quantum channels $\mathcal{E}:\mathbbm{C}^{2^m}\otimes\mathcal{M}_2^{ \otimes nR_{ea}}\rightarrow\mathcal{M}_{d_A}^{\otimes n}$ and $\mathcal{D}:\mathcal{M}_{d_A}^{\otimes n}\otimes \mathcal{M}_2^{ \otimes nR_{ea}} \rightarrow\mathbbm{C}^{2^m}$ such that
\begin{align*} \inf F\Big(X, \mathcal{D} \circ  \big( (T_{p,N}^{\otimes n}  \circ (\mathcal{E} \otimes \sigma_S) ) \otimes \id_2^{\otimes nR_{ea}}  \big) (X \otimes \varphi^{\otimes nR_{ea}}) \Big) \geq 1-\epsilon\end{align*}
where $X=\ketbra{x}$, for all bit strings  $x\in \{0,1\}^m$, where
\[T_{p,N}= (1-p) (T\otimes \Tr_S ) + pN . \]
The infimum goes over the dimension $d_S\in \mathbbm{N}$, quantum states $\sigma_S\in\mathcal{M}_{d_S}^{\otimes n}$, and quantum channels $N:\mathcal{M}_{d_A}\otimes \mathcal{M}_{d_S} \rightarrow \mathcal{M}_{d_B}$.
\end{definition}

\begin{remark}
Here, we consider copies of arbitrary bipartite pure entangled states instead of maximally entangled states as entanglement resource, as the latter would require an extra step of entanglement dilution for the coding scheme from \cite[Theorem~21.4.1]{Wilde13}. In order to ensure exponential decay of the entanglement dilution error that we require in our proof, the communication rate would be reduced by some function linear in perturbation strength $p$. However, this is not necessary in the context of fault-tolerant coding because the extra classical communication can be performed together with the entanglement distillation, leading to an overall better bound on the achievable rate in our main result, Theorem~\ref{thm-final-coding-thm}.
\end{remark}

Naturally, the best possible rate of information transfer using such a coding scheme is a channel's entanglement-assisted capacity under AVP:

\begin{definition}[Entanglement-assisted capacity under arbitrarily varying perturbation]
Let $T:\mathcal{M}_{d_A} \rightarrow \mathcal{M}_{d_B}$ be a quantum channel, and let $R_{ea}\in\mathbbm{R}^+$.

If, for some $R_{ea}$ and for every $n\in \mathbbm{N}$, there exists an $(n,m(n),\epsilon(n),R_{ea})$-coding scheme for entanglement-assisted communication under AVP of strength $p$, then a rate $R\geq 0$ is called achievable for entanglement-assisted communication under AVP via the quantum channel $T$ if 
\begin{align*}
   R & \leq %\lim_
   \liminf_{n\rightarrow \infty} \Big\{ 
    \frac{m(n)}{n} \Big\}
    \end{align*}
and
\begin{align*}
   \lim_{n\rightarrow \infty} \epsilon(n) \rightarrow 0
    \end{align*}
The entanglement-assisted capacity of $T$ under AVP is given by
\begin{align*}
  & C^{ea}_{AVP}(p,T) =\sup \{R | R \text{ achievable rate for } \\& \hspace{3cm}\text{entanglement-assisted communication under AVP via $T$}\}.
   \end{align*}
\end{definition}

This version of entanglement-assisted capacity thus characterizes how well one can communicate with non-i.i.d. channel input, which may be interesting for various communication problems, and appears in particular in our study of fault-tolerant communication in Section~\ref{sec-coding-thm}.

\subsection{A coding theorem for entanglement-assisted communication under arbitrarily varying perturbation} \label{sec-avp-coding-thm}

\noindent The information-theoretic model outlined in Section~\ref{sec-avp-model} naturally raises the question of how much such AVP can hinder communication. Here, we show that communication is still possible in this scenario, and that achievable rates are given by the following theorem:

\begin{theorem}[Lower bound on the entanglement-assisted capacity under AVP] \label{thm-avp-coding-thm}
For any quantum channel $T:\mathcal{M}_{d_A} \rightarrow \mathcal{M}_{d_B}$, and for any $0\leq p \leq 1$, we have an entanglement-assisted capacity under AVP with 
\[C^{ea}_{AVP}(p,T) \geq  C^{ea}(T)-g(p) \]
where

\begin{align*}
g(p)&=  
2(d_Ad_B\log(d_Ad_B) +1)\sqrt{2{\log(d_B)p}} |\log( \frac{p^2}{d_Ad_B} )|\\& +2h\Big(d_Ad_B\sqrt{2{\log(d_B)p}} |\log(\frac{p^2}{d_Ad_B})|\Big) \\&
+5p\log(d_B)+ 2(1+2p)h\Big(\frac{2p}{1+2p}\Big)  =\mathcal{O}(p\log(p))\\
\end{align*}
\end{theorem}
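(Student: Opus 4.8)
The plan is to turn a near-optimal entanglement-assisted coding scheme for the ideal channel $T$ into a scheme for the perturbed family $T_{p,N}^{\otimes n}(\cdot\otimes\sigma_S)$ whose error still vanishes, at the cost of a rate penalty $g(p)$. I would fix a pure state $\varphi_{AA'}$ attaining (or nearly attaining) $C^{ea}(T)=I(A':B)_{(\id\otimes T)(\varphi)}$ by Theorem~\ref{thm-ent-ass-cap-bsst}, and build the encoder/decoder from $\approx n$ copies of $\varphi_{AA'}$ via the standard packing-lemma construction; the explicit error bound I would take from \cite[Section~21.4-21.5]{Wilde13}, whose error term is controlled by weak typicality and Hoeffding's inequality \cite{Hoeffding63,HDW08}. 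With this in place the theorem reduces to showing that the mutual information of the perturbed effective channel is within $g(p)$ per channel use of $I(A':B)$, after which the coding theorem delivers achievable rate $C^{ea}(T)-g(p)$ with vanishing error, as required by Definition~\ref{def-avp-coding-scheme}.

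The core estimate is a continuity bound relating the perturbed output to the ideal one. First I would expand the product channel over fault patterns, writing $T_{p,N}^{\otimes n}(\cdot\otimes\sigma_S)$ as a classical mixture over subsets $A\subseteq\{1,\dots,n\}$ of faulty locations with weight $p^{|A|}(1-p)^{n-|A|}$, the complementary locations acting as the clean channel $T$. A Hoeffding bound \cite{Hoeffding63} then lets me restrict to the typical event $|A|\le 2pn$, discarding atypical patterns at exponentially small cost in fidelity. On the typical event only a $2p$-fraction of uses is corrupted, so the corrupted block --- even when acted on by an arbitrary $N$ and an arbitrarily correlated syndrome state $\sigma_S$ --- can degrade the transmitted mutual information by at most its size times the maximal output entropy $\log d_B$, plus the entropy $\approx n\,h(2p)$ needed to locate it; rewritten in continuity-bound form this produces the last two contributions $5p\log(d_B)+2(1+2p)h\!\big(2p/(1+2p)\big)$ to $g(p)$.

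For the remaining, generic part of the perturbation I would compare output states directly. The perturbed single-use output differs from $T(\rho)$ with weight $p$, so its relative entropy to the ideal output is of order $p\log d_B$; feeding this into the quantum Pinsker inequality (Theorem~\ref{thm-pinsker-ineq}) converts it to a trace-distance bound of order $\sqrt{2\log(d_B)p}$, and one can alternatively route through Theorem~\ref{thm-fidelity-and-trace-dist}. Applying an Alicki--Fannes--Winter-type continuity bound for the quantum mutual information \cite{AF04,AE05,Shirokov17} with this trace distance as the smoothing parameter yields the first two terms of $g(p)$, namely $2(d_Ad_B\log(d_Ad_B)+1)\sqrt{2\log(d_B)p}\,|\log(p^2/(d_Ad_B))|+2h(\cdots)$. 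Collecting the two continuity contributions bounds the per-use mutual-information loss by $g(p)$.

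The main obstacle is precisely the arbitrarily varying and possibly globally entangled syndrome state $\sigma_S$, which forbids a naive single-letterisation: the total trace distance between $T_{p,N}^{\otimes n}(\cdot\otimes\sigma_S)$ and $T^{\otimes n}$ is not small, so continuity cannot be applied in one shot on the $n$-fold state. I would circumvent this with a telescoping (hybrid) argument over the $n$ coordinates, swapping $T_{p,N}$ for $T$ one use at a time; each swap perturbs the global state by a controlled amount localised to a single output system, so the per-swap mutual-information change is bounded by the single-use continuity estimate above, and the $n$ swaps sum to at most $n\,g(p)$. This localisation is what makes the adversarial, correlated perturbation tractable, and is the step I expect to require the most care to make rigorous while keeping the dimension factors $d_A,d_B$ (rather than their $n$-fold powers) in the final bound.
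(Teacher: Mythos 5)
There is a genuine gap at the heart of your plan: the reduction of the theorem to a continuity estimate on the mutual information of the perturbed channel. The quantity you must control is the \emph{infimum over all} $d_S$, $\sigma_S$ and $N$ of the fidelity of one fixed code run through $T_{p,N}^{\otimes n}(\cdot\otimes\sigma_S)$ (Definition~\ref{def-avp-coding-scheme}); this channel family is adversarial and, because $\sigma_S$ may be entangled across all $n$ uses, not memoryless. A bound saying that its mutual information is within $n\,g(p)$ of $n\,I(A':B)$ does not feed back into the packing-lemma coding theorem of \cite[Theorem~21.4.1]{Wilde13}, which is proved for a fixed i.i.d.\ channel: you would need a compound/AVC-type coding theorem for this class, and you have not supplied one. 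The paper's proof avoids this entirely with the postselection-type operator inequality \cite[Lemma~IV.10]{CMH20}, $T_{p,N}^{\otimes n}(\cdot\otimes\sigma_S)\leq d_B^{n(p+\tilde{\delta})}T_p^{\otimes n}+e^{-n\tilde{\delta}^2/(3p)}S$ with $T_p=(1-p)T+p\tfrac{\mathbbm{1}}{d_B}\Tr(\cdot)$, combined with monotonicity of the fidelity: the error of the single code designed for $T_p^{\otimes n}$ is amplified by at most $d_B^{n(p+\tilde{\delta})}$ uniformly over every $N$ and every $\sigma_S$, and backing the rate off by $\approx p\log d_B$ is exactly what absorbs that exponential amplification. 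This domination step is the missing idea; your telescoping/hybrid argument, which you correctly identify as the delicate point, would at best yield a mutual-information statement (and even there, Alicki--Fannes--Winter applied to the $n$-fold output costs $n\log d_B$ per swap unless a careful conditional chain rule is used), not an achievability result against the adversary.

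A secondary discrepancy: the factors $\sqrt{2\log(d_B)p}$ and $|\log(p^2/(d_Ad_B))|$ in $g(p)$ do not come from Pinsker's inequality applied to the single-use outputs. In the paper they arise from choosing $\delta=\sqrt{2\log(d_B)p}\,|\log\lambda_{\min}|$ in the Hoeffding term of the coding error so that $d_B^{(p+\tilde{\delta})n}e^{-n\delta^2/(2(\log\lambda_{\min})^2)}\to 0$, where $\lambda_{\min}\geq p^2/(d_Ad_B)$ is guaranteed by smoothing the input state to $\varphi_p=(1-p)\varphi^*+p\phi_+$ and the channel to $T_p$. The continuity bound on the mutual information \cite[Corollary~1]{Shirokov17} is then applied only once, to compare $I(A':B)_{(T_p\otimes\id)(\varphi_p)}$ with $I(A':B)_{(T\otimes\id)(\varphi^*)}$ at trace distance $2p$, producing the last terms of $g(p)$; no per-coordinate telescoping is needed. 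Your sketch reproduces several correct ingredients but assembles them around a mechanism that does not close.
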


\begin{proof}[Proof of Theorem~\ref{thm-avp-coding-thm}]
In this proof, we find a coding scheme for entanglement-assisted communication under AVP with strength $p\in\left[ 0,1\right]$ by constructing an encoder channel $\mathcal{E}$ and a decoder channel $\mathcal{D}$. Achievable rates are rates for which the fidelity in Definition~\ref{def-avp-coding-scheme} goes to $1$, which corresponds to rates $R$ for which the following expression goes to zero (which is a consequence of Fuchs-van-de-Graaf-inequality \cite{FvdG97}):

\begin{align}\begin{split}
    \label{this-has-to-go-to-zero-avp}
&\sup\{%_{d_S,\sigma_S,N}
\| \id_{cl}^{\otimes nR} - \mathcal{D} \circ  (T_{p,N}^{\otimes n} \circ (\mathcal{E} \otimes \sigma_S) \otimes  \id_2^{\otimes nR_{ea}}) \circ   (\id_{cl}^{\otimes nR} \otimes \varphi^{\otimes nR_{ea}}) \|_{1 \rightarrow 1}  \}
 \overset{n\rightarrow \infty}{\rightarrow}  0 \end{split}
\end{align}
where the supremum goes over the dimension $d_S\in \mathbbm{N}$, quantum states $\sigma_S\in\mathcal{M}_{d_S}^{\otimes n}$, and quantum channels $N:\mathcal{M}_{d_A}\otimes \mathcal{M}_{d_S} \rightarrow \mathcal{M}_{d_B}$.

Our construction makes use of a particular coding scheme for entanglement-assisted communication: For any quantum channel $T$, we consider the quantum channel $T_p=(1-p) T + p \frac{\mathbbm{1}}{d_B} \Tr(\cdot )$ with $p$ being the strength of the AVP. Using the coding scheme from \cite[Theorem~21.4.1]{Wilde13}, for any quantum channel $T_p$, and any pure bipartite quantum state $\varphi\in \mathcal{M}_{d_A}\otimes \mathcal{M}_{d_A}$, there exists an encoder $\mathcal{E}$ and a decoder $\mathcal{D}$ for $T_p$ such that
\begin{align}\begin{split} \label{eq-fid} \Xi( T_p^{\otimes n }):= F\Big(X,\mathcal{D} \circ  \big( ( T_p^{\otimes n } \circ \mathcal{E} ) \otimes \id_{2}^{\otimes nR_{ea}} \big) (X \otimes \varphi^{\otimes nR_{ea} })\Big)  \geq 1-{\epsilon_{ea}} \end{split}\end{align}
for any classical message $x$ with the corresponding quantum state $X=\ketbra{x}$ of length $nR'$, where 
\begin{align}\begin{split} \label{eq-coding-error} \epsilon_{ea}&\leq 12 e^{-\frac{n\delta^2}{2(\log(\lambda_{\min}))^2}} + 8 \cdot 2^{-n(I(A':B)_{(T_p\otimes \id_2)(\varphi)} -\eta(\delta,d_A,d_B)-d_Ad_B\frac{\log(n+1)}{n} -R')} \end{split}\end{align}
with the function $\eta(\delta,d_A,d_B)=2 (d_Ad_B\log(d_Ad_B) +1)  \delta+2h(d_Ad_B\delta)$ and with the smallest non-vanishing eigenvalue $\lambda_{\min}=\min\{\lambda\in \Spec((T_p\otimes\id_2)(\varphi))| \lambda>0 \}$. Here, $h(x)=-x\log(x)-(1-x)\log(1-x)$ denotes the binary entropy and $I(A:B)_{\rho}=H(A)_{\rho}+H(B)_{\rho}-H(AB)_{\rho}$ denotes the quantum mutual information.

To apply this coding scheme to the original channel $T$ under AVP, we will use the postselection-type result in \cite[Lemma~IV.10]{CMH20}: For any $\tilde{\delta}>0$, we have
\[ T_{p,N}^{\otimes n} (\cdot \otimes \sigma_S) \leq d_B^{n(p+\tilde{\delta})} T_p^{\otimes n} + e^{-\frac{n\tilde{\delta}^2}{3p}} S \]
for some quantum channel $S:\mathcal{M}_{d_A}^{\otimes n}\rightarrow \mathcal{M}_{d_B}^{\otimes n}  $. Here, we write $S_1\leq S_2$ for completely positive maps $S_1$ and $S_2$ if the difference $S_2-S_1$ is completely positive. Using a simple monotonicity property of the fidelity, we have: 
\begin{align*}
&\|\id_{cl}^{\otimes nR}- \Big( \mathcal{D} \circ (T_{p,N}^{\otimes n} \circ (\mathcal{E} \otimes \sigma_S) \otimes  \id_2^{\otimes nR_{ea}}) \circ (\id_{cl}^{\otimes nR} \otimes {\varphi}^{\otimes  nR_{ea}} ) \Big) \|_{1 \rightarrow 1}  \\
 &\leq 2 \sqrt{1-\Xi( {T}_{p,N}^{\otimes n }) }\\
 & \leq2 \sqrt{ d_B^{(p+\tilde{\delta})n} \Big(1- \Xi(  {T}_p^{\otimes n }) \Big)- e^{-\frac{n \tilde{\delta}^2}{3p}} } \\
 & \leq2\sqrt{d_B^{(p+\tilde{\delta})n} \epsilon_{ea}-e^{-\frac{n \tilde{\delta}^2}{3p}}  }\\
\end{align*}
where we make use of \cite[Proposition~4.3]{KW04} for the first inequality, \cite[Lemma~IV.10]{CMH20} for the second inequality, and Eq.~\eqref{eq-fid} in the last inequality. Clearly, we have Eq.~\eqref{this-has-to-go-to-zero-avp} if

\begin{align*} d_B^{(p+\tilde{\delta})n} \epsilon_{ea} & \overset{n\rightarrow \infty}{\rightarrow}  0\end{align*}
with $\epsilon_{ea}$ from Eq.~\eqref{eq-coding-error}.

% \begin{align*} d_B^{(p+\tilde{\delta})n} (12 e^{-\frac{n\delta^2}{2(\log(\lambda_{\min}))^2}}+ 8 \cdot 2^{-n(I(A':B)_{(T_p\otimes \id_2)(\varphi)} -\eta(\delta,d_A,d_B)-\frac{d_Ad_B}{n}\log(n+1) -R)} ) & \overset{n\rightarrow \infty}{\rightarrow}  0.\end{align*}

For any  $\tilde{\delta}>0$ sufficiently small, we thus obtain a bound on the choices of $\delta$ and $R$, where the choice of $\delta$ should guarantee that
\begin{equation}\label{eq:deltaErrorBound}
d_B^{(p+\tilde{\delta})n} e^{-\frac{n\delta^2}{2(\log(\lambda_{\min}))^2}}\rightarrow 0 ,
\end{equation}
while the bound on $R$ should guarantee that
\begin{align}\begin{split} \label{eq:restErrorBound}
d_B^{(p+\tilde{\delta})n}  2^{-n(I(A':B)_{(T_p\otimes \id_2)(\varphi)} -\eta(\delta,d_A,d_B)-\frac{d_Ad_B}{n}\log(n+1) -R)} \rightarrow 0.\end{split}
\end{align}
To guarantee that Eq.~\eqref{eq:deltaErrorBound} holds, we choose $\delta$ as
\[\delta =\sqrt{2{\log(d_B)p}} |\log(\lambda_{min})| ,\]
and $\tilde{\delta}>0$ sufficiently small. We then find the bound
\begin{align}\begin{split} \label{eq-r-avp}
    R &<  I(A':B)_{( T_p \otimes \id_2)(\varphi)} - \log(d_B)p
     -\eta(\sqrt{2{\log(d_B)p}} |\log(\lambda_{min})|,d_A,d_B)%- d_Ad_B \frac{\log(n+1)}{n} 
    \end{split}
\end{align}
such that Eq.~\eqref{eq:restErrorBound} holds. Thereby, we obtain
\begin{align*}
&C^{ea}_{AVP}(p,T)   \\&\hspace{1cm}\geq
I(A':B)_{(T_p \otimes \id_2)(\varphi)}- \log(d_B)p   -\eta(\sqrt{2{\log(d_B)p}} |\log(\lambda_{min}) |,d_A,d_B)  
\end{align*}
for any pure quantum state $\varphi$, where $\lambda_{\min}=\min\{\lambda\in \Spec((T_p\otimes\id_2)(\varphi))| \lambda>0 \} \}$.

To get an expression in terms of the usual entanglement-assisted capacity $C^{ea}(T)$, we use continuity estimates in the following way: Consider the pure quantum state $\varphi^*=\text{argmax}_{\varphi} I(A':B)_{( T\otimes \id_2)(\varphi)}$ which achieves the maximum for the quantum mutual information for the channel $T$.
 Then, consider a quantum state $\varphi_p=(1-p)\varphi^* + p\phi_+$.
 Then, $\|(T_p\otimes \id_2)(\varphi_p)-(T_p\otimes \id_2)(\varphi^*) \|_{\Tr} \leq p$, and the minimum eigenvalue of $( T_p\otimes \id_2)(\varphi_p)$ is lower bounded as $\lambda_{\min} \geq   \frac{p^2}{d_Ad_B}$. In addition, we know that $ \|T(\rho)-T_p(\rho) \|_{\Tr} \leq p $ for all quantum states $\rho$.
  By triangle inequality, we therefore find
  \[  \|(T_p\otimes \id_2)(\varphi_p)-(T\otimes \id_2)(\varphi^*) \|_{\Tr} \leq 2p \]
 Then, using the continuity of mutual information \cite[Corollary~1]{Shirokov17}, we find that
 \begin{align*}
     | I(A':B)_{( T_p \otimes \id_2)(\varphi_p)} - I(A':B)_{( T \otimes \id_2)(\varphi*)} | \leq 4p\log(d_B)+ 2(1+2p)h\Big(\frac{2p}{1+2p}\Big) 
 \end{align*} 
In total, we thereby obtain the bound \[C^{ea}_{AVP}(p,T) \geq  C^{ea}(T)-g(p) \]
where
\begin{equation*}\begin{split}
g(p)&=     \eta(\sqrt{2{\log(d_B)p}} |\log(\frac{p^2}{d_Ad_B})|,d_A,d_B) 
\\&+5p\log(d_B)+ (1+2p)h\Big(\frac{2p}{1+2p}\Big) \\& =
2(d_Ad_B\log(d_Ad_B) +1)\sqrt{2{\log(d_B)p}} |\log( \frac{p^2}{d_Ad_B} )|\\& +  2h\Big(d_Ad_B\sqrt{2{\log(d_B)p}} |\log(\frac{p^2}{d_Ad_B})|\Big) \\& 
+5p\log(d_B)+ 2(1+2p)h\Big(\frac{2p}{1+2p}\Big) .
\end{split}\end{equation*}
\end{proof}
As a consequence of this result, we find the following continuity in perturbation strength $p$ of the entanglement-assisted capacity under AVP. Moreover, the usual notion of entanglement-assisted capacity is recovered for vanishing perturbation probability $p$.

\begin{theorem}
For every $\eta>0$ and $d_A,d_B\in\mathbbm{N}$ there exists a $p(\eta,d_A,d_B)\in \left[ 0,1\right]$ such that 
\[
C^{ea}_{AVP}(p,T) \geq C^{ea}(T) - \eta,
\]
for every $p\leq p(\eta,d_A,d_B)$ and every quantum channel $T:\mathcal{M}_{d_A} \rightarrow \mathcal{M}_{d_B}$.
\end{theorem}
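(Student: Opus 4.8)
The plan is to deduce this statement directly from the explicit lower bound established in Theorem~\ref{thm-avp-coding-thm}, namely $C^{ea}_{AVP}(p,T) \geq C^{ea}(T)-g(p)$. The crucial observation is that the correction term $g(p)$ depends only on the perturbation strength $p$ and on the input and output dimensions $d_A,d_B$, and is otherwise completely independent of the particular channel $T$. Hence the bound is uniform over the whole class of channels $T:\mathcal{M}_{d_A}\rightarrow \mathcal{M}_{d_B}$, and it suffices to control $g(p)$ as $p\to 0$ once $d_A$ and $d_B$ are fixed.

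First I would verify that $\lim_{p\rightarrow 0^+} g(p)=0$ for each fixed pair $(d_A,d_B)$, by inspecting the four summands of $g(p)$ individually. The first summand is proportional to $\sqrt{p}\,|\log(p^2/(d_Ad_B))|$, which tends to $0$ because $\sqrt{p}\,|\log p|\to 0$, and the third summand $5p\log(d_B)$ clearly vanishes. For the two binary-entropy summands I would use that $h$ is continuous on $[0,1]$ with $h(0)=0$: in the second summand the argument $d_Ad_B\sqrt{2\log(d_B)p}\,|\log(p^2/(d_Ad_B))|$ tends to $0$ (and in particular lies in $[0,1]$ for all sufficiently small $p$, so the expression is well-defined), while in the fourth summand $2p/(1+2p)\to 0$ and $(1+2p)\to 1$. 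Thus every summand tends to $0$, and so does $g$.

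Given $\eta>0$ and $d_A,d_B\in\mathbbm{N}$, the limit $\lim_{p\rightarrow 0^+}g(p)=0$ then furnishes, by the definition of the limit, a threshold $p(\eta,d_A,d_B)\in(0,1]$ such that $g(p)\leq \eta$ for every $0\leq p\leq p(\eta,d_A,d_B)$. Combining this with Theorem~\ref{thm-avp-coding-thm} yields, for every such $p$ and every channel $T:\mathcal{M}_{d_A}\rightarrow\mathcal{M}_{d_B}$,
\[
C^{ea}_{AVP}(p,T)\geq C^{ea}(T)-g(p)\geq C^{ea}(T)-\eta,
\]
as claimed.

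The argument involves no genuine obstacle beyond this elementary limit computation; the only point requiring slight care is the domain of the binary entropy, since $h$ is defined only on $[0,1]$ and one must check that the argument $d_Ad_B\sqrt{2\log(d_B)p}\,|\log(p^2/(d_Ad_B))|$ indeed remains in this range for the relevant small values of $p$. The conceptual content is entirely carried by Theorem~\ref{thm-avp-coding-thm}: what makes the \emph{uniform} statement possible is precisely that its error term $g(p)$ is channel-independent, so that a single threshold works simultaneously for all $T$ of the given dimensions.
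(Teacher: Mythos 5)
Your proposal is correct and is exactly the argument the paper intends: the statement is presented there as an immediate consequence of Theorem~\ref{thm-avp-coding-thm}, relying precisely on the facts you verify, namely that $g(p)$ depends only on $p$, $d_A$, $d_B$ (not on $T$) and that $g(p)\to 0$ as $p\to 0^+$. Your extra care about the domain of the binary entropy is a reasonable refinement but does not change the route.
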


\begin{corollary}
Let $p\geq0$. Then, for every quantum channel $T:\mathcal{M}_{d_A}\rightarrow \mathcal{M}_{d_B}$, we have 
\[\lim_{p\rightarrow 0} C^{ea}_{AVP}(p,T) = C^{ea}(T).\]
\end{corollary}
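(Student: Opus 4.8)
The plan is to sandwich $C^{ea}_{AVP}(p,T)$ between $C^{ea}(T)-g(p)$ and $C^{ea}(T)$ and then let $p\to 0$. The lower bound is already available: Theorem~\ref{thm-avp-coding-thm} gives $C^{ea}_{AVP}(p,T)\geq C^{ea}(T)-g(p)$ with $g(p)=\mathcal{O}(p\log p)$, and every summand in the explicit expression for $g$ tends to $0$ as $p\to 0$, so $\lim_{p\to 0}g(p)=0$. (Equivalently, the preceding continuity theorem already yields $\liminf_{p\to 0}C^{ea}_{AVP}(p,T)\geq C^{ea}(T)$, which is all we need from the lower-bound side.) What remains is the matching upper bound $C^{ea}_{AVP}(p,T)\leq C^{ea}(T)$, valid for every $p\in[0,1]$.

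For the upper bound I would exploit that the capacity under AVP is an adversarial quantity: by Definition~\ref{def-avp-coding-scheme}, an admissible coding scheme must keep the fidelity above $1-\epsilon$ for the \emph{worst} choice of perturbation channel $N$ and syndrome state $\sigma_S$, the infimum running over all $d_S\in\mathbbm{N}$. The key observation is that among the admissible perturbations there is a benign one, namely $N=T\otimes\Tr_S$, for which $T_{p,N}=(1-p)(T\otimes\Tr_S)+pN=T\otimes\Tr_S$. For this choice the effective channel $T_{p,N}^{\otimes n}(\cdot\otimes\sigma_S)$ collapses to $T^{\otimes n}$ with the syndrome systems simply discarded, independently of $\sigma_S$ (one may equally take the trivial syndrome space $d_S=1$ and $N=T$). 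Hence any $(n,m,\epsilon,R_{ea})$-coding scheme for AVP, restricted to this perturbation, is an honest $(n,m,\epsilon,R_{ea})$-coding scheme for ordinary entanglement-assisted communication over $T$ with the pure state $\varphi$ as entanglement resource. Every rate achievable under AVP is therefore achievable in the standard setting, so that $C^{ea}_{AVP}(p,T)\leq C^{ea}(T)$.

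Combining the two bounds gives $C^{ea}(T)-g(p)\leq C^{ea}_{AVP}(p,T)\leq C^{ea}(T)$, and the squeeze theorem together with $g(p)\to 0$ yields both the existence of the limit and its value $C^{ea}(T)$. The only point requiring a little care is the entanglement resource in the upper-bound step: the AVP scheme uses copies of an arbitrary pure bipartite state $\varphi$, whereas $C^{ea}(T)$ is defined through maximally entangled states. This is harmless because, as recalled in the remark following Definition~\ref{def-ea-coding-scheme}, arbitrary pure entangled resources do not change the entanglement-assisted capacity (they are interconvertible with maximally entangled states by entanglement dilution, and Theorem~\ref{thm-ent-ass-cap-bsst} already optimizes over all pure states $\rho_{AA'}$). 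I expect this resource-equivalence bookkeeping, rather than any analytic estimate, to be the only genuine obstacle; all the analytic work is already contained in the proven bound on $g(p)$.
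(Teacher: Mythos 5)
Your proof is correct and follows essentially the route the paper intends: the paper states this corollary as an immediate consequence of Theorem~\ref{thm-avp-coding-thm} (lower bound with $g(p)\to 0$), implicitly combined with the trivial upper bound $C^{ea}_{AVP}(p,T)\leq C^{ea}(T)$ that you make explicit via the benign perturbation $N=T\otimes\Tr_S$. Your care about the entanglement resource ($\varphi$ versus $\phi_+$) is also consistent with the paper's remark that arbitrary pure-state assistance does not change the capacity.
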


\section{Fault-tolerant entanglement distillation} \label{sec-ft-ent-dist}

\noindent As outlined in Section~\ref{sec-ea-cap}, the entanglement-assisted capacity considers encoders and decoders as general quantum channels that have access to entanglement.
In a fault-tolerant setup, framing the encoder and decoder as circuits with an implementation in a fault-tolerant code means that the entanglement has to be transferred into the code space through an interface as explained in Section~\ref{sec-ft-cap-setup}. Naturally, this interface is in itself a fault-affected circuit, and can produce a noisy mixed state in the code space. 

\begin{figure}[htbp]
  \centering
       \includegraphics[width=10cm]{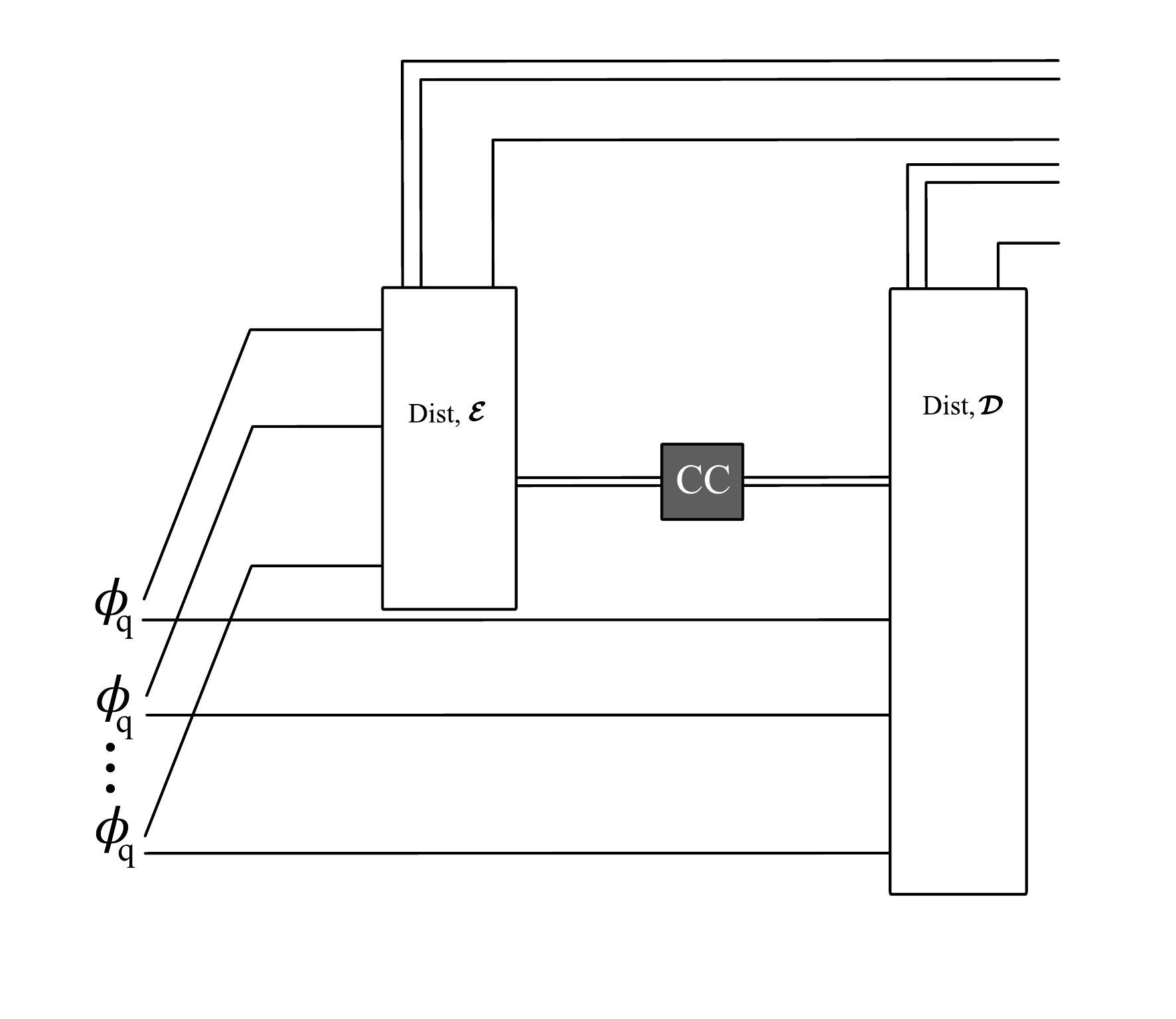}
\caption{\textbf{Setup for entanglement distillation based on the protocol in \cite{DW03}.} Two parties each have access to one part of $k$ noisy entangled states $\phi_q$. One party performs local operations ${\mathcal{E}}^{\Dist}$ and sends one-way classical communication to the other, who performs local operations ${\mathcal{D}}^{\Dist}$. The output state of this scheme is close in fidelity to $(1-H(\phi_q))k$ copies of the maximally entangled state (cf. Theorem~\ref{thm-ent-dist-original}).
}
\label{fig-ent-dist-normally}
\end{figure}

Fortunately, we can asymptotically carry out entanglement distillation with one-way classical communication \cite{DW03}, thereby transforming many copies of a noisy entangled state into fewer copies of a perfectly maximally entangled state.

\begin{theorem}[Entanglement distillation, see Theorem~10 in \cite{DW03}] \label{thm-ent-dist-original}
Let $\{\phi_+,\phi_-,\psi_+,\psi_-\}$ be the Bell basis of the space $\mathcal{M}_2^{\otimes 2}$.

For sufficiently large $k\in\mathbbm{N}$, there exists a $\delta\geq0$ and a quantum channel $\Dist:\mathcal{M}_2^{\otimes 2k}\rightarrow \mathcal{M}_2^{\otimes 2(1-H(\phi_q)))k}$ consisting of local operations and $H(\phi_q) k$ bits of one-way classical communication, such that $k$ copies of the state $\phi_{q}=(1-q) \phi_+ +\frac{q}{3} (\phi_- + \psi_+ + \psi_-)$ are mapped to $(1-H(\phi_q))k$ copies of the maximally entangled state $\phi_+$
with the following fidelity:
\[ F\big((\phi_+)^{\otimes (1-H(\phi_{q}))k}, \Dist(\phi_q^{\otimes k}) \big) \geq 1- \epsilon_{dist}(q)\]
with 
\[\epsilon_{dist}(q) \leq 
2e^{-\frac{k \delta^2}{\log(q/3)^2}}+\sqrt{2\sqrt{3}e^{-k \frac{\delta^2}{2\log(q/3)^2}}}\]
\end{theorem}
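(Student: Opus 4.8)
The plan is to reproduce the one-way hashing distillation argument of \cite{DW03} (building on the hashing protocol of \cite{BDSW96}), specialised to the depolarized Bell state $\phi_q$. First I would observe that $\phi_q=(1-q)\phi_+ + \frac{q}{3}(\phi_-+\psi_++\psi_-)$ is Bell-diagonal, so that $\phi_q^{\otimes k}$ can be written as a classical mixture over \emph{label strings} $a\in\{0,1,2,3\}^k$ drawn i.i.d.\ from the distribution $P$ with $P(0)=1-q$ and $P(1)=P(2)=P(3)=q/3$. Here the label of each copy records which of the four Bell states it is, equivalently which Pauli ($\id,X,Y,Z$) relates it to $\phi_+$. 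The Shannon entropy of $P$ is exactly the von Neumann entropy $H(\phi_q)=-(1-q)\log(1-q)-q\log(q/3)$, the surviving rate $1-H(\phi_q)$ is the corresponding coherent information (hashing bound), and determining $a$ is equivalent to identifying the joint error pattern on the $k$ pairs.

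The second step is to run the hashing protocol: Alice and Bob apply bilateral local Clifford operations that coherently compute parities of random subsets of the labels, and Alice measures and sends the $\approx H(\phi_q)k$ resulting parity bits to Bob over the forward classical channel. Each bit fixes one linear constraint on $a$, and after roughly $H(\phi_q)k$ bits the string is, with high probability, the unique typical string consistent with all announced parities; Bob then applies the matching Pauli correction, leaving the $(1-H(\phi_q))k$ unmeasured pairs close to $\phi_+^{\otimes(1-H(\phi_q))k}$. I would stress that the measurements are performed \emph{coherently}, as in \cite{DW03}, so that the surviving pairs are genuinely (approximately) maximally entangled rather than merely classically correlated; this coherent implementation is the conceptual heart of the construction and is what upgrades the classical hashing intuition into a statement about entanglement.

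The third step is the error analysis, which produces the two terms of $\epsilon_{dist}(q)$. There are two failure modes. First, the sampled string $a$ may be atypical; since the surprisal values $-\log P(a_i)$ are i.i.d.\ and bounded in absolute value by $|\log(q/3)|$, Hoeffding's inequality \cite{Hoeffding63} bounds the probability that the empirical entropy deviates from $H(\phi_q)$ by more than $\delta$ by a term of the form $e^{-k\delta^2/\log(q/3)^2}$, giving the first summand $2e^{-k\delta^2/\log(q/3)^2}$. Second, two distinct typical strings may collide under the random parities; a counting argument over the at most $2^{k(H(\phi_q)+\delta)}$ typical strings shows this is exponentially small once the number of announced bits exceeds $H(\phi_q)k$ by the margin set by $\delta$. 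Finally I would convert the resulting failure probability into a fidelity lower bound: a gentle-measurement / Fuchs--van de Graaf estimate (Theorem~\ref{thm-fidelity-and-trace-dist}) turns a failure probability $\eta$ into fidelity at least $1-O(\sqrt{\eta})$, which is the origin of the square-root structure $\sqrt{2\sqrt 3\, e^{-k\delta^2/(2\log(q/3)^2)}}$ of the second summand, with the factor $\sqrt 3$ reflecting the size of the non-trivial error alphabet.

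The hard part will be twofold, and it lies entirely in this last step. First, I must ensure that the parity-extraction is genuinely coherent, so that the distilled pairs are maximally entangled and not just correlated --- precisely the refinement that \cite{DW03} makes over the classical hashing heuristic. Second, I must track the constants and exponents carefully enough that the two contributions assemble into exactly the stated $\epsilon_{dist}(q)$, in particular matching the range $|\log(q/3)|$ of the surprisal appearing in the Hoeffding exponents and the constant $2\sqrt 3$ emerging from the gentle-measurement conversion. The typicality and counting estimates themselves are routine; the delicacy is in the bookkeeping of these prefactors and in the coherent realisation of the protocol.
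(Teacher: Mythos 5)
The paper does not prove this statement: it is imported verbatim as Theorem~10 of \cite{DW03} with only a citation, so there is no in-paper proof to compare against. Your reconstruction — Bell-diagonal states as classical mixtures over Pauli label strings, coherent one-way random-parity hashing at rate $H(\phi_q)$, a Hoeffding/typicality bound with surprisal range $|\log(q/3)|$ for the first error term, and a trace-distance-to-fidelity conversion producing the square-root second term — is exactly the standard argument underlying that reference, and your identification of the constant-tracking and the coherent implementation as the only delicate points is accurate.
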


\begin{remark}
The von Neumann entropy of the state $\phi_{q}$ is $ H(\phi_{q})=-(1-q)\log(1-q) - q \log(\frac{q}{3})=h(q)+q\log (3)$ where $h$ denotes the binary entropy. We restrict ourselves to using states of this form because we consider the i.i.d. Pauli fault model, but similar considerations can be made for general noisy input states. In that case, the amount of maximally entangled states that can be obtained from copies of the state $\rho$ is given by its distillable entanglement $H(A)_{\rho}-H(AB)_{\rho}$ per copy, and the amount of classical communication that has to be performed amounts to $H(A)_{\rho}-H(B)_{\rho}+H(R)_{\rho}$ bits per copy, where $R$ denotes a purifying system \cite[Remark~11]{DW03}. In other words, our results extend to fault-tolerant entanglement distillation from arbitrary states, where the noisy state $\rho$ is additionally transformed by the noisy effective interfaces such that the entanglement effectively has to be distilled from the state $(N_{enc,p,l} \otimes N_{enc,p,l})(\rho)$. Fault-tolerant entanglement distillation could still be performed, but may require a higher number of copies of $T$ and may lead to fewer perfect maximally entangled pairs in the code space.
\end{remark}

\begin{remark}
Assuming faultless encoder and decoder circuits, but noisy mixed entangled states, using a subset of the channel copies for entanglement distillation implies a notion of classical capacity with assistance by noisy states, which may be of independent interest. 
For $R_{ea}\geq \frac{\log(2)}{1-H(\phi_q)}$, we see that the capacity $C_{\phi_q}^{ea}$ with assistance by $nR_{ea}$ copies of the state $\phi_{q}=(1-q) \phi_+ +\frac{q}{3} (\phi^- + \psi^+ + \psi^-)$ is given by $C_{\phi_q}^{ea}(T)\geq C_{\phi_+}^{ea}(T)- %\frac{H(\phi_q) \log d_B } {1-H(\phi_q) } 
H(\phi_q) R_{ea} \frac{C_{\phi_+}^{ea}(T)}{C(T)} $.
Note that this is related to the scheme of special dense coding, a generalized version of superdense coding where the sender and receiver have access to arbitrary pairs of qubits and are connected by a noiseless, perfect quantum channel. Using purification procedures \cite{BPV98,Bowen01,ZZYS17} or directly finding a coding scheme \cite{HHHLT01}, coding protocols have been proposed and achievable rates have been computed for this scenario, where the latter also shows that states with bounded (i.e. non-distillable) entanglement do not enhance communication via a perfect quantum channel at all.
\end{remark}

Implementing the circuits for the distillation machines fault-tolerantly requires physical states to be inserted into the code space via an interface.
This interface is also subject to the fault model and only correct with a certain probability which cannot be made arbitrarily small by increasing the concatenation level of the concatenated $7$-qubit Steane code. Effectively, this leads to noisy states in the code space, which the distillation tries to counteract.

Note that this means that the input state into the whole protocol, the original sea of maximally entangled states can still be assumed to be noiseless; the input to our protocol for entanglement-assisted communication will be in the form of perfectly maximally entangled physical qubits that become noisy because of the fault-affected interface, as sketched in Figure~\ref{fig-ft-ea-cap-setup}. In summary, this proposed scheme for fault-tolerant distillation takes perfectly maximally entangled physical qubits as an input, and the desired output is in the form of perfectly maximally entangled states in the code space.

In order to show that a fault-tolerant distillation protocol with the concatenated $7$-qubit Steane code can transform physical maximally entangled states such that they are very close to maximally entangled states in the code space, we are going to make use of Lemma~\ref{thm-effective-encoder}. Because of the fault-affected interface, any quantum state that serves as input into a fault-tolerantly implemented circuit is transformed by the interface into an effective input state which is a mixture of the original state (with weight of approximately $1-4cp$, where $c$ is the constant from Theorem~\ref{thm-correct-interfaces}) and a noisy state (with weight of $4cp$), serving as input into the perfect circuit. This is true in particular for $k$ copies of the maximally entangled state. Then, we can employ fault-tolerant circuits implementing the protocol from \cite{DW03} to fault-tolerantly restore maximal entanglement for $(1-h(4cp)-4cp\log(3))k$ qubit states in the code space. Our setup for fault-tolerant distillation is sketched in Figure~\ref{fig-ft-ent-dist}.

\begin{figure}[htbp]
  \centering
      \includegraphics[width=10cm]{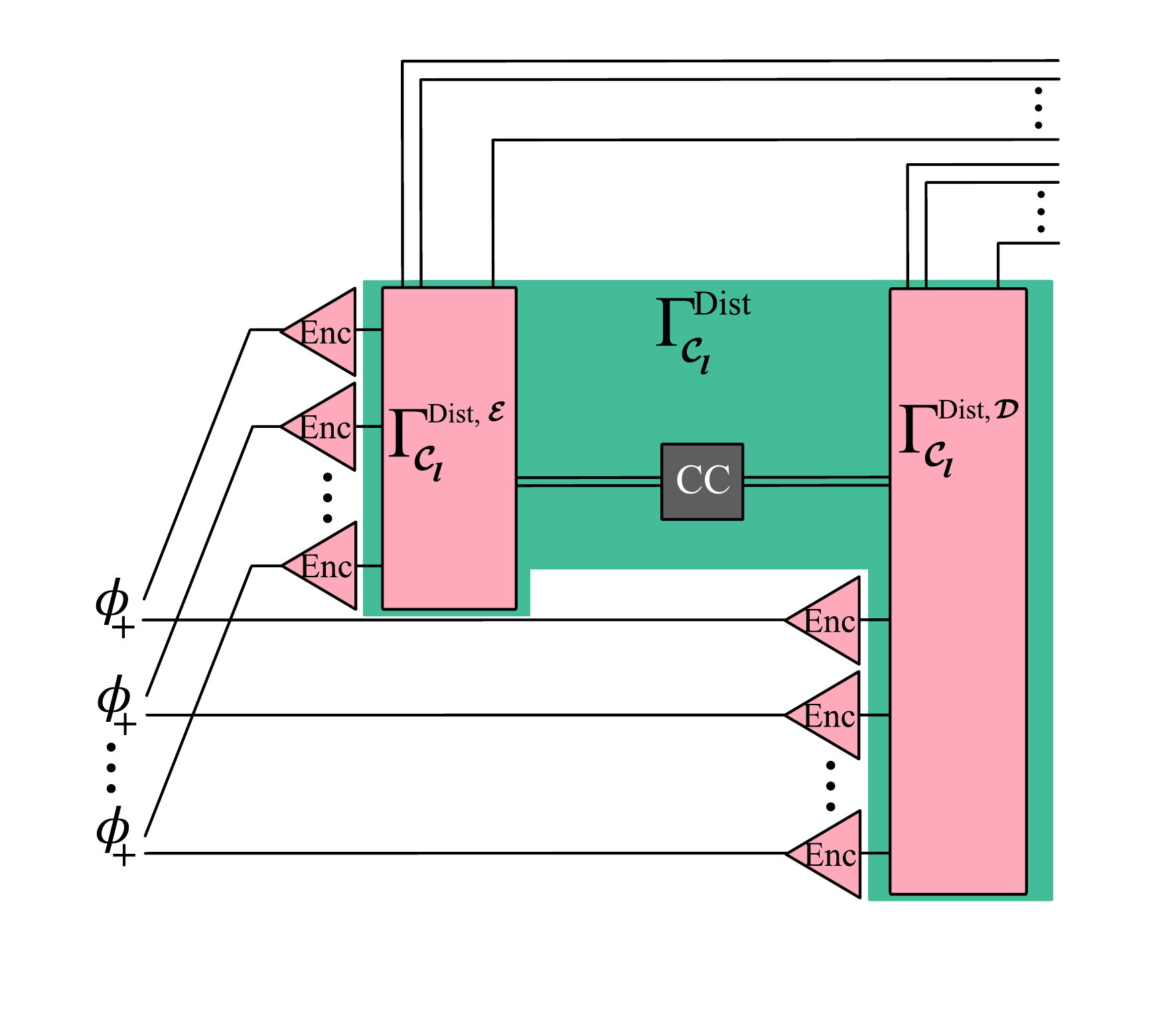}
\caption{\textbf{Setup for fault-tolerant entanglement distillation.} The local operations performed in Figure~\ref{fig-ent-dist-normally} are implemented in an error correcting code $\mathcal{C}_l$, and interfaces map the $k$ logical states into effective mixed states in the code-space (cf. Theorem~\ref{thm-ft-ent-dist-normally}).}
\label{fig-ft-ent-dist}
\end{figure}

\begin{theorem}[Fault-tolerant entanglement distillation] 
\label{thm-ft-ent-dist-normally}
For each $l\in \mathbbm{N}$, let $\mathcal{C}_l$ denote the $l$-th level of the concatenated 7-qubit Steane code with threshold $p_0$.
For any $0\leq p \leq \frac{p_0}{2}$ and for all $k\in \mathbbm{N}$ large enough, there exists a circuit %$\Delta:\mathcal{M}_2^{\otimes 2k}\rightarrow \mathcal{M}_2^{\otimes 2\beta(4cp)k}$
$\Gamma^{\Dist}:\mathcal{M}_2^{\otimes 2k}\rightarrow \mathcal{M}_2^{\otimes 2\beta(4cp)k}$
 using $(1-\beta(4cp))k$ bits of classical communication, and two quantum states $\sigma^{\mathcal{E}}_S \in \mathcal{M}_{d_{S_E}}$ and $\sigma^{\mathcal{D}}_S \in \mathcal{M}_{d_{S_D}}$ such that
    \begin{align*}
   & \|(\DecI_l^*)^{\otimes 2\beta(4cp)k}\circ \big[ \Gamma^{\Dist}_{\mathcal{C}_l}  \circ \EncI_l^{\otimes 2k} \big]_{\mathcal{F}(p)} (\phi_+)^{\otimes k}  -(\phi_+)^{\otimes \beta(4cp)k}  \otimes \sigma^{\mathcal{E}}_S \otimes \sigma^{\mathcal{D}}_S\|_{\Tr}   \\ &\hspace{1cm} \leq  p_0 \left(\frac{p}{p_0}\right)^{2^l} |\Loc(\Gamma^{\Dist})|  +  \sqrt{ \epsilon_{dist}(4cp) } + \frac{2}{k} \end{align*}
with the constant $c$ from Theorem~\ref{thm-correct-interfaces}, $\epsilon_{dist}(q)$ the function from Theorem~\ref{thm-ent-dist-original}, and $\beta(q)=1-h(q)-q\log(3)$.
\end{theorem}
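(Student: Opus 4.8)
The plan is to build $\Gamma^{\Dist}$ as the fault-tolerant implementation, in the concatenated 7-qubit Steane code $\mathcal{C}_l$, of the Devetak--Winter one-way protocol of Theorem~\ref{thm-ent-dist-original} tuned to the noisy Bell state $\phi_q$ with $q=4cp$, preceded by two ingredients: (i) the $2k$ encoding interfaces $\EncI_l$ that map the physical halves of $(\phi_+)^{\otimes k}$ into the code space, and (ii) a twirl realized by random transversal Clifford gates. The proof then proceeds by two reductions: first replace the noisy fault-tolerant object by an ideal logical protocol acting on an effective noisy input, and then analyze ordinary distillation on that effective input.

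For the first reduction I would invoke the effective encoding interface bound of Lemma~\ref{thm-effective-encoder} for the $2k$ input interfaces together with the threshold theorem (Theorem~\ref{thm-threshold}) for the fault-tolerance of the distillation circuit itself, assembled exactly as in the construction of the effective channel in Theorem~\ref{thm-eff-channel}. Since the circuit begins and ends in the code space and we apply the ideal decoder $(\DecI_l^*)^{\otimes 2\beta(4cp)k}$ to the output, this shows that the object
\[
(\DecI_l^*)^{\otimes 2\beta(4cp)k}\circ \big[\Gamma^{\Dist}_{\mathcal{C}_l}\circ \EncI_l^{\otimes 2k}\big]_{\mathcal{F}(p)}(\phi_+)^{\otimes k}
\]
is, up to trace distance $p_0(p/p_0)^{2^l}|\Loc(\Gamma^{\Dist})|$, the ideal logical protocol $\Dist$ applied to $N_{enc,p,l}^{\otimes 2k}\big((\phi_+)^{\otimes k}\big)$, with the decoupled syndrome spaces of the two parties furnishing the tensor factors $\sigma^{\mathcal{E}}_S\otimes\sigma^{\mathcal{D}}_S$. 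Here $N_{enc,p,l}=(1-2cp)\id_2+2cpN_l$ is the effective single-qubit interface noise from Lemma~\ref{thm-effective-encoder}, with $c$ the interface constant of Theorem~\ref{thm-correct-interfaces}.

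For the second reduction I analyze $N_{enc,p,l}^{\otimes 2k}$ on $(\phi_+)^{\otimes k}$ pair by pair: each pair becomes $(N_{enc,p,l}\otimes N_{enc,p,l})(\phi_+)$, whose overlap with $\phi_+$ is at least $(1-2cp)^2\ge 1-4cp$. The prepended twirl projects each pair onto the isotropic form $\phi_{q'}$ with $q'=1-\langle\phi_+|(N_{enc,p,l}\otimes N_{enc,p,l})(\phi_+)|\phi_+\rangle\le 4cp$, while leaving the one-way structure of $\Dist$ intact (the twirl is local and classically coordinated). Because $H(\phi_q)$ is monotone in $q$ for small $q$, a protocol tuned to $\phi_{4cp}$ performs at least as well on $\phi_{q'}$, so Theorem~\ref{thm-ent-dist-original} yields output fidelity at least $1-\epsilon_{dist}(4cp)$ against $(\phi_+)^{\otimes\beta(4cp)k}$. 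Converting fidelity to trace distance via the Fuchs--van de Graaf inequality (Theorem~\ref{thm-fidelity-and-trace-dist}) contributes the term $\sqrt{\epsilon_{dist}(4cp)}$, and the residual $2/k$ absorbs the integer rounding of the $\beta(4cp)k$ distilled pairs and the $H(\phi_{4cp})k$ classical bits together with the finite-size slack $\delta$ of Theorem~\ref{thm-ent-dist-original}. A final triangle inequality combines the two reductions.

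I expect the main obstacle to be the second reduction: the effective noise $N_{enc,p,l}$ is an \emph{arbitrary} channel, not the symmetric Bell-diagonal noise for which Theorem~\ref{thm-ent-dist-original} is stated. The crux is the twirling argument that brings every pair to the canonical isotropic form $\phi_{q'}$, together with the bound $q'\le 4cp$ and the monotonicity of $H(\phi_q)$, which let one replace the worst case over $N_l$ by the single parameter $q=4cp$. Secondary care is needed to check that the twirl and the classical-communication rounds remain fault-tolerant in $\mathcal{C}_l$ and that the two parties' syndrome contributions genuinely factorize as $\sigma^{\mathcal{E}}_S\otimes\sigma^{\mathcal{D}}_S$.
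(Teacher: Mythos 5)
Your proposal matches the paper's own (sketched) argument: the paper likewise combines the threshold theorem for the distillation circuit with Lemma~\ref{thm-effective-encoder} to turn each input pair into $(N_{enc,p,l}\otimes N_{enc,p,l})(\phi_+)$, twirls these into Bell-diagonal form $\phi_{4cp}$ as the first step of the protocol of Theorem~\ref{thm-ent-dist-original}, and then converts the distillation fidelity to trace distance, with the three error terms arising exactly as you describe. Your extra care in bounding the twirled parameter as $q'\le 4cp$ and invoking monotonicity is a sound way to make precise what the paper states directly.
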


The proof employs techniques from \cite{AGP05} to relate the fault-affected circuit implementations sketched in Figure~\ref{fig-ft-ent-dist} to the ideal circuits via the threshold theorem and choosing a high enough concatenation level $l$. Due to Lemma~\ref{thm-effective-encoder}, the physical input states (which are maximally entangled states here) are acted upon by the effective interface.
Thereby, they are effectively transformed into the noisy mixed states of the form $(N_{enc,p,l}\otimes N_{enc,p,l}) (\phi_+)= (1-4cp) \phi_+ + 4cp \tilde{\sigma}_l $ for some quantum state $\sigma_l$, which are twirled into Bell-diagonal form $\phi_{4cp}=(1-4cp)\phi_+ + \frac{4cp}{3} (\phi_- +\psi_+ + \psi_-)$ by the first step of the distillation protocol. For states of this form, the results from Theorem~\ref{thm-ent-dist-original} apply.

While the apparatus described in Theorem~\ref{thm-ft-ent-dist-normally} performs fault-tolerant encoding and decoding, it still requires one-way classical communication between two parties, which is not allowed in the communication setup we investigate in the next section. For the purposes of fault-tolerant entanglement-assisted capacity, we therefore combine fault-tolerant implementation of these circuits with fault-tolerant classical communication via the channel $T$ to distill perfect maximal entanglement in the code space.
In this process, a fraction of the available channel copies is used to transmit classical communication. The protocol for this is essentially the same as the protocol from Theorem~\ref{thm-ft-ent-dist-normally}, where the classical communication between sender and receiver is not modelled by transmission over copies of the channel $\id_{cl}$, but instead transmitted by using the coding scheme from \cite[Theorem~V.8]{CMH20} as a subroutine on $\frac{h(4cp)+4cp\log(3)}{C_{\mathcal{F}(p)}(T)}$ copies of the channel $T$. For completeness, this process is sketched in Figure~\ref{fig-ft-ent-dist-via-t}.

\begin{figure}[htpb]
    \centering
    \includegraphics[width=10cm]{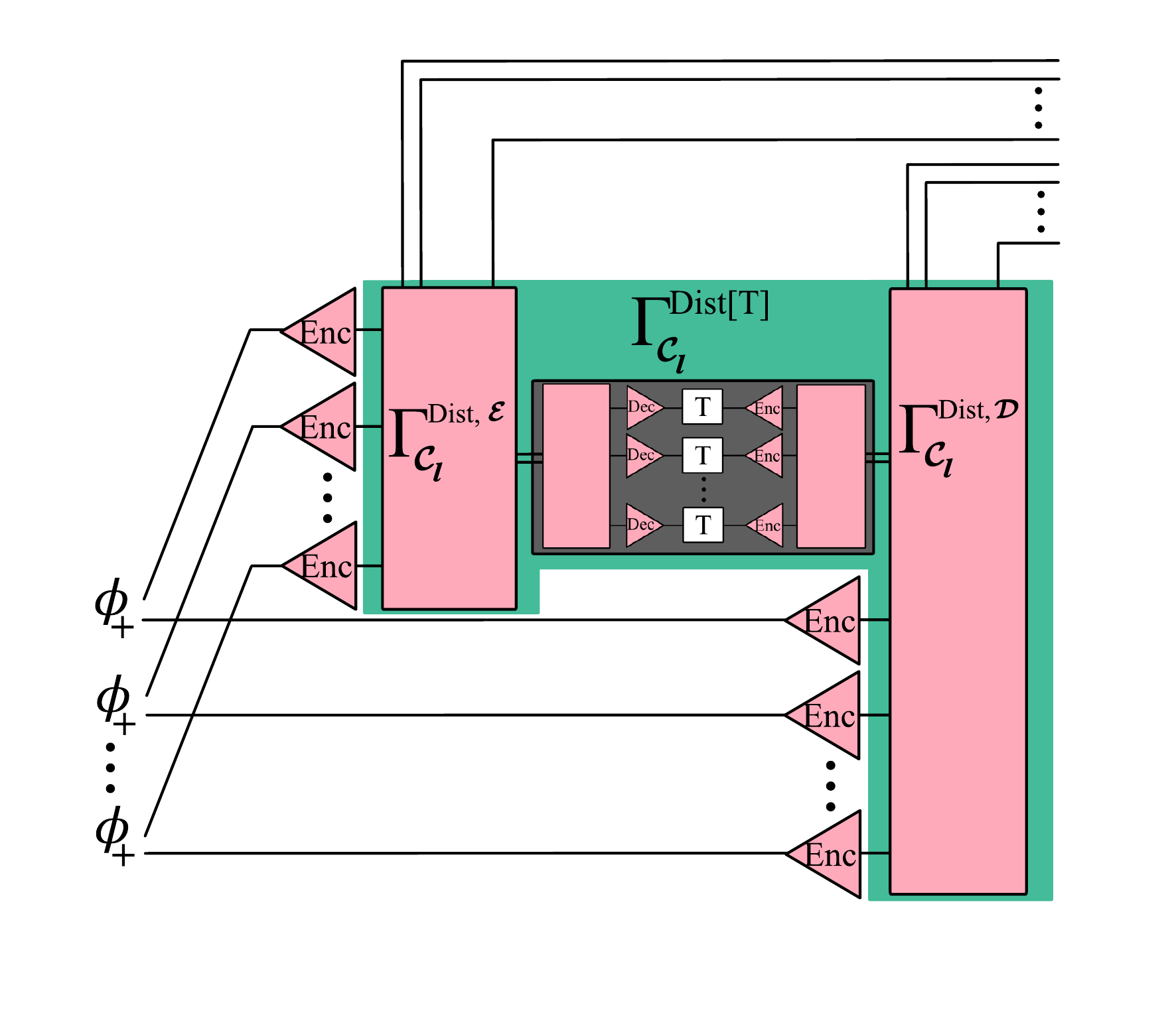}
    \caption{\textbf{Setup for fault-tolerant entanglement distillation with communication via a channel $T$.} The local operations performed in Figure~\ref{fig-ent-dist-normally} are implemented in an error correcting code $\mathcal{C}_l$, and interfaces map the $k$ logical states into effective mixed states in the code-space. The classical communication is performed as a subroutine using the coding scheme of \cite{CMH20}. Since the transmitted information is only classical, the syndrome states of these subroutines will not be correlated, allowing us to treat them as separate. 
    }
    \label{fig-ft-ent-dist-via-t}
\end{figure}

\section{A coding theorem for fault-tolerant entanglement-assisted capacity} \label{sec-coding-thm}

\noindent By performing fault-tolerant entanglement distillation as a subroutine and using a subset of the channels to convert entanglement, we can thus obtain the resource entanglement for entanglement-assisted communication in the code space.
Then, the remainder of channel copies and the recovered pure state entanglement can be used for information transfer via a coding scheme for entanglement-assisted communication, contributing to the channel capacity. This subroutine can be analyzed separately from the distillation part, as sketched in Figure~\ref{fig-allparts}. If the combined coding scheme is fault-tolerant, then the information transfer is fault-tolerant.

As outlined in Section~\ref{sec-ft-cap-setup}, the coding scheme we will use after the distillation in the fault-tolerant setting is based on the scheme used for an effective noisier channel in the faultless setting which takes a correlated syndrome state as part of its input. More precisely, we use the coding scheme for the effective channel to prepare the codeword states in the logical subspace. Then, our results on entanglement-assisted capacity under AVP apply in order to obtain bounds on achievable rates in the presence of correlated syndrome states.

Notably, the upper bound \[C^{ea}(T)\geq C^{ea}_{\mathcal{F}(p)}(T) \] trivially holds for any channel $T$. Here, we derive a lower bound in the form of a threshold theorem for fault-tolerant entanglement-assisted communication for any quantum channel $T$, where the fault-tolerant entanglement-assisted capacity approaches the usual, faultless case for vanishing gate error probability:

\begin{theorem}[Threshold theorem for fault-tolerant entanglement-assisted communication] \label{thm-main-result-1} 
For every quantum channel $T:\mathcal{M}_{d_A}\rightarrow \mathcal{M}_{d_B}$, and any $\eta>0$, there exists a threshold $p_{th}(\eta,T)> 0$ such that, for any $0\leq p\leq p_{th}$, we have
\[C^{ea}_{\mathcal{F}(p)}(T) \geq C^{ea}(T)-\eta\]
\end{theorem}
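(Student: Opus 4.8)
The plan is to assemble the three ingredients developed in the preceding sections—the effective-channel reduction (Theorem~\ref{thm-eff-channel}), the coding theorem under arbitrarily varying perturbation (Theorem~\ref{thm-avp-coding-thm}), and fault-tolerant entanglement distillation (Theorem~\ref{thm-ft-ent-dist-normally})—into a single coding scheme for the fault-tolerant setup of Definition~\ref{defn:FTEACS}, and then drive all error terms to zero by taking the concatenation level $l$ to infinity together with $n$. First I would fix a target rate $R < C^{ea}(T) - \eta$ and choose $p_{th}(\eta,T)$ small enough that the various perturbation-dependent corrections (the function $g(p)$ from Theorem~\ref{thm-avp-coding-thm}, together with the distillation overhead) are each at most a fixed fraction of $\eta$; this is possible because $g(p) = \mathcal{O}(p\log p)\to 0$ and the distillation loss $\beta(4cp)\to 1$ as $p\to 0$.

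\textbf{Assembling the scheme.}
I would split the $n$ available channel uses into two blocks. On a sublinear-in-rate-cost fraction $\frac{h(4cp)+4cp\log(3)}{C_{\mathcal{F}(p)}(T)}$ of the copies, I run the fault-tolerant classical communication protocol from \cite{CMH20} to supply the one-way classical side-channel, and combine this with the fault-tolerant distillation of Theorem~\ref{thm-ft-ent-dist-normally} (in the version sketched in Figure~\ref{fig-ft-ent-dist-via-t}) to convert the $nR_{ea}$ physical maximally entangled states—noisy once pushed through the interfaces—into $\beta(4cp)\,nR_{ea}$ nearly perfect maximally entangled pairs residing in the code space. Because the transmitted side information is purely classical, the syndrome states of these subroutines are uncorrelated and may be treated separately. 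On the remaining channel copies, I apply Theorem~\ref{thm-eff-channel} to replace the fault-tolerantly implemented encoder and decoder, connected through $T$ and the interfaces, by a faultless encoder--decoder pair communicating through the effective channel $T_{p,N_l}$ with a possibly correlated syndrome input $\sigma_S$. For this effective channel I invoke the AVP coding theorem, Theorem~\ref{thm-avp-coding-thm}, which guarantees achievable rate at least $C^{ea}(T) - g(p)$ even under worst-case correlated syndrome states, now using the genuinely pure distilled entanglement in the code space as the assisting resource.

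\textbf{Error accounting and the main obstacle.}
The final step is a triangle-inequality bound in the $1\!\to\!1$ distance: the total communication error is controlled by the sum of (i) the effective-channel approximation error from Theorem~\ref{thm-eff-channel}, of order $2p_0(p/p_0)^{2^l}(|\Loc(\Enc)|+|\Loc(\Dec)|) + 2p_0|\Loc(\EncI_1)|(p/p_0)^{2^l-1}j_1 n$; (ii) the distillation error $p_0(p/p_0)^{2^l}|\Loc(\Gamma^{\Dist})| + \sqrt{\epsilon_{dist}(4cp)} + 2/k$ from Theorem~\ref{thm-ft-ent-dist-normally}; and (iii) the intrinsic AVP coding error, which vanishes as $n\to\infty$ for any rate below the bound of Theorem~\ref{thm-avp-coding-thm}. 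The delicate point—and what I expect to be the main obstacle—is the joint scaling of $l$ and $n$: the doubly-exponential suppression $(p/p_0)^{2^l}$ must beat the location count $|\Loc(\cdot)|$, which grows with $n$ (and with $7^l$ per logical location), so I must let $l = l(n)\to\infty$ slowly, say $l\sim\log\log n$, to make terms (i) and (ii) vanish while the rate loss stays bounded by $g(p)$ plus the distillation overhead and does not itself degrade with $l$. Verifying that this simultaneous limit keeps the net rate above $R$ while sending the overall fidelity to $1$—in particular ensuring the $j_1 n$ prefactor in term (i) is dominated—is the technical heart; once it is checked, achievability of every $R < C^{ea}(T)-\eta$ follows, giving $C^{ea}_{\mathcal{F}(p)}(T)\geq C^{ea}(T)-\eta$ for all $p\leq p_{th}$.
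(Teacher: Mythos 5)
Your proposal follows essentially the same route as the paper: the paper first proves the quantitative bound $C^{ea}_{\mathcal{F}(p)}(T) \geq C^{ea}(T) - 4f_1(p)\frac{C^{ea}(T)}{C(T)} - f_2(p)$ (Theorem~\ref{thm-final-coding-thm}) by precisely the assembly you describe — fault-tolerant distillation with a classical side channel over a fraction of the copies of $T$, the effective-channel reduction of Theorem~\ref{thm-eff-channel}, the AVP coding theorem, and a concatenation level chosen so that $(p/p_0)^{2^{l_n-1}}$ dominates all location counts — and then deduces the threshold statement in two lines from the continuity of $f_1$ and $f_2$ at $p=0$. The only point worth adding to your argument is the edge case $C(T)=0$, where your classical side channel for distillation is unavailable; the paper dispatches it by noting that $C(T)=0$ implies $C^{ea}(T)=0$, so the claim is vacuous there.
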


\begin{corollary} \label{thm-main-result-2}
Let $p\geq0$. Then, for every quantum channel $T:\mathcal{M}_{d_A}\rightarrow \mathcal{M}_{d_B}$, we have 
\[\lim_{p\rightarrow 0} C^{ea}_{\mathcal{F}(p)}(T) = C^{ea}(T) \]
\end{corollary}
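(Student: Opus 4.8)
The plan is to assemble the three ingredients developed in the preceding sections — the coding theorem under arbitrarily varying perturbation (Theorem~\ref{thm-avp-coding-thm}), the effective-channel reduction (Theorem~\ref{thm-eff-channel}), and fault-tolerant entanglement distillation (Theorem~\ref{thm-ft-ent-dist-normally}) — into a single fault-tolerant coding scheme, and then to account carefully for the total error and the total rate loss. First I would fix $\eta>0$ and a target rate $R < C^{ea}(T) - \eta$, and start from a noiseless coding scheme for entanglement-assisted communication \emph{under AVP} that achieves a rate close to the bound $C^{ea}_{AVP}(p,T) \geq C^{ea}(T) - g(p)$, using copies of an arbitrary pure bipartite entangled state $\varphi$ as the entanglement resource. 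The key point of working with the AVP scheme rather than a standard i.i.d.\ scheme is that, after fault-tolerant implementation, the effective channel comes with a syndrome state $\sigma_S$ that may be arbitrarily correlated across channel uses; the AVP model of Section~\ref{sec-avp-model} is precisely designed to tolerate such worst-case syndrome input.

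Next I would implement the AVP encoder $\mathcal{E}$ and decoder $\mathcal{D}$ fault-tolerantly in the $l$-th level of the concatenated $7$-qubit Steane code, inserting the interface circuits $\DecI_l$ before the channel and $\EncI_l$ after it. By Theorem~\ref{thm-eff-channel}, the fault-affected implementation is $1\to1$-close to the \emph{faultless} composition of the same logical encoder and decoder connected by the effective channel $T_{p,N_l}=(1-2(j_1+j_2)cp)(T\otimes\Tr_S)+2(j_1+j_2)cp\,N_l$, acting on a correlated syndrome state. The entanglement resource, however, is not yet available in the code space: the physical copies of $\phi_+$ must be brought in through noisy encoding interfaces, which by Theorem~\ref{thm-correct-interfaces} fail with probability roughly $2cp$ per qubit. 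I would therefore insert fault-tolerant entanglement distillation (Theorem~\ref{thm-ft-ent-dist-normally}) as a subroutine: the effective interface turns each $\phi_+$ into a Bell-diagonal state $\phi_{4cp}$, and distillation recovers a number $\beta(4cp)$ times the input count of near-perfect maximally entangled pairs inside the code space, where $\beta(q)=1-h(q)-q\log 3$. Since distillation requires one-way classical communication, I would spend a sublinear-rate fraction of the channel copies running the fault-tolerant \emph{classical} communication protocol of \cite[Theorem~V.8]{CMH20}; crucially, these subroutines carry only classical information, so their syndrome states are uncorrelated and can be analysed separately, as sketched in Figure~\ref{fig-ft-ent-dist-via-t}.

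With the pieces in place, I would bound the total error by the triangle inequality over (i) the effective-channel approximation error of Theorem~\ref{thm-eff-channel}, of order $p_0(p/p_0)^{2^l}|\Loc(\cdot)|$, (ii) the distillation error $\epsilon_{dist}(4cp)$ of Theorem~\ref{thm-ft-ent-dist-normally}, and (iii) the residual AVP coding error, which vanishes as $n\to\infty$ for any rate below the bound in Theorem~\ref{thm-avp-coding-thm}. The number of locations $|\Loc(\cdot)|$ grows with $n$, so to kill the fault-tolerance contribution I would let the concatenation level $l=l(n)$ grow slowly with $n$ (for instance like $\log\log n$), which keeps $(p/p_0)^{2^l}|\Loc(\cdot)|\to0$ whenever $p<p_0$ while leaving the asymptotic rate unchanged. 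On the rate side, the losses are additive and all controlled by $p$: the AVP penalty $g(p)=\mathcal{O}(p\log p)$, the distillation yield loss $1-\beta(4cp)=h(4cp)+4cp\log 3$, and the fraction of channel uses diverted to classical communication, which scales like $(h(4cp)+4cp\log 3)/C_{\mathcal{F}(p)}(T)$. Collecting these into a single continuous function $f(p)$ with $\lim_{p\to0}f(p)=0$ yields $C^{ea}_{\mathcal{F}(p)}(T)\geq C^{ea}(T)-f(p)$, and choosing $p_{th}(\eta,T)$ small enough that $f(p)\leq\eta$ for $p\leq p_{th}$ gives the theorem.

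I expect the main obstacle to be the \emph{bookkeeping of the interleaved subroutines} rather than any single estimate: the distillation subroutine, the classical-communication subroutine, and the main entanglement-assisted transmission all share the same supply of channel copies and the same fault-tolerant architecture, and one must verify that their syndrome states interact only in the benign way captured by the AVP model — in particular, that diverting channel copies for classical communication introduces no new cross-correlations, and that the overhead $c$ and threshold $p_0$ of Theorem~\ref{thm-correct-interfaces} and Theorem~\ref{thm-threshold} are uniform in $l$. Once the lower bound is established, the corollary is immediate: combining it with the trivial upper bound $C^{ea}(T)\geq C^{ea}_{\mathcal{F}(p)}(T)$ and letting $p\to0$ forces $C^{ea}_{\mathcal{F}(p)}(T)\to C^{ea}(T)$.
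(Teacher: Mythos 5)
Your proposal follows essentially the same route as the paper: the corollary is obtained by combining the trivial upper bound $C^{ea}_{\mathcal{F}(p)}(T)\leq C^{ea}(T)$ with the lower bound of Theorem~\ref{thm-final-coding-thm}, whose proof is exactly the assembly of the AVP coding theorem, the effective-channel reduction, interface correctness, and fault-tolerant entanglement distillation that you describe, with the concatenation level chosen to grow with $n$ as in Eq.~\eqref{eq-level-choice}. The only point you leave implicit is the case $C(T)=0$, where your classical-communication subroutine is unavailable; there $C^{ea}(T)=0$ as well, so the limit holds trivially, as the paper notes.
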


This is a consequence of the following result (noting that $C(T)=0$ implies $C^{ea}(T)=0$):

\begin{theorem}[Lower bound on the fault-tolerant entanglement-assisted capacity] \label{thm-final-coding-thm}
For $0\leq p \leq 1$, let $\mathcal{F}(p)$ denote the
i.i.d. Pauli noise model and let $0\leq p_0 \leq 1$ denote the threshold of the concatenated 7-qubit Steane code.
For any quantum channel $T:\mathcal{M}_2^{\otimes j_1} \rightarrow \mathcal{M}_2^{\otimes j_2}$ with classical capacity $C(T)>0$ and for any $0\leq p \leq \min\{p_0/2,1/(2c(j_1+j_2)\}$, we have a fault-tolerant entanglement-assisted capacity with 
\[C^{ea}_{\mathcal{F}(p)}(T) \geq  C^{ea}(T)- 4f_1(p)\frac{C^{ea}(T)}{C(T)} -f_2(p) \]
where 
\begin{equation*}\label{f_1}
    f_1(p)= \frac{(h(4cp)+4cp\log(3))j_2 }{1-h(4cp)-4cp\log(3)},
\end{equation*}
and

% \begin{align*}
% f_2(p)
% & =2\sqrt{2{j_2p}} \big(2^{j_1+j_2} (j_1+j_2) +1\big) \big|2\log( 2(j_1+j_2)cp )-j_1-j_2 \big|  \\& \quad +2h\big(\sqrt{2{j_2p}} 2^{j_1+j_2} |2\log( 2(j_1+j_2)cp )-j_1-j_2| \big) \\& \quad\
% +10 (j_1+j_2)cpj_2+ (1+4(j_1+j_2)cp)h\big(\frac{4(j_1+j_2)cp}{1+4(j_1+j_2)cp}\big), \\
% \end{align*}

\begin{align*}
f_2(p)
& = 2\sqrt{2{j_2p}} \big(2^{j_1+j_2} (j_1+j_2) +1\big) \big|2\log( \frac{2(j_1+j_2)cp }{2^{j_1j_2}})\big|  \\&  +2h\big(\sqrt{2{j_2p}} 2^{j_1+j_2} |2\log( \frac{2(j_1+j_2)cp }{2^{j_1j_2}})| \big) \\& 
 +(1+4(j_1+j_2)cp)h\big(\frac{4(j_1+j_2)cp}{1+4(j_1+j_2)cp}\big) \\ &   +10 (j_1+j_2)cpj_2 ,
\end{align*}
and with $c$ being the constant from Theorem~\ref{thm-correct-interfaces}.
\end{theorem}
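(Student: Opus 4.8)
The plan is to reduce the fault-affected coding scheme to a faultless one for an effective channel, and then to couple a coding scheme for the arbitrarily-varying-perturbation (AVP) model with fault-tolerant entanglement distillation. Concretely, I would start from a near-optimal faultless entanglement-assisted coding scheme for $T$, approximate its encoder and decoder by quantum circuits $\Enc$ and $\Dec$, and implement these in a high level $l$ of the concatenated 7-qubit Steane code, connected to the channel and to the entanglement supply by the interfaces $\EncI_l,\DecI_l$. By Theorem~\ref{thm-eff-channel}, the resulting fault-affected circuit is close in $\|\cdot\|_{1\rightarrow1}$ to a faultless scheme in which $T$ is replaced by the effective channel $T_{p,N_l}=(1-2(j_1+j_2)cp)(T\otimes\Tr_S)+2(j_1+j_2)cpN_l$ acting on the codewords together with a possibly correlated syndrome state $\sigma_S$. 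This is exactly the channel form appearing in Definition~\ref{def-avp-coding-scheme}, with effective perturbation strength $\tilde p=2(j_1+j_2)cp$, so Theorem~\ref{thm-avp-coding-thm} furnishes a coding scheme for it at rate at least $C^{ea}(T)-g(\tilde p)$. Substituting $d_A=2^{j_1}$, $d_B=2^{j_2}$ into $g(\tilde p)$ and using the minimum-eigenvalue bound $\lambda_{\min}\geq\tilde p/(d_Ad_B)$ for the distilled maximally entangled resource, together with the continuity correction for the distillation error, accounts for the terms constituting $f_2(p)$ (e.g. the summand $5\tilde p\log(d_B)=10(j_1+j_2)cp\,j_2$).

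The obstruction to applying Theorem~\ref{thm-avp-coding-thm} directly is that it presupposes clean maximally entangled pairs in the code space, whereas here the physical pairs $\phi_+$ acquire noise on entering the code space through the interfaces, becoming effectively $(N_{enc,p,l}\otimes N_{enc,p,l})(\phi_+)=(1-4cp)\phi_++4cp\,\tilde\sigma_l$ by Lemma~\ref{thm-effective-encoder}. The next step is therefore to prepend fault-tolerant entanglement distillation: by Theorem~\ref{thm-ft-ent-dist-normally} these noisy pairs are twirled to the Bell-diagonal form $\phi_{4cp}$ and distilled, yielding a fraction $\beta(4cp)=1-h(4cp)-4cp\log(3)$ of clean logical pairs per input pair, up to a fidelity error controlled by $\epsilon_{dist}(4cp)$ and a fault-tolerance term $p_0(p/p_0)^{2^l}|\Loc(\Gamma^{\Dist})|$. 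Since distillation consumes $(1-\beta(4cp))$ bits of one-way classical communication per distilled pair, and one-way classical communication produces no correlated syndrome states, I would supply this communication by running the fault-tolerant classical coding scheme of \cite[Theorem~V.8]{CMH20} on a dedicated subset of the channel copies, which can be treated as an independent subroutine transmitting at the fault-tolerant classical rate $C_{\mathcal{F}(p)}(T)$.

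The remaining work is rate accounting and error aggregation. Splitting the $n$ channel uses into $n_{cc}$ uses devoted to the distillation's classical communication and $n_{comm}=n-n_{cc}$ uses for the entanglement-assisted transmission, the constraint $n_{cc}\,C_{\mathcal{F}(p)}(T)\approx\frac{1-\beta(4cp)}{\beta(4cp)}R_{ea}\,n_{comm}$ fixes the fraction lost to classical communication; bounding the number of maximally entangled pairs consumed per channel use by $j_2$ and using $C_{\mathcal{F}(p)}(T)\to C(T)$, this fraction contributes the term $4f_1(p)\,C^{ea}(T)/C(T)$ with $f_1(p)=(h(4cp)+4cp\log(3))j_2/(1-h(4cp)-4cp\log(3))$. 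The three error sources, namely the effective-channel approximation of Theorem~\ref{thm-eff-channel}, the distillation error of Theorem~\ref{thm-ft-ent-dist-normally}, and the AVP coding error of Theorem~\ref{thm-avp-coding-thm}, are combined by the triangle inequality for $\|\cdot\|_{1\rightarrow1}$ and by Theorem~\ref{thm-fidelity-and-trace-dist} and Corollary~\ref{thm-triangle_ineq_for_fidelity} to pass between fidelity and trace distance. Every fault-tolerance contribution carries a factor $(p/p_0)^{2^l}$ multiplying a location count that grows only polynomially in $n$, so choosing the concatenation level $l=l(n)\to\infty$ slowly (e.g. $l\sim\log\log n$) makes these terms vanish while leaving the asymptotic rate unchanged; the AVP and distillation errors vanish as $n\to\infty$ for any fixed $p$ below the stated threshold $\min\{p_0/2,1/(2c(j_1+j_2))\}$.

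I expect the main obstacle to be the simultaneous control of all error contributions alongside the rate bookkeeping. The effective channel of Theorem~\ref{thm-eff-channel} feeds a worst-case correlated syndrome state into the AVP scheme, the distillation interfaces introduce their own effective noise via Lemma~\ref{thm-effective-encoder}, and the classical-communication subroutine must be interleaved without its syndrome states correlating with those of the main transmission. Verifying that these pieces compose, that the arbitrary syndrome correlations are genuinely absorbed by the AVP capacity of Theorem~\ref{thm-avp-coding-thm}, that the classical subroutine is legitimately separable, and that the chosen growth of $l$ keeps every fault-tolerance term subdominant while the penalties emerge in precisely the closed forms $f_1(p)$ and $f_2(p)$, is where the real care is required.
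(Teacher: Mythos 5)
Your proposal follows essentially the same route as the paper's proof: reduce the fault-affected scheme to the effective channel $T_{p,N_l}$ via Theorem~\ref{thm-eff-channel}, restore the interface-corrupted entanglement by fault-tolerant distillation (Theorem~\ref{thm-ft-ent-dist-normally}) with the one-way classical communication supplied by the fault-tolerant classical coding scheme of \cite{CMH20} on a dedicated fraction of channel uses, invoke the AVP coding theorem at perturbation strength $2(j_1+j_2)cp$, and choose the concatenation level growing with $n$ so that all threshold-type error terms vanish while the rate loss collects into $f_1(p)$ and $f_2(p)$. The decomposition, the key lemmas, and the rate bookkeeping (including the factor $4$ absorbing $C_{\mathcal{F}(p)}(T)\geq C(T)-f(p)$) all match the paper's argument.
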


\begin{proof}[Proof of Theorem~\ref{thm-final-coding-thm}]
In this proof, we construct a fault-tolerant coding scheme for entanglement-assisted communication affected by the i.i.d. Pauli noise model $\mathcal{F}(p)$ by proposing an encoder circuit $\Enc$ and a decoder circuit $\Dec$ which are implemented in the concatenated 7-qubit Steane code ${\mathcal{C}_l}$ with threshold $p_0$ and for some level $l$. With a rate of entanglement assistance $R_{ea}$, we will obtain a bound on rates $R$ that fulfill
\begin{align}\begin{split}
    \label{this-has-to-go-to-zero}
    &\|\id_{cl}^{\otimes nR}-   \big[ \Gamma_{\mathcal{C}_l}^{\mathcal{D}} \circ  ((\EncI_l \circ T \circ \DecI_l)^{\otimes n}\otimes \id_2^{*\otimes nR_{ea}})  \circ    (\Gamma_{\mathcal{C}_l}^{\mathcal{E}} \otimes \id_2^{*\otimes nR_{ea}} ) \circ\Compactcdots   \\&\hspace{1cm}\Compactcdots \circ (\id_{cl}^{\otimes nR} \otimes (\EncI^{\otimes 2}(\phi_+))^{\otimes nR_{ea}}) \big]_{\mathcal{F}(p)}\|_{1 \rightarrow 1} \overset{n\rightarrow \infty}{\rightarrow}  0 ,
\end{split}\end{align}
showing which rates $R$ are achievable.

Our proof will progress according the following strategy:

\begin{enumerate}
\item Construct the coding scheme out of the relevant subcircuits for distillation and coding under arbitrarily varying perturbations, as illustrated in Figure~\ref{fig-allparts}.
\item Choose the concatenation level $l$ corresponding to the number of locations in the entire coding scheme, including all subcircuits, in Eq.~\eqref{eq-level-choice}.
\item Bound the expression in Eq.~\eqref{this-has-to-go-to-zero} in terms of the effective channel and the distilled state using Theorem~\ref{thm-eff-channel} and Theorem~\ref{thm-ft-ent-dist-normally}. 
\item Invoke our results on entanglement-assisted capacity under arbitrarily varying perturbations of strength $2(j_1+j_2)cp$ from Theorem~\ref{thm-avp-coding-thm} to obtain a bound on the achievable rates.
\end{enumerate}

The coding scheme for fault-tolerant communication is based on the coding scheme for communication  at a rate $R_{AVP}$ under arbitrarily varying perturbation.
For each $n\in\mathbbm{N}$, using the Solovay-Kitaev theorem \cite{NC00}, we may choose specific quantum circuits $\Gamma^{\AVP,\mathcal{E}}$ and $\Gamma^{\AVP,\mathcal{D}}$ implementing the encoder $\mathcal{E}$ and decoder $\mathcal{D}$ used for communication under AVP, such that
\begin{equation}\label{eq-Enc}
\|\Gamma^{\AVP,\mathcal{E}} - \mathcal{E}\|_{1 \rightarrow 1} \leq \frac{1}{ n} \end{equation}
\begin{equation}\label{eq-Dec} \|\Gamma^{\AVP,\mathcal{D}}- \mathcal{D}\|_{1 \rightarrow 1} \leq \frac{1}{ n}\end{equation}

In addition, let $\Gamma^{\Dist[T]}$ be the circuit performing entanglement distillation, which is based on the circuit of Theorem~\ref{thm-ft-ent-dist-normally} with classical communication via a subset of the copies of the channel $T$, and an additional step of entanglement dilution to distill the bipartite pure entangled state $\varphi$ using the protocol from \cite{LP99}, which requires additional one-way classical communication at an asymptotically negligible rate so long as the state is pure (see also \cite[Footnote~1]{BSST02}). Thereby, similar to the sketch in Figure~\ref{fig-ft-ent-dist-via-t}, the fault-tolerant implementation of this distillation circuit distills $\varphi$ in the code space, whereby it is made available for the fault-tolerantly implemented communication setup.

Let $\Gamma^{\AVP,\mathcal{E}}_{\mathcal{C}_l} $, $\Gamma^{\AVP,\mathcal{D}}_{\mathcal{C}_l} $ and $\Delta_{\mathcal{C}_l}^{[T]}$ denote the implementations of these circuits in the 7-qubit Steane code with concatenation level $l$. 
The circuits $ \Gamma_{\mathcal{C}_l}^{\mathcal{E}}$ and $ \Gamma_{\mathcal{C}_l}^{\mathcal{D}}$ which implement our proposed fault-tolerant coding scheme are then constructed by the local parts of fault-tolerant entanglement distillation, followed by the fault-tolerant implementation of the coding scheme for arbitrarily varying perturbations, as sketched in Figure~\ref{fig-allparts}.

In total, the maximally entangled resource states are transformed by the noisy interface into effective noisy states in the code space. Thereafter, entanglement distillation is performed to restore pure state entanglement in the code space, using up a subset of the copies of $T$ for classical communication. Subsequently, the remaining copies of $T$ are used for entanglement-assisted communication.

By our construction, we obtain the following expression for Eq.~\eqref{this-has-to-go-to-zero} in terms of the two subroutines in Figure~\ref{fig-allparts}:

\begin{align}\begin{split} \label{eq-the-very-long-calculation1}
&  \| \id_{cl}^{\otimes nR } -  \big[ \Gamma_{\mathcal{C}_l}^{\mathcal{D}} \circ  \Big( \big( (\EncI_l \circ T \circ \DecI_l)^{\otimes n} \circ \Gamma_{\mathcal{C}_l}^{\mathcal{E}} \big)\otimes \id_2^{*\otimes nR_{ea}} \Big)  \circ \Compactcdots \\& \hspace{1cm} \Compactcdots\circ \big(\id_{cl}^{\otimes nR} \otimes ( \EncI_l^{\otimes 2}(\phi_+))^{\otimes nR_{ea}} \big) \big]_{\mathcal{F}(p)}\|_{1 \rightarrow 1} 
\\
  & = \| \id_{cl}^{\otimes nR }-     \big[ \Gamma^{\AVP,\mathcal{D}}_{\mathcal{C}_l}  
\circ  \Big( \big( (\EncI_l \circ T \circ \DecI_l)^{\otimes (1-\alpha_p)n} \circ \Gamma^{\AVP,\mathcal{E}}_{\mathcal{C}_l} \big)\otimes \id_2^{*\otimes \beta_{p}n} \Big) \circ \Compactcdots \\& \hspace{1cm}
\Compactcdots\circ  \big(\id_{cl}^{\otimes nR} \otimes \Gamma^{\Dist[T]}_{\mathcal{C}_l}  ( \EncI_l^{\otimes 2}(\phi_+))^{\otimes nR_{ea}} \big) \big]_{\mathcal{F}(p)}\|_{1 \rightarrow 1} 
\end{split}
\end{align}
where \[\alpha_{p}=\frac{h(4cp)+4cp\log(3)}{C_{\mathcal{F}(p)}(T)} R_{ea}\] and \[\beta_{p}=\frac{1-h(4cp)-4cp\log(3)}{H(A)_{\varphi}}R_{ea} .\]
\begin{figure}[htbp] 
  \centering
       \includegraphics[width=10cm]{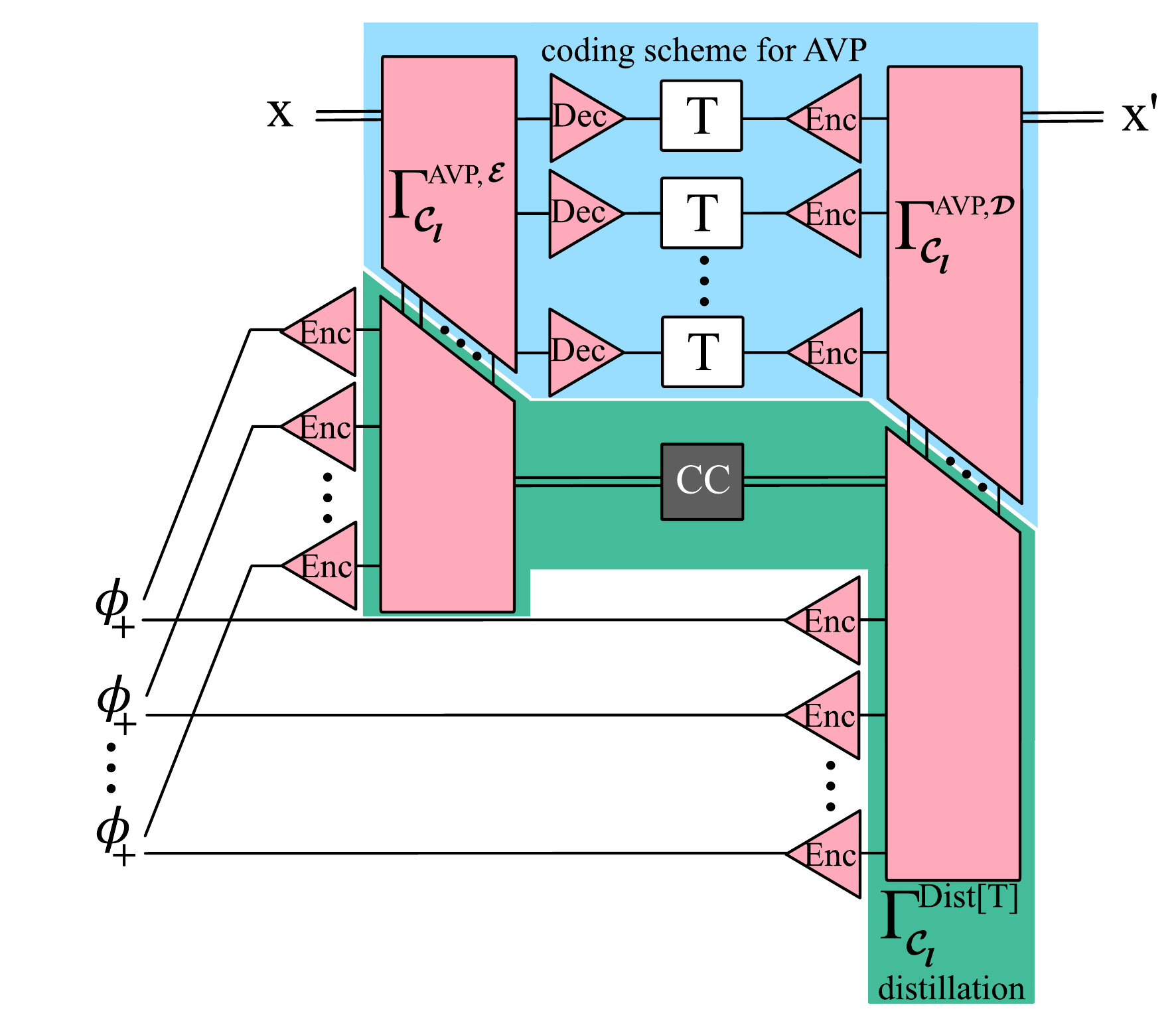}
\caption{\textbf{An illustration of the building blocks in our analysis.} The circuits for encoding and decoding in our fault-tolerant entanglement-assisted communication setup, as illustrated in Figure~\ref{fig-ft-ea-cap-setup}, are constructed out of 3 individual subroutines. Firstly, one subroutine (indicated in green) consists of the local operations performed during entanglement distillation. This part of the analysis will be based on our results in Theorem~\ref{thm-ft-ent-dist-normally}. The classical communication between the local distillation parts makes up another subroutine (indicated by a black box), and is performed using the coding scheme proposed in \cite{CMH20}, as sketched in Figure~\ref{fig-ft-ent-dist-via-t}.
The final subroutine (indicated in blue) is given by the fault-tolerant implementation of the coding scheme for 
entanglement-assisted communication under AVP, as described in Theorem~\ref{thm-avp-coding-thm}. Note that the distillation subroutine and the classical communication subroutine are only connected by classical information, while the coding scheme and the distillation subroutine are connected by quantum states in the code. In total, these subroutines and their analysis are combined in Eq.~\eqref{eq-the-very-long-calculation3}.}
\label{fig-allparts}
\end{figure}
For any sequence of circuits, we choose the Steane code concatenation level $l=l_{ n}$ high enough such that the implementations above fulfill:
\begin{equation}\label{eq-level-choice}
    \left(\frac{p}{p_0}\right)^{2^{l_{n}-1}} (|\Loc(\Gamma^{\AVP,\mathcal{E}})|+|\Loc(\Gamma^{\AVP,\mathcal{D}})|+j_1  n + |\Loc(\Gamma^{\Dist})|)  \leq \frac{1}{n}
\end{equation}
Using Theorem~\ref{thm-eff-channel}, Eq.~\eqref{eq-the-very-long-calculation1} can be bounded in terms of the effective channel $T_{p,N_l}=(1-2(j_1+j_2) cp) (T\otimes \Tr_S) + 2(j_1+j_2)cpN_l$ for some quantum channel $N_l$, and thereby connected to our results on capacity under AVP for perturbation probability $2(j_1+j_2) cp$:
\begin{align}\begin{split}\label{eq-the-very-long-calculation2}
  &  \| \id_{cl}^{\otimes nR }-     \big[ \Gamma^{\AVP,\mathcal{D}}_{\mathcal{C}_l}  
\circ  \Big( \big( (\EncI_l \circ T \circ \DecI_l)^{\otimes (1-\alpha_p)n} \circ \Gamma^{\AVP,\mathcal{E}}_{\mathcal{C}_l} \big)\otimes \id_2^{*\otimes \beta_{p}n} \Big)  \circ \Compactcdots \\&\hspace{1cm}  \Compactcdots
\circ \big(\id_{cl}^{\otimes nR} \otimes \Gamma^{\Dist[T]}_{\mathcal{C}_l}  ( \EncI_l^{\otimes 2}(\phi_+))^{\otimes nR_{ea}} \big) \big]_{\mathcal{F}(p)}\|_{1 \rightarrow 1} 
\\&
 \leq\| \id_{cl}^{\otimes nR } -   (\Gamma^{\AVP,\mathcal{D}}\otimes \Tr_S)  \circ  \Big( T_{p,N_l}^{\otimes (1-\alpha_p)n} \circ (\Gamma^{\AVP,\mathcal{E}} \otimes S_S )\big)\otimes \id_2^{*\otimes \beta_{p}n} \Big) \circ \Compactcdots \\&\hspace{1cm}  \Compactcdots \circ  \big(\id_{cl}^{\otimes nR} \otimes (\DecI_l^*)^{\otimes 2\beta_{p}n} \circ  \big[ \Gamma^{\Dist[T]}_{\mathcal{C}_l} ( \EncI_l^{\otimes 2}(\phi_+))^{\otimes nR_{ea}}\big]_{\mathcal{F}(p)}  \big) \|_{1 \rightarrow 1}  \\ &\hspace{2cm} +
2p_0 \left(\frac{p}{p_0}\right)^{2^{l_{n}-1}} (|\Loc(\Gamma^{\AVP,\mathcal{E}})|+|\Loc(\Gamma^{\AVP,\mathcal{D}})|+j_1  n)
\end{split}\end{align}
Note that during this inequality, the final error correction in the distillation part of the circuit is used for the transformation, but remains unchanged and is recombined with the circuit immediately.
Note also that $\Tr_S(\sigma_S)=1$ for any syndrome state.

Then, Theorem~\ref{thm-ft-ent-dist-normally} is employed to perform entanglement distillation of the bipartite entangled states $\varphi$ in the code space, leading to the following transformation:
\begin{align}\begin{split}\label{eq-the-very-long-calculation3}
 &
\| \id_{cl}^{\otimes nR } -   (\Gamma^{\AVP,\mathcal{D}}\otimes \Tr_S)  \circ  \Big( T_{p,N_l}^{\otimes (1-\alpha_p)n} \circ (\Gamma^{\AVP,\mathcal{E}} \otimes S_S )\big)\otimes \id_2^{*\otimes \beta_{p}n} \Big) \circ \Compactcdots \\&\hspace{1cm}  \Compactcdots \circ  \big(\id_{cl}^{\otimes nR} \otimes (\DecI_l^*)^{\otimes 2\beta_{p}n} \circ  \big[ \Gamma^{\Dist[T]}_{\mathcal{C}_l} ( \EncI_l^{\otimes 2}(\phi_+))^{\otimes nR_{ea}}\big]_{\mathcal{F}(p)}  \big) \|_{1 \rightarrow 1}  \\ &\hspace{1cm} +
2p_0 \left(\frac{p}{p_0}\right)^{2^{l_{n}-1}} (|\Loc(\Gamma^{\AVP,\mathcal{E}})|+|\Loc(\Gamma^{\AVP,\mathcal{D}})|+j_1  n) \\
 &\leq \| \id_{cl}^{\otimes nR } -   (\Gamma^{\AVP,\mathcal{D}}\otimes \Tr_S) \circ  \Big( T_{p,N_l}^{\otimes (1-\alpha_p)n} \circ (\Gamma^{\AVP,\mathcal{E}} \otimes S_S )\big)\otimes \id_2^{*\otimes\beta_{p}n } \Big)\circ \Compactcdots \\&\hspace{1cm}  \Compactcdots \circ (\id_{cl}^{\otimes nR} \otimes \varphi^{\otimes \beta_{p}n} \otimes \sigma_S)  \|_{1 \rightarrow 1}  +2p_0 (\frac{p}{p_0})^{2^{l_{n}-1}} |\Loc(\Gamma^{\Dist})|)+ \frac{2}{nR_{ea}}
  \\
 &\hspace{1cm} +\sqrt{\epsilon_{dist}(4cp)} +2p_0 (\frac{p}{p_0})^{2^{l_{n}-1}} (|\Loc(\Gamma^{\AVP,\mathcal{E}})|+|\Loc(\Gamma^{\AVP,\mathcal{D}})|+j_1  n)
\\
 &\leq \| \id_{cl}^{\otimes nR } -    \Gamma^{\AVP,\mathcal{D}}  \circ  \Big( T_{p,N_l}^{\otimes (1-\alpha_p)n} \circ  (\Gamma^{\AVP,\mathcal{E}} \otimes S_S(\sigma_S) )\big) \otimes \id_2^{*\otimes \beta_{p}n} \Big) \circ \Compactcdots \\&\hspace{1cm}  \Compactcdots \circ  (\id_{cl}^{\otimes nR} \otimes \varphi^{\otimes \beta_{p}n} ) \|_{1 \rightarrow 1}  +
\frac{1}{n}+\frac{2}{nR_{ea}}+\sqrt{\epsilon_{dist}(4cp)}
\end{split}
\end{align}
For the first inequality, we used Theorem~\ref{thm-ft-ent-dist-normally} to perform entanglement distillation on the $k=nR_{ea}$ entangled states $\big[ \EncI_l^{\otimes 2}\big]_{\mathcal{F}(p)} (\phi_+)
$ that have been affected by the noisy interface, obtaining the bipartite pure entangled state $\varphi$ in the code space. In total, this distillation process uses up $\alpha_{p}n$ copies of $T$ in the process to perform classical communication and produces $\beta_{p}n$ copies of $\varphi$ in the code space.

The second inequality is a consequence of our choice of concatenation level in Eq.~\eqref{eq-level-choice}.

We now use Eq.~\eqref{eq-Enc} and Eq.~\eqref{eq-Dec} to relate the circuits for the coding scheme in Eq.~\eqref{eq-the-very-long-calculation3} to the ideal operations:

\begin{align}\begin{split}
     \label{eq-the-very-long-calculation}
 & \| \id_{cl}^{\otimes nR } -    \Gamma^{\AVP,\mathcal{D}}  \circ  \Big( T_{p,N_l}^{\otimes (1-\alpha_p)n} \circ (\Gamma^{\AVP,\mathcal{E}} \otimes S_S(\sigma_S) )\big)\otimes \id_2^{*\otimes \beta_{p}n} \Big)  \circ \Compactcdots \\&\hspace{1cm}  \Compactcdots \circ  (\id_{cl}^{\otimes nR} \otimes \varphi^{\otimes \beta_{p}n} ) \|_{1 \rightarrow 1}  +
\frac{1}{n}+\frac{2}{nR_{ea}}+\sqrt{\epsilon_{dist}(4cp)}\\
& \leq  \| \id_{cl}^{\otimes nR } -  \mathcal{D}  \circ \Big( \big( T_{p,N_l}^{\otimes (1-\alpha_p)n} \circ (\mathcal{E} \otimes S_S(\sigma_S) )\big)\otimes \id_2^{*\otimes \beta_{p}n} \Big)   \circ \Compactcdots \\&\hspace{1cm}  \Compactcdots \circ (\id_{cl}^{\otimes nR} \otimes \varphi^{\otimes \beta_{p}n} )  \|_{1 \rightarrow 1}   +\frac{2}{nR_{ea}}+\sqrt{\epsilon_{dist}(4cp)}+\frac{3}{n}
\end{split}
\end{align}
Finally, we note that $\epsilon_{dist}(4cp)$ goes to zero as $n\rightarrow \infty$, and so do $\frac{3}{ n}$ and $\frac{2}{nR_{ea}}$. The remaining summand goes to zero for all achievable rates $R_{AVP}$ of entanglement-assisted communication under AVP with perturbation probability $2(j_1+j_2)cp$ and quantum channel $N_l$. For any entanglement-assistance at rate $R_{ea}'\geq 1$, these achievable rates are described by the expression in Eq.~\eqref{eq-r-avp}. Since a fraction of the copies of $T$ are used in the distillation, the communication rate is thus reduced. In total, we find the following bound on the achievable rate of fault-tolerant entanglement-assisted communication:
\[R < (1-\alpha_p) R_{AVP}\]
The bound on $R_{ea}'$ also automatically implies that $R_{ea}\geq \frac{1}{\beta_{p}}= \frac{H(A)_{\varphi}}{1-h(4cp)-4cp\log(3)}$. In order to simplify notation in the main theorem, and since additional entanglement does not increase capacity, we will henceforth set this rate to  \[R_{ea} := \frac{\log(2)}{1-h(4cp)-4cp\log(3)}\]
This leads to a fault-tolerant entanglement-assisted capacity of
\begin{align*}
C^{ea}_{\mathcal{F}(p)}(T) &\geq (1-\alpha_p) C_{AVP}((j_1+j_2)cp,T)\\& \geq  C^{ea}(T)- f_1(p)\frac{C^{ea}(T)}{C_{\mathcal{F}(p)}(T)} -f_2(p) \end{align*}
where
\begin{equation*}
    f_1(p)= \frac{(h(4cp)+4cp\log(3))\log(2) }{1-h(4cp)-4cp\log(3)}
\end{equation*}
and
\begin{align*}
f_2(p)
& = 2\sqrt{2{j_2p}} \big(2^{j_1+j_2} (j_1+j_2) +1\big) \big|2\log( \frac{2(j_1+j_2)cp }{2^{j_1j_2}})\big|  \\&+2h\big(\sqrt{2{j_2p}} 2^{j_1+j_2} |2\log( \frac{2(j_1+j_2)cp }{2^{j_1j_2}})| \big) \\& 
 +(1+4(j_1+j_2)cp)h\big(\frac{4(j_1+j_2)cp}{1+4(j_1+j_2)cp}\big) \\ &   +10 (j_1+j_2)cpj_2 ,
\end{align*}
Using \cite[Theorem~V.8]{CMH20}, for any channel $T$ with classical capacity $C(T)>0$, we find an explicit function $0\leq f(p)\leq C(T)$ such that we have
\begin{align*}
    C^{ea}_{\mathcal{F}(p)}(T)& \geq  C^{ea}(T)- f_1(p)\frac{C^{ea}(T)}{C(T)-f(p)} -f_2(p)\\ & 
     \geq C^{ea}(T)- f_1(p)\frac{C^{ea}(T)}{C(T)} \frac{1}{1-\frac{f(p)}{C(T)}} -f_2(p)\\&
    \geq C^{ea}(T)- 2f_1(p)\frac{C^{ea}(T)}{C(T)}\Big(1+\frac{f(p)}{C(T)}\Big) -f_2(p)\\&
    \geq C^{ea}(T)- 4f_1(p)\frac{C^{ea}(T)}{C(T)} -f_2(p)
\end{align*}
\end{proof}

Theorem~\ref{thm-main-result-1} is obtained as a direct consequence of this result:

\begin{proof}[Proof of Theorem~\ref{thm-main-result-1}]
For a given quantum channel $T$, we have 
$C^{ea}_{\mathcal{F}(p)}(T) \geq  C^{ea}(T)- 4f_1(p)\frac{C^{ea}(T)}{C(T)} -f_2(p) $ with the functions from Theorem~\ref{thm-final-coding-thm}.

Then, for any $\epsilon>0$, we can find a $p_0(T,\epsilon)$ such that $4f_1(p)\frac{C^{ea}(T)}{C(T)} +f_2(p)\leq\epsilon $ for all $0\leq p\leq p_0(T,\epsilon)$.
\end{proof}

It should be noted that the bound in Theorem~\ref{thm-final-coding-thm} is dependent on the individual channel $T$ and not only on its dimension, which would lead to a uniform convergence statement. Uniformity would follow if the quotient of classical and entanglement-assisted capacity were bounded for a given dimension, as has been conjectured in \cite{BSST02}.

\section{Conclusion and open problems}

\noindent The usual notion of capacity of a channel considers a perfect encoding of information into the channel, transfer via the (noisy) channel, and subsequent decoding. In real-world devices, this process of encoding and decoding the information cannot be assumed to be free of faults, which suggests the necessity of a modified notion of capacity.
Here, we show that entanglement-assisted transfer of information is possible at almost the same rates for fault-affected devices as long as the probability for gate error is below a threshold. %, and we give achievable rates of fault-tolerant entanglement-assisted communication. 

Coding theorems can be understood as a conversion between resources, where quantum channels, entangled states and classical channels are used to simulate one another. Based on the notation from \cite{DHW08}, we say $\alpha \geq_{FT(p)} \beta$ if there exists a fault-tolerant transformation from a resource $\alpha$ to a resource $\beta$ at gate error $p$ with asymptotically vanishing overall error. 
Then, our coding theorem in Theorem~\ref{thm-final-coding-thm} for fault-tolerant entanglement-assisted communication via a quantum channel $T:\mathcal{M}_{d_{A}} \rightarrow \mathcal{M}_{d_{B}}$ corresponds to the following resource inequality: for any pure state $\varphi$ on $\mathcal{M}_{d_A}\otimes \mathcal{M}_{d_{A'}}$, we have
\begin{align*} \langle T \rangle + &\big(H(A)_{(T\otimes \id_{A'})(\varphi)}+ \mathcal{O}(p) \big) [qq] \\ &\geq_{FT(p)} \big(I(A':B)_{(T\otimes \id_{A'})(\varphi)}+\mathcal{O}(p \log(p)) \big) [c\rightarrow c]  \end{align*}
which specifies the asymptotic resource trade-off for fault-tolerant entanglement-assisted communication.
For vanishing gate error $p$, this reduces to the standard resource inequality from \cite[Eq.~54]{DHW08}.

The presented results can be understood as a further development of the toolbox of quantum communication with noisy encoding and decoding devices. Even though we chose to present our work in the frame of an explicitly chosen setup for fault-tolerant computation (i.e. 7-qubit Steane code and i.i.d. Pauli noise), the buildup is modular in nature, allowing for the adaptation to other fault-tolerant scenarios. 

We envision that our treatment of entanglement distillation with noisy devices will find application in other quantum communication contexts (e.g. connecting quantum computers, quantum repeaters, multiparty quantum communication and quantum cryptography).

As it is not covered by the presented techniques, we leave the study of fault-tolerant communication via infinite-dimensional quantum channels for future work. 

\noindent\rule[0.5ex]{\linewidth}{1pt}

\noindent\emph{End of arXiv:2210.02939 - Fault-tolerant entanglement-assisted communication}

\section{Some notes beyond the arXiv-manuscript}
\label{sec-extra}

The field of quantum fault-tolerance is rapidly evolving, highlighting the practical relevance of our results regarding communication in the presence of noise. In Section~\ref{sec-alternative-interface} we give an alternative theorem for effective interfaces which may be useful in other communication setups. In Section~\ref{sec-extra-1}, we comment in more detail on the scaling of our achievable rates from Theorem~\ref{thm-final-coding-thm} and how it originates from our proof strategy in Section~\ref{sec-coding-thm}. In addition, we highlight some proposals for other error correcting codes and outline how our results may be generalized beyond the 7-qubit Steane code in Section~\ref{sec-overhead-and-scaling}.

\subsection{An alternative effective interface theorem}\label{sec-alternative-interface}

In Section~\ref{sec-ft-interfaces}, we refined the effective interface theorems from \cite{CMH20} for the purpose of allowing for an additional quantum input, in order to construct a strategy for entanglement-assisted coding in Theorem~\ref{thm-effective-encoder} and \ref{thm-eff-decoder}.

Here, we briefly present an alternative theorem to \cite[Theorem~III.8]{CMH20} which combines the result for quantum circuits with classical output (and quantum input), and quantum circuits with classical input (and quantum output) into a more general result for quantum circuits with quantum input and quantum output. This result can be used for an alternative method for combining several quantum circuits with fault-tolerant implementations. For example, this would in principle also apply to the circuit for entanglement-distillation and for entanglement-assisted coding in our work, which could then be understood as one bigger circuit in a fault-tolerant implementation. However, we chose to present our results in such a way that the modular nature of our combinations of quantum circuit is highlighted. In particular, the approach in the alternative theorem below does not facilitate concatenation of two circuits with implementations in different error correction codes.

Nonetheless, this theorem is a generalization of results from \cite{CMH20} that could find application in other fault-tolerant communication scenarios, and therefore, we choose to state it here and give the proof.

%We did not need this because we wanted to fault-tolerantly encode the circuit subparts instead to be more modular :)

\begin{theorem}[Effective encoder and decoder]
\label{thm-extra-effective-encoder}
Let $\Gamma:\mathcal{M}_2^{\otimes n} \rightarrow \mathcal{M}_2^{\otimes m}$ be a quantum circuit. For each $l\in \mathbbm{N}$, let the circuit $C_l$ denote the $l$-th level of the concatenated 7-qubit Steane code with threshold $p_0$. Moreover, let $\EncI_l:$ and $\DecI_l$  be the interface circuits for the $l$-th level of the concatenated 7-qubit Steane code with threshold $p_0$.

Then, for any $0\leq p \leq \frac{p_0}{2}$ and any $l\in\mathbbm{N}$, there exists a quantum channel $N_l:\mathcal{M}_2\rightarrow \mathcal{M}_2$, which only depends on $l$ and the interface circuit $\EncI_l$, a quantum channel $\tilde{N}_l:\mathcal{M}_2^{\otimes 7^l} \rightarrow \mathcal{M}_2$, which only depends on $l$ and the interface circuit $\DecI_l$, and a quantum state $\sigma_S$ on the syndrome space such that:
\begin{align*}
    &\| [\DecI_l^{\otimes m} \circ \Gamma_{\mathcal{C}_l} \circ \EncI_l^{\otimes n} ]_{\mathcal{F}(p)} -  (N_p^{dec,l})^{\otimes m} \circ (\Gamma \otimes \sigma_S) \circ  (N_p^{enc,l})^{\otimes n} \| \\& \leq  2n p_0 \big(\frac{p}{p_0}\big)^{2^{l-1}} + 2 p_0\big(\frac{p}{p_0}\big)^{2^l} |\Loc(\Gamma)|  
\end{align*} 
with
\[N_{enc,l,p} = (1-2cp) \id_2 +2cpN_l\]
and
\[N_{dec,l,p} = (1-2cp) \id_2 \otimes \trace_{S_l} +2cp \tilde{N}_l\]
where $c=p_0 \max \{|\Loc(\EncI_1)|,|\Loc(\DecI_1\circ EC)|\}$.
\end{theorem}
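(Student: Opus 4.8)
The plan is to fuse the two effective-interface lemmas, Lemma~\ref{thm-effective-encoder} and Lemma~\ref{thm-eff-decoder}, into a single statement covering circuits with quantum input \emph{and} quantum output. I would analyse $[\DecI_l^{\otimes m}\circ\Gamma_{\mathcal{C}_l}\circ\EncI_l^{\otimes n}]_{\mathcal{F}(p)}$ in three layers — the $n$ input interfaces, the core fault-tolerant implementation $\Gamma_{\mathcal{C}_l}$, and the $m$ output interfaces — inserting ideal decoders $\DecI_l^*$ and ideal error-correction gadgets $\EC_l$ at the seams between the layers so that each layer can be treated by an existing tool. Since every gadget in the concatenated Steane implementation is followed by an error correction, and each interface either begins or ends with one, these insertions are available without altering the realized channel.

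For the input layer I would place an ideal $\DecI_l^*$ after each $\EncI_l$ and apply Theorem~\ref{thm-correct-interfaces}(1): with probability at least $1-2cp$ the interface is correct, so $\DecI_l^*\circ[\EncI_l]_F=\id_2\otimes\sigma_S(F)$; collecting the correct and failing fault patterns and padding the slack gives an exact convex decomposition $N_{enc,l,p}=(1-2cp)\id_2+2cp\,N_l$ on each data qubit, together with a syndrome contribution. For the core I would invoke the threshold theorem (Theorem~\ref{thm-threshold}) and the well-behavedness lemma \cite[Lemma~4]{AGP05}: on the event that every exRec of $[\Gamma_{\mathcal{C}_l}\circ\EC_l^{\otimes n}]_F$ is well-behaved — which fails with probability at most $2p_0(p/p_0)^{2^l}|\Loc(\Gamma)|$ — the identity $(\DecI_l^*)^{\otimes m}\circ[\Gamma_{\mathcal{C}_l}\circ\EC_l^{\otimes n}]_F=(\Gamma\otimes S_F)\circ(\DecI_l^*)^{\otimes n}\circ[\EC_l^{\otimes n}]_F$ decouples the data from the syndrome and reproduces the ideal circuit $\Gamma$ tensored with a syndrome channel $S_F$.

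For the output layer I would apply Theorem~\ref{thm-correct-interfaces}(2) to $\DecI_l^{\otimes m}$, which on the correct fault patterns converts the final error correction via $[\DecI_l\circ\EC_l]_F=(\id_2\otimes\Tr_{S})\circ\DecI_l^*\circ[\EC_l]_F$ and, after the same collection of fault patterns, yields the exact mixture $N_{dec,l,p}=(1-2cp)\id_2\otimes\Tr_{S_l}+2cp\,\tilde N_l$ acting on the combined data-and-syndrome output. Finally I would assemble the estimate by the triangle inequality: the syndrome states produced by the input interfaces and the syndrome channel $S_F$ from the core are bundled into a single state $\sigma_S$ that is discarded by the $\Tr_{S_l}$ inside $N_{dec,l,p}$, leaving the target form $(N_{dec,l,p})^{\otimes m}\circ(\Gamma\otimes\sigma_S)\circ(N_{enc,l,p})^{\otimes n}$, with accumulated error given by the two stated terms.

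The main obstacle is the overlap of error-correction gadgets between the interfaces and the core, exactly the phenomenon flagged in Section~\ref{sec-ft-and-comm}: the well-behavedness condition for the first (and last) exRec of $\Gamma_{\mathcal{C}_l}$ involves the same error correction that appears in the correctness condition of the adjacent interface. Consequently the core's well-behavedness events and the interfaces' correctness events are \emph{not} independent, so I cannot union-bound them as separate failures; they must be analysed jointly on the shared gadget. Managing this coupling at the seams — where the $n$ encoding interfaces meet the first layer of exRecs — rather than double-counting it or leaving a gap, is what forces the extra contribution $2np_0(p/p_0)^{2^{l-1}}$ and demands the most careful bookkeeping, in direct analogy with the extra term $2np_0|\Loc(\EncI_1)|(p/p_0)^{2^l-1}$ in Lemma~\ref{thm-eff-decoder}.
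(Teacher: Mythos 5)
Your three-layer strategy (input interfaces, core, output interfaces, glued by ideal decoders and the shared error-correction gadgets) is exactly the decomposition the paper uses, and the treatment of the encoding layer and the core matches the paper's proof step for step.

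There is, however, one concrete misstep in where you locate the coupling difficulty, and it matters for the bookkeeping. You place the problematic overlap at the seam where the $n$ encoding interfaces meet the first exRecs, and you treat the output layer symmetrically, claiming that Theorem~\ref{thm-correct-interfaces}(2) ``after the same collection of fault patterns, yields the exact mixture'' $(N_{dec,l,p})^{\otimes m}$. In fact the encoding side factorizes cleanly with no extra error term: each $\EncI_l$ ends with its own error correction, its correctness event depends only on its own independent fault pattern, and averaging gives $N_{enc,p,l}^{\otimes n}$ exactly. The genuine obstruction is on the \emph{decoding} side, where the correctness condition of $\DecI_l$ is stated for $[\DecI_l\circ\EC_l]_F$ and therefore depends jointly on the interface's fault pattern $F_{1,k}$ \emph{and} on the fault pattern $G_k(F'_2)$ of the core's final error-correction gadget. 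Because of this dependence on $F_2$, the averaged output maps do \emph{not} factorize into a tensor product over the $m$ output legs; the paper restores the product form $(N_{dec,p,l})^{\otimes m}$ by replacing the jointly-dependent indicator $I_{\alpha(F_{1,k},G_k(F'_2))}$ with an indicator $\tilde I_{\beta(F_{1,k})}$ depending only on the interface's own faults, and it is precisely this substitution that costs the extra term $2p_0(p/p_0)^{2^{l-1}}$ (scaled by the number of output legs). As written, your output-layer step asserts an exact product mixture that does not hold, and your extra error term is attached to a seam that does not generate one; you would need to import the $\alpha\to\beta$ decoupling argument at the decoder to close this.
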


\begin{proof}
We decompose a single fault pattern $F=F_1\oplus F_2 \oplus F_3$, such that $F_1$ affects the circuit $\DecI_l^{\otimes m}$, $F_2$ affects $\Gamma_{\mathcal{C}_l}$, and $F_3$ affects $\EncI_l^{\otimes m}$. These three fault-patterns are independent of each other.

Furthermore, for the fault-tolerant implementation of the circuit $\Gamma_{\mathcal{C}_l}$ ends with an error correction gadget, and we can thus define a circuit $ \tilde{\Gamma}_{\mathcal{C}_l}$ such that $\Gamma_{\mathcal{C}_l}=\EC_l^{\otimes m} \circ \tilde{\Gamma}_{\mathcal{C}_l}$. We will denote the fault-pattern affecting $ \tilde{\Gamma}_{\mathcal{C}_l}$ by $F'_2$ and the fault-pattern affecting $\EC_l^{\otimes m}$ by $G(F'_2)$.
The same applies to the encoder circuit, and we define a circuit $ \tilde{\EncI}_{l}$ such that ${\EncI}_{l}=\EC_l\circ \tilde{\EncI}_{l}$. We denote the fault pattern on $\tilde{\EncI}_{l}^{\otimes n}$ by $F'_3$, and the fault-pattern on the corresponding $\EC_l^{\otimes n}$ by $H(F'_3)$.
In summary, we write 
\begin{equation*}
\begin{split}
&[\DecI_l^{\otimes m} \circ \Gamma_{\mathcal{C}_l} \circ \EncI_l^{\otimes n} ]_F\\ &= [\DecI_l^{\otimes m}]_{F_1} \circ [\Gamma_{\mathcal{C}_l}]_{F_2} \circ [\EncI_l^{\otimes n} ]_{F_3}\\
& =[\DecI_l^{\otimes m}]_{F_1} \circ [\EC_l^{\otimes m}]_{G(F'_2)}\circ [\tilde{\Gamma}_{\mathcal{C}_l}]_{F'_2} \circ [\EC_l^{\otimes n} ]_{H(F'_3)} \circ [\tilde{\EncI}_l^{\otimes n} ]_{F'_3}.\\
\end{split}
\end{equation*}
The $m$ tensor products of the decoding interface $\DecI_l$ are subject to independent faults, and therefore we can decompose the fault-pattern as $F_1=F_{1,1}\oplus F_{1,2} \oplus\Compactcdots\oplus F_{1,m}$. The same is true for the $m$ tensor products of $\EC$ and the fault-pattern $G(F'_2)$. Then, for any $k\in\{1,2,...,m\}$, consider the function 
\[I_{\alpha(F_{1,k},G_k(F'_2))} = \begin{cases}
\id_2\otimes\trace_S & \text{ if } \alpha(F_{1,k},G_k(F'_2))=0 \\
[\DecI_l]_{F_{1,k}}\circ \EncI^*_l &\text{ if } \alpha(F_{1,k},G_k(F'_2))=1
\end{cases}\]
with \[{\alpha(F_{1,k},G_k(F'_2))} = \begin{cases}
0 & \text{ if } \DecI_l\circ\EC_l \text{ is well-behaved under }  F_{1,k}\oplus G_k(F'_{2})\\
1 & \text{else.}
\end{cases} \]
With this function, we have 
\[[\DecI_l\circ \EC_l ]_{F_{1,k}\oplus G_k(F'_2)}=I_{\alpha(F_{1,k},G_k(F'_2))} \circ \DecI_l^* \circ [\EC_l]_{G_k(F'_2)}.\]
For a fault pattern $F$, we use the following decomposition:
\begin{equation*}
\begin{split}
&[\DecI_l^{\otimes m} \circ \Gamma_{\mathcal{C}_l} \circ \EncI_l^{\otimes n} ]_F 
\\& =[\DecI_l^{\otimes m}]_{F_1} \circ [\EC_l^{\otimes m}]_{G(F'_2)}\circ [\tilde{\Gamma}_{\mathcal{C}_l}]_{F'_2} \circ [\EC_l^{\otimes n} ]_{H(F'_3)} \circ [\tilde{\EncI}_l^{\otimes n} ]_{F'_3}\\
& =\bigotimes^m_{k=1}(I_{\alpha(F_{1,k},G_k(F'_2))}) \circ (\DecI_l^*)^{\otimes m} \circ [\EC_l^{\otimes m}]_{G(F'_2)}\circ [\tilde{\Gamma}_{\mathcal{C}_l}]_{F'_2}  \circ [\EC_l^{\otimes n} ]_{H(F'_3)} \circ [\tilde{\EncI}_l^{\otimes n} ]_{F'_3} \\
& =\bigotimes^m_{k=1}(I_{\alpha(F_{1,k},G_k(F'_2))}) \circ (\DecI_l^*)^{\otimes m} \circ [{\Gamma}_{\mathcal{C}_l}]_{F_2}  \circ [\EC_l^{\otimes n} ]_{H(F'_3)} \circ [\tilde{\EncI}_l^{\otimes n} ]_{F'_3} .\\
\end{split}
\end{equation*}
The sum over all fault patterns $F$ can be split into fault patterns under which $\Gamma$ is well-behaved (i.e. all exRecs in $\Gamma$ are well-behaved), and fault patterns under which $\Gamma$ is not well-behaved. We name the set of fault patterns under which $\Gamma$ is well-behaved $\mathcal{A}\subseteq \mathcal{F}$. Due to the threshold theorem in Theorem~\ref{thm-threshold}, the probability $\epsilon_A$ that a given fault-pattern is not well-behaved ($F\notin \mathcal{A}$) is upper-bounded by $\epsilon_A\leq p_0 (\frac{p}{p_0})^{2^l} |\Loc(\Gamma)|$. Thus there exists a quantum channel $E$ such that
\begin{equation*}
\begin{split}
&[\DecI_l^{\otimes m} \circ \Gamma_{\mathcal{C}_l} \circ \EncI_l^{\otimes m} ]_{\mathcal{F}(p)} \\&=\sum_{F} P(F) [\DecI_l^{\otimes m} \circ \Gamma_{\mathcal{C}_l} \circ \EncI_l^{\otimes n} ]_F  \\ &=\sum_{F} P(F) \bigotimes^m_{k=1}(I_{\alpha(F_{1,k},G_k(F'_2))}) \circ (\DecI_l^*)^{\otimes m} \circ [{\Gamma}_{\mathcal{C}_l}]_{F_2}  \circ [\EC_l^{\otimes n} ]_{H(F'_3)} \circ [\tilde{\EncI}_l^{\otimes n} ]_{F'_3} \\
&=\sum_{F\in \mathcal{A}} P(F) \bigotimes^m_{k=1}(I_{\alpha(F_{1,k},G_k(F'_2))}) \circ (\Gamma\otimes S_{F_2}) \circ (\DecI_l^*)^{\otimes n} \circ [\EC_l^{\otimes n}]_{H(F'_3)} \circ [\tilde{\EncI}_l^{\otimes n} ]_{F'_3} \\&\qquad \qquad\qquad\qquad\qquad\qquad\qquad\qquad\qquad\qquad\qquad\qquad\qquad\qquad\qquad\qquad+ \epsilon_A E\\
% & = \sum_{F\in \mathcal{A}} P(F) \bigotimes^m_{k=1}(I_{\alpha(F_{1,k},G_k(F'_2))}) \circ (\Gamma\otimes \sigma_{F_2}) \circ (\DecI^*)^{\otimes m} \circ [\EC^{\otimes m}]_{F'_3} \circ [\tilde{\EncI}^{\otimes m} ]_{F''_3}+\epsilon_A E \\
& =\sum_{F\in \mathcal{A}} P(F) \bigotimes^m_{k=1}(I_{\alpha(F_{1,k},G_k(F'_2))}) \circ (\Gamma\otimes S_{F_2}) \circ (\DecI_l^*)^{\otimes n}  \circ [\EncI_l^{\otimes n} ]_{F_3}+\epsilon_A E . \\
\end{split}
\end{equation*}
The faults affect the circuit independently, and the probability of a given fault-pattern $F=F_1\bigoplus F_2 \bigoplus F_3$ can thus be expressed as $P(F)=P_1(F_1)P_2(F_2)P_3(F_3)$. Therefore,
\begin{equation}\label{eq-weird-interface-proof}
\begin{split}
&[\DecI_l^{\otimes m} \circ \Gamma_{\mathcal{C}_l} \circ \EncI_l^{\otimes n} ]_{\mathcal{F}(p)}\\ & = \big( \sum_{F_1,F_2\in \mathcal{A}'} P_1(F_1)P_2(F_2) \bigotimes^m_{k=1}(I_{\alpha(F_{1,k},G_k(F'_2))}) \circ (\Gamma\otimes S_{F_2}) \big)\circ \Compactcdots \\&\hspace{1cm} \Compactcdots \circ\big( \sum_{F_3} P_3(F_3) (\DecI_l^*)^{\otimes n}  \circ [\EncI_l^{\otimes n} ]_{F_3} \big)+\epsilon_A E .\\
\end{split}
\end{equation}
Now, we can regard the part of the circuit that is affected by $F_3$ separately, in order to prove that it effectively behaves like the channel $N_{q,enc,l}$. For any distribution of fault patterns $J$ with probability $P(J)$, we have
\begin{equation*}
\begin{split}
&\sum_{J} P(J) (\DecI_l^*)  \circ [\EncI_l ]_{J} \\&=\sum_{J \text{ s.t. $[\EncI_l ]_{J}$ correct}} P(J) (\DecI_l^*)  \circ [\EncI_l ]_{J} +\sum_{J \text{ s.t. $[\EncI_l ]_{J}$ not correct}} P(J) (\DecI_l^*)  \circ [\EncI_l ]_{J}\\
&= (1-q) \id_2\otimes \sigma +q N_l\\
&:=N_{q,enc,l}.
\end{split}
\end{equation*}
where $N_l=\frac{1}{q_2} \sum_{F\text{ s.t. $[\EncI_l ]_{J}$ not correct}} P(J) (\DecI_l^*)  \circ [\EncI_l ]_{J}$, and $q\leq 2cp$. 
Because the interfaces are independent of each other, when taking the tensor product, we have
\begin{equation*}
\begin{split}
N_{q,enc,l}^{\otimes n}  &=( \sum_{J} P(J) (\DecI_l^*)  \circ [\EncI_l ]_{J})^{\otimes n}\\& = \sum_{J_i, \ i=1,...,n} P(J_1)P(J_2)...P(J_n)  (\DecI_l^*)^{\otimes n} \circ ([\EncI_l ]_{J_1} \otimes \Compactcdots \otimes [\EncI_l ]_{J_n})\\
&= \sum_{J'} P(J') (\DecI_l^*)^{\otimes n}  \circ [\EncI_l^{\otimes n}]_{J'} .
\end{split}
\end{equation*}
For $J'=F_3$, we insert this into Eq.~\eqref{eq-weird-interface-proof} and obtain
\begin{equation*}
\begin{split}
&[\DecI^{\otimes m} \circ \Gamma_{\mathcal{C}_l} \circ \EncI^{\otimes m} ]_{\mathcal{F}(p)}  \\&= \Big( \sum_{F_1,F_2\in \mathcal{A}'} P_1(F_1)P_2(F_2) \bigotimes^m_{k=1}(I_{\alpha(F_{1,k},G_k(F'_2))}) \circ (\Gamma\otimes \sigma_{F_2}) \Big) \circ\big( N_{q,enc,l}^{\otimes m}  \big)+\epsilon_A E .\\
\end{split}
\end{equation*}
For the encoding interface, our result above is thus already in the desired shape. What remains to be proven is that the decoding interface can also be expressed as a similar effective channel.

For any $k\in\{1,2,...,m\}$, consider the function
\[\tilde{I}_{\beta(F_{1,k})} = \begin{cases}
\id_2\otimes\trace_S & \text{ if } \beta(F_{1,k})=0 \\
[\DecI_l]_{F_{1,k}}\circ \EncI^*_l &\text{ if } \beta(F_{1,k})=1 
\end{cases}\]
with \[{\beta(F_{1,k})} = \begin{cases}
0 & \text{ if there exists $G_k(F'_{2})$ such that } \DecI_l\circ\EC_l \text{ is well-behaved under }  F_{1,k}\oplus G_k(F'_{2})\\
1 & \text{else.}
\end{cases} \]
We have \begin{equation*}
\begin{split}
& \sum_{F_2\in \mathcal{A}} P(F) \| \bigotimes^m_{k=1}I_{\alpha(F_{1,k},G_k(F'_2))} - \bigotimes^m_{k=1}\tilde{I}_{\beta(F_{1,k})} \|_{1\rightarrow 1} \\&\leq 2 \Prob[\exists k: \beta(F_{1,k})\neq \alpha(F_{1,k},G_k(F'_2))]\\&\leq 2 p_0 (\frac{p}{p_0})^{2^{l-1}}, \end{split}
\end{equation*}
and thus
\begin{equation*}
\begin{split}
& \sum_{F_1,F_2\in \mathcal{A}'} P_1(F_1)P_2(F_2) \bigotimes^m_{k=1}(I_{\alpha(F_{1,k},G_k(F'_2))}) \circ (\Gamma\otimes S_{F_2})  \\& =  \sum_{F_2\in \mathcal{A}} P_2(F_2) (\sum_{F_1} P_1(F_1) \bigotimes^m_{k=1}(I_{\alpha(F_{1,k},G_k(F'_2))}) ) \circ (\Gamma\otimes S_{F_2}) \\&\leq  \sum_{F_2\in \mathcal{A}} P_2(F_2) (\sum_{F_1} P_1(F_1) \bigotimes^m_{k=1}(\tilde{I}_{\beta(F_{1,k})}) ) \circ (\Gamma\otimes S_{F_2}) +2 p_0 (\frac{p}{p_0})^{2^{l-1}} \\&=  \sum_{F_2\in \mathcal{A}} P_2(F_2) ( \bigotimes^m_{k=1}( \sum_{F_{1,k}} P_1(F_{1,k})\tilde{I}_{\beta(F_{1,k})}) ) \circ (\Gamma\otimes S_{F_2}) +2 p_0 (\frac{p}{p_0})^{2^{l-1}} \\&=   ( \bigotimes^m_{k=1}( \sum_{F_{1,k}} P_1(F_{1,k})\tilde{I}_{\beta(F_{1,k})}) ) \circ (\Gamma\otimes (\sum_{F_2\in \mathcal{A}} P_2(F_2)S_{F_2}) ) +2 p_0 (\frac{p}{p_0})^{2^{l-1}} \\&=   ( \bigotimes^m_{k=1}( \sum_{F_{1,k}} P_1(F_{1,k})\tilde{I}_{\beta(F_{1,k})}) ) \circ (\Gamma\otimes S_S) +2 p_0 (\frac{p}{p_0})^{2^{l-1} }\\&=   N_{dec,r,l}^{\otimes m} \circ (\Gamma\otimes S_S) +2 p_0 (\frac{p}{p_0})^{2^{l-1}} ,\\&
\end{split}
\end{equation*}
where we define the quantum channel $S_S:=\sum_{F_2} P(F_2) S_{F_2}$, and the quantum channel $N_{dec,r,l}= (1-r) \id_2 \otimes \trace_S + r \tilde{N}_l$, which corresponds to splitting the sum over $F_{1,k}$ into fault-patterns under which the circuit is well-behaved and fault-patterns under which it is not well-behaved, and we have $r\leq 2cp$. For this channel, we have:
$\sum_{F_1} P_1(F_1) \bigotimes^m_{k=1}(\tilde{I}_{\beta(F_{1,k})}) )  =N_{dec,r,l}^{\otimes m}$.
Then, a triangle inequality gives the desired result.\end{proof}

\subsection{Scaling of our capacity bound}\label{sec-extra-1}

Our bound in Theorem~\ref{thm-final-coding-thm} should be regarded mostly as a proof-of-principle bound, and we place our focus on the scaling in the gate error $q\in [0,1]$.
For a real-world implementation of a communication protocol, the dependence of the fault-tolerant rate on the channel dimension strongly limits the achievable rate. Perhaps this could in part be remedied by the use of tighter bounds and better approximations, particularly for the cases of specific channels and specific error correction codes. This is further illustrated in Example~\ref{ex-howbaditis}, where we look at the scaling of our bound for a simple example channel.

\begin{example}[Fault-tolerant entanglement-assisted capacity of the qubit depolarizing channel]\label{ex-howbaditis}

Here, we illustrate the scaling of our bound on fault-tolerantly achievable rates from Theorem~\ref{thm-final-coding-thm} for the qubit depolarizing channel $T_p$ with depolarizing probability $p\in[0,1]$ as introduced in Example~\ref{ex-depo-channel}. For this channel, we have a closed-form expression for the entanglement-assisted capacity, and an upper bound on the quotient between entanglement-assisted capacity and classical capacity due to \cite{BSST99}. Since it is a qubit channel, the dimension is $d=2$.
We plot our bounds on the fault-tolerant entanglement-assisted capacity for various combinations of gate error probability $q$ and depolarizing probability $p$.

In Figure~\ref{fig-howbaditis1}, we plot the fault-tolerant entanglement-assisted capacity as a function of the depolarizing probability $p$ for various values of gate error probability $q$. It is clearly visible that the capacity is reduced by an amount that depends on $q$ (not on $p$), and that the reduction is increased for higher gate error probabilities $q$, where the scaling is dominated by $q\log(q)$.

\begin{figure}[htbp]
    \centering
    \includegraphics[width=10cm]{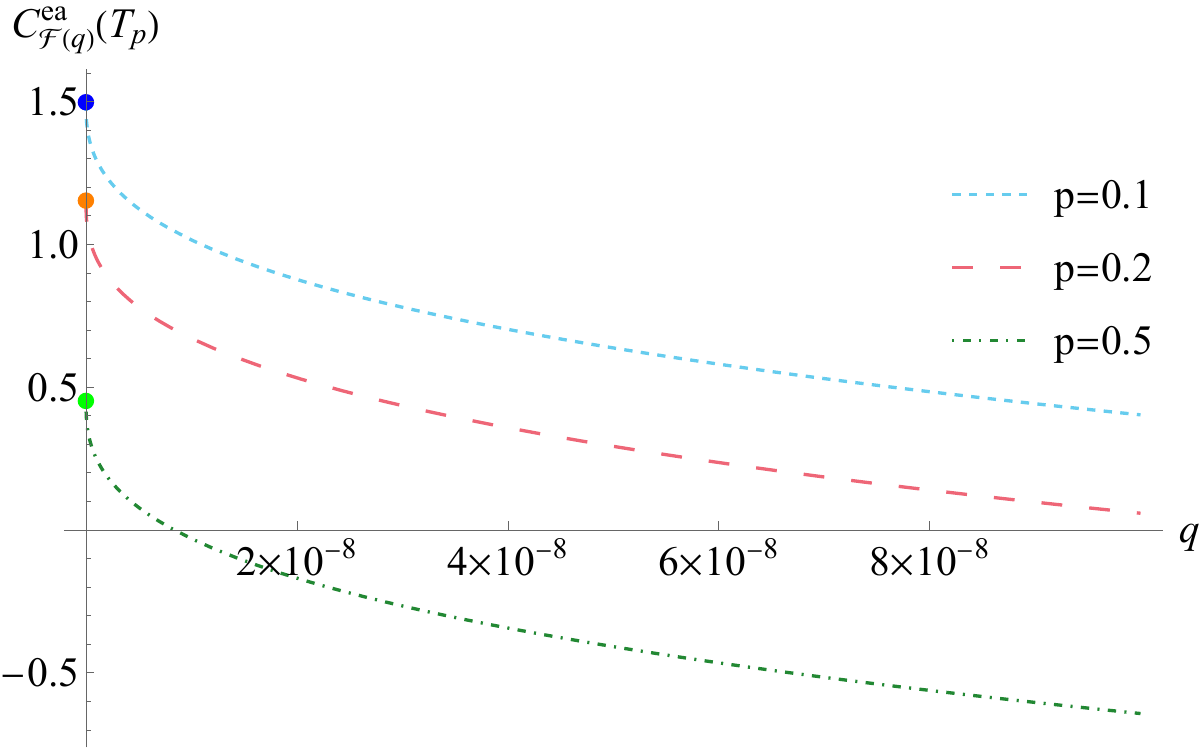}
    \caption{Plot of the fault-tolerant entanglement-assisted capacity of the qubit depolarizing channel as a function of the depolarizing probability $p$. The separation increases as the gate error $q$ increases.}
    \label{fig-howbaditis1}
\end{figure}

In Figure~\ref{fig-howbaditis2}, we plot the fault-tolerant entanglement-assisted capacity as a function of the gate error probability $q$ for some values of $p$. We see that the achievable rates decrease sharply for small gate error probabilities. Our findings are consistent with the observation that the difference between the fault-less and the fault-affected capacity behaves as $q\log(q)$.

\begin{figure}[htbp]
    \centering
    \includegraphics[width=10cm]{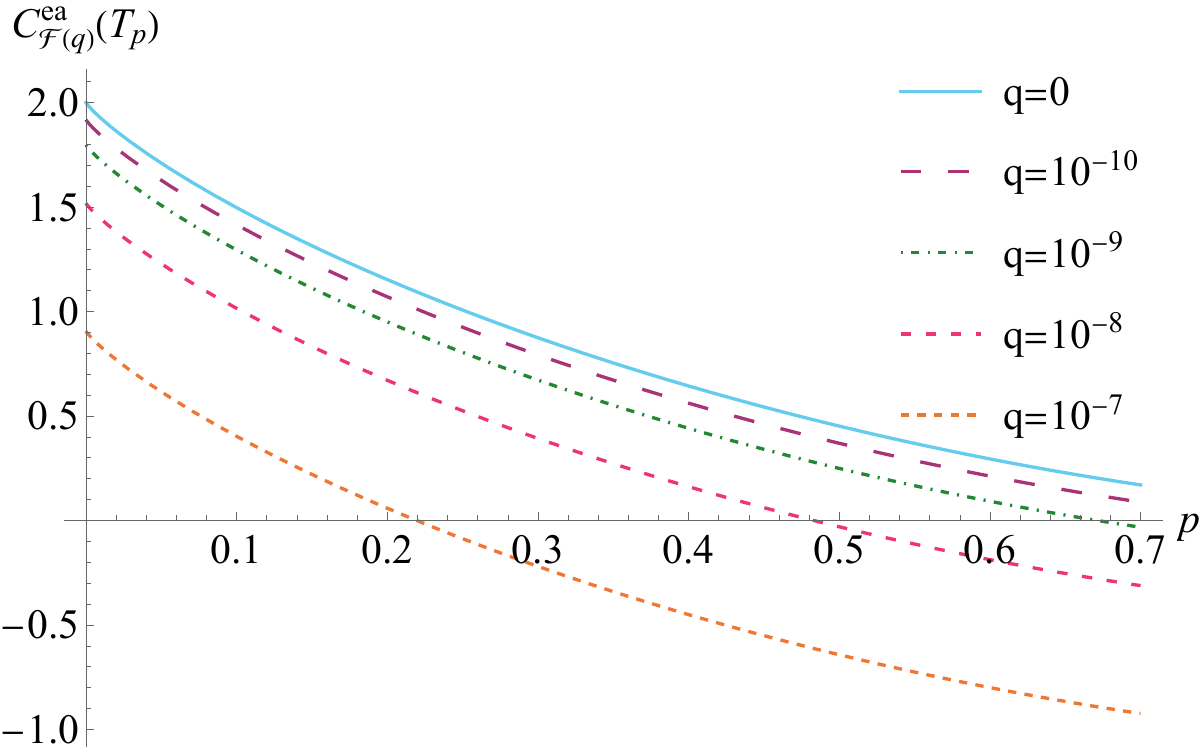}
    \caption{Plot of the lower bound on the fault-tolerant entanglement-assisted capacity of the qubit depolarizing channel as a function of the gate error $q$ for three different values of depolarizing probability $p$. The fault-less entanglement-assisted capacity is given at $q=0$.}
    \label{fig-howbaditis2}
\end{figure}
\end{example}

The function $f(q)$ which features in our bound on the achievable rate in Theorem~\ref{thm-final-coding-thm} is a sum of several terms which can be traced back to steps in our proof strategy. We can distinguish three main contributions:  
\begin{itemize}
\item A contribution due to our use of entanglement distillation: \[f_{DIST}(q)=4 \frac{C^{ea}(T)}{C(T)} \frac{(h(4cq)+4cq\log(3))\log(2) }{1-h(4cq)-4cq\log(3)}\]
    \item A contribution from the continuity relation of quantum mutual information: 
    \[f_{CONT}(q)= (1+4(j_1+j_2)cq)h\big(\frac{4(j_1+j_2)cq}{1+4(j_1+j_2)cq}\big)  +8 (j_1+j_2)cqj_2 \]
    \item A contribution due to our use of the AVP coding scheme:
    \begin{align*}
        f_{AVP}(q)&=2\sqrt{2{j_2q}} \big(2^{j_1+j_2} (j_1+j_2) +1\big) \big|2\log( \frac{2(j_1+j_2)cq }{2^{j_1j_2}})\big|  \\&  +2h\big(\sqrt{2{j_2q}} 2^{j_1+j_2} |2\log( \frac{2(j_1+j_2)cq }{2^{j_1j_2}})| \big)+ 2 (j_1+j_2)cqj_2 
    \end{align*}
\end{itemize}
resulting in a total reduction of the rate by $f(q)=f_{DIST}(q)+f_{CONT}(q)+f_{AVP}(q)$.

We compare these contributions briefly in Figure~\ref{fig-mainoffenders} for the depolarizing channel from Example~\ref{ex-depo-channel}. For small gate error $q$, the contribution from the arbitrarily varying coding scheme dominates and behaves similar to $q\log(q)$. For higher gate errors, the contribution due to entanglement distillation becomes more and more relevant.

\begin{figure}[htbp]
     \centering
     \begin{subfigure}[b]{0.7\textwidth}
         \centering
         \includegraphics[width=\textwidth]{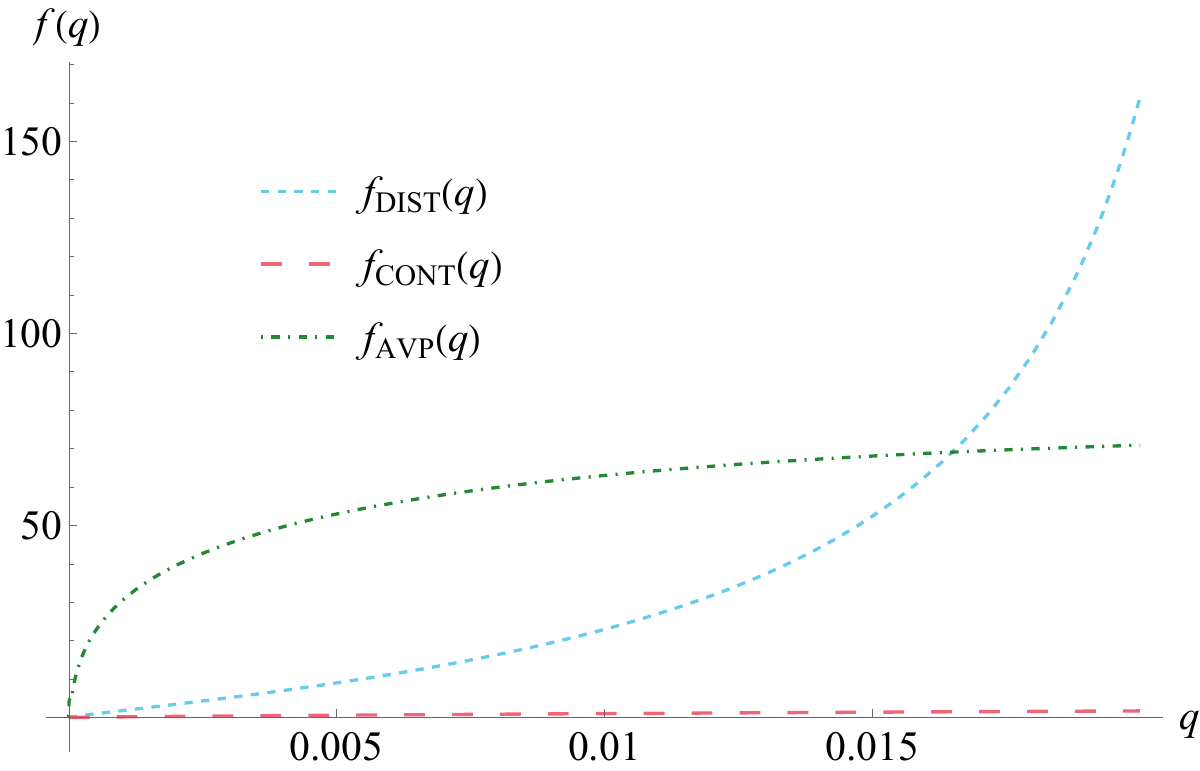}
         \caption{Comparison of the three contributing functions}
     \end{subfigure}
     \hfill
     \begin{subfigure}[b]{0.7\textwidth}
         \centering
         \includegraphics[width=\textwidth]{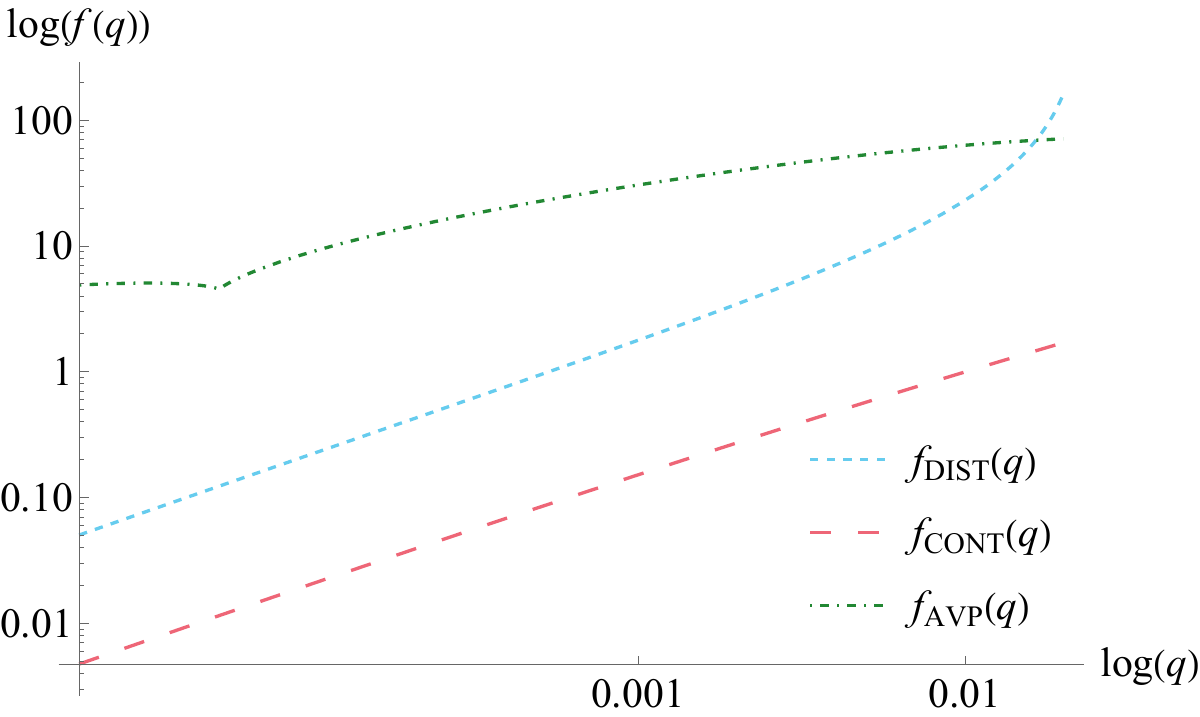}
         \caption{Comparison of the three contributing functions in a logarithmic plot}
     \end{subfigure}
        \caption{Plot of the scaling of the different contributions to the function $f(q)$ by which the capacity is reduced through gate errors.}
        \label{fig-mainoffenders}
\end{figure}

\subsection{About noise models and codes}
\label{sec-overhead-and-scaling}

The Pauli noise model is commonly used in theoretical work \cite{AGP05,Gottesman16} because of its comparitive simplicity, and in experimental work \cite{Google21,ZLZCLZK20} because of its decent accuracy in modelling current quantum computers. Additionally, \cite{WWDHOE22} found that all non-Markovian noise can be transformed into Pauli noise by randomized compiling, further justifying the use of this simplified model. Beyond this, we believe that our results in Section~\ref{sec-coding-thm} can be expected to hold more generally for different types of i.i.d. noise and even more complicated noise models, in part due to our worst-case assumption on the syndrome state.

%While we chose to present our results in the context of Pauli noise and the conc

%It is important to note that our work here applies only for the Pauli error model, and concatenated 7-qubit Steane code.

While we place focus on the concatenated 7-qubit Steane code, we expect that other codes may also be used for the implementation of the coding scheme, and we briefly give some examples and outline strategies in the remainder of this section.

For example, recent work \cite{YK22} proposes a concatenated code with quantum Hamming codes where the code size and distance increases at every level. This proposal can achieve a constant space overhead, while many desireable properties of quantum Hamming codes are preserved, and many properties are the same as for the concatenation of 7-qubit Steane codes. For example, the codes have the same transversal gate set and similar scaling of the error threshold. We therefore conjecture that similar theorems to the results in \cite{CMH20} and our work in Section~\ref{sec-coding-thm} also hold for encodings in this code. The level of concatenation would still have to be selected according to Eq.~\eqref{eq-level-choice}, but for high enough concatenation levels, the fault-tolerant implementation would require $32$ qubits per logical qubit instead of $7^l$. While it is clear that the results in \cite{CMH20,BCMH22},  Theorem~\ref{thm-final-coding-thm} are asymptotic in nature, the feasibility of implementing a fault-tolerant coding scheme (or a finite-blocklength version) would be greatly increased by such a reduction in overhead.

Furthermore, there exist various other families of error correction codes that are candidates for long-term fault-tolerance. In particular, much attention has been paid to topological codes. While these codes have similarities to the 7-qubit Steane code, there are also some key differences. In particular, the block-like structure of concatenation of codes limits the spread of error during decoding and encoding interfaces, which is a key feature of our construction. There exist proposals for concatenation schemes for topological codes \cite{BFBD11,CT16,LKH22}, which may provide a similar structure and thereby allow us to extend our results.

Quantum LDPC codes \cite{Gottesman14,FGL18,BE21} may also be candidates for generalizing our results; for example, in the case of quantum expander codes, these codes have constant space overhead \cite{Gottesman14}, efficient decoders \cite{FGL18} and utilize a concatenated structure for state preparation.

For some more specific channel families, it may also be possible to make assumptions on the coding scheme which limit the spread of errors. Since the correlations in the channel originate in the encoding circuit, we expect that our analysis may be simplified and that the achievable rates may be improved if we could construct a coding scheme that has low circuit depth or very limited connectivity in the encoder circuit. % the achievable rates.

\chapter{Towards small-scale fault-tolerance on trapped ion platforms}\label{chapter-aws}

%\section*{Abstract}

\section{Introduction}

%Quantum error-correcting codes are a set of protocols designed to protect quantum information from the effects of the noise that arises in quantum computing systems.
%In classical computing, error-correcting codes have long been used to detect and correct errors that occur during data transmission, mostly by copying  %Similarly, in quantum computing, quantum error-correcting codes are used to protect the fragile quantum states from decoherence, which is a type of error that arises from the interaction of a quantum system with its environment.

Despite recent advances in scale and error-resistance of quantum hardware, it remains an experimental and theoretical challenge to protect computations against noise and decoherence. A protective implementation based on techniques from quantum error correction \cite{KLZ98,AB99,Kitaev03,AGP05} can serve to enhance the time span in which a quantum system remains stable and useful for computation. This is especially  important because the noise in quantum systems is chiefly believed to occur as a result of the time-span it takes to execute quantum gates, which is likely to play a role beyond near-term devices \cite{Preskill18}.

Quantum error-correcting codes typically work by encoding the quantum information into a larger quantum system, known as a code space or a code word. Based on this encoding, part of the system can be analyzed in order to detect and correct errors. These codes have been developed and optimized to provide a high level of protection against errors while maintaining a relatively low overhead in terms of the number of resources required.

While many candidates for fault-tolerant quantum error correcting codes exist, the performance and practical relevance of a code depends on several factors, such as the type and strength of the noise affecting the qubits, the resources available for implementing the code, and the specific requirements of the quantum computation being performed. In particular, the limits of current quantum hardware emphasize a need for quantum error correction at small scales, with low qubit numbers and/or limited connectivity. Because of this, the list of near-term attractive error-correcting codes is comprised of the 5-qubit code, the 7-qubit Steane code, the 9-qubit Shor code and small topological codes, all of which can correct up to one single qubit error. Fault-tolerant implementations in these codes often need additional qubits acting as auxilliary systems and classical control. In \cite{Google21}, experimental demonstration of applying the logical gates in a 5-qubit surface code was performed and supplemented with classical simulations. Crucially, their work is focused on the application of gates to states that are already encoded, i.e. faultlessly prepared in the code space, disregarding the error involved in creating the encoded state.
For practical purposes, this encoding should also be fault-tolerant.
Like many state-of-the-art experiments on real devices, the recent experiments performed in \cite{Google21,HRO+17,AAA+22} do not, in fact, employ error correction in the spirit of on-the-fly, real-time correction. Rather, they often implement methods of error detection in order to gauge the probability of logical error and emphasize the need for error correction.
This is due to the fact that error detection is much more feasible on current hardware.
It often needs a lower number of qubits, and - unlike error correction - does not require the implementation of conditional measurements or re-initialization of the auxilliary systems. Both of these operations are not available on any current cloud-based quantum platform. %; even recent demonstrations on Google etc do not have this ability; instead, they essentially perform error detection to .

Error detection itself can also be used in order protect a computation against faults. In a circuit implemented in an error detection code, if an error is detected during a run of the experiment, instead of providing a method for correcting it, the run should be aborted and restarted. If we can detect all occasions where one single error (i.e. one gate is corrupted during the whole computation) occurs, we are effectively post-selecting the runs without error and a fraction of the runs with more than one error (which we suppose to be unlikely enough).

One such error detection code with a low qubit number is the [[4,2,2]] Bacon-Shor code proposed in \cite{VGW96,GBP97} and highlighted in \cite{Gottesman16}. This code encodes logical two-qubit states in terms of four physical qubits, using one auxilliary system in the process. Notably, this code allows for a low-depth, comparatively simple preparation of an encoded maximally entangled state, which is an important resource state for computations and communication protocols. Due to the low qubit number, this code is a prime candidate for near-term demonstrations of the principles of fault-tolerance. The preparation of  some selected logical two-qubit states in this code has been demonstrated for superconducting qubits in \cite{TCCCG17} and for trapped ion qubits in \cite{LGLFDBM17}. 

With recent leaps in technology and precision, it has also become increasingly common for the builders of quantum computers to offer remote access to their devices, which allows users to submit a description of a quantum circuit and obtain the corresponding experimental data via a cloud-computing based infrastructure. This access allows anyone to investigate whether methods from fault-tolerance can improve computations on current devices. Using remote quantum computing setups for superconducting qubits, the [[4,2,2]] error detection code has been shown to lead to improved quantum state lifetimes and lower errors \cite{OPH+16,Vuillot18,HF19,HLS18,WWJdRM18}.% 

In this work, we use this feature to test the limits and possibilities of current trapped ion quantum computers. In particular, we investigate whether the [[4,2,2]] code can protect quantum information on such devices.
Due to the inherent characteristics of trapped ion systems, it becomes easier to perform two-qubit gates, which are traditionally difficult on super-conducting devices, manifesting in their limited connectivity. However, it is possible that this comparative easiness of manipulating multiple qubits might also increase the likelihood of correlated errors, against which the proposed methods for fault-tolerance commonly do not protect. In fact, \cite{HF19} shows that correlated errors could in principle render an encoding of any computation in the [[4,2,2]] code useless, as it may perform \emph{worse} than the unencoded computation if the correlated noise is sufficiently prevalent.

Here, we will say that the encoding of a circuit in the [[4,2,2]] code \emph{demonstrates fault-tolerance} if the encoding performs better than the unencoded version when both circuits are performed on the same device under the same conditions. This is different from our notion of fault-tolerance as introduced in Section~\ref{sec-ft-intro}, which compares a circuit which is affected by noise with an ideal, noiseless counterpart. While devices are moving towards physical error rates that would lie below a threshold for true fault-tolerance, they do not yet have sufficiently low errors. It is therefore common \cite{Google21,HLS18} and perhaps more promising to instead compare implementations of the same circuit in different error correcting codes. Then, we may observe whether they demonstrate improvements in logical error rates (consistent with the theory of fault-tolerance), even if both implementations are not truly fault-tolerant. In the most basic such test, we can compare an implementation without any extra protection ("in a [[2,2,1]] code") to an implementation with some degree of protection, e.g. the [[4,2,2]] code, which is what we compare here.

For one of the quantum devices we investigated, the 11-qubit IonQ Harmony, our observations seem to demonstrate the benefit of protecting a computation by implementing it in an error detection code and are consistent with the theory of quantum fault-tolerance. Our observations suggest that this trapped ion quantum computer does not suffer from excessively highly correlated noise. However, although the implementation in the [[4,2,2]] code improves the life-time of quantum states in current devices, we clearly observe that the computation as a whole - unencoded or encoded - is not fault-tolerant for longer circuit depths.

For the newer trapped ion quantum device, the 21-qubit IonQ Aria, we did not observe a conclusive improvement for circuits implemented in the [[4,2,2]] code. This is surprising because this device has much lower gate error rates, which we expect to be below the threshold. This suggests that our assumed error model does not capture the noise affecting this device, and that there may indeed be highly correlated errors occuring.

First, we give a proof of a threshold theorem for demonstrating the principles of fault-tolerance under an i.i.d. Pauli error model for any error detection code that eliminates all single-qubit gate errors in Section~\ref{sec-error-detecting-threshold-thm}. Then, we give a more detailed description of the [[4,2,2]] error detection code and its implementation in terms of the natural gate set for trapped ion quantum devices in Section~\ref{sec-gottesman-code}. In Section~\ref{sec-gottesman-results}, we compare two-qubit circuits without error detection and two-qubit circuits implemented in the [[4,2,2]] error detection code with respect to their performance on IonQ's quantum computing platforms.

%Similar to the demonstrations in \cite{OPH+16}

%Demonstration of single-qubit and two-qubit gates with gate fidelities  at the threshold \cite{BKM+14}.

\section{A threshold theorem for demonstrations of fault-tolerance}
\label{sec-error-detecting-threshold-thm}

A quantum circuit is a CPTP map resulting from a concatenation of elementary quantum gates in a gate set as described in Section~\ref{sec-intro-noise-and-errors}. In this work, we will consider two gate sets, the more common gate set from \cite{Gottesman16} and the gate set that is native to trapped ion computers.
%. We assume now that each circuit is specified by a particular circuit diagram, which is a directed acyclic graph with a coloring specified by elementary gate operations at its vertices. The set of these vertices are referred to as the circuit locations. Locations where a single qubit gate is executed have one incoming edge and one outgoing edge. Locations where a state preparation is performed have one outgoing edge, and locations where a measurement is performed have one incoming edge.
The model for errors in a quantum computer that we will use here is the i.i.d. Pauli noise model, which is also considered in \cite{AGP05,Gottesman16,CMH20,BCMH22} and introduced in Section~\ref{sec-intro-noise-and-errors}. \cite{ZLZCLZK20} uses the i.i.d. Pauli error model for an trapped ion quantum computer and suggests that it is justified by comparing their experimental data to simulations.

%. This fault model is defined by a fault pattern which selects a subset of circuit locations and associates an additional coloring that indicates which Pauli error is applied at each gate. If the gate is a preparation gate or a single qubit gate, the error occurs after the gate. If the gate is a measurement, the error appears occurs immediately before the gate is applied. %This a commonly used error model because of its simplicity, but also because of its (decent accuracy).

We will mostly consider circuits $\Gamma:\mathbbm{C}^{2^B}\rightarrow \mathbbm{C}^{2^B}$ with classical input and output where $B$ refers to the number of classical bits the circuit encodes. We denote by $\Gamma_{\mathcal{C}}:\mathbbm{C}^{2^B}\rightarrow \mathbbm{C}^{2^{B}}$ the simulation of $\Gamma$ in a quantum error detection code $\mathcal{C}$ (including the classical post-processing). We say that a \emph{logical error} occurs when the output of our noisy circuits ($[\Gamma]_{\mathcal{F}}$ or $[\Gamma_{\mathcal{C}}]_{\mathcal{F}}$) which are subject to a noise model $\mathcal{F}$ does not match the output that we would obtain from a noiseless version of $\Gamma$.

A quantum error detection code is associated to a specific list of errors which it can reliably detect. This is implemented by a check -  in the case of the [[4,2,2]] code, a check of the global parity - which detects this list of errors with certainty. Runs where a detection occurs are then discarded, and statistics are formed from the data after postselecting runs where the detection event did not occur. A useful error detection code should not abort with certainty, i.e. the probability to survive postselection should not be 0, and the probability of an error outside the list leading to a logical error should be sufficiently low.

Assuming a fault model with restricted locality, like the i.i.d. Pauli model, means that fault patterns where one location has an error occur are the most likely, and more errors become increasingly unlikely. Therefore, in this error model, we want the postselection to detect and discard all runs with such fault patterns which would lead to a logical error. Then, for an encoding in which the most likely set of logical errors are eliminated by the postselection, we can hope to demonstrate the principles of fault-tolerance. 

For longer circuits, the benefit of postselection is lessened, as it becomes more likely that fault patterns with more faults appear. However, postselection is often utilized to implement the preparation of auxilliary qubits as a subroutine in larger fault-tolerant architechtures, see for example \cite{PR13,ZLB18}, and demonstrations of fault-tolerance with an error detecting code could (at the very least) be regarded as demonstrations of fault-tolerant preparations of auxilliary qubits.

We compare the noisy circuit's output distribution to the output distribution of the corresponding noiseless state preparation circuit, using use total variation distance as a measure of error.

\begin{definition}
    Let $p_X$ and $q_X$ be probability distributions over a set $\{x\}_{0,1,...,D-1}$ with $D$ elements. Then, the total variation distance (TV-distance) of $p_X$ and $q_X$ is given by
    \[TV_{dist} (p_X,q_X)= \frac{1}{2} \sum_{x=0}^{D-1} |p_X(x)-q_X(x)|.\]
\end{definition}

Then, as proposed in \cite{Gottesman16}, we define a criterion for demonstrating fault-tolerance with an error detection code:

\begin{definition}[Fault-tolerance demonstration criterion] \label{def-ft-criterion}
Let $\mathcal{C}$ be a quantum error detection code, and let $\mathcal{F}$ denote a noise model. Let $\Gamma:\mathbbm{C}^{2^B}\rightarrow \mathbbm{C}^{2^B}$ be a quantum circuit which prepares the quantum state ${\ket{\phi}}$ and outputs the probability distribution $P_{\ket{\phi}}$ resulting from ${\ket{\phi}}$ by Born's rule.% with classical input and output, and let $\Gamma_{\mathcal{C}}:\mathbbm{C}^{2^B}\rightarrow \mathbbm{C}^{2^{B}}$ be the simulation of $\Gamma$ in a quantum error detection code (including the classical post-processing).

Let $P_{\ket{\phi}}^{unenc}$ be the probability distribution that is output by $[\Gamma]_{\mathcal{F}}$, and let $P_{\ket{\phi}}^{enc}$ be the probability distribution that is output by $[\Gamma_{\mathcal{C}}]_{\mathcal{F}}$.

Then, we call $\Gamma_{\mathcal{C}}$ an implementation of $\Gamma$ that demonstrates fault-tolerance if 
\[TV_{dist}(P_{\ket{\phi}}^{unenc},P_{\ket{\phi}}) > TV_{dist}(P_{\ket{\phi}}^{enc},P_{\ket{\phi}}).\]
\end{definition}

For any code where single gate errors cannot lead to a logical error, this can be reformulated as a threshold theorem which determines the probability of physical single-qubit error below which the code should demonstrate fault-tolerance:

\begin{lemma}[Threshold for demonstrating fault-tolerance]
\label{thm-thresholdcriterion}
Let $\mathcal{C}$ be a quantum error detection code, and let $\mathcal{F}(p)$ denote the i.i.d. Pauli noise model with $p\in[0,1]$.
Let $\Gamma:\mathbbm{C}^{2^B}\rightarrow \mathbbm{C}^{2^B}$ be a quantum circuit with $L_u$ locations, and let $\Gamma_{\mathcal{C}}:\mathbbm{C}^{2^B}\rightarrow \mathbbm{C}^{2^{B}}$ be the simulation of $\Gamma$, where $\Gamma_{\mathcal{C}}$ has $L_e$ locations. Then, for any $p<p_{th}$ with
\[  p_{th}  = \frac{L_u}{ L_e(L_e-1) + L_uL_e   +  L_u^2 } , \]
the implementation in the code demonstrates fault-tolerance according to the criterion in Definition~\ref{def-ft-criterion}.
\end{lemma}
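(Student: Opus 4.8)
The plan is to prove the strict inequality in Definition~\ref{def-ft-criterion} by establishing an upper bound on $TV_{dist}(P_{\ket{\phi}}^{enc},P_{\ket{\phi}})$ that is quadratic in $p$ and a lower bound on $TV_{dist}(P_{\ket{\phi}}^{unenc},P_{\ket{\phi}})$ that is linear in $p$, and then comparing them. First I would write each noisy output distribution as a mixture over i.i.d.\ Pauli fault patterns $F$, namely $P^{\bullet}=\sum_F P(F)\,P_F$, where under $\mathcal{F}(p)$ each relevant location is faulty independently with probability $p$ and the fault-free pattern reproduces the ideal distribution $P_{\ket{\phi}}$ exactly. Convexity of the total variation distance then gives $TV_{dist}(P^{\bullet},P_{\ket{\phi}})\le \sum_{F:\,P_F\neq P_{\ket{\phi}}}P(F)$, so that the TV-distance is controlled by the probability of a logical error.

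For the encoded circuit $\Gamma_{\mathcal{C}}$ (with $L_e$ locations) I would use the defining property of the error-detection code: since single gate errors are detected and the corresponding runs discarded, no surviving fault pattern with exactly one fault produces a logical error, so a post-selected logical error requires at least two faults. An ordered-pair union bound gives $\sum_{\ge 2\text{ faults}}P(F)\le L_e(L_e-1)p^2$, and dividing by the survival probability $P_{surv}\ge 1-L_e p$ yields $TV_{dist}(P^{enc},P_{\ket{\phi}})\le L_e(L_e-1)p^2/(1-L_e p)$. For the unencoded circuit (with $L_u$ locations) I would lower-bound the TV-distance by isolating the single-fault contribution via the reverse triangle inequality, $TV_{dist}(P^{unenc},P_{\ket{\phi}})\ge L_u p(1-p)^{L_u-1}-(\text{multi-fault slack})$, which after bounding the slack by an ordered-pair count collapses to a clean linear-order expression of the form $L_u p\,(1-L_u p)$.

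Combining the two bounds, the criterion $TV_{dist}(P^{unenc},P_{\ket{\phi}})>TV_{dist}(P^{enc},P_{\ket{\phi}})$ is implied by $TV_{dist}(P^{unenc},P_{\ket{\phi}})\,(1-L_e p)>L_e(L_e-1)p^2$; substituting the unencoded lower bound and cross-multiplying produces exactly the cross term $L_u L_e p^2$, and after cancelling one factor of $p$ the condition reduces to $L_u>p\,[\,L_e(L_e-1)+L_u L_e+L_u^2\,]$, which is precisely $p<p_{th}$. The main obstacle I anticipate is the lower bound on $TV_{dist}(P^{unenc},P_{\ket{\phi}})$: single-fault contributions could in principle cancel each other or be cancelled by higher-order terms, so one must invoke a worst-case assumption that single gate faults genuinely corrupt the unencoded output (the regime in which \emph{demonstrating} fault-tolerance is meaningful), and one must carefully track the post-selection normalization of the encoded distribution, since it is exactly the factor $1-L_e p$ that supplies the $L_u L_e$ term in the threshold.
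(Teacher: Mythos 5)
Your proposal is correct and follows essentially the same route as the paper's proof: bound the post-selected encoded logical-error probability by the two-fault union bound $L_e(L_e-1)p^2$, lower-bound the acceptance probability by $1-L_ep$, lower-bound the unencoded failure probability by the single-fault term $L_up(1-p)^{L_u-1}\geq L_up(1-L_up)$, and cross-multiply to obtain exactly $p_{th}=L_u/(L_e(L_e-1)+L_uL_e+L_u^2)$. Your explicit use of convexity of the TV-distance over fault patterns, and your flagging of the worst-case assumption that a single fault in the unencoded circuit actually produces a logical error, make explicit two steps the paper passes over silently, but the argument is the same.
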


\begin{proof}
We call the event of surviving the postselection "acceptance". The probability of the encoded circuit with postselection to be subject to a logical error is given by the conditional probability \begin{equation*}
    \begin{split}
       & \Prob [\text{Logical error in $[\Gamma_{\mathcal{C}}]_{\mathcal{F}(p)}$}|\text{acceptance}] \\&= \frac{\Prob[\text{Logical error in $[\Gamma_{\mathcal{C}}]_{\mathcal{F}(p)}$}   \cap \text{acceptance}]}{\Prob[\text{acceptance}]} .
    \end{split}
\end{equation*}
If this probability is strictly smaller than the probability of failure in the unencoded circuit, i.e. if
\[ \frac{\Prob[\text{Logical error in $[\Gamma_{\mathcal{C}}]_{\mathcal{F}(p)}$}   \cap \text{acceptance}]}{\Prob[\text{acceptance}]} < \Prob [\text{Logical error in }[\Gamma]_{\mathcal{F}(p)} ] , \]
then the encoded simulation outperforms the unencoded version, and we may say that the encoding demonstrates fault-tolerance in accordance with the criterion from \ref{def-ft-criterion}.

If there exists an upper bound $p_L\geq \Prob[\text{Logical error in $[\Gamma_{\mathcal{C}}]_{\mathcal{F}(p)}$}   \cap \text{acceptance}] $, and lower bounds $p_S \leq \Prob[\text{acceptance}] $ and $p_U \leq \Prob [\text{Logical error in }[\Gamma]_{\mathcal{F}(p)} ] $, then $\frac{\Prob[\text{Logical error in $[\Gamma_{\mathcal{C}}]_{\mathcal{F}(p)}$}   \cap \text{acceptance}]}{\Prob[\text{acceptance}]}\leq  \frac{p_L}{p_S}$, and the proposed code demonstrates fault-tolerance if \begin{equation} \label{eq-thresholdcondition} \frac{p_L}{p_S} < p_U.\end{equation}
We now proceed to find upper and lower bounds on these probabilities.

For any circuit $\Gamma':\mathbbm{C}^{2^B}\rightarrow \mathbbm{C}^{2^B}$ with $L$ locations, and an error occuring at each location with a probability $p$, the overall probability that errors occur is bounded by the following combinatorial expression, due to the union bound:
\begin{align*}
    \Prob [\text{Logical error in }[\Gamma']_{\mathcal{F}(p)} ] &\leq \Prob [\text{At least one error occurs in }[\Gamma']_{\mathcal{F}(p)} ] \\&\leq\sum_{k=1}^{L} \binom{L}{k} p^k (1-p)^{L-k} .
\end{align*}
This is because the probability of $k$ locations having an error, and all other $L-k$ locations having no error, is given by $p^k (1-p)^{L-k}$, and there are $\binom{L}{k}$ possible fault patterns where exactly $k$ faults occur in the circuit.

All terms where exactly one location is faulty occurs are postselected or do not lead to a logical error in the encoded circuit, and the probability of errors and acceptance is thus upper bounded by
\begin{align*}
&\Prob[\text{Logical error in $[\Gamma_{\mathcal{C}}]_{\mathcal{F}(p)}$}   \cap \text{acceptance}] \\&  \leq  \sum_{k=2}^{L_e} \binom{L_e}{k} p^k (1-p)^{L_e-k}\\& 
= (1-p)^{L_e}  \sum_{k=2}^{L_e} \binom{L_e}{k} (\frac{p}{1-p})^k \\&  
=(1-p)^{L_e} (   (1+\frac{p}{1-p})^{L_e} -L_e \frac{p}{1-p}-1) \\&   =  (1-p)^{L_e} (1+\frac{p}{1-p})^{L_e} - L_e p (1-p)^{L_e-1}- (1-p)^{L_e}\\&  = (1-p+p)^{L_e} - L_e p (1-p)^{L_e-1}- (1-p)^{L_e} \\&  \leq 1- L_ep (1- (L_e-1)p) - (1-L_ep)
\\& = L_e(L_e-1)p^2 :=p_L.
\end{align*}
The first inequality is an upper bound because there may be additional error combinations (besides the single-error terms) which would not survive postselection and therefore not be accepted. %Also, there may be combinations of errors that do not lead to a logical error, such as errors that cancel each other. The second inequality follows from $p<1$.

Postselection can only occur if at least one error happened during the circuit, and it can not occur if there is no error in the circuit. Therefore, the probability of acceptance, i.e. surviving postselection, is bounded by
\begin{equation*} \begin{split}\Prob[\text{acceptance}] &\geq \Prob[\text{no error occurs}]\\& \geq (1-p)^{L_e} = 1 + \sum_{k=1}^{L_e} \binom{L_e}{k} (-p)^k\\& \geq 1-L_ep:=p_S.\end{split}\end{equation*}
To show the last inequality, we group together $k$ and $k+1$ for all even $k$, and show that the summand corresponding to $k$ (appearing with positive sign) is always larger than the summand corresponding to $k+1$ (appearing with negative sign in the sum): $\binom{N}{k} p^k \geq \binom{N}{k+1} p^{k+1} \iff p \leq \frac{\binom{L_e}{k}}{\binom{L_e}{k+1}}= \frac{k+1}{N-k} $. This is true for all even $k$ if it is true for $k=2$, because the expression increases for increasing $k$, which means that it is true for  $p<\frac{3}{L_e-3}$. (Note that this threshold is only reasonable and/or non-trivial for $L_e>3$, which is true by default for this code, and most codes.)

Similarly, we have the following lower bound for the unencoded circuit:
\begin{align*}
  \Prob [\text{Logical error in }[\Gamma]_{\mathcal{F}(p)} ]&\geq \Prob [\text{Exactly one error occurs in }[\Gamma]_{\mathcal{F}(p)} ] \\&= L_u p (1-p)^{L_u-1} \\&\geq L_u p (1-p)^{L_u}\\& = L_u p (\sum_{k=1}^{L_u} \binom{L_u}{k} (-p)^k+1) \\& = L_u p +L_u p \sum_{k=1}^{L_u} \binom{L_u}{k} (-p)^k  :=p_U.
\end{align*}
In total, we obtain the following chain of relations:
\begin{align*}
  & \frac{  L_e(L_e-1) p^2  }{(1-L_e p)} < L_u p +L_u p \sum_{k=1}^{L_u} \binom{L_u}{k} (-p)^k \hspace{1cm}\Leftarrow \\  & L_e(L_e-1) p^2  < L_u p +L_u p \sum_{k=1}^{L_u} \binom{L_u}{k} (-p)^k - L_uL_e p^2 -L_uL_e p^2 \sum_{k=1}^{L_u} \binom{L_u}{k} (-p)^k.
  \end{align*}
  Then, we collect the terms that are associated to $p^2$:
  \begin{align*}
       &\Big( L_e(L_e-1) + L_uL_e   +  L_u^2  \Big) p^2 \\& < L_u p +L_u p \sum_{k=2}^{L_u} \binom{L_u}{k} (-p)^k -L_uL_e p^2 \sum_{k=1}^{L_u} \binom{L_u}{k} (-p)^k  ,
    \end{align*}
and divide both sides by $p$:
    \begin{align*}
  &  \Big( L_e(L_e-1) + L_uL_e   +  L_u^2  \Big) p  \\& < L_u  +L_u  \sum_{k=2}^{L_u} \binom{L_u}{k} (-p)^k +L_uL_e  \sum_{k=1}^{L_u} \binom{L_u}{k} (-p)^{k+1}  .
\end{align*}
We note that the sum of the second and third term is always positive: \begin{align*}    &   L_u  \sum_{k=2}^{L_u} \binom{L_u}{k} (-p)^k +L_uL_e  \sum_{k=1}^{L_u} \binom{L_u}{k} (-p)^{k+1} \\ &  =L_uL_e p (1-(1-p)^{L_u}) +L_u (-1 + (1-p)^{L_u}+L_up)>0.
   \end{align*}
   Therefore, we choose the threshold to be 
   \[  p_{th}  = \frac{L_u}{ L_e(L_e-1) + L_uL_e   +  L_u^2 }  .\]
Then, for all $p\leq p_{th}$, the condition in Eq.~\eqref{eq-thresholdcondition} is fulfilled, and therefore the encoding in the error detection code combined with postselection performs better than the unencoded original circuit.
\end{proof}

\section{The [[4,2,2]] error detection code}
\label{sec-gottesman-code}

The [[4,2,2]] error detection code as introduced in Section~\ref{sec-error-correction-codes-examples} is the smallest known quantum error detection code that can detect any single-qubit error. It was proposed in \cite{VGW96,GBP97} and highlighted in \cite{Gottesman16} for fault-tolerance experiments within the reach of current and near-term quantum hardware. This code was originally designed for restricted connectivity layouts, where 5 qubits are arranged in a ring.

One particularity of this code is that it encodes two-qubit states directly.  Within most quantum error correction codes, a preparation  of some two-qubit state would be executed by a preparation of two one-qubit states, and an application of fault-tolerant quantum gates. However, the [[4,2,2]] code accomplishes this more directly, i.e. by applying comparitively few gates, and with only one additional physical auxilliary qubit.

This is especially relevant in the case of the maximally entangled state of two qubits, which can be prepared in the code space by a low-depth 4-qubit circuit. This state acts as a resource in many quantum computing and quantum communication protocols, such superdense coding \cite{BW92}, teleportation \cite{BBCJPW93} and gate teleportation \cite{GC99}. In our work in Chapter~\ref{chapter-fteacap}, for example, the preparation of maximally entangled states within a code space in order to utilize them for entanglement-assisted communication plays an important role.

Beyond its use for devices with a low qubit number, the [[4,2,2]] code can also be concatenated with other codes in order to provide further protection \cite{BFBD11,CT16}.

Recall that the code states of the [[4,2,2]] code are given by:
\[\ket{\overline{00}}= \frac{1}{\sqrt{2}} (\ket{0000}+\ket{1111}),\]
\[\ket{\overline{01}}= \frac{1}{\sqrt{2}} (\ket{1100}+\ket{0011}),\]
\[\ket{\overline{10}}= \frac{1}{\sqrt{2}} (\ket{1010}+\ket{0101}),\]
\[\ket{\overline{11}}= \frac{1}{\sqrt{2}} (\ket{0110}+\ket{1001}).\]
We will therefore consider circuits $\Gamma$ acting on 2 qubits, which output a probability distribution obtained from computational basis measurements, and $\Gamma_{\mathcal{C}}$ (where $\mathcal{C}$ denotes the [[4,2,2]] code) acting on 4 physical qubits (+1 auxilliary qubit), which outputs the postselected probability distribution. 

The preparation of these states has been demonstrated in \cite{TCCCG17,LGLFDBM17}, and circuits with higher depths have been implemented in this code on superconducting qubit platforms in \cite{OPH+16,Vuillot18,HF19,HLS18,WWJdRM18}. These experiments generally find that the implementation of a circuit in this code does indeed improve the accuracy of the computation for short circuits, serving as an indication that the techniques from fault-tolerance may eventually become useful. To our knowledge, this code has not been studied for trapped ion quantum computers beyond the analysis of the preparation of logical computational basis states in \cite{LGLFDBM17}, perhaps due to the focus on connectivity restrictions that do not apply to trapped ion qubits. Here, we attempt to study if an implementation in the [[4,2,2]] code can improve the life-time of the prepared states on a trapped ion quantum computer.

\subsection{Thresholds in the proposed gate set of \cite{Gottesman16}}

In order to demonstrate fault-tolerance for implementations in this code, we need to be able to both prepare states in this subspace and to perform gates in such a way that every single qubit error that leads to a logical error is detected. In \cite{Gottesman16}, Gottesman proposed state preparation circuits for three distinct two-qubit states in this code: $\ket{\overline{00}}$, $\ket{\overline{0+}}=\frac{1}{\sqrt{2}} (\ket{\overline{00}}+\ket{\overline{01}})$ and $\ket{\overline{\phi_+}}=\frac{1}{\sqrt{2}} (\ket{\overline{00}}+\ket{\overline{11}})$.
The encoded preparation circuits are given in Table~\ref{tab-ft-preps}.

The set of logical gates in this code that preserve the property that single-qubit errors cannot lead to a logical error is limited. Therefore, not all states can be prepared and interconverted within the code. For example, the logical Hadamard gate would in principle map between the states $\ket{\overline{00}}$ and $\ket{\overline{0+}}$. However, this gate is not known to have the desired property, and this is why a different circuit is proposed to prepare $\ket{\overline{0+}}$. In total, the set of states that we can prepare in this construction is given by the three states listed above, and all states that are obtained by performing one of the gates from the logical gate set on one of them, where the known such gates are listed in Eq.~\eqref{eq-ft-gates}.

\begin{table}[htbp] 
  \centering
  \caption{State preparation circuits for [[4,2,2]] Bacon-Shor error detection code as proposed in \cite{Gottesman16}.}
    \begin{tabular}{!{\color{gray}\vrule}l!{\color{gray}\vrule}l!{\color{gray}\vrule}}
       \arrayrulecolor{gray}\hline
        Logical state  & State preparation circuit \\ 
        \hline \hline
         $\ket{\overline{00}}$ &\hspace{1cm} \begin{minipage}{.5\textwidth} \vspace{0.5cm}
         \begin{tikzcd}
 \gate[style={draw, shape=semicircle, minimum size=0.5cm, xscale=0.7,yscale=0.7, inner sep=0pt, outer sep=0pt, shape border rotate=90}]{} &\qw &\qw& \targ{}& \qw & \ctrl{4} & \gate[wires=4,nwires=4][1cm]{U} & \gate[style={draw, shape=semicircle, minimum size=0.5cm,  xscale=0.7,yscale=0.7, inner sep=0pt, outer sep=0pt, shape border rotate=270}]{} \\
  \gate[style={draw, shape=semicircle, minimum size=0.5cm, xscale=0.7,yscale=0.7, inner sep=0pt, outer sep=0pt, shape border rotate=90}]{}&\gate{H} &\ctrl{1}& \ctrl{-1}& \qw & \qw & \qw & \gate[style={draw, shape=semicircle, minimum size=0.5cm,  xscale=0.7,yscale=0.7, inner sep=0pt, outer sep=0pt, shape border rotate=270}]{}\\
 \gate[style={draw, shape=semicircle, minimum size=0.5cm, xscale=0.7,yscale=0.7, inner sep=0pt, outer sep=0pt, shape border rotate=90}]{} &\qw& \targ{}  & \ctrl{1}  & \qw & \qw& \qw & \gate[style={draw, shape=semicircle, minimum size=0.5cm,  xscale=0.7,yscale=0.7, inner sep=0pt, outer sep=0pt, shape border rotate=270}]{}\\
   \gate[style={draw, shape=semicircle, minimum size=0.5cm, xscale=0.7,yscale=0.7, inner sep=0pt, outer sep=0pt, shape border rotate=90}]{}& \qw  &\qw& \targ{} & \ctrl{1} & \qw& \qw & \gate[style={draw, shape=semicircle, minimum size=0.5cm,  xscale=0.7,yscale=0.7, inner sep=0pt, outer sep=0pt, shape border rotate=270}]{} \\
     \gate[style={draw, shape=semicircle, minimum size=0.5cm, xscale=0.7,yscale=0.7, inner sep=0pt, outer sep=0pt, shape border rotate=90}]{}& \qw & \qw & \qw &  \targ{} & \targ{} &\gate[style={draw, shape=semicircle, minimum size=0.5cm,  xscale=0.7,yscale=0.7, inner sep=0pt, outer sep=0pt, shape border rotate=270}]{} \\
\end{tikzcd}
         \end{minipage}
         \\ \hline
         $\ket{\overline{0+}}$ & \hspace{1cm}
         \begin{minipage}{.5\textwidth}  \vspace{0.5cm}
\begin{tikzcd}
 \gate[style={draw, shape=semicircle, minimum size=0.5cm, xscale=0.7,yscale=0.7, inner sep=0pt, outer sep=0pt, shape border rotate=90}]{}&\gate{H} & \ctrl{1}  & \gate[wires=4,nwires=4][1cm]{U}  &\gate[style={draw, shape=semicircle, minimum size=0.5cm,  xscale=0.7,yscale=0.7, inner sep=0pt, outer sep=0pt, shape border rotate=270}]{}\\
 \gate[style={draw, shape=semicircle, minimum size=0.5cm, xscale=0.7,yscale=0.7, inner sep=0pt, outer sep=0pt, shape border rotate=90}]{} & \qw  & \targ{}  & \qw  &\gate[style={draw, shape=semicircle, minimum size=0.5cm,  xscale=0.7,yscale=0.7, inner sep=0pt, outer sep=0pt, shape border rotate=270}]{}\\
  \gate[style={draw, shape=semicircle, minimum size=0.5cm, xscale=0.7,yscale=0.7, inner sep=0pt, outer sep=0pt, shape border rotate=90}]{}&\gate{H} & \ctrl{1} & \qw & \gate[style={draw, shape=semicircle, minimum size=0.5cm,  xscale=0.7,yscale=0.7, inner sep=0pt, outer sep=0pt, shape border rotate=270}]{}\\
   \gate[style={draw, shape=semicircle, minimum size=0.5cm, xscale=0.7,yscale=0.7, inner sep=0pt, outer sep=0pt, shape border rotate=90}]{}& \qw  & \targ{} & \qw & \gate[style={draw, shape=semicircle, minimum size=0.5cm,  xscale=0.7,yscale=0.7, inner sep=0pt, outer sep=0pt, shape border rotate=270}]{}\\
\end{tikzcd}
    \end{minipage}
    \\\hline
          $\ket{\overline{\phi_+}}$ & \hspace{1cm}
          \begin{minipage}{.75\textwidth} \vspace{0.5cm}
 \begin{tikzcd}
 \gate[style={draw, shape=semicircle, minimum size=0.5cm, xscale=0.7,yscale=0.7, inner sep=0pt, outer sep=0pt, shape border rotate=90}]{}&\gate{H} &\qw& \ctrl{3}  & \gate[wires=4,nwires=4][1cm]{U} &\gate[style={draw, shape=semicircle, minimum size=0.5cm,  xscale=0.7,yscale=0.7, inner sep=0pt, outer sep=0pt, shape border rotate=270}]{} \\
 \gate[style={draw, shape=semicircle, minimum size=0.5cm, xscale=0.7,yscale=0.7, inner sep=0pt, outer sep=0pt, shape border rotate=90}]{} &\qw& \targ{}  & \qw  & \qw & \gate[style={draw, shape=semicircle, minimum size=0.5cm,  xscale=0.7,yscale=0.7, inner sep=0pt, outer sep=0pt, shape border rotate=270}]{}\\
  \gate[style={draw, shape=semicircle, minimum size=0.5cm, xscale=0.7,yscale=0.7, inner sep=0pt, outer sep=0pt, shape border rotate=90}]{}&\gate{H} &\ctrl{-1}& \qw & \qw&\gate[style={draw, shape=semicircle, minimum size=0.5cm,  xscale=0.7,yscale=0.7, inner sep=0pt, outer sep=0pt, shape border rotate=270}]{}\\
   \gate[style={draw, shape=semicircle, minimum size=0.5cm, xscale=0.7,yscale=0.7, inner sep=0pt, outer sep=0pt, shape border rotate=90}]{}& \qw  &\qw& \targ{} &  \qw&\gate[style={draw, shape=semicircle, minimum size=0.5cm,  xscale=0.7,yscale=0.7, inner sep=0pt, outer sep=0pt, shape border rotate=270}]{} \\
\end{tikzcd}
    \end{minipage}\\ \hline
    \end{tabular}
    \label{tab-ft-preps}
\end{table}

% \begin{tikzcd}
% \lstick{$\ket{0}$}&\gate{H} & \ctrl{1}  & \gate[wires=2][1cm]{U} & \qw \\
% \lstick{$\ket{0}$} & \qw  & \targ{}  & \qw & \qw \\
% \end{tikzcd}

% \begin{figure}[htbp]
%      \centering
%      \begin{subfigure}[b]{0.51\textwidth}
%          \centering
%          \includegraphics[width=\textwidth]{00-prep.png}
%          \caption{Preparation of the logical state $\ket{00}$, followed by some unitary from the fault-tolerant gate set described in Table~\ref{tab-ft-gates}.}
%          %\label{fig:y equals x}
%      \end{subfigure}
%      \hfill
%      \begin{subfigure}[b]{0.49\textwidth}
%          \centering
%          \includegraphics[width=\textwidth]{0p-prep.png}
%          \caption{Preparation of the logical state $\ket{0+}$}
%         % \label{fig:three sin x}
%      \end{subfigure}
%      \hfill
%      \begin{subfigure}[b]{0.49\textwidth}
%          \centering
%          \includegraphics[width=\textwidth]{maxent-prep.png}
%          \caption{Preparation of the logical state $\ket{\phi_+}$}
%         % \label{fig:y equals x}
%      \end{subfigure}
%         \caption{Fault-tolerant preparation circuits for logical two-qubit states.}
%         \label{fig-ft-preps}
% \end{figure}

\begin{lemma}
Let $\mathcal{C}$ denote the [[4,2,2]] code and let $\mathcal{F}(p)$ denote the i.i.d. Pauli noise model with $p\in [0,1]$. Let $\Gamma:\mathbbm{C}^4\rightarrow \mathbbm{C}^4$ be the circuit which prepares $\ket{\phi_+}$. Then, no single-qubit gate error in $[\Gamma_{\mathcal{C}}]_{\mathcal{F}(p)}$ causes a logical error.
\end{lemma}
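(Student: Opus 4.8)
The plan is to reduce the statement to a finite, mechanical case check by propagating a single Pauli fault to the end of $\Gamma_{\mathcal C}$ and asking whether the residual error survives postselection and alters the logical output. Two reductions make this tractable. First, since the output is obtained by a computational-basis measurement followed by the code's parity check, the only detection available on the data qubits is postselection on the $+1$ eigenvalue of $g_Z=Z_1Z_2Z_3Z_4$ (equivalently, on even total bit parity), which discards every error of odd $X$-weight; and errors of $Z$-type are invisible to a $Z$-basis measurement and hence never change the output distribution. Second, a logical error can only arise from a final error whose $X$-part is a nontrivial logical operator \emph{acting nontrivially on the prepared state}: here $\overline X_1=X_1X_3$ and $\overline X_2=X_1X_2$ do change the output, but the product $\overline X_1\overline X_2=X_2X_3$ satisfies $\overline X_1\overline X_2\ket{\overline{\phi_+}}=\ket{\overline{\phi_+}}$ (it merely swaps $\ket{\overline{00}}\leftrightarrow\ket{\overline{11}}$), so it is harmless on this particular state. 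Thus it suffices to show that no single fault produces, at the measurement, an undetected $\overline X_1$ or $\overline X_2$.

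Next I would list the fault locations of the $\ket{\overline{\phi_+}}$ circuit of Table~\ref{tab-ft-preps} according to the i.i.d.\ Pauli rules of Section~\ref{sec-intro-noise-and-errors}: a Pauli after each preparation, a Pauli after each Hadamard (on qubits $1$ and $3$), a two-qubit Pauli after each $\CNOT$, a Pauli inside the verification gate $U$, and a Pauli before each measurement. For every location and every inserted Pauli I would propagate the error to the output using the Clifford rules ($\CNOT$ sends $X$ on its control to $X$ on control and target and $Z$ on its target to $Z$ on target and control; the Hadamard exchanges $X\leftrightarrow Z$), recording only the $X$-support of the propagated error modulo the stabilizer group, since that is all that can affect the $Z$-basis statistics.

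The main obstacle, and the heart of the argument, is the error spreading at the two $\CNOT$s, which can turn a weight-one fault into a weight-two error that escapes the $Z$-parity check. The key structural observation that resolves it is that the only two-qubit gates act on the pairs $(1,4)$ and $(3,2)$, and no physical qubit is the control of more than one $\CNOT$; consequently a single fault can create $X$-support of weight at most two, and any such weight-two $X$-support must lie on one of these two pairs. But $X_1X_4$ and $X_2X_3$ are each equal, up to the stabilizer $g_X=X_1X_2X_3X_4$, to $\overline X_1\overline X_2$, which fixes $\ket{\overline{\phi_+}}$. Every weight-two $X$-error reachable from a single fault is therefore harmless, while every weight-one $X$-error has odd parity and is removed by postselection, and every $Z$-type contribution is invisible to the output; $Y$-faults split into an $X$-part handled as above and a $Z$-part that is irrelevant.

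It remains to dispatch the faults inside the verification gate $U$ and the preparation faults that feed into the Hadamards, which I would treat by the same propagation-and-parity bookkeeping, checking that each such single fault again yields only a detectable odd-weight $X$-error, a harmless $\overline X_1\overline X_2$-type error on the pairs $(1,4)$ or $(3,2)$, or a $Z$-type error. Assembling these finitely many cases shows that no single-qubit gate error in $[\Gamma_{\mathcal C}]_{\mathcal F(p)}$ both survives postselection and changes the logical output, which is the assertion of the lemma; I expect the bookkeeping for $U$ to be the most delicate part and would organize it by the $X$-support it can generate on the two $\CNOT$ pairs.
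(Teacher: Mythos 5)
Your proof is correct and follows essentially the same route as the paper's: an exhaustive propagation of each single Pauli fault through the circuit, combined with the observations that odd-weight $X$-errors are caught by the global parity postselection and that $Z$-type errors are invisible to the computational-basis measurement. Your additional structural observation --- that every weight-two $X$-error reachable from a single fault is supported on one of the $\CNOT$ pairs $(1,4)$ or $(3,2)$ and hence equals $\overline{X}_1\overline{X}_2$ up to the stabilizer $X_1X_2X_3X_4$, which fixes $\ket{\overline{\phi_+}}$ --- packages the paper's case-by-case verification more systematically and is a modest improvement in clarity over the paper's direct check on the state.
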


\begin{proof}
This claim can be verified by analyzing all possible cases of a single-qubit gate error. For illustration, we will give the details of this process here.

The maximally entangled state of two logical qubits is encoded as $\ket{\overline{\phi_+}}=\frac{1}{2}(\ket{0000}+\ket{0110}+\ket{1001}+\ket{1111})$ on the four physical qubits. This has even parity (i.e. will pass through the parity check for the postselection) and will be decoded into $\ket{00}$ with 50\% probability and $\ket{11}$ with 50\% probability, and therefore lead to the correct statistics.

Note that errors can spread in the following ways in the gate set used in the proposed circuits:
\begin{align*}   
&H X=ZH \\& SX=YS\\&  [\CNOT_{21},X_1 ] =0\\&  [\CNOT_{12},Z_1 ] =0\\& X_1X_2 \CNOT_{12}=\CNOT_{12} X_1\\& Z_1Z_2 \CNOT_{21}=\CNOT_{21} Z_1
 \end{align*}
 Here, we use $X_i/Z_i$ to indicate an $X/Z$ gate on the $i$-th qubit and $\CNOT_{ij}$ to
denote a $\CNOT$ gate where the $i$-th qubit acts as the control and the $j$-th qubit acts as the target.

We use Q1, Q2, Q3, Q4 to denote the four qubits in our circuit, where we assign the labels from top to bottom in the circuit diagram in Table~\ref{tab-ft-preps}. With this, we can go through all the cases of single-qubit $X,Y$ or $Z$-errors occuring, and show that they are either detected or do not change the decoded state:
 
\begin{itemize}
    \item Because of the connectivity, an $X$-error can only lead to one $X$-error or two $X$-errors, which can only happen on two qubits that are connected by CNOTs.
    \begin{itemize}
         \item One $X$-error after the preparation of Q1 is equivalent to a $Z$-error after the first Hadamard gate, leading to a change in relative phase that does not affect the decoded state. %detected by the parity check.
         \item An $X$-error on Q1 after the Hadamard gate on Q1 will spread via the $\CNOT$ gate between Q1 and Q4, which does not affect the decoded state.
         \item If an $X$-error occurs on Q4 after the preparation gate on Q4, it commutes with the $\CNOT$ gate on Q1 and Q4. This error changes the (global) parity of the state, and would thus be detected and rejected.
         \item If an $X$-error occurs on Q1 after the $\CNOT$ gate, before the measurement, this would change the global parity and would thus be detected. The same holds true for an $X$-error on Q5 before the measurement.
         \item Due to the circuit construction, the gates applied to Q2 and Q3 are the same, and in the same order, as the gates applied to Q4 and Q1. Thus, single gate errors on Q2 and Q3 give rise to the same cases and patterns, and would thus be detected or not affect the decoded state.
        %\item 2 $X$-errors can affect Q2 and Q3 together, not changing the state
        % \item 2 $X$-errors can affect Q1 and Q4 together, not changing the state. This is either the case if Q1 and Q4 are connected, or if Q1 and Q5 are connected and Q4 and Q5 are swapped (since no additional error can happen during the SWAP, and thus the $X$ is transferred to Q4; also $\SWAP X_5=X_4 \SWAP$)
        % \item Q4 and Q5 are connected by 3 CNOTs implementing the swap. This can either lead to one $X$-error (detected on Q4, not cared about on Q5) or two $X$-errors (detected).
    \end{itemize}
    \item $Z$-errors affect the relative phase, which does not affect the decoding here. However, $Z$-errors are transformed into bit-flip errors by the Hadamard gate $H$. This situation is equivalent to an $X$-error happening after $H$, and thus covered in the above point relating to all possible $X$-errors.
    \item For $Y$-errors, we have $H XZ=ZH Z=ZXH$. For $\CNOT$, depending on the direction, either $X$ or $Z$ commute while the other one "spreads" to both qubits. For example, $\CNOT_{12} X_1Z_1 = X_1 Z_1 X_2 \CNOT_{12} $. If this occurs on Q2 and Q3, the relative phase changes, and the parities do not change (i.e. it is postselected and decoded correctly). If it occurs on Q1, it will spread an $X$-error (only) to Q4 and be covered by the cases before. %If it occurs somewhere before, during or after the swap, it can only be detected ($XZ$ on Q4 or on Q4 and Q5) or not change the decoded state.
\end{itemize}

Thus, all single-qubit gate errors are detected or do not affect the decoded state.

It is also notable that the same can be said if the circuit is implemented on a 5-qubit ring, as would be necessary for a device that could prepare both $\ket{\overline{\phi_+}}$ and $\ket{\overline{00}}$. Then, the $\CNOT$ gate between Q1 and Q4 would instead be implemented by a $\CNOT$ gate between Q1 and Q5, followed by a $\SWAP$ gate on Q5 and Q4. A $\SWAP$ gate can be implemented by three $\CNOT$ gates, and this circuit can lead to one $X$-error (detected on Q4, irrelevant on Q5) or two $X$-errors, i.e. on Q4 and Q5, which would be detected.
\end{proof}

\begin{corollary}
Let $\mathcal{C}$ denote the [[4,2,2]] code and let $\mathcal{F}(p)$ denote the i.i.d. Pauli noise model with $p\in [0,1]$. Let $\Gamma:\mathbbm{C}^{4}\rightarrow \mathbbm{C}^4$ be a circuit that consists of the proposed state preparations. Like \cite{Gottesman16}, let us assume that qubits are initialized only when they are needed, and do not have wait locations before the first gate is applied to them.
\begin{itemize}
    \item For $\ket{\overline{\phi_+}}$, $\Gamma$ has $L_u=6$ locations, and $\Gamma_{\mathcal{C}}$ has  $L_e=12$ locations, and therefore, $[\Gamma_{\mathcal{C}}]_{\mathcal{F}(p)}$ demonstrates fault-tolerance for $p<  1/40 =0.025$.
    \item For $\ket{\overline{00}}$, the location counts are $L_u=6$ and $L_e=22$, and therefore, the circuit demonstrates fault-tolerance for $p<2/283=0.00706714$.
    \item For $\ket{\overline{0+}}$, the location counts are $L_u=5$ and $L_e=12$, and therefore, the circuit demonstrates fault-tolerance for $p<  5/217=0.0230415$.
\end{itemize}
\end{corollary}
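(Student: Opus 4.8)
The plan is to apply the threshold criterion of Lemma~\ref{thm-thresholdcriterion} separately to each of the three preparation circuits listed in Table~\ref{tab-ft-preps}. That lemma reduces the fault-tolerance demonstration criterion of Definition~\ref{def-ft-criterion} to the single inequality $p < p_{th}$ with
\[
p_{th} = \frac{L_u}{L_e(L_e-1) + L_u L_e + L_u^2},
\]
\emph{provided} that no single-qubit gate error in the encoded circuit can produce a logical error. Hence the entire corollary reduces to two sub-tasks for each target state: verifying this single-error hypothesis, and correctly counting the location numbers $L_u$ (unencoded) and $L_e$ (encoded). Once both are in hand, the thresholds follow by substitution.

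For the single-error hypothesis, the preceding lemma already establishes it for $\ket{\overline{\phi_+}}$. I would repeat the same exhaustive case analysis for $\ket{\overline{00}}$ and $\ket{\overline{0+}}$, using the error-propagation rules $HX = ZH$, $[\CNOT_{21},X_1]=0$, $[\CNOT_{12},Z_1]=0$, $X_1 X_2 \CNOT_{12} = \CNOT_{12} X_1$, and $Z_1 Z_2 \CNOT_{21} = \CNOT_{21} Z_1$. Concretely, I would insert a single $X$, $Y$, or $Z$ fault at each location of the relevant circuit in Table~\ref{tab-ft-preps}, commute it to the end through the remaining gates, and check that the resulting Pauli either flips the global parity (so it is caught and discarded by postselection) or acts only as a relative phase / logical stabilizer that leaves the decoded output distribution unchanged.

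The second sub-task is bookkeeping, carried out under the stated convention that qubits carry no idle locations before their first gate. For the encoded circuits the counts are read directly off Table~\ref{tab-ft-preps}, giving $L_e = 12$, $22$, and $12$ for $\ket{\overline{\phi_+}}$, $\ket{\overline{00}}$, and $\ket{\overline{0+}}$ respectively. The unencoded circuits are then: prepare--$H$--$\CNOT$--measure for $\ket{\phi_+}$ (so $L_u = 6$), prepare--$H$--measure for $\ket{0+}$ (so $L_u = 5$), and bare preparation-and-measurement for $\ket{00}$ (its only locations being the preparations and measurements). Substituting into the boxed formula yields $6/(132+72+36) = 1/40$, $5/(132+60+25) = 5/217$, and the corresponding value $2/283$ for $\ket{\overline{00}}$.

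The main obstacle I expect is the single-error case analysis for the larger $\ket{\overline{00}}$ preparation, whose Toffoli-type structure creates more error-propagation branches that must each be ruled out as either detected or harmless; by contrast the $\ket{\overline{0+}}$ circuit is short enough to dispatch quickly. The location counting itself is routine, but it must be done with care to respect the no-idle-qubit convention, since the unencoded count $L_u$ enters the numerator of $p_{th}$ directly and the reported thresholds are sensitive to it.
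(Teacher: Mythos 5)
Your proposal is correct and follows essentially the same route as the paper: read off the location counts from the circuits in Table~\ref{tab-ft-preps} under the no-idle-qubit convention and substitute them into the threshold formula of Lemma~\ref{thm-thresholdcriterion}, with the single-error hypothesis for $\ket{\overline{\phi_+}}$ supplied by the preceding lemma (your remark that it must also be checked for the other two preparations is a point the paper's proof glosses over). Note that your count of the unencoded $\ket{00}$ circuit as bare preparation-and-measurement gives $L_u=4$, which is what actually reproduces the stated threshold $2/283$ (via $4/(462+88+16)$) and matches the paper's figure caption, so the $L_u=6$ in the corollary statement is an internal inconsistency that your reasoning correctly resolves.
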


\begin{proof}
We go through the location counting for the preparation of the maximally entangled state, illustrated also in Figure~\ref{fig-maxent-counting}.

The unencoded circuit consists of 2 preparations, 1 single-qubit gate, 1 two-qubit gate and 2 measurements, leading to a total location count of $L_u=6$.

The locations of the encoded circuit are given by 4 preparations, 2 single-qubit gates, 2 two-qubit gates
and 4 measurements, i.e. a total of $L_e=12$ locations.

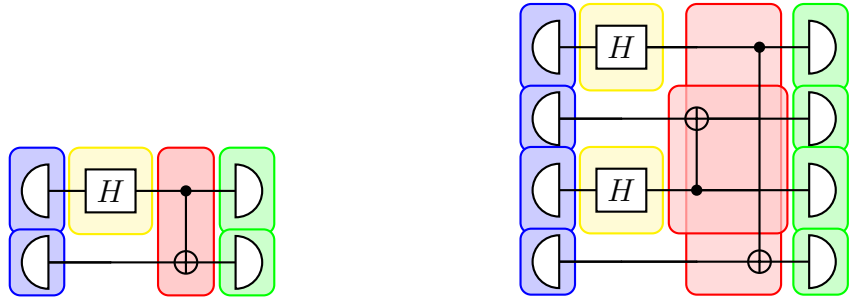
\begin{figure}[htbp]
     \centering
     \begin{subfigure}[b]{0.49\textwidth}
         \centering
         
\begin{tikzcd}
 \gate[style={draw, shape=semicircle, minimum size=0.5cm, xscale=0.7,yscale=0.7, inner sep=0pt, outer sep=0pt, shape border rotate=90}]{}\gategroup[wires=1,steps=1,style={draw=blue,rounded
corners,fill=blue!20, inner
xsep=2pt},background]{} & \gate{H}\gategroup[wires=1,steps=1,style={draw=yellow,rounded corners,fill=yellow!20, inner xsep=2pt},background]{} & \ctrl{1}\gategroup[wires=2,steps=1,style={draw=red,rounded
corners,fill=red!20, inner
xsep=2pt},background]{}  &\gate[style={draw, shape=semicircle, minimum size=0.5cm,  xscale=0.7,yscale=0.7, inner sep=0pt, outer sep=0pt, shape border rotate=270}]{}\gategroup[wires=1,steps=1,style={draw=green,rounded corners,fill=green!20, inner xsep=2pt},background]{}  \\
 \gate[style={draw, shape=semicircle, minimum size=0.5cm, xscale=0.7,yscale=0.7, inner sep=0pt, outer sep=0pt, shape border rotate=90}]{}\gategroup[wires=1,steps=1,style={draw=blue,rounded
corners,fill=blue!20, inner
xsep=2pt},background]{} & \qw  & \targ{} &\gate[style={draw, shape=semicircle, minimum size=0.5cm,  xscale=0.7,yscale=0.7, inner sep=0pt, outer sep=0pt, shape border rotate=270}]{}\gategroup[wires=1,steps=1,style={draw=green,rounded corners,fill=green!20, inner xsep=2pt},background]{}
\end{tikzcd}
         \caption{Preparation of $ \ket{\phi_+}$. The total number of circuit locations is $L_u=6$.}
         %\label{fig:y equals x}
     \end{subfigure}
     \hfill
     \begin{subfigure}[b]{0.49\textwidth}
         \centering
 \begin{tikzcd}
 \gate[style={draw, shape=semicircle, minimum size=0.5cm, xscale=0.7,yscale=0.7, inner sep=0pt, outer sep=0pt, shape border rotate=90}]{}\gategroup[wires=1,steps=1,style={draw=blue,rounded
corners,fill=blue!20, inner
xsep=2pt},background]{} &\gate{H}\gategroup[wires=1,steps=1,style={draw=yellow,rounded corners,fill=yellow!20, inner xsep=2pt},background]{} &\qw& \ctrl{3}\gategroup[wires=4,steps=1,style={draw=red,rounded
corners,fill=red!20, fill opacity=0.7,inner
xsep=2pt,xscale=1.7,xshift=-0.2cm},background]{}  &\gate[style={draw, shape=semicircle, minimum size=0.5cm,  xscale=0.7,yscale=0.7, inner sep=0pt, outer sep=0pt, shape border rotate=270}]{}\gategroup[wires=1,steps=1,style={draw=green,rounded corners,fill=green!20, inner xsep=2pt},background]{}  \\
 \gate[style={draw, shape=semicircle, minimum size=0.5cm, xscale=0.7,yscale=0.7, inner sep=0pt, outer sep=0pt, shape border rotate=90}]{}\gategroup[wires=1,steps=1,style={draw=blue,rounded
corners,fill=blue!20, inner
xsep=2pt},background]{} &\qw& \targ{}\gategroup[wires=2,steps=2,style={draw=red,rounded
corners,fill=red!20,fill opacity=0.7, inner
xsep=2pt},background]{}   & \qw & \gate[style={draw, shape=semicircle, minimum size=0.5cm,  xscale=0.7,yscale=0.7, inner sep=0pt, outer sep=0pt, shape border rotate=270}]{}\gategroup[wires=1,steps=1,style={draw=green,rounded corners,fill=green!20, inner xsep=2pt},background]{} \\
  \gate[style={draw, shape=semicircle, minimum size=0.5cm, xscale=0.7,yscale=0.7, inner sep=0pt, outer sep=0pt, shape border rotate=90}]{}\gategroup[wires=1,steps=1,style={draw=blue,rounded
corners,fill=blue!20, inner
xsep=2pt},background]{} &\gate{H}\gategroup[wires=1,steps=1,style={draw=yellow,rounded corners,fill=yellow!20, inner xsep=2pt},background]{} &\ctrl{-1}& \qw & \gate[style={draw, shape=semicircle, minimum size=0.5cm,  xscale=0.7,yscale=0.7, inner sep=0pt, outer sep=0pt, shape border rotate=270}]{}\gategroup[wires=1,steps=1,style={draw=green,rounded corners,fill=green!20, inner xsep=2pt},background]{} \\
   \gate[style={draw, shape=semicircle, minimum size=0.5cm, xscale=0.7,yscale=0.7, inner sep=0pt, outer sep=0pt, shape border rotate=90}]{}\gategroup[wires=1,steps=1,style={draw=blue,rounded
corners,fill=blue!20, inner
xsep=2pt},background]{} & \qw    &  \qw&   \targ{}&\gate[style={draw, shape=semicircle, minimum size=0.5cm,  xscale=0.7,yscale=0.7, inner sep=0pt, outer sep=0pt, shape border rotate=270}]{}\gategroup[wires=1,steps=1,style={draw=green,rounded corners,fill=green!20, inner xsep=2pt},background]{}  
\end{tikzcd}
         \caption{Preparation of $\ket{\overline{\phi_+}}$. The total number of circuit locations is $L_e=12$.}
        % \label{fig:three sin x}
     \end{subfigure}
        \caption{Circuits for preparing $\ket{\phi_+}$ and $\ket{\overline{\phi_+}}$. Preparation gates are highlighted in blue, measurement gates are highlighted in green, two qubit gates are highlighted in red and single-qubit gates are highlighted in yellow.}
        \label{fig-maxent-counting}
\end{figure}

Inserting this into the criterion for demonstrating fault-tolerance \ref{thm-thresholdcriterion}, we get a threshold of 
\[ p<  \frac{6}{36+6\times 12 + 11 \times 12}= \frac{6}{6\times(6+12+22)} =1/40 .\]
The location counts in the other two state preparation circuits, as illustrated in Figure~\ref{fig-state-prep-counts-2} and \ref{fig-state-prep-counts-3}, analogously lead to their respective thresholds.

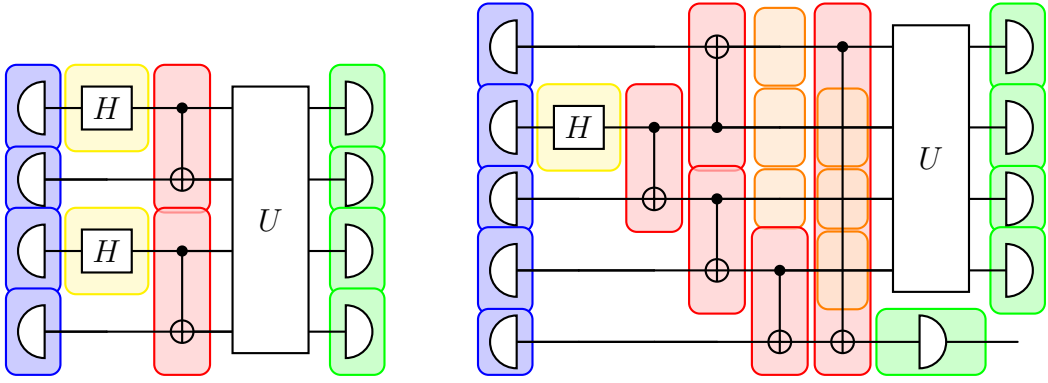
\begin{figure}[htbp]
     \centering
     \begin{subfigure}[b]{0.49\textwidth}
         \centering

\begin{tikzcd}
 \gate[style={draw, shape=semicircle, minimum size=0.5cm, xscale=0.7,yscale=0.7, inner sep=0pt, outer sep=0pt, shape border rotate=90}]{}\gategroup[wires=1,steps=1,style={draw=blue,rounded
corners,fill=blue!20, inner
xsep=2pt},background]{}&\gate{H}\gategroup[wires=1,steps=1,style={draw=yellow,rounded corners,fill=yellow!20, inner xsep=2pt},background]{} & \ctrl{1}\gategroup[wires=2,steps=1,style={draw=red,rounded
corners,fill=red!20, fill opacity=0.7, inner
xsep=2pt},background]{}  & \gate[wires=4,nwires=4][1cm]{U}  &\gate[style={draw, shape=semicircle, minimum size=0.5cm,  xscale=0.7,yscale=0.7, inner sep=0pt, outer sep=0pt, shape border rotate=270}]{}\gategroup[wires=1,steps=1,style={draw=green,rounded corners,fill=green!20, inner xsep=2pt},background]{} \\
 \gate[style={draw, shape=semicircle, minimum size=0.5cm, xscale=0.7,yscale=0.7, inner sep=0pt, outer sep=0pt, shape border rotate=90}]{}\gategroup[wires=1,steps=1,style={draw=blue,rounded
corners,fill=blue!20, inner
xsep=2pt},background]{} & \qw  & \targ{}  & \qw  &\gate[style={draw, shape=semicircle, minimum size=0.5cm,  xscale=0.7,yscale=0.7, inner sep=0pt, outer sep=0pt, shape border rotate=270}]{}\gategroup[wires=1,steps=1,style={draw=green,rounded corners,fill=green!20, inner xsep=2pt},background]{} \\
  \gate[style={draw, shape=semicircle, minimum size=0.5cm, xscale=0.7,yscale=0.7, inner sep=0pt, outer sep=0pt, shape border rotate=90}]{}\gategroup[wires=1,steps=1,style={draw=blue,rounded
corners,fill=blue!20, inner
xsep=2pt},background]{} &\gate{H}\gategroup[wires=1,steps=1,style={draw=yellow,rounded corners,fill=yellow!20, inner xsep=2pt},background]{} & \ctrl{1}\gategroup[wires=2,steps=1,style={draw=red,rounded
corners,fill=red!20, fill opacity=0.7, inner
xsep=2pt},background]{} & \qw & \gate[style={draw, shape=semicircle, minimum size=0.5cm,  xscale=0.7,yscale=0.7, inner sep=0pt, outer sep=0pt, shape border rotate=270}]{}\gategroup[wires=1,steps=1,style={draw=green,rounded corners,fill=green!20, inner xsep=2pt},background]{} \\
   \gate[style={draw, shape=semicircle, minimum size=0.5cm, xscale=0.7,yscale=0.7, inner sep=0pt, outer sep=0pt, shape border rotate=90}]{}\gategroup[wires=1,steps=1,style={draw=blue,rounded
corners,fill=blue!20, inner
xsep=2pt},background]{}& \qw  & \targ{} & \qw & \gate[style={draw, shape=semicircle, minimum size=0.5cm,  xscale=0.7,yscale=0.7, inner sep=0pt, outer sep=0pt, shape border rotate=270}]{}\gategroup[wires=1,steps=1,style={draw=green,rounded corners,fill=green!20, inner xsep=2pt},background]{} 
\end{tikzcd}
         \caption{Preparation of $\ket{\overline{0+}}$. This circuit has $L_e=12$ locations, whereas the unencoded circuit has $L_u=5$ locations.}
         \label{fig-state-prep-counts-2}
     \end{subfigure}
     \hfill
     \begin{subfigure}[b]{0.49\textwidth}
         \centering
\begin{tikzcd}
 \gate[style={draw, shape=semicircle, minimum size=0.5cm, xscale=0.7,yscale=0.7, inner sep=0pt, outer sep=0pt, shape border rotate=90}]{}\gategroup[wires=1,steps=1,style={draw=blue,rounded
corners,fill=blue!20, inner
xsep=2pt},background]{} &\qw &\qw& \targ{}\gategroup[wires=2,steps=1,style={draw=red,rounded
corners,fill=red!20, fill opacity=0.7, inner
xsep=2pt},background]{}& \qw\gategroup[wires=1,steps=1,style={draw=orange,rounded
corners,fill=orange!20, xscale=0.9, yscale=0.9, fill opacity=0.7, inner
xsep=2pt},background]{} & \ctrl{4}\gategroup[wires=5,steps=1,style={draw=red,rounded
corners,fill=red!20, fill opacity=0.7, inner
xsep=2pt},background]{} & \gate[wires=4,nwires=4][1cm]{U} & \gate[style={draw, shape=semicircle, minimum size=0.5cm,  xscale=0.7,yscale=0.7, inner sep=0pt, outer sep=0pt, shape border rotate=270}]{}\gategroup[wires=1,steps=1,style={draw=green,rounded corners,fill=green!20, inner xsep=2pt},background]{}  \\
  \gate[style={draw, shape=semicircle, minimum size=0.5cm, xscale=0.7,yscale=0.7, inner sep=0pt, outer sep=0pt, shape border rotate=90}]{}\gategroup[wires=1,steps=1,style={draw=blue,rounded
corners,fill=blue!20, inner
xsep=2pt},background]{} &\gate{H}\gategroup[wires=1,steps=1,style={draw=yellow,rounded corners,fill=yellow!20, inner xsep=2pt},background]{} &\ctrl{1}\gategroup[wires=2,steps=1,style={draw=red,rounded
corners,fill=red!20, fill opacity=0.7, inner
xsep=2pt},background]{}& \ctrl{-1}& \qw\gategroup[wires=1,steps=1,style={draw=orange,rounded
corners,fill=orange!20, xscale=0.9, yscale=0.9, fill opacity=0.7, inner
xsep=2pt},background]{} & \qw\gategroup[wires=1,steps=1,style={draw=orange,rounded
corners,fill=orange!20, xscale=0.9, yscale=0.9, fill opacity=0.7, inner
xsep=2pt},background]{} & \qw & \gate[style={draw, shape=semicircle, minimum size=0.5cm,  xscale=0.7,yscale=0.7, inner sep=0pt, outer sep=0pt, shape border rotate=270}]{}\gategroup[wires=1,steps=1,style={draw=green,rounded corners,fill=green!20, inner xsep=2pt},background]{} \\
 \gate[style={draw, shape=semicircle, minimum size=0.5cm, xscale=0.7,yscale=0.7, inner sep=0pt, outer sep=0pt, shape border rotate=90}]{}\gategroup[wires=1,steps=1,style={draw=blue,rounded
corners,fill=blue!20, inner
xsep=2pt},background]{} &\qw& \targ{}  & \ctrl{1}\gategroup[wires=2,steps=1,style={draw=red,rounded
corners,fill=red!20, fill opacity=0.7, inner
xsep=2pt},background]{}   & \qw\gategroup[wires=1,steps=1,style={draw=orange,rounded
corners,fill=orange!20, xscale=0.9, yscale=0.9, fill opacity=0.7, inner
xsep=2pt},background]{}&\qw\gategroup[wires=1,steps=1,style={draw=orange,rounded
corners,fill=orange!20, xscale=0.9, yscale=0.9, fill opacity=0.7, inner
xsep=2pt},background]{} &\qw& \gate[style={draw, shape=semicircle, minimum size=0.5cm,  xscale=0.7,yscale=0.7, inner sep=0pt, outer sep=0pt, shape border rotate=270}]{}\gategroup[wires=1,steps=1,style={draw=green,rounded corners,fill=green!20, inner xsep=2pt},background]{} \\
   \gate[style={draw, shape=semicircle, minimum size=0.5cm, xscale=0.7,yscale=0.7, inner sep=0pt, outer sep=0pt, shape border rotate=90}]{}\gategroup[wires=1,steps=1,style={draw=blue,rounded
corners,fill=blue!20, inner
xsep=2pt},background]{} & \qw  &\qw& \targ{} & \ctrl{1}\gategroup[wires=2,steps=1,style={draw=red,rounded
corners,fill=red!20, fill opacity=0.7, inner
xsep=2pt},background]{} & \qw\gategroup[wires=1,steps=1,style={draw=orange,rounded
corners,fill=orange!20, xscale=0.9, yscale=0.9, fill opacity=0.7, inner
xsep=2pt},background]{} & \qw & \gate[style={draw, shape=semicircle, minimum size=0.5cm,  xscale=0.7,yscale=0.7, inner sep=0pt, outer sep=0pt, shape border rotate=270}]{}\gategroup[wires=1,steps=1,style={draw=green,rounded corners,fill=green!20, inner xsep=2pt},background]{}  \\
     \gate[style={draw, shape=semicircle, minimum size=0.5cm, xscale=0.7,yscale=0.7, inner sep=0pt, outer sep=0pt, shape border rotate=90}]{}\gategroup[wires=1,steps=1,style={draw=blue,rounded
corners,fill=blue!20, inner
xsep=2pt},background]{} & \qw & \qw & \qw &  \targ{} & \targ{} &\gate[style={draw, shape=semicircle, minimum size=0.5cm,  xscale=0.7,yscale=0.7, inner sep=0pt, outer sep=0pt, shape border rotate=270}]{}\gategroup[wires=1,steps=1,style={draw=green,rounded corners,fill=green!20, inner xsep=2pt},background]{}  &
\end{tikzcd}

         \caption{Preparation of $\ket{\overline{00}}$. This circuit has $L_e=22$ locations, whereas the unencoded circuit has $L_u=4$ locations.}
         \label{fig-state-prep-counts-3}
     \end{subfigure}
        \caption{Preparation circuits for $\ket{\overline{0+}}$ and $\ket{\overline{00}}$, using the same color indicators as in Figure~\ref{fig-maxent-counting}, with the addition of orange markers, which we use to indicate a wait location.}
     %   \label{fig-maxent-counting}
\end{figure}
\end{proof}

\begin{remark}
    Like \cite{Gottesman16}, we assume here that qubits are only initialized when they are needed, and do not need wait locations before. If we do not make this assumption, and also count the additional wait locations, we obtain slightly lower thresholds. %assuming that all qubits are intialized at the same time, we obtain: $L_u=7, L_e=14$, leading to $p< 1/47=0.0212766$.
    We could also assume a different error model where fault-locations may be counted differently. In particular, the measurement gate is commonly known to take the longest time. We could, for the preparation above, also assume that the measurements are performed immediately after the last gate has been performed to alleviate additional wait locations. However, in our later investigations, where we may apply logical gates or wait for a certain amount of time steps, the qubits will always be measured at the same time, and we therefore use this assumption in our counting.
\end{remark}

%The set of gates that can be performed fault-tolerantly on states in the code space is comprised of 6 distinct gates, including logical $X$-gate and $Z$-gate on each of the logical qubits. 

The following logical gates are proposed in \cite{Gottesman16} and \cite{CCNH21}:
\begin{equation} \label{eq-ft-gates}\begin{split}
    &\overline{X}_1=X_1 X_3 \\& \overline{X}_2=X_1 X_2 \\ &\overline{Z}_1=Z_1 Z_2 \\ &\overline{Z}_2=X_1 Z_3 \\ &\overline{\SWAP}_{1,2}\circ \overline{H}_1 \overline{H}_2 =H_1H_2H_3 H_4 \\ &\overline{Z}_1 \overline{Z}_2 \circ \overline{\text{CZ}}_{1,2} =S_1S_2S_3 S_4 
\end{split}\end{equation}
%(The fact that these circuits do indeed perform the proposed logical gates on the logical states can easily be verified.)

%The circuit diagrams of the proposed gates are given in Table~\ref{tab-ft-gates} and \ref{tab-ft-gates-2}.

% by brute-force calculation.

% \begin{lemma}
% These encoded gates do indeed perform the proposed logical gates.
% \end{lemma}

% \begin{proof}
%     Brute force calculation: go through all basis elements and check that they map as they should between the logical states.
% \end{proof}

\begin{lemma}
Let $\mathcal{C}$ denote the [[4,2,2]] code and let $\mathcal{F}(p)$ denote the i.i.d. Pauli noise model with $p\in [0,1]$. Let $\Gamma:\mathbbm{C}^{4}\rightarrow \mathbbm{C}^4$ be a circuit that consists of the proposed state preparations and the proposed logical gates. Then, no single-qubit gate error in $[\Gamma_{\mathcal{C}}]_{\mathcal{F}(p)}$ causes a logical error.
\end{lemma}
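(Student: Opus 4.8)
The plan is to prove this by a case analysis that is analogous to the one carried out for the maximally entangled state preparation, but now extended to include the proposed logical gates. The key observation is that a single-qubit gate error produces a single-qubit Pauli error on one of the four physical qubits, which then propagates through the remaining gates of the circuit. My task reduces to showing that, for every way a single physical Pauli error can arise and spread, the resulting final error is either (i) detected by the global parity check (i.e.\ it changes the parity eigenvalue of $X_1X_2X_3X_4$ or $Z_1Z_2Z_3Z_4$ and the run is discarded), or (ii) a logical stabilizer/phase operation that leaves the decoded output distribution unchanged.

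First I would recall the propagation identities already collected in the preceding lemma, namely $HX=ZH$, $SX=YS$, $[\CNOT_{21},X_1]=0$, $[\CNOT_{12},Z_1]=0$, $X_1X_2\CNOT_{12}=\CNOT_{12}X_1$ and $Z_1Z_2\CNOT_{21}=\CNOT_{21}Z_1$, and supplement them with the propagation rules for the transversal logical gates in Eq.~\eqref{eq-ft-gates}. The crucial structural fact is that all the proposed logical gates are transversal: $\overline{H}$ is implemented by $H^{\otimes 4}$, the logical phase-type gate by $S^{\otimes 4}$, and the logical Pauli gates by weight-two physical Paulis. Because these are transversal, a single physical error before such a gate maps to a single physical error after it (possibly with $X\leftrightarrow Z$ or $X\to Y$ relabelling), so transversal gates never turn one physical error into a genuinely uncorrectable two-qubit pattern within a single code block. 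This is exactly the property that makes the weight of an error invariant under the gadget, which is what we need.

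The key steps in order would be: (1) establish that the state preparations already satisfy the no-logical-error property, invoking the previous lemma directly for $\ket{\overline{\phi_+}}$ and noting the identical argument applies to $\ket{\overline{00}}$ and $\ket{\overline{0+}}$; (2) show that each transversal logical gate preserves the property that a single incoming physical Pauli error remains a single physical Pauli error (with at most a type change), using the commutation relations above; (3) verify that a physical error introduced \emph{at} a logical-gate location is itself a single physical Pauli, which by the same transversality argument propagates to a single physical error at the measurement; and (4) conclude that any single physical Pauli arriving at the final measurement either flips the global parity — hence is detected and postselected away — or acts as a phase/stabilizer that does not alter the computational-basis output statistics. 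Combining (1)–(4), no single-qubit gate error anywhere in $[\Gamma_{\mathcal{C}}]_{\mathcal{F}(p)}$ can cause an undetected logical error.

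The main obstacle I expect is step (3) together with the boundary interactions between a state-preparation gadget and a subsequent logical-gate gadget: an error that was ``harmless'' (a phase or a parity-changing error that would be detected) at the end of a preparation might, after propagation through a transversal gate, land on a different physical qubit or change type, and I must check that it remains either detectable or harmless in its new form rather than becoming a logical operation on the code. Concretely, I would tabulate, for each error type $X$, $Y$, $Z$ on each of the four physical qubits and for each insertion point, the resulting final physical Pauli, and confirm it lies outside the normalizer cosets that implement nontrivial logical operations while passing the parity check. Since all proposed gates are transversal and single-block, weight is preserved, so no single physical error can reach the weight-two support required to realize an undetected logical $\overline{X}$ or $\overline{Z}$; this is the crux that makes the exhaustive check close.
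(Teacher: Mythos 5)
Your proposal is correct and follows essentially the same route as the paper, whose entire proof is the one-line observation that the transformation rules from the preceding lemma still hold and that the proposed logical gates consist only of single-qubit physical gates, so a single fault can never spread to an undetectable weight-two error during the logical-gate portion. Your more detailed elaboration (invoking the previous lemma for the preparations, then using transversality to argue weight preservation and concluding via the parity check) is a faithful expansion of that same argument.
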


\begin{proof}
This follows because all transformation rules from before hold, and the encoded gates only contain single-qubit gates.
\end{proof}

% \begin{proof}
%     Under the assumption that only one error occurs during the circuit run, we can assume that the input state into these gates is in the code space and has not been subject to a single error yet. Then, any single error occuring during a gate would either change the parity or not affect the overall state. All transformation rules from before hold. Note that $SX=YS$, and $HX=ZH$.
% \end{proof}

Effectively, within the [[4,2,2]] code, we can thus prepare any state that can be obtained from applying a combination of gates from the logical gate set % set described in Table~\ref{tab-ft-gates} and Table~\ref{tab-ft-gates-2}
to one of the three states listed in Table~\ref{tab-ft-preps}, with the threshold decreasing as a function of the number of gates that are applied. Then, if the probability of single-qubit error is below this threshold, we expect the circuit implemented in the [[4,2,2]] code to fulfill the criterion for demonstrating fault-tolerance in Definition~\ref{def-ft-criterion}.

% \begin{lemma}
% For any single-qubit gate error probability $p\leq p_{th}$, where $p_{th}=10^{-6}$, the circuit preparing $\ket{\overline{\phi_+}}$ (4 physical qubits encoding 2 logical qubits) and application of $T$ gates from the fault-tolerant gate set outperforms the unprotected circuit equivalent. 
% \end{lemma}

\begin{corollary}
Let $\mathcal{C}$ denote the [[4,2,2]] code and let $\mathcal{F}(p)$ denote the i.i.d. Pauli noise model with $p\in [0,1]$. Let $\Gamma:\mathbbm{C}^{4}\rightarrow \mathbbm{C}^4$ be a circuit that consists of the proposed state preparations and $T$ proposed logical gates.

Then, for any gate sequence length $T$, $[\Gamma_{\mathcal{C}}]_{\mathcal{F}(p)}$ demonstrates fault-tolerance for the following thresholds:

%Let us assume that qubits are initialized only when they are needed, and do not have wait locations before the first gate is applied to them. For $\ket{\overline{\phi_+}}$, the location counts are: $L_u=6$, $L_e=12$, and therefore, it is fault-tolerant for $p<  1/40 =0.025$. For $\ket{\overline{00}}$, the location counts are: $L_u=6$, $L_e=18$, and therefore, it is fault-tolerant for $p<2/283=0.00706714$. For $\ket{\overline{0+}}$, the location counts are: $L_u=5$, $L_e=12$, and therefore, it is fault-tolerant for $p<  5/217=0.0230415$.
\begin{itemize}
    \item For the initial state $\ket{\overline{\phi_+}}$:
\[p < \frac{1}{40+14T}.\]
\item For the initial state $\ket{\overline{0+}}$:
\[p <  \frac{5+2T}{ 217+156T+28T^2 }  .\]
\item For the initial state $\ket{\overline{00}}$:
\[p <  \frac{2+T}{ 283+124T+14T^2 } . \]
\end{itemize}

\end{corollary}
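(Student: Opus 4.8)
The plan is to apply the threshold criterion of Lemma~\ref{thm-thresholdcriterion} directly, tracking how the unencoded and encoded location counts $L_u$ and $L_e$ grow with the number $T$ of logical gates. The preceding lemma already guarantees that, for the proposed state preparations together with the proposed logical gates, no single-qubit gate error in $[\Gamma_{\mathcal{C}}]_{\mathcal{F}(p)}$ produces a logical error, so the hypothesis of Lemma~\ref{thm-thresholdcriterion} is met and the relevant threshold is
\[
p_{th} = \frac{L_u}{L_e(L_e-1) + L_u L_e + L_u^2}.
\]
It therefore suffices to express $L_u$ and $L_e$ as functions of $T$ and substitute.

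First I would fix the base location counts of the three preparation circuits, as established above: $(L_u,L_e)=(6,12)$ for $\ket{\overline{\phi_+}}$, $(5,12)$ for $\ket{\overline{0+}}$, and $(4,22)$ for $\ket{\overline{00}}$. Next I would account for the $T$ logical gates. Every proposed logical gate in \eqref{eq-ft-gates} is transversal and realised by single-qubit physical gates; the heaviest ones, $H^{\otimes 4}$ and $S^{\otimes 4}$, occupy four physical locations, while the corresponding bare two-qubit operation (viewing the unencoded run as the trivial $[[2,2,1]]$ encoding) occupies two. Such a gate thus contributes $2$ unencoded and $4$ encoded locations, whereas the lighter gates such as $\overline{X}_1=X_1X_3$ contribute only $(1,2)$. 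Using the heavy increment gives $L_u = L_u^{(0)} + 2T$ and $L_e = L_e^{(0)} + 4T$.

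The step needing the most care is showing that these increments produce a threshold valid for \emph{every} gate sequence of length $T$, i.e.\ a genuine lower bound on $p_{th}$. I would check that $p_{th}$ is strictly decreasing in $L_e$ and, since $L_e(L_e-1)>L_u^2$ holds throughout our regime, strictly increasing in $L_u$; one then observes that a single $(2,4)$ gate acts like two successive $(1,2)$ increments, and that each $(1,2)$ increment strictly lowers $p_{th}$ (e.g.\ on the $\ket{\overline{\phi_+}}$ base it drops $1/40$ to $7/329$). Consequently the all-heavy sequence minimises $p_{th}$ among length-$T$ sequences, so the $(2,4)$ counts give the correct worst-case bound. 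I expect this monotonicity-and-worst-case bookkeeping to be the main obstacle, since the threshold is not monotone in the two arguments separately and the comparison must be made carefully.

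Finally I would substitute and simplify. For $\ket{\overline{0+}}$ and $\ket{\overline{00}}$ this is a routine expansion of $L_e(L_e-1)+L_uL_e+L_u^2$ into $217+156T+28T^2$ and (after cancelling an overall factor of $2$) $283+124T+14T^2$. The $\ket{\overline{\phi_+}}$ case is the only one collapsing to a linear denominator and is the one delicate computation: here $L_u = 6+2T = 2(T+3)$ while the denominator is $240+164T+28T^2 = 4(7T+20)(T+3)$, so the common factor $(T+3)$ cancels and leaves
\[
p_{th} = \frac{2(T+3)}{4(7T+20)(T+3)} = \frac{1}{2(7T+20)} = \frac{1}{40+14T},
\]
as claimed.
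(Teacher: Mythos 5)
Your proposal follows the paper's own route: invoke Lemma~\ref{thm-thresholdcriterion} with $L_u=L_u^{(0)}+2T$, $L_e=L_e^{(0)}+4T$, expand the denominator, and simplify. The base counts $(6,12)$, $(5,12)$, $(4,22)$ and all three final expressions are exactly right — including the cancellation of the common factor $(T+3)$ in the $\ket{\overline{\phi_+}}$ case and of the overall factor $2$ in the $\ket{\overline{00}}$ case — and you correctly used $L_u=4$ for $\ket{\overline{00}}$ (consistent with Figure~\ref{fig-state-prep-counts-3} and with the stated formula, rather than the $L_u=6$ misprint in the preceding corollary). The one place where you go beyond the paper, namely the argument that the $(2,4)$ increment is the worst case among the proposed logical gates, contains a false sub-claim: it is not true that every $(1,2)$ increment strictly lowers $p_{th}$. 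At the $\ket{\overline{00}}$ base point $(L_u,L_e)=(4,22)$ one has $p_{th}=4/566\approx 0.00707$ but $p_{th}(5,24)=5/697\approx 0.00717$, so the first light increment \emph{raises} the threshold; along the line $(4+k,22+2k)$ the threshold equals $(4+k)/(566+124k+7k^2)$, which increases up to $k=\sqrt{26}-4\approx 1.1$ before decreasing. Your conclusion — that the minimum over $k\in[T,2T]$ is attained at $k=2T$ — does still hold for every $T\geq 1$, but not by the per-increment monotonicity you invoke. The paper sidesteps the issue entirely via its counting convention: since all qubits are measured simultaneously, physical qubits left idle during a transversal logical gate accrue wait locations, so every proposed logical gate contributes exactly $4$ encoded and $2$ unencoded locations and no worst-case comparison is needed.
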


\begin{proof}
All logical gates are transversal single-qubit gates. This means that each logical gate in the sequence adds 4 locations to the 4-qubit circuit, and 2 locations to the 2-qubit circuit. Therefore, when a gate sequence of length $T$ is applied to $\ket{\overline{\phi_+}}$, the total circuit has  $L_u=12+4T$ locations, and applying the equivalent sequence to the unencoded state $\ket{\phi_+}$, the resulting number of locations is $L_e\geq 6+2T$.
Then, we can use the criterion from Lemma~\ref{thm-thresholdcriterion}, 
\[  p < \frac{L_u}{ L_e(L_e-1) + L_uL_e   +  L_u^2 } , \]
to conclude that the circuit simulation is expected to demonstrate fault-tolerance if 
\[p <  \frac{6+2T}{ (12+4T)(11+4T) + (6+2T)(12+4T) +(6+2T)^2 } = \frac{1}{40+14T}.\]
\end{proof}

The thresholds for different lengths of gate sequences $T$ are plotted in Figure~\ref{fig-tvsthreshold}. As expected, increasing the number of gates leads to lower thresholds.

\begin{figure}[htbp]
    \centering
    \includegraphics[width=8cm]{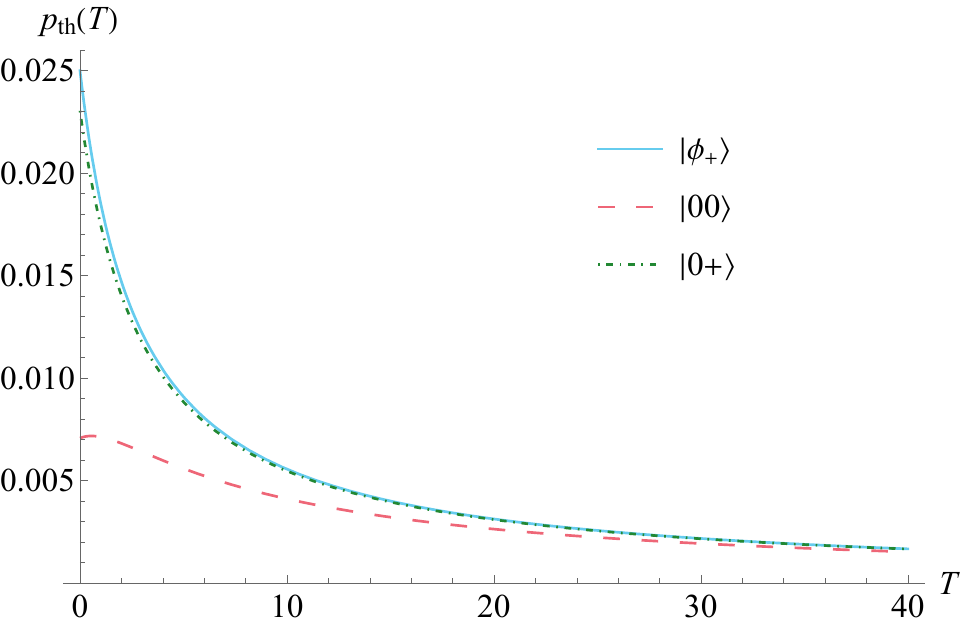}
    \caption{Thresholds $p_{th}(T)$ for demonstrating fault-tolerance for different gate sequence lengths $T$ applied to the preparation circuits listed in Table~\ref{tab-ft-preps}, where locations are counted in terms of the gate set from \cite{Gottesman16}.}
    \label{fig-tvsthreshold}
\end{figure}

%Sidequest: Why is Gottesman's bound better? Essentially, he is using a worse(?) overestimator somewhere?

%\section{Implementing the [[4,2,2]] code on an ion-based quantum computer}

\subsection{Thresholds in the native gate set}

The IonQ quantum computers use trapped ion technology, which involves trapping individual ionized ytterbium atoms using electromagnetic fields and then manipulating their quantum states with laser beams to perform a computation. This approach offers high qubit fidelity, long coherence times and flexible qubit connectivity \cite{BenchmarkIonQ19}.

%has several advantages, including high qubit fidelity and long coherence times.
%This is due in part to the high level of control that can be achieved over trapped ion qubits using lasers and other types of electromagnetic fields. %Another advantage of trapped ion technology is the flexible connectivity of the qubits.%This high level of control allows for precise manipulation of the qubit's quantum state, which is important for achieving accurate and reliable quantum computations.

When provided with the high-level description of a quantum circuit, any quantum computer uses several levels of compilation and optimization to improve the performance and efficiency of the computation. As a first step, the circuit is typically converted into a sequence of native gates (if it is provided in terms of different gates). This process often involves optimizing the circuit in order to minimize the number of gates, utilize the highest fidelity qubits, and to reduce the overall error rate of the computation \cite{LSAA22,Maslov17}. The goal of mid-level compilation is then to reduce the circuit depth and improve the parallelism of the computation through techniques such as gate reordering, which can help to protect against the effects of gate errors \cite{VDRF17,JLFCHZHZ22}. Finally, high-level compilation involves optimizing the overall structure of the quantum algorithm itself, which can also involve selecting appropriate error correction codes or other strategies to improve the reliability of the computation.

To truly test a current quantum device, these additional device-level optimizations should be kept in mind when implementing an algorithm. The quantum computers of the present are greatly affected by errors, and therefore, results of a computation are highly dependent on the strategies used for error mitigation and circuit optimization. One of the most basic and yet interesting near-term goals would involve demonstrating that a quantum state can stay alive and usable for an extended period of time despite the high error rates, and that this time duration can be improved by the use of error correction and mitigation \cite{Terhal23}. Typically, this could be investigated in a setup where a quantum state is prepared and then left alone for longer and longer time periods before it is measured (as in \cite{OPH+16} for superconducting qubits); this type of circuit, however, would involve applying an identity gate to the qubits in each time step, which would be removed by the compilation procedure. Experiments such as this are therefore only sensible if one can know and control the level of compilation. In 2022, IonQ became the first trapped ion quantum computer to allow for cloud-based remote-access verbatim compilation, which is a feature ensuring that the gates in the circuit are executed exactly as described. By providing the task in terms of native gates, it can therefore be facilitated that a user knows exactly which gates are executed, as well as their ordering.

%IonQ's largest quantum computer as of 2023, \emph{Forte}, contains 32 qubits, which is one of the largest numbers of qubits for a quantum computer of this kind. 

For commercial use, IonQ currently offers cloud-based access to an 11 qubit device \emph{Harmony}, and a 21 qubit device \emph{Aria}. Recently, three native gates have become available for commercial use on these devices in the context of verbatim compilation.

 The single qubit gates are implemented using laser pulses that induce transitions between the ion's electronic energy levels. Details of the physical implementation are discussed in \cite{BenchmarkIonQ19}.
\[ \gpi(\phi)=
\begin{pmatrix}
0& e^{-i\phi}\\
e^{i\phi}&0\\
\end{pmatrix}
\]
\[ \gpii(\phi)= \frac{1}{\sqrt{2}}
\begin{pmatrix}
1& -ie^{-i\phi}\\
-ie^{i\phi}&1\\
\end{pmatrix}
\]
As an entangling gate, IonQ employs the two-qubit Mølmer-Sørensen gate \cite{SM99}:
\[ \ms(\phi,\psi)= \frac{1}{\sqrt{2}}
\begin{pmatrix}
1& 0&0&-i e^{-i(\phi+\psi)}\\
0& 1&-i e^{-i(\phi-\psi)}&0\\
0& -ie^{i(\phi-\psi)}&1&0\\
-ie^{i(\phi+\psi)}& 0&0&1\\
\end{pmatrix}
.\]
Techniques for decomposing a given circuit in terms of this gate set are discussed in \cite{Maslov17}. For some of the most common gates, in particular gates that are used in the usual description of the [[4,2,2]] code, their decomposition in terms of IonQ's native gates is given in Table~\ref{tab-ionq-native-gates-decomp}.

\begin{table}[htbp]
  \centering
  \caption{Native gate compositions of common gates}
  \begingroup
\setlength{\tabcolsep}{6pt} % Default value: 6pt
\renewcommand{\arraystretch}{1.7} % Default value: 1
    \begin{tabular}{!{\color{gray}\vrule}l!{\color{gray}\vrule}l!{\color{gray}\vrule}}
       \arrayrulecolor{gray}\hline 
        Gate & Decomposition in native gates \\ 
      \hline  \hline
      \begin{tikzcd}
&\gate{X} & \qw 
\end{tikzcd} & 
            \begin{tikzcd}
&\gate{\gpi(0)} & \qw 
\end{tikzcd}
         \\ \hline
         \begin{tikzcd}
&\gate{Y} & \qw 
\end{tikzcd} & 
            \begin{tikzcd}
&\gate{\gpi(\pi/2)} & \qw 
\end{tikzcd}
\\\hline
\begin{tikzcd}
&\gate{Z} & \qw 
\end{tikzcd} & 
            \begin{tikzcd}
&\gate{\gpi(0)} &\gate{\gpi(\pi/2)}& \qw 
\end{tikzcd}

\\\hline
\begin{tikzcd}
            &\gate{H} & \qw 
            \end{tikzcd} & 
            \begin{tikzcd}
&\gate{\gpi(0)} & \gate{\gpii(-\pi/2)} &\qw 
\end{tikzcd}
\\\hline
\begin{tikzcd}
& \ctrl{1}  & \qw \\
& \targ{} & \qw 
\end{tikzcd} & 
            \begin{tikzcd}
& \gate{\gpii(\pi/2)}& \gate[wires=2][2cm]{\ms(0,0)} &\gate{\gpii(\pi)} & \gate{\gpii(-\pi/2)} & \qw \\
 & \qw  & \qw &\gate{\gpii(\pi)} & \qw& \qw 
\end{tikzcd}

\\\hline
    \end{tabular}
    \endgroup
    \label{tab-ionq-native-gates-decomp}
\end{table}

We find that the threshold theorem for demonstrating fault-tolerance, Theorem~\ref{thm-thresholdcriterion}, also applies to the code when decomposed in the native gates of IonQ. This can be seen by analyzing the output statistics from every possible single-qubit error in the i.i.d. Pauli fault model, which confirm that such errors will either be detected or give the correct statistics. It is therefore justified to employ this code as an error detection code and expect it to demonstrate fault-tolerance under the criterion in Definition~\ref{def-ft-criterion}.
We obtain the following thresholds for the state preparation:
\begin{itemize}
    \item For $\ket{\overline{00}}$, the location counts are $L_u= 4$ and $L_e=26$, and the threshold for demonstrating fault-tolerance under the i.i.d. Pauli noise model with the preparation circuit is $p_{th}=0.00519481$.
    \item For $\ket{\overline{0+}}$, the location counts are $L_u= 6$, $L_e=20$, and the threshold for the preparation circuit is $p_{th}=3/268=0.011194$.
    \item For $\ket{\overline{\phi_+}}$, the location counts are $L_u= 10$, $L_e=20$, and the threshold for the preparation circuit is $p_{th}=1/68=0.0147059$.
\end{itemize}

The thresholds for sequences of $T$ logical gates for the three state preparations are plotted in Figure~\ref{fig-thresholds-native}. The use of native gates slightly increases the location numbers, which manifests itself in a lower threshold.

\begin{figure}[htbp]
    \centering
    \includegraphics[width=8cm]{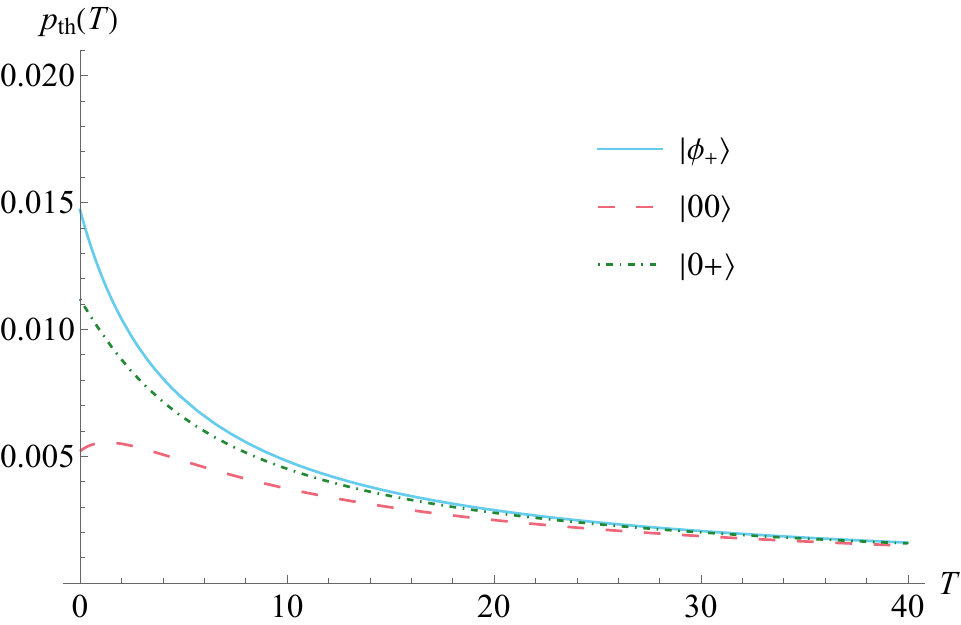}
    \caption{Threshold $p_{th}(T)$ for demonstrating fault-tolerance as a function of the gate sequence length $T$, where locations are counted in terms of the native gate set of IonQ Harmony and Aria.}
    \label{fig-thresholds-native}
\end{figure}

% State preparation results:

%  \begin{table}[]
% \centering
%     \begin{tabular}{c|c|c}
%       State  & unencoded & encoded  \\ \hline
%         $\ket{\phi_+}$ & 0.062800 & 0.060998 \\
%           $\ket{00}$ & 0.042239 & 0.002314 \\
%           $\ket{0+}$ & 0.061002 & 0.012711
%      \end{tabular}
% %   % \caption{Caption}
% % 
%  \end{table}

\section{Error detection with the [[4,2,2]] code on current hardware}
\label{sec-gottesman-results}

The cloud-based open access promises a more direct way of testing and implementing small algorithms without a deeper understanding of experimental components and system-based protocol designs. However, for access as a researcher, we found significant obstacles in relation to exploring the capabilities of these systems. While support for verbatim compilation was recently introduced, it has to be noted that the list of native gates that are available does not comprise every possible gate; in particular, an identity gate, which would correspond to doing nothing to the system for a certain amount of time, is not available. An identity gate can be constructed by applying another gate twice, for example $\gpi(0)$, as $\gpi(0)^2=\mathbbm{1}$, which is also what IonQ's documentation suggests. However, it is important to note that this is not exactly the same as doing nothing, as it might lead to higher errors, and introduce bias depending on the gate we apply and the qubit we apply it to. Here, we chose to apply the gate $\gpi(0)$ to one additional qubit, and undoing it immediately after. This qubit does not interact with the other circuit qubits and is disregarded during the analysis. Because our model of an identity means that we are applying a gate twice, each such round constitutes two time steps in our life-time analysis.

For a given circuit that prepares a state $\ket{\psi}$, we denote the ideal output distribution by $P_{\psi}^{theo}$, which is the probability distribution we would expect in the asymptotic limit if there are never any errors, i.e. the probability distribution obtained from $\ket{\psi}$ through Born's rule. $P_{\psi}^{unenc}$ means the output probability distribution obtained from an experiment for the unencoded circuit, whereas $P_{\psi}^{enc}$ means the probability distribution obtained from post-selection of the output distribution of the experiment implementing the encoded circuit. In the case of circuit simulations with depolarizing errors, we denote the obtained probability distributions by $P_{\psi,q}^{\text{unenc}}$ and $P_{\psi,q}^{\text{enc}}$ where $q\in[0,1]$ is an index corresponding to the depolarizing probability used for the simulation. In the case of circuit simulations with amplitude damping noise, we use an index $\gamma\in[0,1]$, which corresponds to the amplitude damping parameter used for the simulation.
For experiments on quantum hardware, we denote the resulting probability distributions by $P_{\psi,D}^{\text{unenc}}$ and $P_{\psi,D}^{\text{enc}}$ where the index $D=\{H,A\}$ indicates whether the experiment was performed on IonQ's 11-qubit quantum computer Harmony, or their 21-qubit quantum computer Aria.

We perform simulations where the errors affecting the qubits are modelled by depolarizing noise and amplitude damping noise. From Figure~\ref{fig-thresholds-native}, we may expect the error parameters in Figure~\ref{fig-maxent-sim-0} to be below threshold for longer times than the parameters in \ref{fig-maxent-sim-2}. However, we observe a separation of the expected error in the unencoded circuit and the encoded circuit in all three cases, even for error parameters well above the threshold. This may very well be a result of the approximations used in the proof of Theorem~\ref{thm-thresholdcriterion} leading to a more pessimistic threshold for observing such an improvement. The simulation seems to confirm that we can indeed expect the [[4,2,2]] error detection code to be useful for computation, in the sense that an encoded state is more resistant against errors over time, at least under these basic error models.

\begin{figure}[htbp]
     \centering
     \begin{subfigure}[b]{0.7\textwidth}
         \centering
         \includegraphics[width=\textwidth]{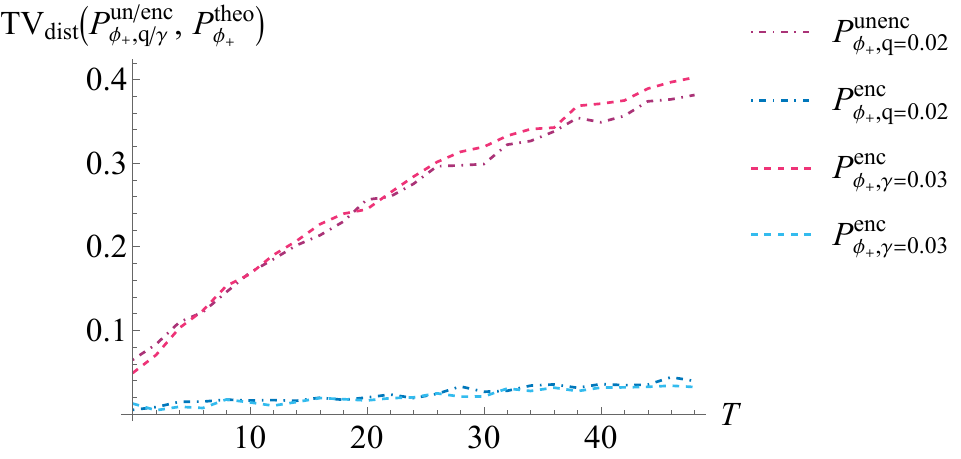}
         \caption{Simulation for error parameters $q=0.02$ and $\gamma=0.03$.}
         \label{fig-maxent-sim-0}
     \end{subfigure}
        \hfill
     \begin{subfigure}[b]{0.7\textwidth}
         \centering
         \includegraphics[width=\textwidth]{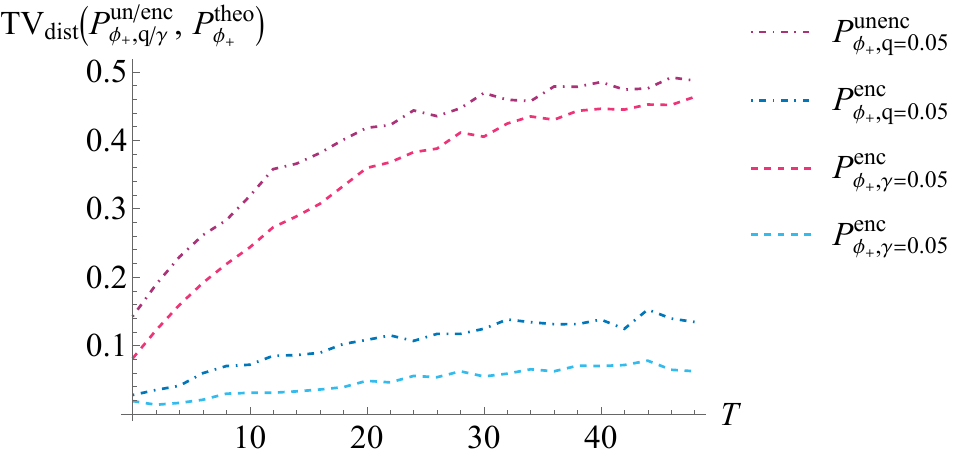}
         \caption{Simulation for error parameters $q=0.05$ and $\gamma=0.05$.}
         \label{fig-maxent-sim-1}
     \end{subfigure}
     \hfill
     \begin{subfigure}[b]{0.7\textwidth}
         \centering
         \includegraphics[width=\textwidth]{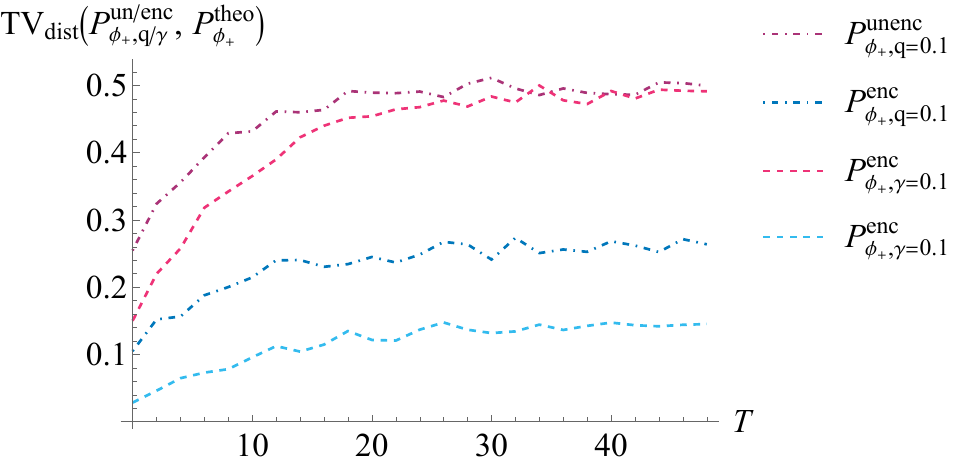}
         \caption{Simulation for error parameters $q=0.1$ and $\gamma=0.1$.}
         \label{fig-maxent-sim-2}
     \end{subfigure}
     \caption{Simulation of the circuits for $\ket{\phi_+}$ and increasing lengths of gate sequences applied to it, with a depolarizing noise model with depolarizing probability $p$ at each gate, and an amplitude damping noise model with damping parameter $\gamma$ at each gate. All simulations were based on 1000 shots for each circuit.}
     \label{fig-maxent-sim}
\end{figure}

On actual hardware, we observe the following data for the maximally entangled state $\ket{\phi_+}$, plotted in Figure~\ref{fig-maxent-exp}. The results for $\ket{00}$ are plotted in Figure~\ref{fig-00-exp}, and the results for $\ket{0+}$ are plotted in Figure~\ref{fig-0p-exp}.

It is notable that the TV-distance seems to vary greatly, in particular for some lengths of gate sequences. This is most likely not a consequence of the number of shots, but rather seems to be an effect of chip-related or daily fluctuations. 

\begin{figure}[htbp]
     \centering
     \begin{subfigure}[b]{0.7\textwidth}
         \centering
         \includegraphics[width=\textwidth]{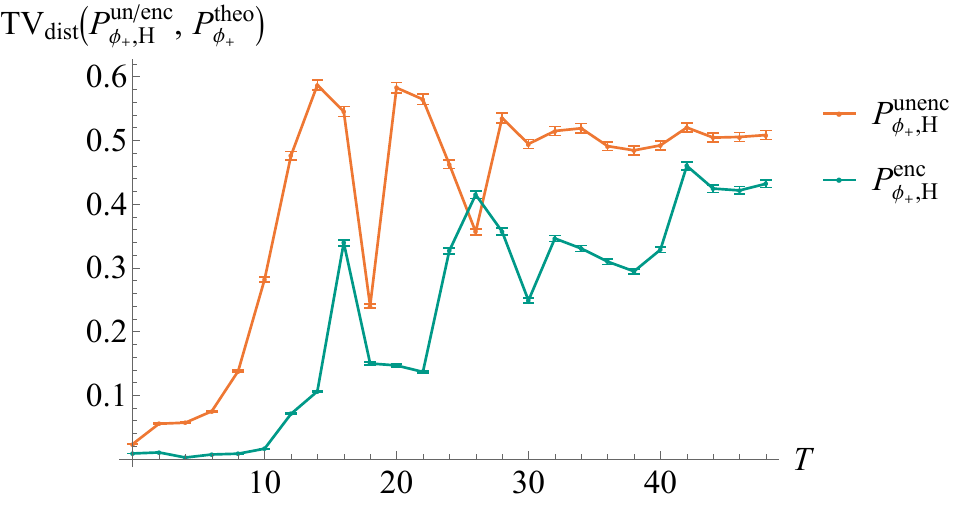}
         \caption{Experimental results on IonQ Harmony for $\ket{\phi_+}$. These results were obtained on the 25 May 2023, with 5000 shots.}
         \label{fig-maxent-h}
     \end{subfigure}
     \hfill
     \begin{subfigure}[b]{0.7\textwidth}
         \centering
         \includegraphics[width=\textwidth]{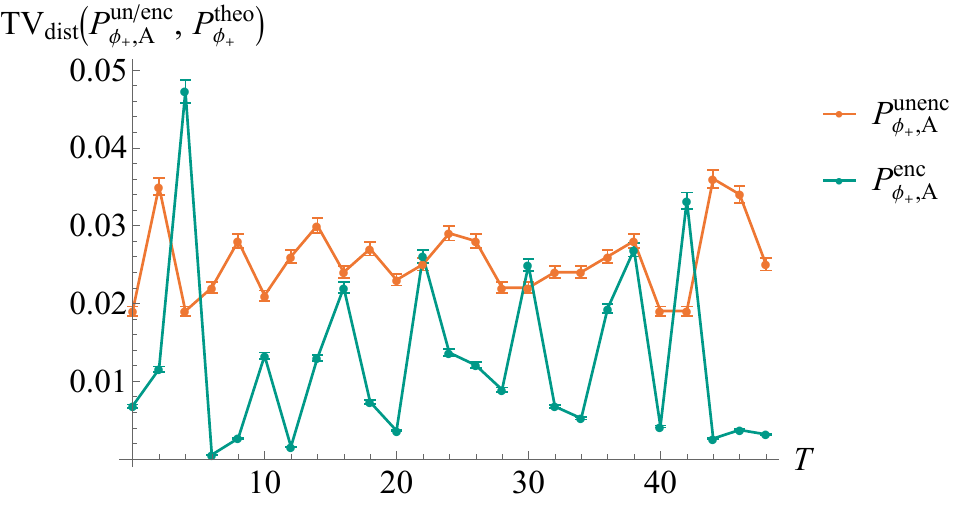}
         \caption{Experimental results on IonQ Aria for $\ket{\phi_+}$. These results were obtained on the 31 May 2023, with 1000 shots.}
         \label{fig-maxent-a}
     \end{subfigure}
        \caption{TV-distance of the probability distributions obtained as an output of current quantum hardware to the expected probability distribution for $\ket{\phi_+}$. One curve in each plot gives the result for the unencoded circuit, and one curve in each plot gives the result for the same circuit encoded in the [[4,2,2]] code for gate sequence lengths between 0 and 50.
        }
        \label{fig-maxent-exp}
\end{figure}

% Comparing this with the simulations from before in Figure~\ref{fig-maxent-sim-2} and \ref{fig-maxent-exp-sim-3}.

% \begin{figure}[htbp]
%      \centering
%      \begin{subfigure}[b]{0.49\textwidth}
%          \centering
%          \includegraphics[width=\textwidth]{pics-aws/maxent-exp-sim-005.pdf}
%          \caption{Comparison of the experimental data with a simulation for depolarizing noise with $p=0.05$ and amplitude damping noise with damping parameter $\gamma=0.05$.}
%          \label{fig-maxent-exp-sim-2}
%      \end{subfigure}
%      \hfill
%      \begin{subfigure}[b]{0.49\textwidth}
%          \centering
%          \includegraphics[width=\textwidth]{pics-aws/maxent-exp-sim-01.pdf}
%          \caption{Comparison of the experimental data with a simulation for depolarizing noise with $p=0.1$ and amplitude damping noise with damping parameter $\gamma=0.1$.}
%          \label{fig-maxent-exp-sim-3}
%      \end{subfigure}
%         \caption{This plot shows the comparison between the data from real quantum hardware, and a probabilistic simulation of quantum errors for models with depolarizing noise and amplitude damping noise.}
% \end{figure}

\begin{figure}[htbp]
     \centering
     \begin{subfigure}[b]{0.7\textwidth}
         \centering
         \includegraphics[width=\textwidth]{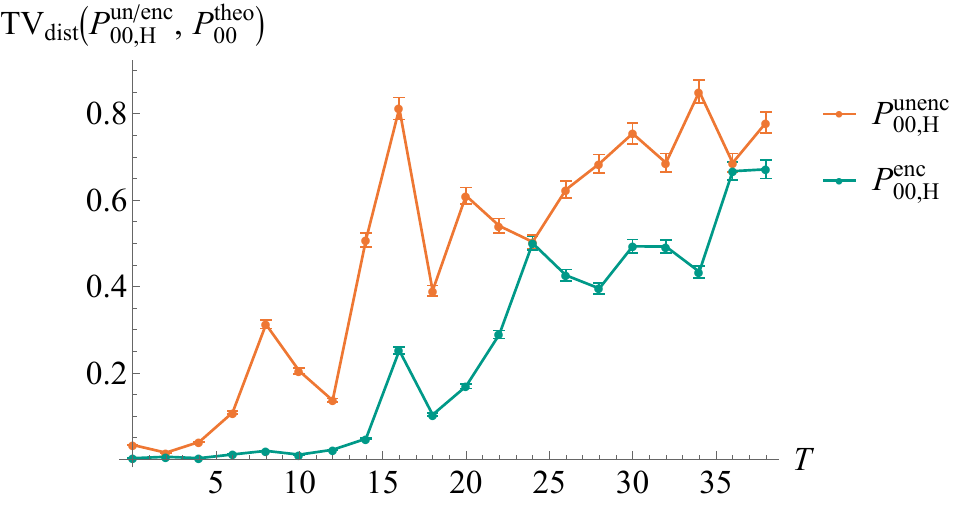}
         \caption{Experimental results on IonQ Harmony for $\ket{00}$. These results were obtained on 29 May 2023.}%, comparing to the circuit output of the unencoded circuit, and the circuit encoded in the [[4,2,2]] code.}
         %\label{fig-00-h}
     \end{subfigure}
     \hfill
     \begin{subfigure}[b]{0.7\textwidth}
         \centering
         \includegraphics[width=\textwidth]{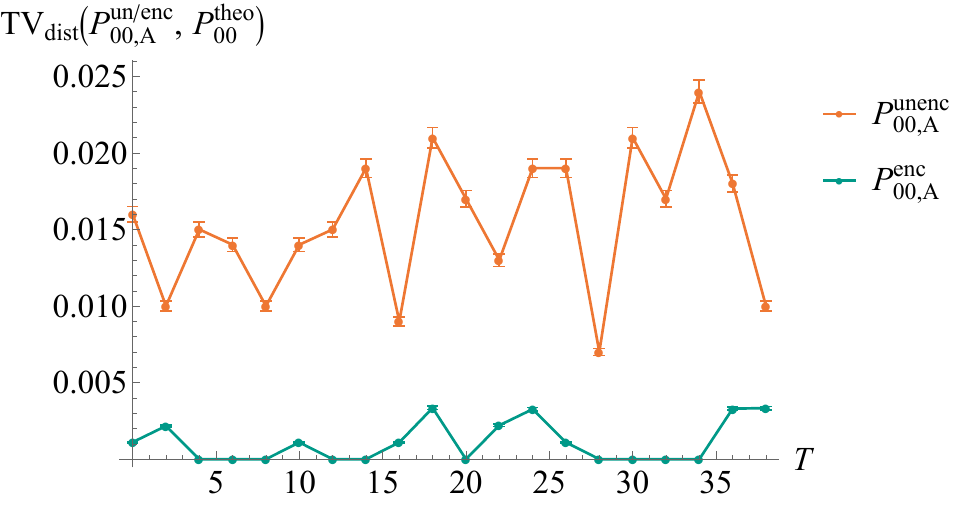}
         \caption{Experimental results on IonQ Aria for $\ket{00}$. These results were obtained on 30 May 2023.}
         %\label{fig-00-a}
     \end{subfigure}
        \caption{TV-distance of the probability distributions obtained as an output of current quantum hardware to the expected probability distribution for $\ket{00}$. One curve in each plot gives the result for the unencoded circuit, and one curve in each plot gives the result for the same circuit encoded in the [[4,2,2]] code. Both experiments utilized 1000 shots and were performed for gate sequence lengths ranging between 0 and 40.}
        \label{fig-00-exp}
\end{figure}

\begin{figure}[htbp]
     \centering
     \begin{subfigure}[b]{0.7\textwidth}
         \centering
         \includegraphics[width=\textwidth]{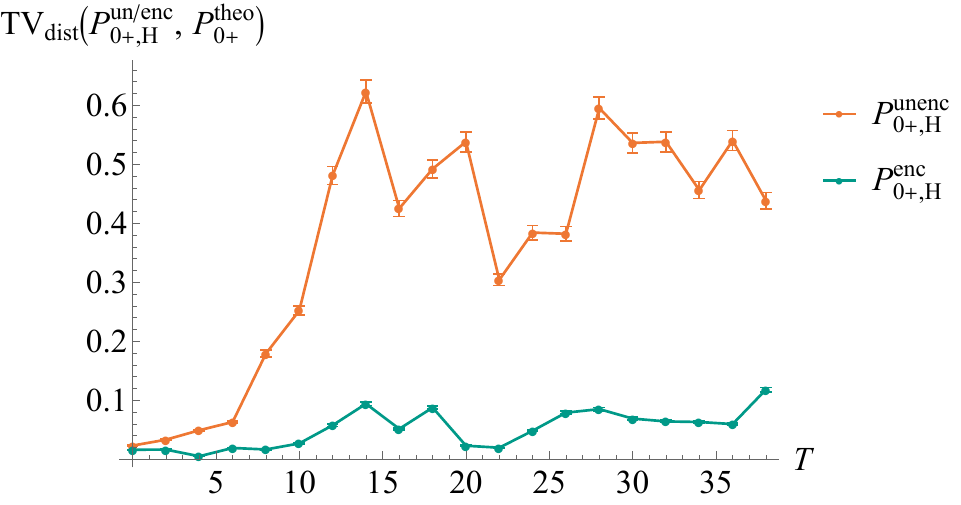}
         \caption{Experimental results on IonQ Harmony for $\ket{0+}$. These results were obtained on 30 May 2023.}
         %\label{fig-0p-h}
     \end{subfigure}
     \hfill
     \begin{subfigure}[b]{0.7\textwidth}
         \centering
         \includegraphics[width=\textwidth]{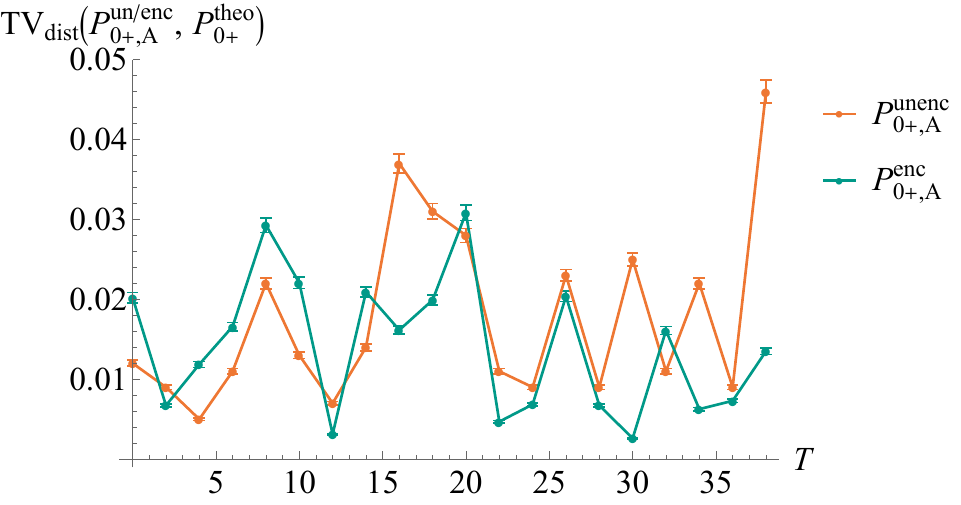}
         \caption{Experimental results on IonQ Aria for $\ket{0+}$. These results were obtained on 31 May 2023.}
         %\label{fig-0p-a}
     \end{subfigure}
        \caption{TV-distance of the probability distributions obtained as an output of current quantum hardware to the expected probability distribution for $\ket{0+}$. One curve in each plot gives the result for the unencoded circuit, and one curve in each plot gives the result for the same circuit encoded in the [[4,2,2]] code. Both experiments utilized 1000 shots and were performed for gate sequence lengths ranging between 0 and 40.}
        \label{fig-0p-exp}
\end{figure}

It can indeed be observed that encoding the states in the [[4,2,2]] code improves (i.e. lessens) their TV-distance from the ideal state for the experiments performed on IonQ Harmony. At least for short circuit depths (approximately 15 time steps), the encoded states often have a significantly lower probability to lead to a corrupted result. This seems to be in agreement with the reported single-qubit gate error of $0.4\%$ \cite{BenchmarkIonQ19,IonQWebsiteHarm} being below to the thresholds from Figure \ref{fig-thresholds-native} for low circuit depth, and above the thresholds for larger circuit depths. These results can be interpreted as an increased life-span of the information encoded in the quantum state when it is protected by the [[4,2,2]] code. Furthermore, it suggests that the noise is not excessively correlated and that our noise model for the device is fairly accurate. However, even if we generally observe an improvement for the protected state, this effect still does not protect the computation from noise effects that occur at larger circuit depths. In other words, although we manage to keep the quantum information usable for longer times through our implementation in the [[4,2,2]] code, this additional life-time is still far from enough for the circuit depths required for most quantum algorithms. While error detection helps, it can thus not be concluded that the computation overall is fault-tolerant. This is, of course, a consequence of our way defining a threshold, which focusses on a proof-of-principle demonstration of the principles of fault-tolerance. In other words, we are basing our threshold calculation on whether or not the encoded circuit performs better than the unencoded version, while disregarding whether or not either of the circuits performs particularly well. Similar results were observed in \cite{HF19} for superconducting qubits.

% - in particular
%they are better, this does not say so much if both situations are bad.

On IonQ Aria, we do not observe a conclusive improvement. While the overall errors are significantly lower than for IonQ Harmony for all observed circuit depths, the performance of the encoded circuit and the unencoded circuit is almost indistinguishable in the case of $\ket{0+}$ and $\ket{\phi_+}$ (although there is perhaps a slight tendency towards a lower average error probability for the encoded circuit). Only in the case of the preparation of $\ket{00}$ do we seemingly observe a noticeable separation. However, the corresponding preparation circuit also has an extra qubit which is measured during the error detection phase, which may lead to a lower probability to survive postselection.

Based on the reported single-qubit gate error of $0.05\%$ \cite{IonQWebsiteAria}, we would expect IonQ Aria to have single-qubit gate error rates below the thresholds implied by Figure \ref{fig-thresholds-native}, and certainly lower single-qubit gate error rates than IonQ Harmony, suggesting that the local i.i.d. Pauli noise model may not be accurate for this device. It may be that the effects of correlated errors become more significant in comparison when single-qubit errors are sufficiently unlikely.

\section{Conclusion and open questions}

In this project, we compare the state preparation and lifetime of three two-qubit states with and without a protective implementation in the [[4,2,2]] error detection code on a trapped ion quantum computer. We observe, in particular, that error detection does indeed protect the quantum state to some degree for the 11-qubit device, which suggests that our error model approximates the device well. However, the total computation is far away from being below a threshold for true fault-tolerance. For IonQ's 21-qubit device, the overall computations seem to be much less vulnerable to faults, but the improvement through an error detecting implementation is much less clear, suggesting that our assumed error model for the threshold calculation does not capture the behaviour of the device, which has to be analyzed more carefully in the future.

% {One could, for example, think of an experimental implementation of the superdense coding protocol; at least for gate lengths $T\leq 10$, the superdense-coding protocol could be experimentally implemented in the [[4,2,2]] code.}

On IonQ Harmony, we observe that short and medium-depth circuits benefit the most from a protective encoding; for quantum communication setups, particularly for communication over long distances or time durations, it may therefore be beneficial to protect only the encoding and decoding circuit, without imposing threshold constraints on the communication line, which perhaps underlines the relevance of our coding scheme construction in Chapter~\ref{chapter-fteacap}. Furthermore, the findings for IonQ Aria suggest that it may be practically relevant to extend our work on communication with noisy encoding and decoding circuits in Chapter~\ref{chapter-fteacap} to more complicated and perhaps correlated noise models, or to consider circuits with randomized compiling.

\chapter{The limits of enhancing communication rates with entanglement}
\label{chapter-divrad}

This chapter is based on a work which is in preparation for publication.

Based on a conjecture by Bennett et al. in one of the earliest papers on entanglement-assisted communication \cite{BSST02}, we show that the quotient $\frac{C^{ea}(T)}{C(T)}$ is bounded by an expression that only depends on the channel output dimension, not on the channel itself, for many quantum channels $T$.

For these channels, we thus have uniform convergence in Chapter~\ref{chapter-fteacap}, Theorem~\ref{thm-final-coding-thm}.

% \section*{Abstract}

% Entanglement has long been recognized as a powerful resource for improving communication rates between a sender and receiver over a noisy quantum channel. In this work, we delve into the intriguing question: to what extent can entanglement enhance the communication rate for a given quantum channel? Can it lead to an infinite improvement? We show that the answer is no for a large class of channels, provided that they have finite output dimension and a full-rank center of the channel image. The proof is based on a geometric interpretation of the entanglement-assisted capacity and Taylor's theorem for the quantum mutual information of quantum channels near the replacer channel.

\section{Introduction}
%It is very important to send information, and study how well we can send info.
%There is Shannon theorem.

Information theory, as pioneered by Claude Shannon in his seminal work \cite{Shannon48}, revolutionized communication theory and laid the groundwork for much of our modern communication infrastructure. Mathematically, the process of communicating between a sender and a receiver is modelled by a communication channel $T$, which takes into account the noise affecting individual symbols during transmission. The ability of a specific channel $T$ to transmit information is quantified by its capacity, which is given by the asymptotic number of message bits that can be successfully transmitted per channel usage while achieving vanishing error. This capacity is in fact equal to the mutual information between the channel input and the channel output \cite{Shannon48}.

In quantum information theory, the generalization of this concept is not straightforward, and several non-equivalent generalizations exist. Each notion captures different aspects of quantum information transmission and adresses communication setups in the presence of different resources or constraints from quantum theory.

%"Each notion captures different aspects of quantum information transmission and addresses specific communication objectives or constraints. ""For different types of messages - class, quant, assisted by ent - different generalizations of Shannon cap exist".

The \emph{classical capacity} $C(T)$ of a quantum channel $T$ refers to the maximum rate at which classical information can be reliably transmitted through the quantum channel without any additional quantum or classical resources. Nonetheless, this capacity setup introduces effects that only appear in quantum theory, most notably non-additivity in tensor products \cite{Hastings09}. To study this quantity, Holevo \cite{Holevo96} introduced the \emph{Holevo capacity}, which is an entropic expression that bounds the classical capacity of a given channel.

In contrast, the superdense coding protocol \cite{BW92} demonstrates that quantum entanglement between a sender and a receiver can improve their communication rate and acts as a resource for communication. Beyond this example, \cite{BSST99,BSST02} defines an entanglement-assisted communication setup for any quantum channel in order to characterize the maximum rate of transmitting information via the channel when the sender and receiver have access to an unlimited amount entanglement. This \emph{entanglement-assisted capacity} $C^{ea}(T)$ turns out to be equal to the (maximal) quantum mutual information \cite{BSST02} and is therefore seen as a quantum analogue of the classical concept.
%More generally, entanglement-assisted capacity.

The entanglement-assisted capacity is always larger or equal to the classical capacity of a given channel. Here, we investigate how much larger it can be, or if the improvement can be bounded by some finite expression for any channel that does not depend on the channel, but only on its dimension. In \cite[Section~III.3]{BSST02}, a bound on this quotient was conjectured by Bennet, Shor, Smolin and Thalapiyal based on results from Holevo in \cite{Holevo02}.
An upper-bound on the quotient is known for some well-known channels with closed form capacity expressions, such as the erasure channel and the depolarizing channel \cite{BSST99}, and the qubit amplitude damping channel \cite{BSST02}.
Beyond these channel families, we are not aware of any other known upper bounds.

Here, we show that the quotient is bounded by a dimension-dependent expression for a large class of channels. More precisely, for many quantum channels $T:\mathcal{M}_{d_{A'}}\rightarrow \mathcal{M}_{d_{B}}$, we show that there exists a function $f(d_B)$ of the channel output dimension such that
\[\frac{C^{ea}(T)}{C(T)}\leq f(d_B) \]
where $f(d_B)$ does not depend on the channel, only on the output dimension. We achieve this by bounding the quotient of the entanglement-assisted capacity and the Holevo capacity of the same channel.

%Here, we show that the quotient is bounded by a dimension-dependent expression for a large class of channels. More precisely, for all quantum channels $T$, we show that 
%\[\zeta(T) \leq \frac{2d^5+d^2}{2\lambda_{min}} \]
%where $\lambda_{min}$ is the smallest non-zero eigenvalue of the geometric center of the channel image, where distance is measured in terms of relative entropy. Consequently, whenever we can lower-bound this eigenvalue, we have a finite, dimension-dependent upper bound.

In order to show this, we give an interpretation of the entanglement-assisted capacity in terms of a relative entropy between the channel $T$ and a replacer channel that replaces any input with a state that depends on the channel image in Section~\ref{sec-stab-div-rad}. Then, we show a novel upper bound on the entanglement-assisted capacity which is quadratic in the induced trace distance between the two channels in Section~\ref{sec-taylors-thm}. Finally, making use of Pinsker's inequality to lower bound the Holevo capacity, we show a dimension-dependent bound on the quotient of capacities in Section~\ref{sec-bounding}.
We furthermore derive a tight bound on the quotient for unital qubit channels in Section~\ref{sec-pauli}.

%\section{Preliminaries}

%We refer to Section~\ref{sec-intro-comm} for definitions of Schatten $p$-norms, induced norms, the quantum relative entropy and classical, Holevo and entanglement-assisted capacity.

%Let $\mathcal{M}_{d}$ denote the matrix algebra of complex $d \times d$-matrices. Quantum states of d-level quantum systems are given by positive semi-definite, Hermitian and unit-trace objects $\rho\in\mathcal{M}_d$. Quantum channels are completely positive trace preserving maps $T:\mathcal{M}_{d_{A'}}\rightarrow \mathcal{M}_{d_B}$.

%We refer to the definitions in Section~\ref{sec-quantum-shannon-theory}.

%The classical capacity and the entanglement assisted capacity can be understood as unstabilized and stabilized channel divergences with a replacer channel in the second argument, which we will show in Section~\ref{sec-stab-div-rad}. Then, we will use Pinsker's inequality and our result in Theorem~\ref{thm-taylor} to connect these divergences to induced Schatten norms, and use the stabilization property to relate classical capacity and entanglement-assisted capacity to each other.

% With this theorem and our main result, we can upper- and lower-bound quantum relative entropies by quadratic expressions. In the next section, we will show that the two capacities of interest can be expressed as such relative entropies.

\section{Capacities as quantum channel divergences}
\label{sec-stab-div-rad}

In this section, we recall that the Holevo capacity of a quantum channel $T$ can be formulated as a channel divergence between the channel and a replacer channel \cite{SW01,TT15}. We go on to show that the entanglement-assisted capacity can be formulated as a stabilized channel divergence in Theorem~\ref{thm-stab-div-rad}. %, i.e. as a divergence of channels with an additional reference system.

We refer to the definitions in Section~\ref{sec-quantum-shannon-theory} for quantum divergences and quantum channel divergences.%, some of which we repeat for completeness and self-containedness of this chapter.

\subsection{The Holevo capacity as a channel divergence}

In the case of classical communication via a quantum channel $T:\mathcal{M}_{d_{A'}}\rightarrow \mathcal{M}_{d_B}$, we consider an encoding map that maps classical messages to a quantum state which is sent through the channel, and a decoding map that takes the channel's output and outputs a guess of the classical message. Then, the ability of a quantum channel to transmit information is quantified by the asymptotic rate of message bits per channel use that is achieved by the best possible encoding and decoding operation with vanishing probability to decode the wrong message.

%How well a quantum channel transmits information is quantified by the asymptotic rate of message bits per channel use.

In an effort to study how well a quantum channel transmits classical information, we recall that Holevo \cite{Holevo96} introduced the Holevo capacity, as can be found in Definition~\ref{def-holevo-capacity}. 
% \begin{definition}[Holevo capacity]
% Let $\{p_i,\rho_i\}$ be an ensemble of quantum states $\rho_i\in\mathcal{M}_{d_B}$ drawn according to a probability distribution $\{p_i\}$. The Holevo capacity of a quantum channel is given by
% \[C_H (T) =\sup_{\{p_i,\rho_i\},\rho_i \in\mathcal{M}_{d_A}}  \Big\{H(B)_{\sum_i p_i T( \rho_i) } -\sum_i p_i H(B)_{ T(\rho_i)} ) \Big\} .\]
% \end{definition}
This Holevo capacity lower bounds the channel's classical capacity, and thus a lower bound on the Holevo capacity also serves as a lower bound on the classical capacity, see also Theorem~\ref{thm-hsw-thm}.

In order to study the set of states that optimize the Holevo capacity more closely, we introduce the following notion:

\begin{definition}[Holevo divergence radius]
    Let $\mathcal{S}_0\subseteq \mathcal{M}_d$ be a closed subset of the set of quantum states. The (Holevo) divergence radius of $\mathcal{S}_0$ is given by \[\chi(\mathcal{S}_0)=\inf_{\sigma\in \mathcal{M}_d} \Big\{ \sup_{\rho \in \mathcal{S}_0} D(\rho||\sigma) \Big\}.\]
\end{definition}

With this formulation, the Holevo capacity of a channel $T$ can be expressed as the divergence radius of the channel image.

\begin{theorem}[{\cite[Section~V]{SW01}}]\label{thm-class-cap-div-rad}
Let $T:\mathcal{M}_{d_A'}\rightarrow \mathcal{M}_{d_B}$ be a quantum channel. Then, we have
\[C_{H}(T)=\chi(\Im(T))=\inf_{\sigma_B \in \mathcal{M}_{d_B}} \Big\{ \sup_{\rho_B \in \Im(T)} D(\rho_B||\sigma_B)\Big\}. \]
\end{theorem}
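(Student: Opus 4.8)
The plan is to recognize that this statement is essentially a repackaging of the already-established Theorem~\ref{thm-class-cap-2}, so the work consists almost entirely of unwinding the definition of the divergence radius and identifying the relevant sets. First I would record the elementary observation that the channel image $\Im(T)=\{T(\rho)\,|\,\rho\in\mathcal{M}_{d_{A'}}\text{ a quantum state}\}$ is a legitimate argument for the divergence radius: since $T$ is linear and the set of quantum states is convex and compact, $\Im(T)$ is the image of a compact convex set under a continuous linear map and hence is itself compact (in particular closed) and convex. This justifies applying Definition (Holevo divergence radius) with $\mathcal{S}_0=\Im(T)$.

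Next I would substitute directly into the definition. By the definition of the image, ranging $\rho_B$ over $\Im(T)$ is the same as ranging $T(\rho)$ over all input states $\rho\in\mathcal{M}_{d_{A'}}$, so that
\[
\chi(\Im(T))
=\inf_{\sigma_B\in\mathcal{M}_{d_B}}\Big\{\sup_{\rho_B\in\Im(T)}D(\rho_B\|\sigma_B)\Big\}
=\inf_{\sigma_B\in\mathcal{M}_{d_B}}\Big\{\sup_{\rho\in\mathcal{M}_{d_{A'}}}D(T(\rho)\|\sigma_B)\Big\}.
\]
At this point I would invoke Theorem~\ref{thm-class-cap-2}, which already gives $C_H(T)=\min_{\sigma}\max_{\rho}D(T(\rho)\|\sigma)$, so that the right-hand side above equals $C_H(T)$ and the two chained equalities in the statement follow at once.

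The only genuine technical point is the passage between $\inf/\sup$ in the divergence radius and the $\min/\max$ appearing in Theorem~\ref{thm-class-cap-2}. I would address this by noting that the state set is compact and $\rho\mapsto D(T(\rho)\|\sigma)$ is continuous on the support-compatible region (and the supremum is unaffected by the states where the divergence is $+\infty$), so the inner supremum is attained and equals a maximum; for the outer optimization, lower semicontinuity of $\sigma\mapsto\sup_\rho D(T(\rho)\|\sigma)$ together with compactness of the state set guarantees the infimum is attained as a minimum. Since Theorem~\ref{thm-class-cap-2} is stated and proven with $\min$ and $\max$ already, this attainment can simply be imported from there rather than re-derived.

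I do not expect a substantive obstacle here: the content was entirely absorbed into Theorem~\ref{thm-class-cap-2} (via Lemma~\ref{thm-chi-as-D} and the equal-distance property of Lemma~\ref{thm-equal-distance-prop}), and what remains is bookkeeping. If anything, the one place warranting a sentence of care is confirming that the supremum over the \emph{image} $\Im(T)$ matches the optimization over input states used in Theorem~\ref{thm-class-cap-2}; this is immediate from the set equality $\{D(T(\rho)\|\sigma_B):\rho\}=\{D(\rho_B\|\sigma_B):\rho_B\in\Im(T)\}$, so no convexity argument beyond compactness of $\Im(T)$ is actually needed.
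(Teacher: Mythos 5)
Your proposal is correct and follows the same route as the paper, which likewise treats this statement as a direct restatement of Theorem~\ref{thm-class-cap-2} (itself proved via Lemma~\ref{thm-chi-as-D} and the equal-distance property of Lemma~\ref{thm-equal-distance-prop}) and simply identifies the optimization over input states with the optimization over the channel image. Your additional remarks on compactness of $\Im(T)$ and the attainment of the $\inf/\sup$ are harmless bookkeeping that the paper leaves implicit.
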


This was originally proven in \cite{SW01}; we also give the proof in greater detail in Section~\ref{sec-shannon-theory}, see Theorem~\ref{thm-class-cap-2}.

This expression for the Holevo capacity allows for a geometric interpretation of the capacity by analyzing the channel image \cite{TT15}. The minimum of the optimization problem is achieved by a unique state $\sigma\in\mathcal{M}_{d_B}$, which we will refer to as the \emph{divergence center}. Furthermore, the states achieving the maximum of the optimization problem are referred to as \emph{peripheral states}. The center lies in the convex hull of the peripheral states, and Carathéodory's convex hull theorem ensures that the center can be written as a convex combination of at most $d+1$ peripheral points.

For any quantum state $\sigma\in\mathcal{M}_{d_B}$, we define an associated \emph{replacer channel} $R_{\sigma}:\mathcal{M}_{d_A'}\rightarrow \mathcal{M}_{d_B}$, by which we mean a constant channel that always outputs the quantum state $\sigma$ for any input, i.e. $R_{\sigma}(\rho)=\sigma \forall \rho \in \mathcal{M}_{d_{A'}}$. 

The Holevo capacity of a quantum channel $T:\mathcal{M}_{d_A'}\rightarrow \mathcal{M}_{d_B}$ can further be seen as a channel divergence quantifying the distance between $T$ and a particular replacer channel $\mathcal{R_{\sigma}}:\mathcal{M}_{d_A'}\rightarrow \mathcal{M}_{d_B}$. 
To be precise, we compare the channel $T$ and the replacer channel which is closest to $T$ in channel divergence, which is the replacer channel that replaces any input with the divergence center of $T$:

\begin{corollary}
For any quantum state $\sigma_B\in \mathcal{M}_{d_B}$, let $R_{\sigma}:\mathcal{M}_{d_{A'}}\rightarrow \mathcal{M}_{d_B}$ be a quantum channel such that $R_{\sigma}(\rho_{A'}) = \sigma_B\forall \rho_{A'}\in\mathcal{M}_{d_{A'}}$. Then, for any quantum channel $T:\mathcal{M}_{d_{A'}} \rightarrow \mathcal{M}_{d_B}$, we have
\[C_{H}(T) = \inf_{\sigma_B\in \mathcal{M}_{d_B}} D(T||R_{\sigma}).\]
\end{corollary}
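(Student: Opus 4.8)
The plan is to prove the corollary by directly unfolding the definition of the channel divergence $D(T\|R_\sigma)$ and recognizing the resulting expression as the Holevo divergence radius from Theorem~\ref{thm-class-cap-div-rad}. Recall that the (unstabilized) channel divergence is $D(T\|R_\sigma)=\sup_{\rho\in\mathcal{M}_{d_{A'}}}D(T(\rho)\|R_\sigma(\rho))$, where the supremum ranges over quantum states $\rho$. Since $R_\sigma$ is by definition the constant channel with $R_\sigma(\rho)=\sigma_B$ for every input, the inner relative entropy depends on $\rho$ only through the output state $T(\rho)$, so that $D(T\|R_\sigma)=\sup_{\rho}D(T(\rho)\|\sigma_B)$.

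The key observation is then that, as $\rho$ runs over all quantum states in $\mathcal{M}_{d_{A'}}$, the output $T(\rho)$ runs over exactly the channel image $\Im(T)$, because a quantum channel maps the state space onto its image. Hence $\sup_{\rho}D(T(\rho)\|\sigma_B)=\sup_{\rho_B\in\Im(T)}D(\rho_B\|\sigma_B)$. Taking the infimum over the replacer state $\sigma_B\in\mathcal{M}_{d_B}$ on both sides of this equality yields $\inf_{\sigma_B}D(T\|R_\sigma)=\inf_{\sigma_B}\sup_{\rho_B\in\Im(T)}D(\rho_B\|\sigma_B)=\chi(\Im(T))$, which equals $C_H(T)$ by Theorem~\ref{thm-class-cap-div-rad}. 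This chain of identities is the whole proof.

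Since each step is a definitional substitution, I do not anticipate a genuine obstacle. The only point requiring a word of care is the identification $\{T(\rho):\rho \text{ a state}\}=\Im(T)$, together with the fact that the infimum over $\sigma_B$ is simply applied to both sides of an equality of functions of $\sigma_B$, so that no minimax exchange is needed. If one wishes to be maximally careful, one may additionally remark that $\Im(T)$ is compact and convex, being the continuous linear image of the compact convex state space, which guarantees that the inner supremum is attained and that the divergence center realizing the outer infimum exists, as already noted in the discussion following Theorem~\ref{thm-class-cap-div-rad}.
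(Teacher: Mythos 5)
Your proof is correct and is exactly the argument the paper intends: the paper's own justification is the one-line remark that the corollary "directly follows by the definition of the replacer channel $R_{\sigma}$ and the channel divergence," which is precisely your unfolding of $D(T\|R_\sigma)=\sup_{\rho}D(T(\rho)\|\sigma_B)$ followed by the identification of $\{T(\rho)\}$ with $\Im(T)$ and an appeal to Theorem~\ref{thm-class-cap-div-rad}. No further comment is needed.
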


The proof directly follows by the definition of the replacer channel $R_{\sigma}$ and the channel divergence.

% \begin{proof}
% The proof directly follows by the definition of the replacer channel $R_{\sigma}$ and the stabilized channel divergence.
%     \begin{equation*}
%         \begin{split}
%             C_{H}(T) &= \inf_{\sigma\in\mathcal{M}_{d_{B}}} \Big\{ \sup_{\rho \in \mathcal{M}_{d_{A'}}} D(T(\rho)||\sigma)\Big\}\\&= \inf_{\sigma\in\mathcal{M}_{d_{B}}} \Big\{ \sup_{\rho \in \mathcal{M}_{d_{A'}}} D(T(\rho)||R_{\sigma}(\rho))\Big\}\\&=\inf_{\sigma\in\mathcal{M}_{d_{B}}} D(T||R_{\sigma}).
%         \end{split}
%     \end{equation*}
% \end{proof}

%This divergence radius fulfills several desirable properties and allows for a geometric interpretation of the Holevo capacity.

\subsection{The entanglement-assisted capacity as a stabilized channel divergence}

When the communication via a quantum channel is assisted by entanglement, the channel's ability to transmit quantum information is given by the quantum mutual information between the channel's input and output, see Theorem~\ref{thm-ent-ass-cap-bsst}.

% \begin{theorem}[{Entanglement-assisted capacity Theorem, \cite{BSST99,BSST02}}]\label{thm-ent-ass-cap-bsst}
% Let $T:\mathcal{M}_{d_{A'}} \rightarrow \mathcal{M}_{d_B}$ be a quantum channel. Then, we have: 
%     \[C^{ea}(T) = \sup_{\rho_{AA'}} I(B:A)_{(\id \otimes T)(\rho_{AA'})}\]
%     where the supremum is over all pure quantum states $\rho_{AA'}\in\mathcal{M}_{d_A}\otimes \mathcal{M}_{d_{A'}}$.
% \end{theorem}

Similarly to the expression in Theorem~\ref{thm-class-cap-div-rad} for the Holevo capacity, we can find an expression for the entanglement-assisted capacity in terms of a quantum divergence. For this capacity, the expression depends on the image of the channel $T:\mathcal{M}_{d_{A'}}\rightarrow \mathcal{M}_{d_B}$ tensored with an identity on an extra reference system $A$, i.e. the image of $\id \otimes T: \mathcal{M}_{d_A}\otimes \mathcal{M}_{d_{A'}} \rightarrow \mathcal{M}_{d_A}\otimes \mathcal{M}_{d_B}$. In line with works in \cite{Watrous04}, we refer to the related objects as a \emph{stabilized divergence radius} and \emph{stabilized channel divergence}. A similar representation was derived for $\alpha$-entropies and $\alpha$-mutual information in \cite{GW14} for $\alpha \in (1,\infty)$, where our special case can be obtained by taking the limit $\alpha\rightarrow1$.

%We show the same thing for $\alpha=1$; technically, that means that it is now known to hold for $\alpha\in [1,\infty)$, I guess (but maybe this was also known before?). Indeed: \cite[Lemma~11]{CMW16}

\begin{definition}[Stabilized divergence radius]
    Let $\mathcal{S}_0\subseteq \mathcal{M}_{d_A} \times \mathcal{M}_{d_B} $ be a closed subset of the set of quantum states. The stabilized divergence radius wrt subsystem $B$ of $\mathcal{S}_0$ is given by \[\eta_B(\mathcal{S}_0)=\inf_{\sigma_B\in \mathcal{M}_{d_B}} \Big\{ \sup_{\rho_{AB}\in \mathcal{S}_0} D(\rho_{AB}||\rho_A \otimes \sigma_B )\Big\}. \]
\end{definition}

\begin{theorem}\label{thm-stab-div-rad}
For a quantum channel $T:\mathcal{M}_{d_{A'}}\rightarrow \mathcal{M}_{d_B}$, we have that its entanglement-assisted capacity is given by
\[C^{ea}(T)=\eta_B(\Im(\id\otimes T)) =\inf_{\sigma_B \in \mathcal{M}_{d_B} } \Big\{ \sup_{\tau_{AB}\in \Im(\id \otimes T)} D(\tau_{AB}||\tau_A \otimes \sigma_B ) \Big\}. \]
\end{theorem}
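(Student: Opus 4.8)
The plan is to mirror the proof of the Holevo-capacity analogue (Theorem~\ref{thm-class-cap-2}), replacing the ensemble identity of Lemma~\ref{thm-chi-as-D} by a variational characterisation of the quantum mutual information. The starting points are the entanglement-assisted capacity theorem (Theorem~\ref{thm-ent-ass-cap-bsst}), which gives $C^{ea}(T)=\sup_{\rho_{AA'}\text{ pure}} I(A:B)_{\tau}$ with $\tau=(\id\otimes T)(\rho_{AA'})$, together with the identity $I(A:B)_{\tau}=D(\tau_{AB}\|\tau_A\otimes\tau_B)$. The first lemma I would establish is the variational identity
\[D(\tau_{AB}\|\tau_A\otimes\sigma_B)=I(A:B)_{\tau}+D(\tau_B\|\sigma_B),\]
valid for every bipartite state $\tau_{AB}$ and every $\sigma_B\in\mathcal M_{d_B}$. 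This follows by expanding both relative entropies exactly as in the computation inside the proof of Theorem~\ref{thm-class-cap-2}, writing $\log(\tau_A\otimes\sigma_B)=\log\tau_A\otimes\id+\id\otimes\log\sigma_B$ and using $\Tr[\tau_{AB}(\log\tau_A\otimes\id)]=-H(A)_\tau$. In particular $D(\tau_{AB}\|\tau_A\otimes\sigma_B)\ge I(A:B)_\tau$, with equality iff $\sigma_B=\tau_B$, so that $\inf_{\sigma_B}D(\tau_{AB}\|\tau_A\otimes\sigma_B)=I(A:B)_\tau$.

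From here the inequality $\eta_B(\Im(\id\otimes T))\ge C^{ea}(T)$ is immediate: for every fixed $\sigma_B$ and every $\tau\in\Im(\id\otimes T)$ the lemma gives $D(\tau\|\tau_A\otimes\sigma_B)\ge I(A:B)_\tau$, so $\sup_\tau D(\tau\|\tau_A\otimes\sigma_B)\ge\sup_\tau I(A:B)_\tau=C^{ea}(T)$, where the supremum over the whole image equals that over pure inputs by data processing (purifying a mixed input only enlarges the reference system). Taking the infimum over $\sigma_B$ preserves the bound.

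The reverse inequality $\eta_B\le C^{ea}$ is the heart of the matter: it amounts to exhibiting a \emph{divergence centre} $\sigma_B^\ast$ with $\sup_\tau D(\tau\|\tau_A\otimes\sigma_B^\ast)\le C^{ea}(T)$, equivalently to exchanging $\sup_\tau$ and $\inf_{\sigma_B}$. Concretely, I would take $\sigma_B^\ast=\tau_B^\ast=T(\rho_{A'}^\ast)$ for an optimal input $\rho^\ast$ (which exists by compactness of the state space and continuity of $I$), so that equality holds at the optimum by the lemma, and then prove $I(A:B)_\tau+D(\tau_B\|\tau_B^\ast)\le C^{ea}$ for all $\tau$. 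This is a genuine trade-off statement: any gain of $D(\tau_B\|\tau_B^\ast)$ must be offset by an at-least-as-large drop in $I(A:B)_\tau$.

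The main obstacle is precisely this saddle-point equality. The relevant function $D(\tau\|\tau_A\otimes\sigma_B)$ is convex in $\sigma_B$ but \emph{not} concave in $\tau$ (the term $-H(AB)_\tau$ is convex), so Sion's minimax theorem does not apply to it verbatim over the image set. The fix I would pursue is to reparametrise by the input state and invoke the concavity of the entanglement-assisted mutual information $\rho_{A'}\mapsto H(\rho_{A'})+H(T(\rho_{A'}))-H(T^c(\rho_{A'}))$; combining this concavity with the first-order optimality of $\rho^\ast$ (the directional derivatives of $\phi$ at $\rho^\ast$ are nonpositive) should yield the required trade-off. An alternative is to recover the saddle structure only after the infimum over $\sigma_B$ has been taken — which restores concavity — and then apply a Ky Fan/Sion argument, or to obtain the formula as the $\alpha\to 1$ limit of the Rényi divergence-radius representation of \cite{GW14}. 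Justifying the concavity input and the compactness/lower-semicontinuity hypotheses of the minimax theorem over the image set (with its a priori unbounded reference dimension) is the step I expect to require the most care.
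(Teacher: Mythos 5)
Your overall strategy --- the variational identity $\inf_{\sigma_B} D(\tau_{AB}\|\tau_A\otimes\sigma_B)=I(A:B)_\tau$, the resulting easy inequality $\eta_B\ge C^{ea}$, and a reparametrisation by the channel input followed by a saddle-point argument with divergence centre $\sigma_B^*=T(\rho^*_{A'})$ --- is essentially the route the paper takes, and the easy direction is fine as written.

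The one genuine soft spot is the concavity input for the hard direction. Concavity of the mutual information $\rho\mapsto H(\rho)+H(T(\rho))-H(T^c(\rho))$ together with first-order optimality of $\rho^*$ does \emph{not} by itself suffice: the function you must show is maximised at $\rho^*$ for fixed $\sigma_B^*=T(\rho^*)$ is $\rho\mapsto I(A:B)_\rho + D(T(\rho)\|T(\rho^*))$, and the second summand is convex in $\rho$, so a vanishing first derivative at $\rho^*$ plus concavity of $I$ alone does not make $\rho^*$ a global maximiser of the sum. What closes the argument is concavity in $\rho_{A'}$ of the \emph{full} objective
\[D\big((\id\otimes T)(\phi_\rho)\,\big\|\,\rho_A\otimes\sigma_B\big) = -H(B|A)_{(\id\otimes T)(\phi_\rho)}-\trace\big(T(\rho)\log\sigma_B\big):\]
the second term is linear in $\rho$, and the first equals $H(\rho)-H(T^c(\rho))$, whose concavity (equivalently, convexity of the conditional entropy $H(B|A)$ of the output in the input marginal) is precisely what the paper isolates as Lemma~\ref{thm-convex-cond-entropy} and proves by a flag-register/strong-subadditivity argument. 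With that, the objective is concave in $\rho_{A'}$ and convex in $\sigma_B$ on compact convex sets (the reference can be fixed at dimension $d_{A'}$ by purification, so the worry about unbounded reference dimension evaporates), and Sion's minimax --- or, equivalently, your first-order argument applied to the now-concave objective --- finishes the proof. Your proposed alternative of restoring concavity only after the infimum over $\sigma_B$ has been taken does not help, since a minimax theorem needs the joint convex--concave structure of the two-variable function, not of its partial infimum.
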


\begin{proof}
Let $T:\mathcal{M}_{d_{A'}}\rightarrow \mathcal{M}_{d_B}$ be a quantum channel. \cite{BSST02} gives a one-letter formula for the entanglement-assisted capacity of $T$ in terms of the mutual information, which we can formulate as a relative entropy in the following way:
\begin{equation*}
\begin{split}
  C^{ea}(T)&=\sup_{\rho_{AA'}\text{ pure}} I(A':B)_{(\id \otimes T) \rho_{AA'} }
\\&=\sup_{\rho_{AA'}\text{ pure}} D((\id \otimes T) \rho_{AA'}|| \rho_A \otimes T(\rho_{A'} ) ) 
\\&=\sup_{\rho_{AA'}\text{ pure}} \Big\{ \inf_{\sigma_B \in \mathcal{M}_{d_B}} D((\id \otimes T) \rho_{AA'}|| \rho_A \otimes \sigma_B  ) \Big\} .
%\\&=\sup_{\rho} D((\id \otimes T) (( \sqrt{\rho} \otimes \mathbbm{1} ) \ket{\phi_+}\bra{\phi_+} ( \sqrt{\rho} \otimes \mathbbm{1} ) )|| N(\rho)\otimes \rho^{T} )
\end{split}
\end{equation*}
%This is true because it would achieve its minimum at $\sigma=N(\phi_B)$.

This divergence can be rewritten as follows:
\begin{equation*}
\begin{split}
   & D((\id \otimes T) (\rho_{AA'})|| \rho_A \otimes \sigma_B ) )  \\& = 
   \trace \Big( (\id \otimes T) (\rho_{AA'}) \big( \log ( (\id \otimes T) (\rho_{AA'}) )-\log ( \rho_A \otimes \sigma_B) \big) \Big) 
   \\& 
   %=\trace \Big( (\id \otimes T) (\rho_{AA'}) \big( \log ( (\id \otimes T) (\rho_{AA'}) ) -  \log( \rho_A) \otimes \mathbbm{1} - \mathbbm{1} \otimes \log (\sigma_B ) \big) \Big) 
  % \\& 
   =\trace \Big( (\id \otimes T) (\rho_{AA'})   \log ( (\id \otimes T) (\rho_{AA'}) ) \Big) \\&\qquad - \trace \Big( \rho_A \log( \rho_A )   \Big) - \trace\Big( T (\rho_{A'}) \log ( \sigma_B )\Big)
   \\& =  - H(B|A)_{(\id \otimes T) (\rho_{AA'}) } - \trace\Big( T(\rho_{A'}) \log ( \sigma_B )\Big).
\end{split}
\end{equation*}

The first equality follows from the definition of the relative entropy, the second inequality holds because the state in the second logarithm is separable, and therefore the subparts commute. The third equality follows from the linearity of the trace and the last inequality follows from the definition of the conditional quantum entropy.

This expression is convex in $\sigma_B$, as the matrix logarithm is concave.
The expression $-\trace\Big( T(\rho_{A'}) \log ( \sigma _B)\Big)$ is linear in $\rho_{A'}$. The negative conditional entropy $-H(B|A)_{(\id \otimes T) (\rho_{AA'})}$ is concave in $\rho_{A'}$, as the conditional quantum entropy is convex in the marginal state, see Lemma~\ref{thm-convex-cond-entropy}.

Therefore, we can exchange the supremum and the infimum, and obtain
\begin{equation*}
\begin{split}
C^{ea}(T)&=\inf_{\sigma_B \in \mathcal{M}_{d_B}}\Big\{ \sup_{\rho_{AA'}\in \mathcal{M}_{d_A}\otimes \mathcal{M}_{d_{A'}} }   D((\id \otimes T) \rho_{AA'} ||  \rho_A\otimes \sigma_B ) \Big\}\\&= \inf_{\sigma_B \in \mathcal{M}_{d_B}} \Big\{ \sup_{\tau_{AB}\in \Im(\id \otimes T)} D(\tau_{AB} ||   \tau_{A}\otimes\sigma_B)\Big\} 
\\&=\eta_B(\Im(\id\otimes T)).
\end{split}
\end{equation*}
\end{proof}

\begin{corollary}
For any quantum state $\sigma_B\in \mathcal{M}_{d_B}$, let $R_{\sigma}:\mathcal{M}_{d_{A'}}\rightarrow \mathcal{M}_{d_B}$ be a quantum channel such that $R_{\sigma}(\rho_{A'}) = \sigma_B\forall \rho_{A'}\in\mathcal{M}_{d_{A'}}$. Then, for any quantum channel $T:\mathcal{M}_{d_{A'}} \rightarrow \mathcal{M}_{d_B}$, we have
\[C^{ea}(T)=\inf_{\sigma_B\in\mathcal{M}_{d_B}} D_{stab}(T||R_{\sigma}).\]
\end{corollary}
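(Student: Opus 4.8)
The plan is to deduce this corollary directly from Theorem~\ref{thm-stab-div-rad} together with a single elementary computation of how the replacer channel behaves under stabilization. Recall that the stabilized channel divergence is, by definition,
\[
D_{stab}(T||R_{\sigma}) = \sup_{\rho_{AA'}} D\big((\id\otimes T)(\rho_{AA'}) || (\id\otimes R_{\sigma})(\rho_{AA'})\big),
\]
so the entire argument reduces to identifying the second argument of this divergence.

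First I would compute $(\id\otimes R_{\sigma})(\rho_{AA'})$. Writing the replacer channel in the form $R_{\sigma}(X)=\tr(X)\,\sigma_B$ and using linearity, one finds for any bipartite state $\rho_{AA'}$ that
\[
(\id\otimes R_{\sigma})(\rho_{AA'}) = \tr_{A'}(\rho_{AA'})\otimes \sigma_B = \rho_A\otimes\sigma_B,
\]
since $R_{\sigma}$ discards its input and always outputs $\sigma_B$, while the identity channel on $A$ leaves the marginal $\rho_A$ intact. Substituting this identity into the definition of $D_{stab}$ yields
\[
D_{stab}(T||R_{\sigma}) = \sup_{\rho_{AA'}} D\big((\id\otimes T)(\rho_{AA'}) || \rho_A\otimes\sigma_B\big).
\]

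Next I would rewrite the supremum. Since the divergence on the right depends on $\rho_{AA'}$ only through the output state $\tau_{AB}:=(\id\otimes T)(\rho_{AA'})$ and its marginal $\tau_A=\rho_A$, the supremum over input states coincides with the supremum over $\tau_{AB}\in\Im(\id\otimes T)$, giving
\[
D_{stab}(T||R_{\sigma}) = \sup_{\tau_{AB}\in\Im(\id\otimes T)} D(\tau_{AB} || \tau_A\otimes\sigma_B).
\]
Taking the infimum over $\sigma_B\in\mathcal{M}_{d_B}$ and comparing with the definition of the stabilized divergence radius then shows $\inf_{\sigma_B} D_{stab}(T||R_{\sigma}) = \eta_B(\Im(\id\otimes T))$, which by Theorem~\ref{thm-stab-div-rad} is exactly $C^{ea}(T)$.

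I do not expect a genuine obstacle: the substantive content is already carried by Theorem~\ref{thm-stab-div-rad}, and the only fact to verify is the replacer-channel identity $(\id\otimes R_{\sigma})(\rho_{AA'})=\rho_A\otimes\sigma_B$. The one point worth stating with care is that the stabilization in $D_{stab}$ (the supremum over the reference system $A$, including its dimension) matches the reference system already present in the formula for $C^{ea}(T)$, so that no optimization over $A$ is lost when passing between the two expressions; this mirrors exactly the argument used for the Holevo capacity in the preceding corollary.
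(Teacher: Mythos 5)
Your proposal is correct and matches the paper's (implicit) argument: the paper states this corollary without proof, remarking for the analogous Holevo-capacity corollary that it "directly follows by the definition of the replacer channel and the channel divergence," which is exactly the computation $(\id\otimes R_{\sigma})(\rho_{AA'})=\rho_A\otimes\sigma_B$ followed by the identification with $\eta_B(\Im(\id\otimes T))$ from Theorem~\ref{thm-stab-div-rad} that you carry out. Your added remark that $\tau_A=\rho_A$ (by trace preservation of $T$) and that the reference-system optimizations coincide is exactly the right point to make explicit.
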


The divergence of two channels is known to be additive for replacer channels in the second argument \cite[Lemma~11]{CMW16}.

\subsection{The divergence center and the stabilized divergence center for some families of channels}

Due to data processing inequality, for all channels $T$, we recover the well-known fact that the entanglement-assisted capacity is greater or equal to the Holevo capacity:
\begin{equation*}
    \begin{split}
       C^{ea}(T) & = \sup_{\rho_{AB}\in \Im(\id\otimes T)} D(\rho_{AB}|| \rho_A \otimes \sigma_B^*)\\&\geq \sup_{\rho_{AB}\in \Im(\id\otimes T)}  D(\rho_B   || \sigma_B^*) \\&  = \sup_{\rho_B \in \Im(T) }  D(\rho_B || \sigma_B^*)  \\& \geq \inf_{\sigma\in\mathcal{M}_{d_B}} \sup_{\rho_B \in \Im(T) }  D(\rho_B || \sigma_B)  \\&=C_H(T).
    \end{split}
\end{equation*}
%For all channels. (Then, whenever we can prove $C_{ea}\leq \chi$, we are actually proving equality.)

Beyond this observation, there are some families of quantum channels for which we can make more precise statements about the centers. For some quantum channels, e.g. depolarizing qubit channels in Example~\ref{ex-depo-channel-center}, the divergence center and the stabilized divergence center are given by the same state. However, there are also examples of quantum channels where the two centers are not the same. One family of channels where this is the case is the qubit amplitude damping channel for damping parameters $\gamma\in (0,1)$, which we examine in more detail in Example~\ref{ex-ampdamp-center}.

\begin{example}[Depolarizing qubit channels]\label{ex-depo-channel-center}
    Let $T_p(\rho)=(1-p)\rho+\frac{p}{4}(\rho+X\rho X +Y\rho Y+ Z \rho Z)$ denote the qubit depolarizing channel with depolarizing parameter $p$, which is completely positive for $p\in [0,\frac{4}{3}]$.
    
    In the Bloch sphere, the channel image of the qubit depolarizing channel is a contracted sphere where the radius decreases for increasing parameter $p$. For all $p$, the center of this sphere is the center of the Bloch sphere, i.e. the maximally mixed state $\frac{\mathbbm{1}}{2}$. The peripheral points are the points on the surface of the contracted ball. One example of a peripheral point is the image of $\ket{0}\bra{0}$, given by $(1-\frac{p}{2})\ket{0}\bra{0}+\frac{p}{2}\ket{1}\bra{1}$. %They are all the same distance away from the center, and this distance is
    Inserting these two points, the divergence radius is given by $\chi(T_p)=D((1-\frac{p}{2})\ket{0}\bra{0}+\frac{p}{2}\ket{1}\bra{1}||\frac{\mathbbm{1}}{2})= (1-\frac{p}{2})\log(2(1-\frac{p}{2}))+\frac{p}{2}\log(2\frac{p}{2}) =1+(1-\frac{p}{2})\log(1-\frac{p}{2}) + \frac{p}{2}\log(\frac{p}{2})$, which is precisely the classical capacity of the qubit depolarizing channel as found in \cite{BSST99}.

    For the entanglement-assisted capacity, consider the image of $\id_2 \otimes T_p$. Due to symmetry, the center has to be $\frac{\mathbbm{1}}{2}$. 
    One peripheral point is obtained by inputting the maximally entangled state, which is mapped to the following state: $(\id \otimes T_p) (\phi_+)=(1-\frac{3p}{4})\phi_+ + \frac{p}{4} (\phi_- +\psi_+ + \psi_-)$ with the Bell basis states. The marginals of this state are $\rho_A=\rho_B=\frac{\mathbbm{1}}{2}$, and therefore $C^{ea}(T_p)=D((\id \otimes T_p) (\phi_+)||\frac{\mathbbm{1}}{2}\otimes \frac{\mathbbm{1}}{2})=(1-\frac{3p}{4})\log((1-\frac{3p}{4})/(1/4))+\frac{3p}{4}\log(4\frac{3p}{4})=2+(1-\frac{3p}{4})\log(1-\frac{3p}{4}) + \frac{3p}{4}\log(\frac{p}{4})$, which is precisely the entanglement-assisted capacity of the qubit depolarizing channel as found in \cite{BSST99}.
\end{example}

%\begin{example}[Unital qubit channels]
    %For Pauli channels, the channel image is symmetric along the axes of the Bloch sphere. Due to invariance of the capacities under unitary transformations, the divergence center and the stabilized divergence center for unital qubit channels is the maximally mixed state $\frac{\mathbbm{1}}{2}$.
%\end{example}

There also exist channels where the divergence center and the stabilized divergence center do not agree:

\begin{example}[Amplitude damping qubit channels]\label{ex-ampdamp-center}
By symmetry, the divergence center and the stabilized divergence center are on the $Z$-axis of the Bloch sphere. We parameterize the divergence center by $\sigma_{div}=\begin{pmatrix}
s_{div} & 0\\ 0&1-s_{div }\end{pmatrix}$ for $s_{div}\in [0,1]$, and the stabilized divergence center by $\sigma_{stab}=\begin{pmatrix}
s_{stab} & 0\\ 0&1-s_{stab}\end{pmatrix}$ for $s_{stab}\in[0,1]$.

%In both cases, we can parameterize it as $\sigma_{div /stab}=\begin{pmatrix}
%s & 0\\ 0&1-s\end{pmatrix}$ by only a single parameter $s_{div}\in [0,1]$ for the divergence center, and a parameter $s$. 

For the Holevo divergence center, we parameterize the channel's output states for pure input states $\rho=\begin{pmatrix} a & \sqrt{a(1-a)}\\ \sqrt{a(1-a)} &1-a\end{pmatrix}$ by \[T_{\gamma}(\rho)=\begin{pmatrix}a+\gamma (1-a)&-\sqrt{(1-\gamma)a(1-a)}\\-\sqrt{(1-\gamma)a(1-a)} & (1-\gamma)(1-a) \end{pmatrix}.\]  %The states that have maximal distance will be the channel output states resulting from pure input states, which can be expressed in terms of one parameter: $\rho=\begin{pmatrix} a & \sqrt{a(1-a)}\\ \sqrt{a(1-a)} &1-a\end{pmatrix}$, and are mapped to states that depend on two parameters: $T_{\gamma}=\rho=\begin{pmatrix}a+\gamma (1-a)&-\sqrt{(1-\gamma)a(1-a)}\\-\sqrt{(1-\gamma)a(1-a)} & (1-\gamma)(1-a) \end{pmatrix}$. For any fixed damping parameter $\gamma$, we thus want to find the minimal $s$ for all $a$, and the capacity can be found by finding the maximizing $a$. 

For the entanglement-assisted capacity, we parametrize the channel's output for pure input states with Schmidt coefficients $a$ and $\sqrt{1-a^2}$ by 
\[(\id\otimes T_{\gamma}(\rho))=\begin{pmatrix} a^2 & \gamma a \sqrt{(1-a^2)} & 0 & a\sqrt{(1-\gamma)(1-a^2)} \\ 0&0&0&0\\ 0&0& \gamma(1-a^2) &0\\a\sqrt{(1-\gamma)(1-a^2)}&0&0 &(1-\gamma)(1-a^2)\end{pmatrix} .\]

For a fixed damping parameter, we minimze $s_{div}$ and $s_{stab}$ for all $a\in [0,1]$ to describe the divergence center and the stabilized divergence center, respectively. The results are plotted in Figure~\ref{fig-ampdamp-center} and clearly show that the center and the stabilized center are only in agreement for damping parameter $\gamma\rightarrow 0$ (where the center for both would be the maximally mixed state, where $s_{div}=s_{stab}=\frac{1}{2}$) and $\gamma\rightarrow 1$ (for which the center becomes $\ketbra{0}{0}$, i.e. $s_{div}=s_{stab}=1$). For damping parameters $\gamma\in(0,1)$, they clearly differ, and the divergence center consistently has a larger largest eigenvalue than the stabilized divergence center. In a Bloch sphere picture, this corresponds to the divergence center lying higher up on the z-axis than the stabilized center. Once $s_{div/stab}$ is found for a given damping parameter, we can find the corresponding parameter $a$ and obtain the capacities - the results are plotted in Figure~\ref{fig-ampdamp-cap} and are consistent with previous results in \cite{BSST02}.

\begin{figure}
\centering
\begin{subfigure}{.7\textwidth}
  \centering
  \includegraphics[width=\linewidth]{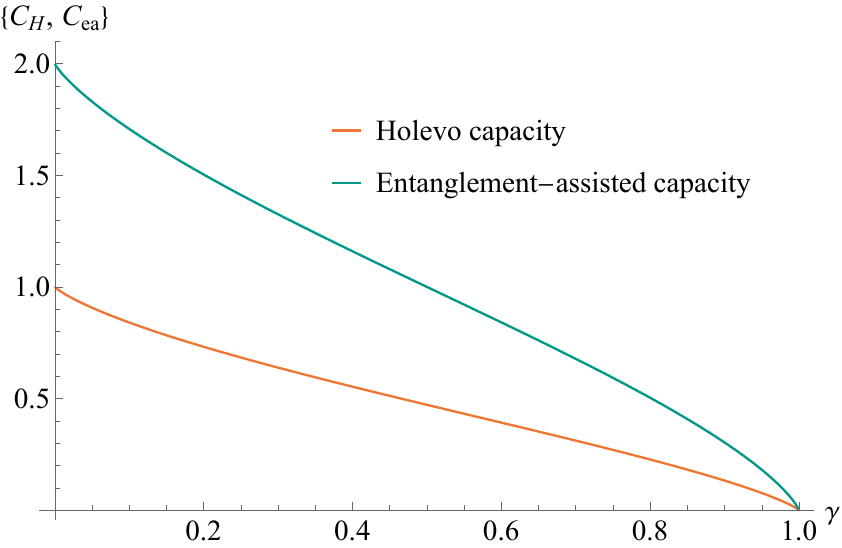}
  \caption{Holevo capacity and entanglement-assisted capacity for the qubit amplitude damping channel. These findings match with the previous known results from \cite{BSST02}.}
  \label{fig-ampdamp-cap}
\end{subfigure}%
\hspace{1cm}
\begin{subfigure}{.7\textwidth}
  \centering
  \includegraphics[width=\linewidth]{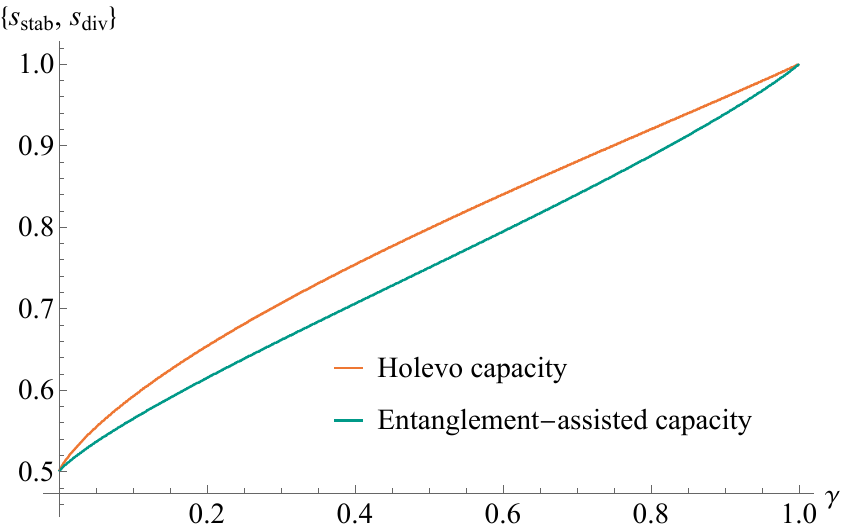}
  \caption{Largest eigenvalue of the divergence center and the stabilized divergence center for the qubit amplitude damping channel. In the limit $\gamma\rightarrow 1$, both approach the state $\ketbra{0}{0}$, i.e. $s\rightarrow 1$.}
  \label{fig-ampdamp-center}
\end{subfigure}
\caption{Numerical results for the qubit amplitude damping channel.}
\label{fig:test}
\end{figure}
\end{example}

%\subsubsection{Entanglement-breaking channels}
\begin{example}[Entanglement-breaking channels]
   
    Let $T:\mathcal{M}_{d_{A'}}\rightarrow \mathcal{M}_{d_B}$ be an entanglement-breaking channel.
    For entanglement-breaking channels, for any $\rho_{AA'}\in\mathcal{M}_{d_A}\otimes \mathcal{M}_{d_{A'}}$, there exist a probability distribution $p_k$ and $k$ quantum states $\tau_k\in \mathcal{M}_{d_A}$ and $k$ quantum states $\omega_k\in \mathcal{M}_{d_B}$ such that $\id \otimes T (\rho_{AA'})=\sum_k p_k \tau_k\otimes \omega_k$.
Then, we have
\begin{equation*}\begin{split}
       &  D( \sum_k p_k \tau_k\otimes \omega_k  ||\sum_k p_k   \tau_k \otimes \sigma )\\& %D( \sum_k p_k \tau_k\otimes \omega_k^{\rho}  ||\rho_A \otimes \sigma_B )\\&
       \leq \sum_k p_k D(\tau_k\otimes \omega_k || \tau_k \otimes \sigma ) \\&= \sum_k p_k D( \omega_k  || \sigma ).
    \end{split}
    \end{equation*}
The states $\omega_k$ and probability distributions $p_k$ depend on the input state that maximizes the expression for a given channel. The state $\sum_k p_k \omega_k$ is in the channel image, but not necessarily its summands. If all $\omega_k$ (for which $p_k>0$) are states in the channel image, then the above cannot exceed $\sup_{\rho \in \Im(T)} D(\rho||\sigma )= C_H(T)$, and we have equality of the Holevo capacity and the entanglement-assisted capacity. In fact, since the Holevo capacity is equal to the classical capacity for entanglement-breaking channels \cite{Shor02_2}, we would have equality of the classical capacity and the entanglement-assisted capacity.

Note therefore that there exist entanglement-breaking channels with entanglement-assisted capacity that is not equal to the Holevo quantity; one example is the depolarizing channel from Example~\ref{ex-depo-channel}, which is entanglement breaking for depolarizing probability $p\geq \frac{d}{d+1}$ \cite[Appendix]{Hol12} and has $C_{ea}(T_p)=(d+1) C_H(T_p)$ for $p=1$ \cite{BSST99}. A more detailed discussion and further examples of this phenomenon can be found in \cite{Hol12}.
\end{example}

\begin{example}[Classical-quantum channels]
Classical-quantum channels are entanglement-breaking. For classical-quantum channels, the probability distribution for the channel applied to $\rho$ is given by $p_k=\trace( \ketbra{k}{k} \rho_B \ketbra{k}{k})=\bra{k}\rho_B \ket{k}$. Therefore, the channel image contains all probability distributions (the eigenvalues of any $\rho$), including the distribution $\{1, 0,0,...,0\}$. In other words, let $\rho=\ketbra{0}{0}$; then the channel maps this to $ \omega_0$, and similarly for all other $k$. Therefore, for all $k$, $\omega_k$ is indeed in the channel image, 
 %Then, we have 
% \begin{equation*}\begin{split}
%       \eta(T) &=  D( \sum_k p_k \tau_k\otimes \omega_k  ||\sum_k p_k   \tau_k \otimes \sigma )\\& = \sum_k p_k D( \omega_k  || \sigma )
%        \\& =
%        \sup_{\rho} \sum_k p_k D( \rho || \sigma ) \\&= \sup_{\rho} D( \rho || \sigma )=\chi(T)
%     \end{split}
%     \end{equation*}
and we have $C^{ea}(T)=C_H(T)$, which reproduces the results from
\cite{Hol12} and \cite{Shi12}.
\end{example}

\section{Lower bounds and upper bounds on quantum channel divergences}
\label{sec-bounding}

As we have found in Section~\ref{sec-stab-div-rad}, the Holevo capacity and the entanglement-assisted capacity have expressions in terms of a channel divergence and a stabilized channel divergence with a replacer channel in the second argument.
In this section, we show that the (stabilized) channel divergence in question can be bounded in terms of the squared $1\rightarrow 1$-distance between the channel and the associated replacer channel.
This norm fulfills a stabilization property which allows us to relate the stabilized distance measure to the non-stabilized version.

At the basis of our analysis in Section~\ref{sec-taylors-thm} lies a continuity bound on the entanglement-assisted capacity in the vicinity of a replacer channel. For any replacer channel, its entanglement-assisted capacity is zero. Furthermore, for a sequence of channels containing the replacer channel, the capacity has a minimum at the replacer channel, and the first derivative of the capacity in that point is zero. Then, we show that we can find an upper bound on the second derivative that does not depend on the channel input, and only on the replacer channel output, in Theorem~\ref{thm-second-deriv}. This allows us to make use of Taylor's theorem in order to upper bound the entanglement-assisted capacity in the vicinity of the replacer channel by an expression that is quadratic in the distance between the channels in Theorem~\ref{thm-taylor}.

\subsection{Stabilization of the induced Schatten norm and a lower bound on the channel divergence}

% For any real number $p\geq 1$ and any matrix $A\in \mathcal{M}_d$, the Schatten $p$-norm of $A$ is given by
% \[
% \|A\|_{p} := \left(\trace\left((A^{\dagger}A)^{\frac{p}{2}}\right)\right)^{\frac{1}{p}}.
% \]

% For any $1\leq p,q\leq \infty$, and for any linear map $\Lambda:\mathcal{M}_{d_{A'}}\rightarrow \mathcal{M}_{d_B}$, the induced $q\rightarrow p$-norm is given by
%    \[\|\Lambda\|_{q\rightarrow p}= \sup_{x\neq 0} \frac{\|\Lambda (x) \|_p}{\|x\|_q}\]
% where $\|\cdot \|_p$ denotes the Schatten p-norm on $\mathcal{M}_{d_B}$ and $\|\cdot \|_q$ denotes the Schatten q-norm on $\mathcal{M}_{d_A}$.

We recall the definitons of the Schatten $p$-norm for matrices and the induced Schatten $q\rightarrow p$-norm for linear maps introduced in Section~\ref{sec-measures-of-distance-and-information}. For some combinations of $p$ and $q$, the induced Schatten norm fulfills a special property:

\begin{theorem}[{Stabilization, \cite[Theorem~4]{Watrous04}}]
For any linear map $\Lambda:\mathcal{M}_{d_{A'}}\rightarrow \mathcal{M}_{d_B}$, and any $p\geq 2$ and $q\leq 2$ we have
    \[\|\Lambda\otimes \id_R\|_{q\rightarrow p} = \|\Lambda\|_{q\rightarrow p} \]
    where $\id_R$ denotes the identity map on $ \mathcal{M}_{d_R}$.
\end{theorem}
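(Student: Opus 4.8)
The plan is to prove the two inequalities $\|\Lambda\otimes\id_R\|_{q\rightarrow p}\ge\|\Lambda\|_{q\rightarrow p}$ and $\|\Lambda\otimes\id_R\|_{q\rightarrow p}\le\|\Lambda\|_{q\rightarrow p}$ separately, since only the second carries real content. The first follows by restricting the supremum defining the stabilized norm to product inputs: for any $x\in\mathcal{M}_{d_{A'}}$ and any fixed rank-one $e\in\mathcal{M}_{d_R}$, set $X=x\otimes e$, so that $(\Lambda\otimes\id_R)(X)=\Lambda(x)\otimes e$. Multiplicativity of the Schatten norms under tensor products gives $\|(\Lambda\otimes\id_R)(X)\|_p=\|\Lambda(x)\|_p\,\|e\|_p$ and $\|X\|_q=\|x\|_q\,\|e\|_q$; because $e$ has rank one all of its Schatten norms coincide, the factor $\|e\|_p/\|e\|_q$ cancels, and the ratio reduces to $\|\Lambda(x)\|_p/\|x\|_q$. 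Taking the supremum over $x$ yields the claim. Note that the hypothesis $q\le 2\le p$ is not used in this direction.

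For the reverse inequality I would first record two simplifications. A compactness argument (the $\|\cdot\|_q$-unit ball is compact in finite dimension and the map is continuous) shows the supremum is attained, so I may fix an optimal input $X$. Writing $X$ in its operator-Schmidt decomposition across the $A'\!:\!R$ cut, $X=\sum_k c_k\,A_k\otimes B_k$ with $\{A_k\},\{B_k\}$ Hilbert--Schmidt orthonormal, shows that at most $d_{A'}^2$ terms are nonzero, which by a standard reduction lets me assume the reference dimension is bounded in terms of $d_{A'}$ alone. The genuinely stable anchor is the Hilbert--Schmidt point $q=p=2$: there $\|\cdot\|_2$ is the inner-product norm on the Hilbert space $\mathcal{M}_d$, the induced $2\rightarrow 2$ norm of a superoperator is exactly its operator norm as a map of Hilbert spaces, and $\|\Lambda\otimes\id_R\|_{2\rightarrow2}=\|\Lambda\|_{2\rightarrow2}$ is the elementary fact that tensoring an operator with an identity leaves its operator norm unchanged.

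The core of the argument, and the step I expect to be the main obstacle, is to transport stability from this Hilbert--Schmidt anchor to the target exponents $q\le 2\le p$ via a Stein-type complex interpolation. I would pass to the dual formulation $\|(\Lambda\otimes\id_R)(X)\|_p=\Tr\!\big[Y^\dagger(\Lambda\otimes\id_R)(X)\big]$ for a suitable $Y$ with $\|Y\|_{p'}=1$, where $p'$ is the conjugate exponent and $p'\le 2$ precisely because $p\ge 2$. I then embed $X$ and $Y$ into analytic families $X_z,Y_z$ built from their polar decompositions with $z$-dependent powers of $|X|$ and $|Y|$, chosen so that on the two edges $\operatorname{Re}z\in\{0,1\}$ of the strip the relevant norms become Schatten-$2$ norms. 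The scalar function $F(z)=\Tr[Y_z^\dagger(\Lambda\otimes\id_R)(X_z)]$ is analytic and bounded, so Hadamard's three-lines theorem bounds $|F|$ at the interior point — the actual pair $(q,p)$ — by the boundary suprema. The delicate part is the exponent bookkeeping: one must arrange each boundary estimate to reduce to a quantity controlled by the \emph{non-stabilized} norm $\|\Lambda\|_{q\rightarrow p}$ together with the already-established Hilbert--Schmidt stability, and ensure the two boundary bounds recombine to give exactly the constant $\|\Lambda\|_{q\rightarrow p}$ rather than a product of norms at mismatched exponents. This is where the hypotheses $q\le 2$ and $p\ge 2$ are essential, since they place the interior point between the two $\|\cdot\|_2$ edges. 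Combining the two inequalities then yields the stated equality.
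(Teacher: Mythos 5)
The paper does not prove this statement itself; it imports it verbatim from Watrous (2004, Theorem~4), so the comparison below is with that source's argument. Your first direction, $\|\Lambda\otimes \id_R\|_{q\rightarrow p}\geq \|\Lambda\|_{q\rightarrow p}$ via rank-one product inputs, is correct and is indeed independent of the hypothesis $q\leq 2\leq p$.

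The reverse direction, however, has a genuine gap, and it is not merely an omitted computation. Your interpolation scheme has a single controlled anchor, the point $(q,p)=(2,2)$, and you propose to reach a general pair $q\leq 2\leq p$ by a three-lines argument whose \emph{both} boundary edges "become Schatten-$2$ norms." But Stein/Hadamard interpolation places the interior exponent pair on the segment joining the two boundary exponent pairs in the $(1/q,1/p)$-square: if both edges sit at $(2,2)$, the interior point is forced to be $(2,2)$ as well, so the construction cannot land at $q<2<p$. To escape this you would need a second boundary pair $(q_1,p_1)\neq(2,2)$ at which $\|\Lambda\otimes\id_R\|_{q_1\rightarrow p_1}$ is already controlled by the unstabilized norm --- which is precisely the statement being proved. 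The standard proof avoids this circularity by a different mechanism: decompose $X=\sum_{ij}X_{ij}\otimes e_ie_j^*$ into blocks over the reference system and use the two block inequalities $\|Y\|_p^2\leq\sum_{ij}\|Y_{ij}\|_p^2$ for $p\geq 2$ and $\sum_{ij}\|X_{ij}\|_q^2\leq\|X\|_q^2$ for $q\leq 2$ (each follows from the triangle inequality for $\|\cdot\|_{p/2}$, respectively its reverse for positive operators when $q/2\leq 1$, applied to $Y^{\dagger}Y$ and $XX^{\dagger}$ column- and row-wise). Chaining these through $\sum_{ij}\|\Lambda(X_{ij})\|_p^2\leq\|\Lambda\|_{q\rightarrow p}^2\sum_{ij}\|X_{ij}\|_q^2$ gives the bound in three lines; the $\ell_2$-sum over blocks is the bridging device your proposal is missing, and it is exactly where the hypotheses $q\leq 2\leq p$ enter. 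If you want to keep an interpolation flavour, the right place for it is in proving those two block inequalities (interpolating between the exponents $2$ and $\infty$, respectively $2$ and $1$), not in transporting the full stabilization statement from a single anchor.
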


and subsequently, because of the relations between $p$-norms on finite dimensional spaces \cite{Johnson12},

\begin{corollary}
\label{thm-norm-stab}
 For any linear map $\Lambda:\mathcal{M}_{d_{A'}}\rightarrow \mathcal{M}_{d_B}$, we have
\[\|\Lambda\otimes \id_{d_R}\|_{1\rightarrow 1} \leq d_B\|\Lambda\|_{1\rightarrow 1}. \]
\end{corollary}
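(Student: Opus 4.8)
The plan is to sandwich the $1\rightarrow 1$ norm between two norms to which the \textbf{Stabilization Theorem} applies, and to arrange the finite-dimensional norm-equivalence factors [Johnson12] so that only the output dimension $d_B$ survives. The first, and conceptually decisive, step is to reduce to the case $d_R=d_B$. By definition the diamond norm is the supremum of $\|\Lambda\otimes\id_{d_R}\|_{1\rightarrow 1}$ over all reference dimensions, and as recalled right after the definition of the diamond distance it already stabilizes at $d_R=d_B$. Hence every instance is dominated by the output-dimension instance,
\[
\|\Lambda\otimes\id_{d_R}\|_{1\rightarrow 1}\;\le\;\|\Lambda\|_{\diamond}\;=\;\|\Lambda\otimes\id_{d_B}\|_{1\rightarrow 1},
\]
so it suffices to prove the claimed bound for $\Lambda\otimes\id_{d_B}$.

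With $d_R$ capped at $d_B$, the remaining steps are routine conversions. For any input $X$ with $\|X\|_1\le 1$, the output $Y=(\Lambda\otimes\id_{d_B})(X)$ lives in $\mathcal{M}_{d_B}\otimes\mathcal{M}_{d_B}\cong\mathcal{M}_{d_B^2}$, which carries exactly $d_B^2$ singular values; the equivalence of Schatten norms on finite-dimensional spaces [Johnson12] then yields $\|Y\|_1\le\sqrt{d_B^2}\,\|Y\|_2=d_B\|Y\|_2$. Bounding $\|Y\|_2\le\|\Lambda\otimes\id_{d_B}\|_{1\rightarrow 2}$ and applying the Stabilization Theorem with $q=1\le 2\le 2=p$ removes the reference system, $\|\Lambda\otimes\id_{d_B}\|_{1\rightarrow 2}=\|\Lambda\|_{1\rightarrow 2}$. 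Finally $\|\Lambda\|_{1\rightarrow 2}\le\|\Lambda\|_{1\rightarrow 1}$ because $\|\cdot\|_2\le\|\cdot\|_1$, and chaining the four estimates gives $\|\Lambda\otimes\id_{d_R}\|_{1\rightarrow 1}\le d_B\|\Lambda\|_{1\rightarrow 1}$.

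The main obstacle is precisely the bookkeeping of the dimension factor: a naive sandwich that converts the output trace norm of $(\Lambda\otimes\id_{d_R})(X)$ directly into a Hilbert--Schmidt norm produces the factor $\sqrt{d_B d_R}$, which still depends on the auxiliary system and is too weak. The resolution is to \emph{first} invoke stabilization of the diamond norm to fix $d_R=d_B$, so that the output sits in $\mathcal{M}_{d_B^2}$ and the equivalence factor becomes exactly $\sqrt{d_B^2}=d_B$, and only \emph{then} route the Hilbert--Schmidt estimate through the $1\rightarrow 2$ norm, which is itself invariant under tensoring with $\id_{d_B}$ and therefore contributes no further reference-dependent factor. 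That the constant $d_B$ cannot be improved is witnessed by the transpose map on $\mathcal{M}_{d_B}$, whose $1\rightarrow 1$ norm equals $1$ while $\|\,\cdot\otimes\id_{d_B}\|_{1\rightarrow 1}$ equals $d_B$.
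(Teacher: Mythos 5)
Your core estimate is correct and is essentially the argument the paper gestures at: for a reference system of dimension $d_R \le d_B$, the output of $\Lambda\otimes\id_{d_R}$ is a matrix with at most $d_Bd_R\le d_B^2$ singular values, so Cauchy--Schwarz gives $\|Y\|_1\le d_B\|Y\|_2$, the $q\le 2\le p$ stabilization theorem strips the reference system from the $1\to2$ norm, and $\|\cdot\|_2\le\|\cdot\|_1$ closes the chain. You also correctly identify the point that the paper's one-line derivation glosses over: applied directly with an arbitrary reference dimension, this route only yields the factor $\sqrt{d_Bd_R}$, which exceeds $d_B$ as soon as $d_R>d_B$.

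The gap is in your fix for $d_R>d_B$. Capping the reference dimension via $\|\Lambda\otimes\id_{d_R}\|_{1\rightarrow 1}\le\|\Lambda\|_{\diamond}=\|\Lambda\otimes\id_{d_B}\|_{1\rightarrow 1}$ requires the diamond norm to stabilize at the \emph{output} dimension, and that is not what the cited stabilization results establish. The Kitaev--Watrous theorem is that $\sup_{d_R}\|\Lambda\otimes\id_{d_R}\|_{1\rightarrow 1}$ is attained at $d_R=d_{A'}$, the \emph{input} dimension (equivalently, by duality with the completely bounded operator norm, Smith's lemma applies to the output dimension of the adjoint $\Lambda^{\dagger}:\mathcal{M}_{d_B}\rightarrow\mathcal{M}_{d_{A'}}$, which is again $d_{A'}$). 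The paper's parenthetical remark that the diamond distance ``stabilizes at $d_A=d_B$'' is only safe when $d_{A'}=d_B$; it is not a theorem you can invoke for a general map $\Lambda:\mathcal{M}_{d_{A'}}\rightarrow\mathcal{M}_{d_B}$ with $d_{A'}>d_B$. If you substitute the actual input-dimension stabilization, your Cauchy--Schwarz step produces $\sqrt{d_B\min(d_R,d_{A'})}$, so your argument as written proves the corollary only when $d_R\le d_B$ or $d_{A'}\le d_B$; the case $d_R>d_B$ and $d_{A'}>d_B$ remains open. Closing it needs either a genuine output-dimension stabilization statement (with its own proof) or a conversion from $\|\cdot\|_1$ to $\|\cdot\|_2$ on the output that only pays for the $\mathcal{M}_{d_B}$ tensor factor, e.g.\ via mixed $S_1(S_2)$-type norms. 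Your tightness example (the transpose on $\mathcal{M}_{d_B}$) is correct and shows the constant $d_B$ cannot be improved.
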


This property enables us to connect a stabilized distance to a non-stabilized distance, but an analogue is not known to hold for quantum channel divergences. However, it is known that the Schatten 1-norm is related to the quantum divergence of two states by quantum Pinsker's inequality in Theorem~\ref{thm-pinsker-ineq}. This implies a lower bound on the channel divergence of two channels in terms of the squared induced $1\rightarrow 1$-norm.

In Section~\ref{sec-taylors-thm}, we now prove a matching upper bound for the stabilized channel divergence.

\subsection{Taylor's theorem for the stabilized channel divergence}
\label{sec-taylors-thm}

Let $f(x):\mathbbm{R}^n \rightarrow \mathbbm{R}$ be a convex function that is twice differentiable. If the second derivative can be upper-bounded in all points i.e. if $\exists M<\infty$ where $\frac{\dd^2 f(x)}{\dd^2 x}\leq M$ for all $x$, then Taylor's theorem guarantees that
\[f(x)\leq f(x_0) + \frac{\dd f(x)}{\dd x}\Big|_{x=x_0} (x-x_0) + M (x-x_0)^2 .\]

Here, we use Taylor's theorem to bound the mutual information as a function of the quantum channel output, and evaluated in the point of a replacer channel. Thereby, we obtain an upper bound on the mutual information that is quadratic in the distance between the replacer channel $R_{\sigma}$ and any quantum channel in the vicinity of this replacer channel in Theorem~\ref{thm-taylor}.

In order to show this, we compute the first derivative and upper-bound the second derivative of the related stabilized channel divergence.

\begin{remark}
There have been other works on quadratic upper bounds of the relative entropy, most notably \cite[Theorem~2]{AE05}, which provides a quantum analogue of the work of Sason \cite{Sason15} on classical reverse Pinsker inequalities. However, these bounds depend on the minimal eigenvalue of the second argument of the relative entropy. In the case of the stabilized divergence radius, this would be a product $\rho_{A}\otimes \sigma_B$ of the marginal of the best input ensemble state $\rho_{AB}$ and the stabilized divergence center $\sigma$. Due to the maximization over $\rho_{AB}$, no guarantee can be given on the smallest eigenvalue of its marginal. In the case of mutual information, Theorem~\ref{thm-taylor} therefore proves an alternative formulation of a "reverse Pinsker inequaltiy" for mutual information which only depends on the minimal eigenvalue of the stabilized divergence center.
\end{remark}

%\cite[Theorem~1]{AH10} and \cite{DR16} give matrix Taylor expansions in terms of the Fréchet derivative.

% Differentiating in a channel: should be like this:

%  \[\partial_{T}  I(B:A)_{(\id\otimes T_0)(\rho_{AB})}=\lim_{t\rightarrow 0} \frac{1}{t} (I(B:A)_{(1-t)(\id\otimes T_0)(\rho_{AB})+t(\id\otimes T)(\rho_{AB})}-I(B:A)_{(\id\otimes T_0)(\rho_{AB})})\]
 
% This is like differentiating in a quantum state with what we called $\rho$ being
%  $(\id\otimes T_0)(\rho_{AB})$ and $\omega=(\id\otimes T)(\rho_{AB})$
% (and we will choose $T_0=R_{\sigma}$ so that $\rho=\rho_A\otimes\sigma_B$):
% \[\partial_{T} I(B:A)_{(\id\otimes T_0)(\rho_{AA'})} := \partial_{(\id\otimes T)(\rho_{AA'})}  I(B:A)_{(\id\otimes T_0)(\rho_{AA'})}\]

\begin{lemma}[First derivative of the channel divergence] \label{thm-first-deriv}
For a quantum state $\sigma_B\in \mathcal{M}_{d_B}$, let $R_{\sigma}:\mathcal{M}_{d_{A'}}\rightarrow \mathcal{M}_{d_B}$ be a quantum channel such that $R_{\sigma}(\rho_{A'}) = \sigma_B\forall \rho_{A'}\in\mathcal{M}_{d_{A'}}$. For any quantum channel $S:\mathcal{M}_{d_{A'}}\rightarrow \mathcal{M}_{d_B}$, we define a quantum channel $T_t:\mathcal{M}_{d_{A'}}\rightarrow \mathcal{M}_{d_B}$ with $T_t=(1-t)R_{\sigma}+t S$ for any $t\in[0,1]$, and we have
\[\frac{\dd}{\dd t} D_{stab}(T_t ||R_{\sigma})\Big|_{t=0} =0.\]
\end{lemma}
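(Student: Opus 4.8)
The plan is to write the stabilized divergence as a supremum of one–parameter functions, differentiate each of these explicitly at $t=0$, and then pass the derivative through the supremum by an envelope (Danskin-type) argument. Throughout I would assume, harmlessly for the intended application, that $\sigma_B$ has full support, since in the use of this lemma $\sigma_B$ is the (full-rank) stabilized divergence center; by the stabilization property I would also fix the reference dimension $d_A=d_{A'}$, so that the optimization domain is compact.

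First I would introduce, for each input $\rho_{AA'}$,
\[
g_\rho(t) := D\big((\id\otimes T_t)(\rho_{AA'})\,\big\|\,\rho_A\otimes\sigma_B\big),
\]
using that $(\id\otimes R_\sigma)(\rho_{AA'})=\rho_A\otimes\sigma_B$, so that $D_{stab}(T_t\|R_\sigma)=\sup_{\rho_{AA'}}g_\rho(t)$. Writing $\tau_{AB}(t)=(\id\otimes T_t)(\rho_{AA'})=(1-t)\,\rho_A\otimes\sigma_B+t\,\omega_{AB}$ with $\omega_{AB}=(\id\otimes S)(\rho_{AA'})$, I note that $\tau_A(t)=\rho_A$ for all $t$ because $S$ is trace preserving, and that $\tau_{AB}(0)=\rho_A\otimes\sigma_B$; hence $g_\rho(0)=0$ for every input by faithfulness of the relative entropy. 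Since $\sigma_B$ has full support and $\operatorname{supp}(\omega_{AB})\subseteq\operatorname{supp}(\rho_A)\otimes\mathcal{V}_{d_B}=\operatorname{supp}(\rho_A\otimes\sigma_B)$, each $g_\rho(t)$ is finite for $t\in[0,1]$.

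Next I would differentiate. Using $\log(\rho_A\otimes\sigma_B)=\log\rho_A\otimes\mathbbm{1}+\mathbbm{1}\otimes\log\sigma_B$ and the same rewriting as in the proof of Theorem~\ref{thm-stab-div-rad},
\[
g_\rho(t)=-H(B|A)_{\tau(t)}-\Tr\big(\tau_B(t)\log\sigma_B\big).
\]
Since $\tau_A(t)=\rho_A$ is constant and $\Tr\dot\tau(t)=0$, this gives
\[
g_\rho'(t)=\Tr\big(\dot\tau(t)\log\tau(t)\big)-\Tr\big(\dot\tau_B(t)\log\sigma_B\big).
\]
At $t=0$ one has $\log\tau(0)=\log\rho_A\otimes\mathbbm{1}+\mathbbm{1}\otimes\log\sigma_B$ and $\dot\tau(0)=\omega_{AB}-\rho_A\otimes\sigma_B$ with vanishing $A$-marginal, so the $\log\rho_A$ contribution drops out and the surviving term $\Tr\big(\dot\tau(0)(\mathbbm{1}\otimes\log\sigma_B)\big)=\Tr\big(\dot\tau_B(0)\log\sigma_B\big)$ cancels the second term, yielding $g_\rho'(0)=0$ for every $\rho_{AA'}$.

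Finally I would promote this pointwise statement to the supremum. Each $g_\rho$ is convex in $t$ (joint convexity of relative entropy, affine $\tau(t)$, constant second argument), so $G(t):=D_{stab}(T_t\|R_\sigma)=\sup_\rho g_\rho(t)$ is convex, nonnegative, with $G(0)=0$. Because at $t=0$ \emph{every} admissible input attains the common maximal value $0$, the set of maximizers is the whole compact domain, and Danskin's theorem gives $G'_+(0)=\sup_\rho g_\rho'(0)=0$; combined with $G'_+(0)\ge 0$ from $G\ge G(0)$ this is the claim. The main obstacle is the analytic rigor of the two interchanges: differentiability of the matrix logarithm at the (boundary) point $\tau(0)=\rho_A\otimes\sigma_B$, which I handle by the full-support assumption on $\sigma_B$ and restriction to $\operatorname{supp}(\rho_A)$, and passing the derivative through the supremum, for which the convexity and the Danskin/envelope argument (requiring joint continuity of $(\rho,t)\mapsto g_\rho(t)$ and of $\partial_t g_\rho$ on the compact domain) are exactly what is needed.
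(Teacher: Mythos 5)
Your proof is correct and follows essentially the same route as the paper's: decompose the relative entropy into entropic terms, differentiate, and observe that the derivative vanishes at $t=0$ for every input state because the perturbation direction $\dot\tau(0)$ has vanishing marginals, then pass to the supremum. Your treatment of the two delicate points --- identifying that it is the vanishing $A$-marginal (not mere tracelessness) that kills the $\log\rho_A$ contribution, and the envelope/Danskin argument for interchanging derivative and supremum --- is in fact more careful than the paper's, which simply asserts that the pointwise statement transfers to the maximizing state.
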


\begin{proof}
%First, let there be a general quantum channel $T_0$, and define the channel $T_t=(1-t)T_0+t S$. For this channel, w
We compute the derivative using Theorem~\ref{thm-entropy-derivatives}, and obtain 
\begin{equation*}
    \begin{split}
      &  \frac{\dd}{\dd t}  D(\id\otimes T_t(\rho_{AA'})||\id\otimes R_{\sigma}(\rho_{AA'}))\\
        &= \frac{\dd}{\dd t} H(A)_{\rho_{A}}+\frac{\dd}{\dd t} \trace\big( T_t(\rho_{A'}) \log(R_{\sigma}(\rho_A')) \big)-\frac{\dd}{\dd t}H(AB)_{ (\id\otimes T_t) (\rho_{AA'}) }  \\&
        = -\trace\Big( (S-R_{\sigma})(\rho_{A'}) \log (R_{\sigma}(\rho_{A'})) \Big) \\&\qquad \qquad +\trace\Big( (\id\otimes(S -R_{\sigma}))(\rho_{AA'}) \log ((\id\otimes T_t)(\rho_{AA'})) \Big)
        \\&
        = \trace\Big( (\id\otimes(S -R_{\sigma}))(\rho_{AA'}) \big(  \log ((\id\otimes T_t)(\rho_{AA'})) -\mathbbm{1}\otimes \log (R_{\sigma}(\rho_{A'}))  \big) \Big)
        %\\&
        %=\trace\Big( (\id\otimes(S -R_{\sigma}))(\rho_{AA'}) \big(  \log ((\id\otimes ((1-t)R_{\sigma}+tS))(\rho_{AA'})) -\mathbbm{1}\otimes \log (R_{\sigma}(\rho_{A'}))   \big)\Big)
        \\&=\trace\Big( (\id\otimes S(\rho_{AA'} )-\rho_A\otimes \sigma_B) \big(\mathbbm{1}\otimes \log ((1-t)\sigma_B +tS(\rho_{A'})) \\&\qquad \qquad -\log ((1-t)\rho_A\otimes\sigma_B+t(\id\otimes S)(\rho_{AA'})) \big) \Big) .
 \end{split}
\end{equation*}
For $t\rightarrow 0$, and for all quantum states $\rho_{AA'}\in\mathcal{M}_{d_A}\otimes \mathcal{M}_{d_{A'}}$, we have 
\[\frac{\dd}{\dd t} D(\id\otimes T_t(\rho_{AA'})||\id\otimes R_{\sigma}(\rho_{AA'}))\Big|_{t=0} =0\] because $\id\otimes (S-T_0)(\rho_{AA'})$ is traceless $\forall \rho_{AA'}$.
This is also true for any state $\rho_{AA'}$ which maximizes the expression, and therefore also true for the stabilized channel divergence.
\end{proof}

\begin{lemma}[Second derivative of the stabilized divergence]
\label{thm-second-deriv} 
For a quantum state $\sigma_B\in \mathcal{M}_{d_B}$, let $R_{\sigma}:\mathcal{M}_{d_{A'}}\rightarrow \mathcal{M}_{d_B}$ be a quantum channel such that $R_{\sigma}(\rho_{A'}) = \sigma_B\forall \rho_{A'}\in\mathcal{M}_{d_{A'}}$. For any quantum channel $S:\mathcal{M}_{d_{A'}}\rightarrow \mathcal{M}_{d_B}$, we define a quantum channel $T_t:\mathcal{M}_{d_{A'}}\rightarrow \mathcal{M}_{d_B}$ with $T_t=(1-t)R_{\sigma}+t S$ and we have
\[\frac{\dd^2}{\dd t^2} D_{stab}(T_t||R_{\sigma})
        \leq \frac{2d_B^3}{(1-t)\lambda_{min}(\sigma)} .\] 
 %\[\frac{\dd^2}{\dd t^2} D(\id\otimes T_t(\rho_{AA'})||\id\otimes R_{\sigma}(\rho_{AA'}))
      %  \leq \frac{2d_B^3}{(1-t)\lambda_{min}(\sigma)} .\] 
\end{lemma}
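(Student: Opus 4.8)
The plan is to obtain the bound by differentiating once more the first–derivative formula already computed in the proof of Lemma~\ref{thm-first-deriv}. Fix an arbitrary input state $\rho_{AA'}\in\mathcal{M}_{d_A}\otimes\mathcal{M}_{d_{A'}}$ and set $g_\rho(t)=D\big((\id\otimes T_t)(\rho_{AA'})\,\|\,\rho_A\otimes\sigma_B\big)$, so that $D_{stab}(T_t\|R_\sigma)=\sup_{\rho_{AA'}}g_\rho(t)$. Writing $M_t:=T_t(\rho_{A'})$, $Q_t:=(\id\otimes T_t)(\rho_{AA'})$, $\delta_B:=S(\rho_{A'})-\sigma_B$ and $\Delta_{AB}:=(\id\otimes S)(\rho_{AA'})-\rho_A\otimes\sigma_B$, the computation in Lemma~\ref{thm-first-deriv} gives $g_\rho'(t)=\trace\big(\Delta_{AB}(\mathbbm{1}_A\otimes\log M_t-\log Q_t)\big)$. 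Since $\Delta_{AB}$ does not depend on $t$, I would differentiate through the trace using the derivative formula for the matrix logarithm (Theorem~\ref{thm-entropy-derivatives}), which I phrase via the superoperator $L_M(X)=\int_0^\infty (M+s\mathbbm{1})^{-1}X(M+s\mathbbm{1})^{-1}\,\dd s$ satisfying $\tfrac{\dd}{\dd t}\log M_t=L_{M_t}(\dot M_t)$ with $\dot M_t=\delta_B$ and $\dot Q_t=\Delta_{AB}$. Using $\tr_A(\Delta_{AB})=\delta_B$ to collapse the first term onto system $B$, this yields
\[
\frac{\dd^2}{\dd t^2}g_\rho(t)=\trace\big(\delta_B\,L_{M_t}(\delta_B)\big)-\trace\big(\Delta_{AB}\,L_{Q_t}(\Delta_{AB})\big).
\]

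The key structural observation is that both terms are of the form $\trace(X L_M(X))$ for a Hermitian $X$ and a positive $M$, and each such term is nonnegative: setting $A=(M+s\mathbbm{1})^{-1/2}$ and using cyclicity of the trace, $\trace\big(X L_M(X)\big)=\int_0^\infty\trace\big((AXA)^2\big)\,\dd s\geq 0$. Consequently the second ($AB$) term is subtracted and nonnegative, so it can simply be dropped, leaving the upper bound $\tfrac{\dd^2}{\dd t^2}g_\rho(t)\leq\trace\big(\delta_B L_{M_t}(\delta_B)\big)$. This is the step that makes the estimate clean: only the single–system quantity on $B$ survives, and the reference system $A$ drops out entirely.

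It then remains to bound $\trace\big(\delta_B L_{M_t}(\delta_B)\big)$. First I would control the spectrum of $M_t$: since $S(\rho_{A'})\geq 0$ and $T_t(\rho_{A'})=(1-t)\sigma_B+tS(\rho_{A'})$, we have $M_t\geq(1-t)\sigma_B$ in the positive semidefinite order, hence $\lambda_{min}(M_t)\geq(1-t)\lambda_{min}(\sigma)$. Next, $(M_t+s\mathbbm{1})^{-1}\leq(\lambda_{min}(M_t)+s)^{-1}\mathbbm{1}$, so the integrand is controlled by $(\lambda_{min}(M_t)+s)^{-2}$ times a norm of $\delta_B$, and $\int_0^\infty(\lambda+s)^{-2}\,\dd s=1/\lambda$ gives the factor $1/\lambda_{min}(M_t)\leq 1/\big((1-t)\lambda_{min}(\sigma)\big)$. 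Bounding $\delta_B$ (a difference of two density matrices on $B$) by its Schatten/operator norms and converting between them on the $d_B$–dimensional space produces the dimension–dependent constant; the crude norm conversions used here yield the stated prefactor $2d_B^3$ (a sharper Hilbert–Schmidt estimate would improve the constant, but only the dependence on $d_B$ is needed downstream). Finally, because the bound holds uniformly in $\rho_{AA'}$ and, by Lemma~\ref{thm-first-deriv}, each $g_\rho$ also has $g_\rho(0)=0$ and $g_\rho'(0)=0$, the uniform second–derivative bound transfers to the supremum $D_{stab}(T_t\|R_\sigma)$, which is exactly the form needed for the Taylor argument of Theorem~\ref{thm-taylor}.

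I expect the main obstacle to be the careful differentiation of $\log M_t$ and $\log Q_t$ for non-commuting matrix families — in particular justifying the integral representation of $L_M$ and the nonnegativity identity $\trace(X L_M(X))\geq 0$ — together with the mild subtlety that $D_{stab}$ is a supremum over $t$-dependent maximizers, which is handled not by differentiating the supremum directly but by proving the bound pointwise in $\rho_{AA'}$ and invoking the vanishing value and first derivative at $t=0$.
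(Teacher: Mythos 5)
Your overall architecture (prove a pointwise bound in $\rho_{AA'}$ and transfer it to the supremum via the vanishing value and first derivative at $t=0$) matches the paper, but the central step of your argument rests on a sign error, and correcting it removes the proof. The reference state in $D_{stab}(T_t\|R_\sigma)$ is $(\id\otimes R_\sigma)(\rho_{AA'})=\rho_A\otimes\sigma_B$, which does not depend on $t$. Writing $g_\rho(t)=-H(AB)_{Q_t}+H(A)_{\rho_A}-\trace\big(T_t(\rho_{A'})\log\sigma_B\big)$, the last two terms are affine in $t$, so
\[
g_\rho''(t)=\frac{\dd^2}{\dd t^2}\big(-H(AB)_{Q_t}\big)=\trace\big(\Delta_{AB}\,L_{Q_t}(\Delta_{AB})\big)\;\geq\;0,
\]
with no $\trace(\delta_B L_{M_t}(\delta_B))$ term at all. (Your starting expression for $g_\rho'$ copies the last display of the paper's first-derivative proof, which is itself inconsistent with the line above it; the correct form is $\trace(\Delta_{AB}(\log Q_t-\mathbbm{1}\otimes\log\sigma_B))$.) Even under the mutual-information normalization $D(Q_t\|\rho_A\otimes M_t)$ one gets $\trace(\Delta_{AB}L_{Q_t}(\Delta_{AB}))-\trace(\delta_B L_{M_t}(\delta_B))$, i.e.\ the opposite signs from yours: the $AB$ term is the second derivative of the convex function $-H(AB)_{Q_t}$ and enters with a plus sign, while the $B$-only term is the one that is subtracted and may be discarded. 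Note also that $g_\rho$ is convex in $t$ (since $Q_t$ is affine and $D(\cdot\|\text{fixed})$ is convex), so a formula permitting $g_\rho''<0$ cannot be right. In short, the term you keep is the harmless one and the term you drop is the entire content of the lemma.

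This matters because your spectral estimate only works on system $B$. The quantity that actually has to be controlled is $\trace(\Delta_{AB}L_{Q_t}(\Delta_{AB}))$ with $Q_t=(1-t)\rho_A\otimes\sigma_B+t(\id\otimes S)(\rho_{AA'})$, whose smallest eigenvalue is not bounded below by anything involving $\sigma$ alone: $\rho_A$ may be singular (e.g.\ for pure inputs), so bounding the resolvent by $(\lambda_{\min}(Q_t)+s)^{-1}\mathbbm{1}$ yields nothing. The paper's proof is built precisely around this obstacle: it expands $\Delta_{AB}$ into its positive and negative parts, drops the nonpositive cross terms, dominates both $(\id\otimes S)(\rho_{AA'})$ and $\rho_A\otimes\sigma_B$ by $d_B\,\rho_A\otimes\mathbbm{1}$, uses $(A+B)^{-1}\leq A^{-1}$ to replace the resolvent of $Q_t$ by that of $(1-t)\rho_A\otimes\sigma_B$, and then evaluates the now-commuting integral so that $\rho_A$ cancels against $\rho_A^{-1}$ and only $\lambda_{\min}(\sigma)$ survives. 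That cancellation, not a minimum-eigenvalue bound on $Q_t$, is what produces the stated $2d_B^3/((1-t)\lambda_{\min}(\sigma))$; your argument as written does not reach it.
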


\begin{proof}
By definition,
\begin{equation*}
    \begin{split}
      & \frac{\dd^2}{\dd t^2} D(\id\otimes T_t(\rho_{AA'})||\id\otimes R_{\sigma}(\rho_{AA'}))\\&=\frac{\dd^2}{\dd t^2}  H(A)_{\rho_A} + \frac{\dd^2}{\dd t^2}   \trace\Big( T_t(\rho_{A'})\log\big(R_{\sigma}(\rho_{A'})\big)\Big)- \frac{\dd^2}{\dd t^2}  H(AB)_{(\id\otimes T_t)(\rho_{AA'})} .
        %\\&
        % =-\trace\Big((T -T_0)(\rho_{A'})   \int_0^{\infty}  ( T_t(\rho_{A'})+z\mathbbm{1})^{-1} (T -T_0))(\rho_{A'}) (T_t(\rho_{A'})+z\mathbbm{1})^{-1} \dd z    \Big) 
        % \\& \ \ \ \ \  +\trace\Big( (\id\otimes(T -T_0))(\rho_{AA'})    \\& \ \ \ \ \  \ \ \ \ \  \int_0^{\infty} ((\id\otimes T_t)(\rho_{AA'})+z\mathbbm{1})^{-1}(\id\otimes(T -T_0))(\rho_{AA'}) ((\id\otimes T_t)(\rho_{AA'})+z\mathbbm{1})^{-1} \dd z    \Big) 
    \end{split}
\end{equation*}

% \begin{equation*}
%     \begin{split}
%        \frac{\dd^2}{\dd t^2} I(B:A)_{(\id\otimes T_t)(\rho_{AB})} &=\frac{\dd^2}{\dd t^2}  H(A)_{\rho_A} + \frac{\dd^2}{\dd t^2}   H(B)_{T_t(\rho_{A'})}- \frac{\dd^2}{\dd t^2}  H(AB)_{(\id\otimes T_t)(\rho_{AA'})} 
%         \\&
%         =-\trace\Big((T -T_0)(\rho_{A})   \int_0^{\infty}  ( ((1-t)T_0+tT)(\rho_{A'})+z\mathbbm{1})^{-1}\\& \ \ \ \ \  \ \ \ \ \ (T -T_0))(\rho_{A'}) (((1-t)T_0+tT)(\rho_{A'})+z\mathbbm{1})^{-1} \dd z    \Big) 
%         \\& \ \ \ \ \  +\trace\Big( (\id\otimes(T -T_0))(\rho_{A'A})   \int_0^{\infty} ((\id\otimes ((1-t)T_0+tT))(\rho_{AA'})+z\mathbbm{1})^{-1}\\& \ \ \ \ \  \ \ \ \ \ (\id\otimes(T -T_0))(\rho_{A'A}) ((\id\otimes ((1-t)T_0+tT))(\rho_{AA'})+z\mathbbm{1})^{-1} \dd z    \Big) 
%        % \\&
%       %  \leq \trace\Big( (\id\otimes(T -T_0))(\rho_{A'A})   \int_0^{\infty} ((\id\otimes ((1-t)T_0+tT))(\rho_{AA'})+z\mathbbm{1})^{-1}\\& \ \ \ \ \  \ \ \ \ \ (\id\otimes(T -T_0))(\rho_{A'A}) ((\id\otimes ((1-t)T_0+tT))(\rho_{AA'})+z\mathbbm{1})^{-1} \dd z    \Big) 
%     \end{split}
% \end{equation*}
The second derivative of $H(A)_{\rho_{A}}$ is zero as the expression does not depend on $t$.
The second derivative of $\trace\Big( T_t(\rho_{A'})\log\big(R_{\sigma}(\rho_{A'})\big)\Big)$ is zero, as the expression is linear in $t$. Therefore, the second derivative of the relative entropy above is given by the second derivative of $-H(AB)_{(\id\otimes T_t)(\rho_{AA'})}$, which is computed in Lemma~\ref{thm-entropy-derivatives}.%Therefore, we upper bound the second derivative of the mutual information by the second derivative of $-H(AB)_{(\id\otimes T_t)(\rho_{AA'})}$, which is computed in Lemma~\ref{thm-entropy-derivatives}.

% Then, we have
% \begin{equation*}
%     \begin{split}
%       & \frac{\dd^2}{\dd t^2} D(\id\otimes T_t(\rho_{AA'})||\id\otimes R_{\sigma}(\rho_{AA'}))
%          \\&%\leq
%          =\trace\Big( (\id\otimes(S -T_0))(\rho_{AA'})    \\& \qquad \int_0^{\infty} ((\id\otimes T_t)(\rho_{AA'})+z\mathbbm{1})^{-1}(\id\otimes(S -T_0))(\rho_{AA'}) ((\id\otimes T_t)(\rho_{AA'})+z\mathbbm{1})^{-1} \dd z    \Big) 
%          \\&=\trace\Big( \big( (\id\otimes(S-T_0)(\rho_{AA'})\big)  \int_0^{\infty} ((\id\otimes(1-t)T_0+tS)(\rho_{AA'})+z\mathbbm{1})^{-1}\\& \qquad \big( (\id\otimes(S-T_0)(\rho_{AA'})\big)  ((\id\otimes(1-t)T_0+tS)(\rho_{AA'})+z\mathbbm{1})^{-1} \dd z    \Big) .
%     \end{split}
% \end{equation*}
For the replacer channel, we thus have
\begin{equation*}
    \begin{split}
       &\frac{\dd^2}{\dd t^2} D(\id\otimes T_t(\rho_{AA'})||\id\otimes R_{\sigma}(\rho_{AA'})) \\&
\leq \trace\Big( \big( (\id\otimes S)(\rho_{AA'})-\rho_A\otimes\sigma_B\big)   \Compactcdots\\&  \qquad \Compactcdots \int_0^{\infty} ((1-t)\rho_A\otimes\sigma_B+t(\id\otimes S)(\rho_{AA'})+z\mathbbm{1})^{-1} \big( (\id\otimes S)(\rho_{AA'})-\rho_A\otimes\sigma_B\big)  \Compactcdots\\&  \qquad \Compactcdots  ((1-t)\rho_A\otimes\sigma_B+t(\id\otimes S)(\rho_{AA'})+z\mathbbm{1})^{-1} \dd z    \Big) 
        \\& \leq 2d_B^2 \trace\Big( \big( \rho_A\otimes\mathbbm{1}\big)  \int_0^{\infty} ((1-t)\rho_A\otimes\sigma_B+t(\id\otimes S)(\rho_{AA'})+z\mathbbm{1})^{-1}\Compactcdots\\& \qquad \qquad \Compactcdots \big( \rho_A\otimes\mathbbm{1}\big) ((1-t)\rho_A\otimes\sigma_B+t(\id\otimes S)(\rho_{AA'})+z\mathbbm{1})^{-1} \dd z    \Big) 
        \\& \leq 2d_B^2 \trace\Big( \big( \rho_A\otimes\mathbbm{1}\big)   \int_0^{\infty} ((1-t)\rho_A\otimes\sigma_B+z\mathbbm{1})^{-1}\Compactcdots\\& \qquad \qquad \Compactcdots \big( \rho_A\otimes\mathbbm{1}\big) ((1-t)\rho_A\otimes\sigma_B+z\mathbbm{1})^{-1} \dd z    \Big) 
        \\& \leq 2d_B^2 \trace\Big( \big( \rho_A\otimes\mathbbm{1}\big)^2   ((1-t)\rho_A\otimes\sigma_B)^{-1}   \Big) \\&
        \leq \frac{2d_B^2}{1-t} \trace\Big( \rho_A\otimes\sigma_B^{-1}   \Big)\\&
        \leq \frac{2d_B^2}{(1-t)\lambda_{min}(\sigma)} \trace\Big( \rho_A\otimes\mathbbm{1}   \Big)
        \\&
        \leq \frac{2d_B^3}{(1-t)\lambda_{min}(\sigma)}. 
    \end{split}
\end{equation*}
For the second inequality, we are upper bounding the terms with negative signs by 0, and upper bounding $(\id\otimes S)(\rho_{AA'})\leq d_B \rho_A \otimes \mathbbm{1}$ and $(\id\otimes R_{\sigma})(\rho_{AA'})\leq d_B \rho_A \otimes \mathbbm{1}$. 
Furthermore, for any two matrices $A$ and $B$ where $B$ is positive semidefinite, we have $(A+B)^{-1}\leq A^{-1}$ as a consequence of the Woodbury matrix identity \cite{HS81}.
Using this, we bound $((1-t)\rho_A\otimes\sigma_B+t(\id\otimes S)(\rho_{AA'})+z\mathbbm{1})^{-1}\leq ((1-t)\rho_A\otimes\sigma_B+z\mathbbm{1})^{-1}$. Then, the states commute and we can solve the integral. Since $\rho_A^{-1}$ and $\rho_A$ cancel, only the smallest eigenvalue of $\sigma_B$ appears in our bound.

Since the bound above holds for any state $\rho_{AA'}$, it also holds for states that maximize the expression, and thus for the stabilized channel divergence.
\end{proof}

With these derivatives, we can upper bound the stabilized channel divergence using Taylor's theorem:

\begin{theorem}[Taylor's theorem in the vicinity of the replacer channel] \label{thm-taylor}
For a quantum state $\sigma_B\in \mathcal{M}_{d_B}$, let $R_{\sigma}:\mathcal{M}_{d_{A'}}\rightarrow \mathcal{M}_{d_B}$ be a quantum channel such that $R_{\sigma}(\rho_{A'}) = \sigma_B\forall \rho_{A'}\in\mathcal{M}_{d_{A'}}$.
For any quantum state $\sigma_B\in\mathcal{M}_{d_B}$ and any quantum channel $T:\mathcal{M}_{d_{A'}}\rightarrow \mathcal{M}_{d_B}$ with $\| \id\otimes ( T-R_{\sigma} ) \|_{1\rightarrow 1} \leq \frac{1}{2}$, we have
    \[ D_{stab}(T|| R_{\sigma})
       \leq \frac{2d_B^3}{\lambda_{min}(\sigma)}  \| \id\otimes ( T-R_{\sigma} ) \|_{1\rightarrow 1}^2 .\]
\end{theorem}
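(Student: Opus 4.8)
The plan is to apply Taylor's theorem to the one-parameter family of channels interpolating between the replacer channel and $T$, using the first and second derivative estimates already established in Lemmas~\ref{thm-first-deriv} and \ref{thm-second-deriv}. First I would fix the quantum state $\sigma_B\in\mathcal{M}_{d_B}$ and the channel $T$, and define the interpolating family $T_t=(1-t)R_{\sigma}+tT$ for $t\in[0,1]$, so that $T_0=R_{\sigma}$ and $T_1=T$. The function $g(t):=D_{stab}(T_t\|R_{\sigma})$ is convex and twice differentiable in $t$, with $g(0)=0$ since $D_{stab}(R_{\sigma}\|R_{\sigma})=0$.

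Next I would invoke the two derivative lemmas. Lemma~\ref{thm-first-deriv} gives $g'(0)=0$, so the linear term in the Taylor expansion vanishes. Lemma~\ref{thm-second-deriv} gives the uniform bound $g''(t)\leq \frac{2d_B^3}{(1-t)\lambda_{min}(\sigma)}$. The obstacle here is that this second-derivative bound blows up as $t\to 1$, so I cannot simply plug $t=1$ into a naive Taylor remainder with a global supremum of $g''$. The clean way around this is to use the integral form of Taylor's remainder: since $g(0)=0$ and $g'(0)=0$, we have
\[
g(1)=\int_0^1 (1-t)\,g''(t)\,\dd t \leq \int_0^1 (1-t)\,\frac{2d_B^3}{(1-t)\lambda_{min}(\sigma)}\,\dd t = \frac{2d_B^3}{\lambda_{min}(\sigma)},
\]
where the factor $(1-t)$ in the remainder cancels precisely the singular factor $\frac{1}{1-t}$ in the second-derivative bound, leaving a finite integral. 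This cancellation is the main technical point and the reason the bound in Lemma~\ref{thm-second-deriv} was stated with that particular $(1-t)$ dependence.

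The remaining step is to convert this into the stated quadratic dependence on $\| \id\otimes(T-R_{\sigma})\|_{1\rightarrow 1}$. The bound above is a constant, so to get a quadratic estimate I would not integrate all the way to $t=1$ but instead stop at an intermediate point, or equivalently rescale. The right approach is to exploit homogeneity: write $T=R_{\sigma}+(T-R_{\sigma})$ and consider the interpolation only up to $t=1$ but extract the $\|\cdot\|_{1\to1}^2$ factor by choosing the parametrization scale relative to $\varepsilon:=\|\id\otimes(T-R_{\sigma})\|_{1\rightarrow 1}$. Concretely, I would apply the integral-remainder argument to $g$ on $[0,1]$ but bound $g''(t)$ more carefully, keeping the difference channel $S-R_\sigma=T-R_\sigma$ appearing \emph{quadratically} in the second derivative; the second-derivative computation in Lemma~\ref{thm-second-deriv} already carries the factor $\big((\id\otimes T)(\rho_{AA'})-\rho_A\otimes\sigma_B\big)$ squared, so re-running that estimate while retaining its trace-norm size rather than bounding it by $d_B\rho_A\otimes\mathbbm 1$ yields a factor $\varepsilon^2$. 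Combined with the $\int_0^1(1-t)\frac{1}{1-t}\dd t = 1$ from the remainder, and the factor $\frac{2d_B^3}{\lambda_{min}(\sigma)}$ from the resolvent integral, this produces the claimed
\[
D_{stab}(T\|R_{\sigma}) \leq \frac{2d_B^3}{\lambda_{min}(\sigma)}\,\| \id\otimes(T-R_{\sigma})\|_{1\rightarrow 1}^2.
\]
The hypothesis $\varepsilon\leq\frac12$ is what guarantees the interpolated states stay well-conditioned enough (bounded away from the boundary relative to $\sigma_B$) for the resolvent estimates in Lemma~\ref{thm-second-deriv} to hold along the whole path, and I would verify this smallness condition is used precisely to control the positivity of $T_t$ and the eigenvalue bounds entering the second derivative.
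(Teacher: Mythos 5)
Your setup is right and matches the paper's: interpolate between $R_\sigma$ and the target channel, use Lemma~\ref{thm-first-deriv} and the faithfulness of the relative entropy to kill the zeroth- and first-order Taylor terms, and control the remainder with the second-derivative bound of Lemma~\ref{thm-second-deriv}. The gap is in the step that produces the \emph{quadratic} dependence on $\varepsilon=\|\id\otimes(T-R_\sigma)\|_{1\to1}$. With your parametrization $T_t=(1-t)R_\sigma+tT$ and integration all the way to $t=1$, the $(1-t)$ cancellation only yields the constant $\frac{2d_B^3}{\lambda_{\min}(\sigma)}$, which is strictly weaker than the claim (since $\varepsilon^2\le\frac14$). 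Your proposed repair --- re-running the estimate in Lemma~\ref{thm-second-deriv} while ``retaining the trace-norm size'' of $(\id\otimes T)(\rho_{AA'})-\rho_A\otimes\sigma_B$ instead of bounding it by $d_B\,\rho_A\otimes\mathbbm{1}$ --- does not go through as sketched: the whole point of the operator bound $\Delta\le d_B\,\rho_A\otimes\mathbbm{1}$ is that the factor $\rho_A$ then cancels against $(\rho_A\otimes\sigma_B)^{-1}$ coming from the resolvent integral, leaving only $\lambda_{\min}(\sigma_B)$. If you keep $\Delta$ and try to extract $\|\Delta\|_1^2$, the resolvent integral instead produces the smallest eigenvalue of $(1-t)\rho_A\otimes\sigma_B$ on its support, and $\rho_A$ is maximized over and can be arbitrarily ill-conditioned, so no channel-independent bound results.

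What the paper actually does is your ``or equivalently rescale'' option, carried out concretely: write $T=T_t=(1-t)R_\sigma+tS$ for an auxiliary channel $S$, so that the interpolation parameter satisfies $t=\|\id\otimes(T_t-R_\sigma)\|_{\infty\to1}\le\|\id\otimes(T_t-R_\sigma)\|_{1\to1}=\varepsilon$. The hypothesis $\varepsilon\le\frac12$ then forces $t\le\frac12$, whence Lemma~\ref{thm-second-deriv} gives the \emph{uniform} bound $g''(s)\le\frac{2d_B^3}{(1-s)\lambda_{\min}(\sigma)}\le\frac{4d_B^3}{\lambda_{\min}(\sigma)}$ on $[0,t]$, and Taylor's theorem on $[0,t]$ yields $g(t)\le\frac{t^2}{2}\cdot\frac{4d_B^3}{\lambda_{\min}(\sigma)}=\frac{2d_B^3}{\lambda_{\min}(\sigma)}t^2\le\frac{2d_B^3}{\lambda_{\min}(\sigma)}\varepsilon^2$. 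Note also that this corrects your reading of where the smallness hypothesis enters: it is not needed for positivity of $T_t$ (a convex combination of channels is always a channel) but precisely to keep the factor $\frac{1}{1-t}$ bounded by $2$ along the interpolation path.
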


% \begin{theorem}\label{thm-taylor}
% For a quantum state $\sigma_B\in \mathcal{M}_{d_B}$, let $R_{\sigma}:\mathcal{M}_{d_{A'}}\rightarrow \mathcal{M}_{d_B}$ be a quantum channel such that $R_{\sigma}(\rho_{A'}) = \sigma_B\forall \rho_{A'}\in\mathcal{M}_{d_{A'}}$.
% For any quantum state $\sigma_B\in\mathcal{M}_{d_B}$ and any quantum channel $T:\mathcal{M}_{d_{A'}}\rightarrow \mathcal{M}_{d_B}$ with $\| \id\otimes ( T-R_{\sigma} ) \|_{1\rightarrow 1} \leq \frac{1}{2}$, we have:
%     \[ \max_{\rho} I(B:A)_{(\id\otimes T)(\rho_{AA'})}\leq \frac{2d_B^3}{\lambda_{min}(\sigma)}  \| (\id\otimes (T -R_{\sigma}))  \|_{1\rightarrow1}^2 \]
% \end{theorem}
\begin{proof}
Let $T_t=(1-t)R_{\sigma}+tS$ be a family of quantum channels for $0\leq t \leq 1$ and for $S\neq R_{\sigma}$. For all $T_t$, we have that the second derivative of the mutual information is bounded by $\frac{\dd^2}{\dd t^2} D(\id\otimes T_t(\rho_{AA'})||\id\otimes R_{\sigma}(\rho_{AA'}))\leq \frac{2d_B^3}{(1-t)\lambda_{min}(\sigma)}$ from Theorem~\ref{thm-second-deriv}. For any $t\leq \frac{1}{2}$, we thus have $\frac{\dd^2}{\dd t^2} D(\id\otimes T_t(\rho_{AA'})||\id\otimes R_{\sigma}(\rho_{AA'}))\leq \frac{4d_B^3}{\lambda_{min}(\sigma)}$.

Then, we expand the mutual information using Taylor's theorem, and we obtain
\begin{equation*}
    \begin{split}
 & D(\id\otimes T_t(\rho_{AA'})||\id\otimes R_{\sigma}(\rho_{AA'}))
   \\ &
     \leq  D(\id\otimes T_t(\rho_{AA'})||\id\otimes R_{\sigma}(\rho_{AA'}))\Big|_{t=0} \\&\qquad +t \frac{\dd}{\dd t}D(\id\otimes T_t(\rho_{AA'})||\id\otimes R_{\sigma}(\rho_{AA'}))\Big|_{t=0} +\frac{t^2}{2} \frac{4d_B^3}{\lambda_{min}(\sigma)}\\&
       \leq t^2 \frac{2d_B^3}{\lambda_{min}(\sigma)} .
    \end{split}
\end{equation*}
where the last inequality holds because the relative entropy is faithful, and the first derivative of the mutual information is zero at $t=0$ by Theorem~\ref{thm-first-deriv}.

%We furthermore have $ t=\| \id\otimes( T_t-R_{\sigma} ) \|_{1\rightarrow 1} = t\|\id\otimes ( S - R_{\sigma})  \|_{1\rightarrow 1}$
We furthermore have $ t=\| \id\otimes( T_t-R_{\sigma} ) \|_{\infty\rightarrow 1} \leq \| \id\otimes( T_t-R_{\sigma} ) \|_{1\rightarrow 1}$ for the traceless map $T_t-R_{\sigma}$ \cite[Eq.~(10)]{AE11}.

%Then, for $t\leq \frac{1}{2}$, due to Theorem~\ref{thm-second-deriv}, we have $\frac{\dd^2}{\dd t^2} I(B:A)_{(\id\otimes T)(\rho_{AB})}\leq \frac{4d_B^3}{\lambda_{min}(\sigma)} $ as a bound on the second derivative. This bound is independent of $t$ and independent of the channel $S$. Then, for any input state $\rho_{AA'}$, Taylor's theorem applies, and it also applies to the state $\rho_{AA'}$ that maximizes the mutual information for a channel $T$. When expanding this mutual information around $t=0$, the first derivative is zero, . Furthermore, he mutual information at $t=0$ is zero because $I(B:A)_{(\id\otimes R_{\sigma})(\rho_{AB})}=0$ for all $\sigma$. Thus the mutual information of $T$ is upper-bounded in the following way:

%that \[I(B:A)_{(\id\otimes T_t)(\rho_{AB})}\leq \frac{4d_B^3}{\lambda_{min}(\sigma)} t^2\]

% \begin{equation*}
%     \begin{split}
%   C^{ea}(T)  & =\max_{\rho_{AA'}} I(B:A)_{(\id\otimes T)(\rho_{AA'})}%\leq I(B:A)_{(\id\otimes T_0)(\rho_{AA'})}+t \frac{\dd}{\dd t}I(B:A)_{(\id\otimes T_t)(\rho_{AA'})}\Big|_{t=0} +\frac{t^2}{2} \frac{\dd^2}{\dd t^2}I(B:A)_{(\id\otimes T_t)(\rho_{AA'})}\Big|_{t=0} 
%       \\&
%      \leq \max_{\rho_{AA'}}  \Big\{ I(B:A)_{(\id\otimes T_0)(\rho_{AA'})}+t \frac{\dd}{\dd t}I(B:A)_{(\id\otimes T)(\rho_{AA'})}\Big|_{t=0} +\frac{t^2}{2} \frac{4d_B^3}{\lambda_{min}(\sigma)}  \Big\}
%       \\& =t^2 \frac{2d_B^3}{\lambda_{min}(\sigma)} \\& =\frac{2d_B^3}{\lambda_{min}(\sigma)}  \| \id\otimes (T-R_{\sigma}) \|_{1\rightarrow 1}^2 
%     \end{split}
% \end{equation*}
In total, we therefore obtain
\begin{equation*}
    \begin{split}
& \sup_{\rho_{AA'}} D(\id\otimes T_t(\rho_{AA'})||\id\otimes R_{\sigma}(\rho_{AA'}))
    \\& =D_{stab}(T_t||R_{\sigma} )\\& \leq \frac{2d_B^3}{\lambda_{min}(\sigma)}  \| \id\otimes (T_t-R_{\sigma}) \|_{1\rightarrow 1}^2.
    \end{split}
\end{equation*}
\end{proof}

We set $T_t=T$ in the following.

\begin{corollary} By combining Theorem~\ref{thm-taylor} and Lemma~\ref{thm-norm-stab}, for any quantum channel $T:\mathcal{M}_{d_{A'}}\rightarrow \mathcal{M}_{d_B}$ with $\|  T -R_{\sigma} \|_{1\rightarrow 1} \leq \frac{1}{2d_B}$, we have
    \[D_{stab}(T|| R_{\sigma})
       \leq \frac{2d_B^5}{\lambda_{min}(\sigma)}  \|  T-R_{\sigma} \|_{1\rightarrow 1}^2.\]
\end{corollary}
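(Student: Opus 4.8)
The plan is to combine the two earlier results that the corollary explicitly names: Theorem~\ref{thm-taylor}, which gives a quadratic upper bound on the stabilized channel divergence in terms of the \emph{stabilized} induced norm $\| \id\otimes(T-R_\sigma)\|_{1\rightarrow 1}$, and Lemma~\ref{thm-norm-stab} (Corollary to Watrous' stabilization theorem), which bounds this stabilized norm by $d_B$ times the non-stabilized norm $\|T-R_\sigma\|_{1\rightarrow 1}$. The whole argument is a chaining of these two inequalities, with the only delicate point being that the hypotheses of Theorem~\ref{thm-taylor} must first be verified from the weaker hypothesis $\|T-R_\sigma\|_{1\rightarrow 1}\leq \frac{1}{2d_B}$ stated in the corollary.

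First I would check the hypothesis transfer. Theorem~\ref{thm-taylor} requires $\| \id\otimes(T-R_\sigma)\|_{1\rightarrow 1}\leq \frac{1}{2}$, whereas the corollary only assumes $\|T-R_\sigma\|_{1\rightarrow 1}\leq \frac{1}{2d_B}$. By Lemma~\ref{thm-norm-stab} applied to the linear map $\Lambda=T-R_\sigma$, we have $\| \id_{d_R}\otimes(T-R_\sigma)\|_{1\rightarrow 1}\leq d_B\|T-R_\sigma\|_{1\rightarrow 1}$ for every reference dimension $d_R$, and the stabilized distance entering Theorem~\ref{thm-taylor} is the supremum of these over $d_R$ (which stabilizes, so is itself bounded by $d_B\|T-R_\sigma\|_{1\rightarrow 1}$). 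Hence the assumption $\|T-R_\sigma\|_{1\rightarrow 1}\leq \frac{1}{2d_B}$ forces $\| \id\otimes(T-R_\sigma)\|_{1\rightarrow 1}\leq d_B\cdot\frac{1}{2d_B}=\frac{1}{2}$, so Theorem~\ref{thm-taylor} applies.

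Next I would simply substitute. Theorem~\ref{thm-taylor} gives
\[
D_{stab}(T\|R_\sigma)\leq \frac{2d_B^3}{\lambda_{min}(\sigma)}\,\| \id\otimes(T-R_\sigma)\|_{1\rightarrow 1}^2,
\]
and inserting the bound $\| \id\otimes(T-R_\sigma)\|_{1\rightarrow 1}\leq d_B\|T-R_\sigma\|_{1\rightarrow 1}$, squaring (which preserves the inequality since both sides are nonnegative), yields $\| \id\otimes(T-R_\sigma)\|_{1\rightarrow 1}^2\leq d_B^2\|T-R_\sigma\|_{1\rightarrow 1}^2$. Combining the two produces exactly the claimed
\[
D_{stab}(T\|R_\sigma)\leq \frac{2d_B^3}{\lambda_{min}(\sigma)}\cdot d_B^2\,\|T-R_\sigma\|_{1\rightarrow 1}^2=\frac{2d_B^5}{\lambda_{min}(\sigma)}\,\|T-R_\sigma\|_{1\rightarrow 1}^2.
\]

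I do not anticipate a genuine obstacle here, since both ingredients are already established in the excerpt and the result is a formal composition of them; the statement is essentially a convenience restatement in terms of the unstabilized norm. The only thing requiring a moment of care is making the stabilization bookkeeping precise, namely confirming that the stabilized $1\rightarrow 1$-norm appearing in Theorem~\ref{thm-taylor} really is controlled by $d_B\|T-R_\sigma\|_{1\rightarrow 1}$ uniformly in the reference dimension (so that both the hypothesis check and the final substitution go through with the same constant). This is immediate from Corollary~\ref{thm-norm-stab}, which is stated for an arbitrary reference system, so no new estimate is needed.
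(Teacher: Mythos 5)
Your proposal is correct and follows exactly the paper's own argument: first use Corollary~\ref{thm-norm-stab} to upgrade the hypothesis $\|T-R_{\sigma}\|_{1\rightarrow 1}\leq \frac{1}{2d_B}$ to $\|\id\otimes(T-R_{\sigma})\|_{1\rightarrow 1}\leq \frac{1}{2}$ so that Theorem~\ref{thm-taylor} applies, then substitute the squared stabilization bound to pick up the extra factor of $d_B^2$. No differences worth noting.
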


\begin{proof}
By Lemma~\ref{thm-norm-stab}, $\|  T -R_{\sigma} \|_{1\rightarrow 1} \leq \frac{1}{2d_B}$  implies that $ \| \id\otimes ( T-R_{\sigma} ) \|_{1\rightarrow 1} \leq d_B\|  T -R_{\sigma} \|_{1\rightarrow 1} \leq \frac{1}{2}$, and thus the condition of Theorem~\ref{thm-taylor} applies. Then, further using Lemma~\ref{thm-norm-stab} again, we have $ \| \id\otimes (T_t-R_{\sigma}) \|_{1\rightarrow 1}^2 \leq d_B^2 \|  T-R_{\sigma} \|_{1\rightarrow 1}^2$. Then, inserting this into Theorem~\ref{thm-taylor} gives the result.
\end{proof}

\section{An upper bound on the quotient of capacities}
\label{sec-bound-on-the-quotient}

%When the sender and the receiver have access to entanglement, they can achieve higher communication rates. Here, we study how much higher the capacity can become in the presence of entanglement by providing an upper bound of the ratio between the entanglement-assisted capacity and the classical capacity of the same quantum channel, as long as the divergence center of the channel has full rank.

In Section~\ref{sec-taylors-thm}, we show that entanglement-assisted capacity in the vicinity of the replacer channel is bounded by an expression that is quadratic in the stabilized channel distance in Theorem~\ref{thm-taylor}.
This is a key component, as the quadratic behaviour matches the quantum Pinsker inequality in Theorem~\ref{thm-pinsker-ineq}, which is a well-known lower bound on the Holevo capacity, and because the $1\rightarrow 1$-norm fulfills a crucial stabilization property due to Theorem~\ref{thm-norm-stab}. Therefore, we can lower bound the Holevo capacity in terms of a quadratic channel distance, and upper bound the entanglement-assisted capacity in terms of the quadratic same channel distance, which allows us to bound the ratio of capacities by a finite, dimension-dependent function in Theorem~\ref{thm-the-quotient-bound}.

\subsection{A bound for channels that are far from a replacer channel}

We emphasize that the ratio of capacities can only diverge when the classical capacity becomes zero. For non-zero classical capacity, say $C(T)= \epsilon$, the ratio is automatically upper-bounded by $2\log(d)/\epsilon$ because the entanglement-assisted capacity is upper-bounded by $2\log(d)$. Problems arise in the case of $\epsilon\rightarrow 0$. Since the classical capacity is equal to a maximal mutual information with cq-states as input (Theorem~\ref{thm-class-cap-1}), which is faithful, this is only the case for channels where information is completely destroyed by the channel, i.e. channels that replace any input state with the same quantum state $\sigma$, $R_{\sigma} (\rho) = \sigma \tr(\rho)$, which we refer to as a replacer channel. Therefore, the points where the capacity quotient could diverge are channels in the vicinity of a replacer channel. This is made more precise by the following:

\begin{theorem}[{\cite[Eq.~12.103]{NC00}}] For any quantum channel $T:\mathcal{M}_{d_{A'}} \rightarrow \mathcal{M}_{d_B}$, let $\sigma_B^*\in \mathcal{M}_{d_B}$ denote its divergence center. If $\| T-R_{\sigma^*} \|_{1\rightarrow 1} > \frac{1}{2d_B}$, we have that $C_H(T)>\frac{1}{2d_B}$, and therefore
\[  C^{ea}(T) \leq  g(d_B) C_H(T) .\]
with $g(d_B)=4d_B\log(d_B)$.
\end{theorem}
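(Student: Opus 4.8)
The plan is to sandwich the ratio $C^{ea}(T)/C_H(T)$ by estimating the two capacities separately: a universal upper bound on $C^{ea}(T)$ depending only on $d_B$, together with a lower bound on $C_H(T)$ that is forced by the hypothesis that $T$ sits a fixed distance from its replacer channel. As the discussion preceding the statement makes clear, the ratio can only grow without bound in the vicinity of a replacer channel, so the real content of the theorem is the quantitative lower bound on $C_H(T)$ in this ``far from replacer'' regime, while the bound on $C^{ea}(T)$ is soft.

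For the upper bound I would use Theorem~\ref{thm-ent-ass-cap-bsst} to write $C^{ea}(T)=\sup_{\rho_{AA'}} I(B:A)_{(\id\otimes T)(\rho_{AA'})}$ and then bound the quantum mutual information of any bipartite state whose $B$-marginal has dimension $d_B$ by $I(B:A)\le 2\log d_B$. This is immediate from subadditivity of the von Neumann entropy (equivalently $I(A:B)\le 2\min\{\log d_A,\log d_B\}$) and gives $C^{ea}(T)\le 2\log d_B$ with no reference to the channel beyond its output dimension.

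The main work is the lower bound $C_H(T)>\tfrac{1}{2d_B}$. First I would invoke the divergence-radius formula of Theorem~\ref{thm-class-cap-div-rad}, so that $C_H(T)=\sup_{\rho_B\in\Im(T)}D(\rho_B\|\sigma_B^*)$ for the divergence center $\sigma_B^*$. Next I would reduce the norm hypothesis to a statement about a single image state: since $T-R_{\sigma^*}$ is Hermiticity-preserving and trace-annihilating, its induced $1\rightarrow 1$ norm is attained on the extreme points of the Hermitian trace-norm unit ball, which are $\pm$ pure states, giving $\|T-R_{\sigma^*}\|_{1\rightarrow 1}=\sup_{\rho_B\in\Im(T)}\|\rho_B-\sigma_B^*\|_1$. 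Thus $\|T-R_{\sigma^*}\|_{1\rightarrow 1}>\tfrac{1}{2d_B}$ produces an image state $\rho_B\in\Im(T)$ whose trace distance from the center exceeds $\tfrac{1}{2d_B}$. Finally I would convert this distance into a relative-entropy lower bound and hence into a lower bound on $C_H(T)$, which is exactly the estimate recorded as Eq.~12.103 of \cite{NC00}. Combining the two bounds yields $C^{ea}(T)/C_H(T)<2\log d_B/(1/(2d_B))=4d_B\log d_B=g(d_B)$.

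The delicate step is the last one. Feeding the trace-distance bound through quantum Pinsker's inequality (Theorem~\ref{thm-pinsker-ineq}) only gives $C_H(T)\ge\tfrac{1}{2\ln 2}\,\|T-R_{\sigma^*}\|_{1\rightarrow 1}^2$, which is quadratic in $1/d_B$ and therefore too weak to reach the linear threshold $\tfrac{1}{2d_B}$ claimed here; obtaining the sharper, linear-in-$1/d_B$ lower bound on $C_H(T)$ is precisely where the specific estimate of \cite{NC00} must enter, and reconciling its constants with the stated $g(d_B)$ is the part that will need the most care. A secondary point to make fully rigorous is the extremality claim that the induced $1\rightarrow 1$ norm of the Hermiticity-preserving map $T-R_{\sigma^*}$ is achieved on a pure input, so that the norm hypothesis genuinely singles out one far-away image state to which the relative-entropy estimate can be applied.
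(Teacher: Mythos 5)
Your overall route coincides with the paper's own (which is only a sketch plus the citation): combine the soft bound $C^{ea}(T)\le 2\log d_B$ with the claimed lower bound $C_H(T)>\frac{1}{2d_B}$ and divide. Your derivation of the first half from subadditivity of the von Neumann entropy is fine, as is the reduction of the norm hypothesis, via the divergence-radius formula of Theorem~\ref{thm-class-cap-div-rad}, to the existence of a single image state at trace distance greater than $\frac{1}{2d_B}$ from the divergence center. Your secondary worry about where the induced $1\rightarrow 1$ norm is attained is harmless, since the quantity actually used in this chapter is $\sup_{\rho}\|T(\rho)-\sigma^*_B\|_1$ with the supremum over states.

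The step you flag as delicate is, however, not merely a matter of locating the right estimate in \cite{NC00}: the intermediate claim $C_H(T)>\frac{1}{2d_B}$ is false as stated, so no amount of care will close the gap along this route. Any lower bound on $D(\rho\|\sigma)$ in terms of $\|\rho-\sigma\|_1$ alone must vanish quadratically as the states approach each other, so there is no inequality of the form $D(\rho\|\sigma)\ge c\,\|\rho-\sigma\|_1$ valid at small distances. Concretely, take the qubit depolarizing channel $T_p(\rho)=(1-p)\rho+p\,\frac{\mathbbm{1}_2}{2}$ with $p$ slightly below $3/4$: its divergence center is $\sigma^*=\frac{\mathbbm{1}_2}{2}$ and $\|T_p-R_{\sigma^*}\|_{1\rightarrow 1}=1-p>\frac14=\frac{1}{2d_B}$, yet $C_H(T_p)=1-h\!\left(\tfrac{p}{2}\right)\approx 1-h(3/8)\approx 0.046$, far below $\frac{1}{2d_B}=0.25$. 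What the hypothesis genuinely yields, via Pinsker's inequality (Theorem~\ref{thm-pinsker-ineq}) applied to the far-away image state, is $C_H(T)\ge \frac{1}{2\ln 2}\left(\frac{1}{2d_B}\right)^2=\frac{1}{8\ln(2)\,d_B^2}$, and hence $C^{ea}(T)/C_H(T)\le 16\ln(2)\,d_B^2\log(d_B)$. This is still a bound depending only on $d_B$, which is all the surrounding argument requires, but it does not reproduce the stated $g(d_B)=4d_B\log(d_B)$; either the threshold in the hypothesis or the function $g$ must be adjusted. In short, you have faithfully reconstructed the paper's argument and correctly located its weak point, but the missing lemma you hope to import from \cite{NC00} cannot exist in the form needed.
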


Due to this, we will consider channels $T$ that are close to a replacer channel in the remainder of this section.

\subsection{A bound in the vicinity of a replacer channel} 
%Taylor's theorem for mutual information}

Using Theorem~\ref{thm-taylor} and Theorem~\ref{thm-pinsker-ineq}, we can bound the quotient of capacities for channels near the associated replacer channel by the following expression:

\begin{theorem}\label{thm-the-quotient-bound} For any quantum channel $T:\mathcal{M}_{d_{A'}} \rightarrow \mathcal{M}_{d_B}$, let $\sigma_B^*\in \mathcal{M}_{d_B}$ denote its divergence center. If $\| T-R_{\sigma} \|_{1\rightarrow 1} \leq \frac{1}{2d_B}$, we have 
\[  C^{ea}(T) \leq  f(d_B,\lambda_{min}(\sigma^*)) C_H(T) .\]
for a function $f(d_B,\lambda_{min}(\sigma^*))=\frac{4\ln(2)d_B^5}{\lambda_{min}(\sigma^*)} $.
%If $\| T-R_{\sigma} \|_{1\rightarrow 1} > \frac{1}{2d_B}$, we have that %$C^{ea}(T)\leq $
\end{theorem}

\begin{proof}
We use our expression of the entanglement-assisted capacity as a divergence from Theorem~\ref{thm-stab-div-rad}. Since the expression entails a minimization over the marginals that is achieved by the stabilized center, the expression can be upper-bounded by a divergence with respect to the divergence center $\sigma^*_B\in\mathcal{M}_{d_B}$ for the un-assisted divergence radius for the channel $T:\mathcal{M}_{d_{A'}} \rightarrow \mathcal{M}_{d_B}$. Therefore, we have
 \begin{equation*}\begin{split}
    C_{ea} (T) &= \inf_{\sigma\in\mathcal{M}_{d_B}} D_{stab}(T||R_{\sigma})\\&\leq D_{stab}(T||R_{\sigma^*}) \\&
   \leq \frac{2d_B^5}{\lambda_{min}(\sigma^*)}  \|  T-R_{\sigma^*} \|_{1\rightarrow 1}^2 .
    \end{split}
\end{equation*}

% \begin{equation*}\begin{split}
%     C_{ea} (T) &= \inf_{\sigma_B}\sup_{\rho_{AA'}} D((\id\otimes T)(\rho_{AA'})||\rho_{A} \otimes \sigma_B)\\&\leq \sup_{\rho_{AA'}} D((\id\otimes T)(\rho_{AA'})||\rho_{A}\otimes \sigma_B^*) \\&
%    \leq \frac{2d_B^5}{\lambda_{min}(\sigma)}  \|  T_t-R_{\sigma^*} \|_{1\rightarrow 1}^2 .
%     \end{split}
% \end{equation*}

On the other hand, we have Pinsker's inequality from Theorem~\ref{thm-pinsker-ineq}:
\begin{align*}
C_H(T) &= \sup_{\rho_{A'}\in\mathcal{M}_{d_{A'}}} D(T(\rho_{A'}) ||\sigma_B^* )\\&\geq  \sup_{\rho_{A'}\in\mathcal{M}_{d_{A'}}} \frac{1}{2\ln(2)} \|T(\rho)-\sigma^* \|_1^2\\&=  \frac{1}{2\ln(2)} \|T-R_{\sigma^*} \|_{1\rightarrow 1}^2 .     
\end{align*}
%where we are maximizing over quantum states $\rho_{A'}\in\mathcal{M}_{d_{A'}}$.

In total, we obtain
 \begin{equation*}\begin{split}
    C_{ea} (T) &\leq \frac{2d_B^5}{\lambda_{min}(\sigma^*)}  \|  T_t-R_{\sigma^*} \|_{1\rightarrow 1}^2 \\&
    \leq \frac{4\ln(2)d_B^5}{\lambda_{min}(\sigma^*)}  C_H(T).
    \end{split}
\end{equation*}
\end{proof}

In other words, for a divergence center $\sigma^*$ with a bounded minimal eigenvalue $\lambda_{min}(\sigma^*)$, we have that any channel $T$ with $\|T-R_{\sigma^*}\|_{1\rightarrow 1} \leq \frac{1}{2d_B}$ has a bounded quotient with
\[\frac{C^{ea}(T)}{C(T)} \leq \frac{4\ln(2)d_B^5}{\lambda_{min}(\sigma^*)}  . \]

\subsection{A tight bound for qubit unital channels}
\label{sec-pauli}

By our method in Theorem~\ref{thm-taylor}, we can upper bound the quotient of capacities by a dimension-dependent factor. However, this bound is very likely not optimal for most channels, and also lies above the conjectured bound by Bennett et al \cite{BSST02}, suggesting that the bound could be improved significantly. Further illustrating this, we prove a (previously unknown) tight bound for the class of unital qubit channels.

More precisely, since the divergence center of any unital qubit channel $T:\mathcal{M}_2\rightarrow \mathcal{M}_2$ is the maximally mixed state due to symmetry and invariance of the capacities under unitaries, we would have $\lambda_{min}(\sigma_B) =\frac{1}{2}$, and thus $\frac{C^{ea}(T)}{C_H(T)}\leq 4\ln(2)2^6 = 256\ln(2)$. Here, we show that the quotient for unital qubit channels is in fact bounded by $\frac{C^{ea}(T)}{C_H(T)}\leq 5.0798$, and that this novel bound is tight.

It is known that the Holevo capacity for unital channels is additive \cite{King02}, and therefore, for this class of channels, the Holevo capacity is equal to the classical capacity, so that a tight bound on $\frac{C^{ea}(T)}{C_H(T)}$ is also a tight bound on the quotient with the regularized capacity.

\begin{theorem}
For any unital qubit channel $T:\mathcal{M}_2\rightarrow \mathcal{M}_2$, we have
\[\frac{C^{ea}(T)}{C_H(T)} \leq 5.0798 \]
and this bound is tight.
\end{theorem}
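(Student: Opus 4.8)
The claim is that for any unital qubit channel $T:\mathcal{M}_2\rightarrow\mathcal{M}_2$, the ratio $C^{ea}(T)/C_H(T)$ is bounded by a numerical constant $5.0798$, and that this bound is tight. The strategy I would pursue is to reduce the problem to a one-parameter (or low-dimensional) family of channels, compute both capacities explicitly on that family, and then maximize the ratio numerically, exhibiting the extremal channel to prove tightness.

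Let me think about what I know. A unital qubit channel sends the maximally mixed state to itself. By the capacities' invariance under pre- and post-composition with unitaries, I may bring $T$ into a canonical form: a Pauli channel, or more precisely a channel whose action on the Bloch sphere is diagonal, $T_{\vec\lambda}:\vec{r}\mapsto(\lambda_1 r_1,\lambda_2 r_2,\lambda_3 r_3)$ with singular values $\lambda_i\in[-1,1]$ subject to complete positivity constraints (the Fujiwara–Algoet tetrahedron conditions). This reduces the problem to three real parameters $\lambda_1,\lambda_2,\lambda_3$. For such channels the divergence center is the maximally mixed state $\tfrac{\mathbbm 1}{2}$ by symmetry (as noted in the excerpt), so $C_H(T)$ and $C^{ea}(T)$ both have closed-form expressions: $C_H$ via King's result on additivity of unital-channel capacity, and $C^{ea}$ as the quantum mutual information of $(\id\otimes T)(\phi_+)$, which for a Pauli channel is $2+\sum_i p_i\log p_i$ where $p_i$ are the Pauli probabilities (the eigenvalues of the Choi state). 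So the first concrete step is to write both capacities as explicit functions of the three parameters.

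Next I would reduce the dimensionality of the optimization. The ratio $C^{ea}/C_H$ should be maximized on the boundary of the allowed parameter region, and by the existing expression for covariant Pauli channels given in the introduction (the two-parameter family $T_{p_0,p_3}$ with its explicit $C_H$ and $C^{ea}$ formulas), I suspect the extremum lives on a symmetric one-parameter subfamily. I would argue, using the symmetry of the problem under permutations of $\{\lambda_1,\lambda_2,\lambda_3\}$ and the monotonicity/convexity structure of the entropic expressions, that it suffices to examine a single parameter, then locate the maximizing value numerically; evaluating the ratio there should give $5.0798$. Tightness follows immediately by displaying the specific channel (a specific Pauli or dephasing-type channel) at which the supremum is attained or approached.

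**The main obstacle.** The hard part will not be writing down the capacity formulas — those are standard for unital qubit channels — but rather rigorously justifying the reduction of the three-parameter optimization to the extremal channel, i.e. proving that the supremum of the ratio is genuinely $5.0798$ and that no other region of the tetrahedron of completely-positive parameters yields a larger value. Because $C_H$ appears in the denominator and tends to zero as $T$ approaches a replacer channel, the ratio is a delicate $0/0$-type limit near the degenerate boundary, and one must control this limit carefully (the relevant leading-order behaviour is governed precisely by the quadratic Taylor/Pinsker comparison developed in Theorem~\ref{thm-taylor} and Theorem~\ref{thm-pinsker-ineq}). I expect the cleanest route is to show the ratio extends continuously to the replacer limit with a finite value strictly below the interior maximum, so that the supremum is attained in the interior and can be pinned down by a first-order stationarity condition; establishing this continuity and ruling out boundary maxima is where the real work lies, and it may ultimately require a partly numerical verification over the compact parameter region rather than a fully closed-form argument.
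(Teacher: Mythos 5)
Your overall strategy — reduce to Pauli channels by unitary invariance, write both capacities in closed form, cut the three-parameter family down to a low-dimensional one by symmetry, and finish with a partly numerical optimization plus an explicit extremal channel — is the same route the paper takes (its Theorems~\ref{thm-pauli3}, \ref{thm-pauli2} and \ref{thm-pauli1} are exactly these reduction steps, and the paper is also partly numerical in the covariant case). However, your plan for the step you yourself identify as the crux is based on a wrong guess about where the maximum sits, and following it would lead you to the wrong place.

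You propose to show that the ratio extends continuously to the replacer-channel limit with a value strictly below an \emph{interior} maximum, which you would then pin down by first-order stationarity. Neither half of this is what happens. The replacer limit is not the dangerous regime: along the depolarizing line the $0/0$ limit evaluates (by L'H\^opital, as in Theorem~\ref{thm-pauli1}) to $d+1=3$, well below $5.0798$. And the supremum is not attained in the interior of the Pauli simplex: the interior stationary locus consists of minima and saddle points (the paper's numerics bound the ratio there by about $2.1$, and the symmetric line $p_0=p_3$ gives ratio exactly $1$). The true extremizer is the boundary channel $T_1(\rho)=\tfrac{1}{3}(X\rho X+Y\rho Y+Z\rho Z)$, i.e.\ the vertex of the one-parameter family at $p_0=0$, $p_1=p_2=p_3=\tfrac13$, where both capacities are strictly positive ($C^{ea}=2-\log 3$, $C_H=\tfrac53-\log 3$, quotient $\tfrac{6-3\log 3}{5-3\log 3}\approx 5.0798$) and no delicate limit analysis is needed at all. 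Relatedly, the reduction from three free parameters to one — which you defer to "symmetry and monotonicity/convexity structure" — is where the actual work lies; the paper does it by observing that the Holevo capacity of a Pauli channel depends on only two of the three free parameters, so stationarity of the quotient in the remaining parameter forces two Pauli weights to coincide (Theorem~\ref{thm-pauli3}), and then by a case analysis over the two regimes of the covariant-channel capacity formula (Theorem~\ref{thm-pauli2}). Without that argument, and with the maximum sought in the interior rather than at the simplex boundary, your plan as stated would not reach the constant $5.0798$.
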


\begin{proof}
We will prove this upper bound by proving a sequence of bounds for increasingly more specialized families of quantum channels.

For any unital quantum channel $T:\mathcal{M}_2\rightarrow \mathcal{M}_2$, there exist a qubit Pauli channel $S:\mathcal{M}_2\rightarrow \mathcal{M}_2$ and unitary matrices $U\in \mathcal{M}_2$ and $V\in\mathcal{M}_2$ such that $T=\Lambda_U\circ S\circ \Lambda_V $ where $\Lambda_U:\mathcal{M}_2\rightarrow \mathcal{M}_2$ denotes CPTP the map with a single Kraus operator $U$, and $\Lambda_V:\mathcal{M}_2\rightarrow \mathcal{M}_2$ denotes the CPTP map with a single Kraus operator $V$ \cite{RSW01}. The Holevo capacity and the entanglement-assisted capacity of $T$ and $S$ coincide.
Thus an upper bound on the quotient for all qubit Pauli channels implies that the same upper bound applies to all unital qubit channels.

In Theorem~\ref{thm-pauli3}, we prove that the capacity quotient for qubit Pauli channels is upper-bounded by the quotient for covariant Pauli channels, where two channel parameters coincide.
%We will define increasingly more general families of quantum channels, and show that in each case, the quotient reduces to one particular special case.

In Theorem~\ref{thm-pauli2}, we show that the capacity quotient for covariant Pauli channels is upper-bounded by the quotient for a family of channels $S_p\in S_0\subseteq L(\mathcal{M}_2,\mathcal{M}_2)$  for $p\in [0,1]$ where the capacity has a closed form that is well known. For these channels, the quotient is upper bounded by 
 \[ \frac{C^{ea}(S_p)}{C_H(S_p)} \leq \frac{6-3\log(3)}{5-3\log{3}}\leq 5.0798 \]
in Theorem~\ref{thm-pauli1}, giving the upper bound for all unital qubit channels.

This bound is tight because it is achieved by the qubit channel $T_1:\mathcal{M}_2\rightarrow \mathcal{M}_2$ with $T_1(\rho)=\frac{1}{3} (X\rho X+ Y\rho Y + Z\rho Z)$. This was first noted in \cite{Holevo02}.
\end{proof}

\section{Conclusion and open problems}

In this manuscript, we prove that the quotient of the entanglement-assisted capacity and the classical capacity of the same quantum channel is bounded by a dimension-dependent function so long as the channel has a geometric center in terms of the quantum relative entropy which has full rank.

The bound $\frac{C^{ea}(T)}{C(T)} \leq \frac{4\ln(2)d_B^5}{\lambda_{min}(\sigma)}$ that we obtain is (likely) not tight and does not match the conjectured upper bound $\frac{C^{ea}(T)}{C(T)} \leq 2(d_B+1)$ from \cite{BSST02}. It is an open question whether our bound could be improved generally or under certain assumptions on the quantum channel beyond our tight bound for unital qubit channels.

Our bound is finite for sequences of quantum channels that move towards a replacer channel which replaces any input with a full-rank state because of the dependence on the minimal eigenvalue of the divergence center. In other words, the quotient could only diverge for sequences of quantum channels where the divergence center changes rank, i.e. it has $\lambda_{min}(\sigma)\rightarrow 0$. For such channels, it is currently not known whether the quotient is finite or not. We suspect that our result can perhaps be extended to sequences of channels that converge to a lower-rank replacer channel using ideas from continuity bounds on the quantum mutual information.

%Furthermore, it is currently an open question we can also find a bound on the quotient for sequences of channels where the center changes rank, as our techniques here do not apply to this case.

%-Pure centers?

%It is clear that the bounds are not tight. In particular, much larger than the conjecture (which is based on high-dim depolarizing chnanels). It is perhaps interesting to note that our strategies do not apply to low-rank centers - for example for qubit channels, not for the amplitude damping channel. But even for qubit channels, the amplitude damping channel is not the maximum we know - it's the over-depolarizing channel, which our strategies do indeed apply to.

\newpage
\section{Appendices}
\subsection{Properties of entropy}

Here, we recall and prove some properties of von Neumann entropy.

\subsubsection{Convexity of conditional entropy in the marginal}

\begin{lemma}\label{thm-convex-cond-entropy}
Consider two states $\rho_1$ and $\rho_2$ and a state $\rho_0=\lambda \rho_1 + (1-\lambda) \rho_2$ for some $0\leq \lambda\leq 1$. Let $\phi_0$, $\phi_1$ and $\phi_2$ denote the respective purifications of these states where $\Tr_{A}(\phi_i)=\rho_i$. Then, we have
  \[  \lambda H(B|A)_{(\id \otimes T) \phi_1}+ (1-\lambda) H(B|A)_{(\id \otimes T) \phi_2 } \geq \lambda H(B|A)_{(\id \otimes T) \phi_0}.\]
\end{lemma}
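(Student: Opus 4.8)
The plan is to recognize that $H(B|A)_{(\id\otimes T)(\phi)}$ depends only on the input marginal $\rho=\Tr_A(\phi)$, and then to exhibit it as a concave function composed with an affine map (up to an overall sign), so that convexity follows. Since all purifications of a fixed $\rho$ are related by an isometry on the purifying system $A$, and since $H(B|A)$ is invariant under isometries acting on the conditioning system $A$, the quantity $f(\rho):=H(B|A)_{(\id\otimes T)(\phi)}$ is well defined as a function of $\rho$ alone. The asserted inequality is exactly the statement that $f$ is convex, evaluated at $\rho_0=\lambda\rho_1+(1-\lambda)\rho_2$ (reading the right-hand side as $H(B|A)_{(\id\otimes T)\phi_0}=f(\rho_0)$), so it suffices to prove convexity of $f$.

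First I would introduce a Stinespring isometry $V:\mathcal{V}_{d_{A'}}\to\mathcal{V}_{d_B}\otimes\mathcal{V}_{d_E}$ for $T$, so that $T(\cdot)=\Tr_E(V\cdot V^\dagger)$ and the complementary channel is $T^c(\cdot)=\Tr_B(V\cdot V^\dagger)$. For a purification $\phi_{AA'}$ of $\rho$, the state $(\id_A\otimes V)\phi(\id_A\otimes V)^\dagger$ is pure on $ABE$, and purity of a tripartite state gives $H(AB)=H(E)$ and $H(A)=H(BE)$, whence
\[
f(\rho)=H(AB)-H(A)=H(E)-H(BE)=-H(B|E)_{\rho_{BE}},
\]
where $\rho_{BE}=V\rho V^\dagger$. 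Only the purity of the dilated state is used here.

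The key step is that $H(B|E)$ is a concave function of the joint state $\rho_{BE}$. This is the standard concavity of conditional entropy, which follows from strong subadditivity (conditioning reduces entropy): attaching a classical flag $C$ to a convex combination $\sum_i p_i\,\rho_{BE}^i$ gives $H(B|EC)=\sum_i p_i H(B|E)_{\rho^i}\le H(B|E)_{\sum_i p_i\rho^i}$. Since the map $\rho\mapsto\rho_{BE}=V\rho V^\dagger$ is affine, the composition $\rho\mapsto H(B|E)_{V\rho V^\dagger}$ is concave, and therefore $f(\rho)=-H(B|E)_{V\rho V^\dagger}$ is convex. Applying convexity at $\rho_0$ yields $\lambda f(\rho_1)+(1-\lambda)f(\rho_2)\ge f(\rho_0)$, which is the claim.

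The main obstacle is getting the direction of convexity right and finding a formulation to which a single standard concavity result applies: the naive rewriting $f(\rho)=H(T^c(\rho))-H(\rho)$ expresses $f$ as a difference of two concave entropies, from which convexity is not visible. The complementary-channel identity $f(\rho)=-H(B|E)_{V\rho V^\dagger}$ resolves this by packaging both terms into one genuinely concave conditional entropy. As a backup, in case introducing the Stinespring dilation is considered to use machinery not set up earlier, I would instead build the block-diagonal extension $\Phi_{FAA'}=\lambda\ket{1}\!\bra{1}_F\otimes\phi_1+(1-\lambda)\ket{2}\!\bra{2}_F\otimes\phi_2$, note that $H(B|FA)$ of $(\id\otimes T)(\Phi)$ equals the left-hand side, and invoke data processing (monotonicity of $H(B|\cdot)$ under a channel on the conditioning system, using that every extension of $\rho_0$ arises from its purification by such a channel) to bound this quantity below by $f(\rho_0)$.
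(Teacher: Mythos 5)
Your proof is correct and is essentially the same argument as the paper's: both pass to the Stinespring dilation, use purity of the tripartite state to rewrite $H(B|A)$ as $-H(B|E)$ on the environment, and then invoke concavity of conditional entropy (i.e.\ strong subadditivity applied to a classical-flag extension) before converting back. Your version merely packages the flag-register/SSA step as the standard concavity lemma rather than deriving it inline, and you correctly read the stray $\lambda$ on the right-hand side of the stated inequality as a typo.
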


\begin{proof}
 Consider a cqqq-state $\omega_{XBRA}=\lambda \ket{0}\bra{0} \otimes (U_N \otimes \id) ( \phi_1) + (1-\lambda) \ket{1}\bra{1} \otimes (\id \otimes U_T) ( \phi_2 ) $, where $U_T$ is the Stinespring dilation of the channel $T$. Then, because the system $X$ is classical, we have
\begin{equation*}
\begin{split}
 &\lambda H(B|A)_{(\id \otimes T) \phi_1}+ (1-\lambda) H(B|A)_{(\id \otimes T) \phi_2 }  \\&= H(B|AX)_{\omega}  =H(AB|X)_{\omega}-H(A|X)_{\omega} .
\end{split}
\end{equation*}
When conditioning on the system $X$, the state $\omega$ is pure in the system $ABR$, and therefore we have
\begin{equation*}
\begin{split}
H(AB|X)_{\omega}-H(A|X)_{\omega} =H(R|X)_{\omega}-H(BR|X)_{\omega}= -H(B|RX)_{\omega}.
\end{split}
\end{equation*}
The conditional quantum entropy is strongly subadditive, and therefore $H(B|RX)_{\omega}\leq H(B|R)_{\omega}$. 
Let there be a state $\sigma_{BRA}=U_N (\phi_0)$.
Restricted to the subsystems B and R, we have $\omega_{BR}=\lambda U_N(\rho_1)+(1-\lambda) U_N (\rho_2)=U_N(\lambda \rho_1 + (1-\lambda)\rho_2)=U_N(\rho_0)$ and $\sigma_{BR}= U_N (\rho_0) $, and thus $H(B|R)_{\omega}=H(B|R)_{\sigma}$, which implies
\begin{equation*}
\begin{split}
-H(B|RX)_{\omega} \geq -H(B|R)_{\omega} =-H(B|R)_{\sigma} .
\end{split}
\end{equation*}
The state $\sigma_{BRA}$ is pure with respect to the systems $ABR$, and therefore we find
\begin{equation*}
\begin{split}
-H(B|R)_{\sigma}  = -H(BR)_{\sigma} + H(R)_{\sigma}  = -H(A)_{\sigma} +H(AB)_{\sigma} = H(B|A)_{\sigma}  =H(B|A)_{\id \otimes T (\phi_0)}   .
\end{split}
\end{equation*}
In summary,
\[ \lambda H(B|A)_{(\id \otimes T) \phi_1}+ (1-\lambda) H(B|A)_{(\id \otimes T) \phi_2 } \geq  H(B|A)_{\id \otimes T (\phi_0)} . \]

\end{proof}

\subsubsection{Derivatives of the von Neumann entropy}

Here, we give the first and second derivative of the (negative) von Neumann entropy. The first derivative was computed in \cite{LR02}; the second derivative is simply obtained from the derivative of the matrix logarithm from \cite{AdlerNote,HaberNote}.

\begin{theorem} \label{thm-entropy-derivatives}
    Let $\rho\in\mathcal{M}_d$ and $\sigma\in\mathcal{M}_d$ be quantum states, and define the quantum state $\rho_t=(1-t)\rho+t\omega=\rho +t(\omega-\rho)$. Then, we have
    \[ \frac{\dd}{\dd t} \trace\Big( \rho_t \log (\rho_t)\Big) =\trace\Big( (\omega-\rho) \log (\rho + t (\omega-\rho)) \Big).\] 
    Further, we have
    \begin{equation*}
    \begin{split}
    &  \frac{\dd^2}{\dd t^2} \trace\Big( \rho_t \log (\rho_t)\Big)  \\& =\trace\Big( (\omega-\rho)   \int_0^{1} ((1-z) \rho_t+z\mathbbm{1})^{-1} (\omega-\rho) ((1-z) \rho_t+z\mathbbm{1})^{-1} \dd z    \Big).
    \end{split}
\end{equation*}
\end{theorem}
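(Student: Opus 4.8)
The plan is to compute the two derivatives directly, relying on the standard formula for the derivative of the trace of a matrix function and on the integral representation of the Fréchet derivative of the matrix logarithm. Throughout I assume $\rho_t$ is positive definite, so that $\log(\rho_t)$ and the resolvents below are well defined; the general case follows by continuity wherever the entropy is differentiable. I write $\dot\rho_t = \omega-\rho$, which is traceless because $\rho$ and $\omega$ are both states, i.e. $\trace(\omega-\rho)=\trace(\omega)-\trace(\rho)=0$.

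For the first derivative, I would invoke the identity $\frac{\dd}{\dd t}\trace\big(f(\rho_t)\big)=\trace\big(f'(\rho_t)\dot\rho_t\big)$, valid for any differentiable scalar $f$ applied in the functional-calculus sense. Taking $f(x)=x\log(x)$, so that $f'(x)=\log(x)+1$, gives
\begin{equation*}
\frac{\dd}{\dd t}\trace\big(\rho_t\log(\rho_t)\big)=\trace\big((\log(\rho_t)+\mathbbm{1})\dot\rho_t\big)=\trace\big((\omega-\rho)\log(\rho_t)\big)+\trace(\omega-\rho),
\end{equation*}
and the last term vanishes by tracelessness of $\dot\rho_t$, yielding the claimed first derivative. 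Equivalently, one may apply the product rule and cancel the contribution of $\frac{\dd}{\dd t}\log(\rho_t)$ using the resolvent identity $\int_0^\infty \rho_t(\rho_t+s\mathbbm{1})^{-2}\,\dd s=\mathbbm{1}$, which reduces that term to $\trace(\dot\rho_t)=0$; this is the computation of \cite{LR02}.

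For the second derivative, since $\omega-\rho$ is independent of $t$, I only need to differentiate $\log(\rho_t)$. Here I would use the integral representation of the Fréchet derivative of the logarithm from \cite{AdlerNote,HaberNote},
\begin{equation*}
\frac{\dd}{\dd t}\log(\rho_t)=\int_0^\infty (\rho_t+s\mathbbm{1})^{-1}(\omega-\rho)(\rho_t+s\mathbbm{1})^{-1}\,\dd s,
\end{equation*}
and then perform the substitution $s=z/(1-z)$, under which $(\rho_t+s\mathbbm{1})^{-1}=(1-z)\big((1-z)\rho_t+z\mathbbm{1}\big)^{-1}$ and $\dd s=(1-z)^{-2}\,\dd z$; the two factors of $(1-z)$ from the resolvents cancel the Jacobian, turning the integral over $[0,\infty)$ into the integral over $[0,1]$ in the statement. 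Substituting this into $\frac{\dd^2}{\dd t^2}\trace\big(\rho_t\log(\rho_t)\big)=\trace\big((\omega-\rho)\frac{\dd}{\dd t}\log(\rho_t)\big)$ gives exactly the stated second derivative.

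The main obstacle is not conceptual but one of rigor: justifying the interchange of the trace and of the $t$-derivative with the improper integral, and confirming convergence after the change of variables. These are controlled by the positive-definiteness assumption, since for $z\in[0,1]$ one has $(1-z)\rho_t+z\mathbbm{1}\geq\lambda_{\min}(\rho_t)\mathbbm{1}>0$ uniformly, so all resolvents are bounded and the integrals converge absolutely; dominated convergence then legitimizes differentiating under the integral sign. I would keep the argument short by citing the references above for the two derivative formulas and only spelling out the substitution and the tracelessness cancellation.
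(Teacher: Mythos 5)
Your proposal is correct and follows essentially the same route as the paper: both rest on the tracelessness of $\omega-\rho$ and on the integral representation of the Fr\'echet derivative of the matrix logarithm, the only cosmetic differences being that you obtain the first derivative via $f'(x)=\log(x)+1$ (noting the equivalence with the paper's product-rule computation) and that you derive the $[0,1]$-integral form from the $[0,\infty)$ resolvent representation by the substitution $s=z/(1-z)$ rather than quoting it directly.
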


\begin{proof}
The first derivative:
\begin{equation*}
    \begin{split}
       & \frac{\dd}{\dd t} \trace\Big( \rho_t \log (\rho_t)\Big) \\ &=
        \trace \Big(\frac{\dd}{\dd t}(\rho_t) \log (\rho_t) \Big) +  \trace\Big( \rho_t \frac{\dd}{\dd t} (\log (\rho_t) ) \Big)\\&
        = \trace\Big( \frac{\dd}{\dd t}(\rho_t) \log (\rho_t) \Big) +  \\&\qquad \trace\Big( \rho_t \int_0^{1} ((1-z) \rho_t+z\mathbbm{1})^{-1} (\frac{\dd}{\dd t}(\rho_t)) ((1-z) \rho_t+z\mathbbm{1})^{-1} \dd z )   \Big)\\&
     = \trace\Big( \frac{\dd}{\dd t}(\rho_t) \log (\rho_t) \Big) +  \trace \Big( \rho_t \rho_t^{-1} \frac{\dd}{\dd t}(\rho_t)    \Big)\\&
     = \trace\Big( (\omega-\rho) \log (\rho + t (\omega-\rho)) \Big) +  \trace\Big(\omega-\rho\Big) \\&
      = \trace\Big( (\omega-\rho) \log (\rho + t (\omega-\rho)) \Big).
    \end{split}
\end{equation*}

The derivative of the matrix logarithm for positive Hermitian matrices $\rho$ and $\omega$ is derived in \cite{AdlerNote}, or in \cite{HaberNote}. The third equality is possible because $\rho_t$ and $\rho_t+z\mathbbm{1}$ commute, and the trace is taken. In the fourth equality, we insert $\rho_t$. The last equality follows from the fact that $\omega-\rho$ is a difference between quantum states and therefore traceless.

In the point $t=0$, we have \[\frac{\dd}{\dd t} \trace\Big( \rho_t \log (\rho_t)\Big)  \Big|_{t=0} =\trace\Big( (\omega-\rho) \log (\rho) \Big).\]

The second derivative in the same direction is given by
\begin{equation*}
    \begin{split}
      &  \frac{\dd^2}{\dd t^2} \trace\Big( \rho_t \log (\rho_t)\Big) \\ &
     =  \frac{\dd}{\dd t} \trace\Big( (\omega-\rho) \log (\rho + t (\omega-\rho)) \Big) \\& = \trace\Big( (\omega-\rho)  \frac{\dd}{\dd t}(\log (\rho_t) ) \Big)\\&=\trace\Big( (\omega-\rho)   \int_0^{1} ((1-z) \rho_t+z\mathbbm{1})^{-1} (\omega-\rho) ((1-z) \rho_t+z\mathbbm{1})^{-1} \dd z    \Big).
    \end{split}
\end{equation*}
\end{proof}

\subsection{About unital qubit channels}

\begin{theorem} \label{thm-pauli1}
 Let $T_p(\rho)=(1-\frac{d^2}{d^2-1} p)\rho+\frac{d^2}{d^2-1} p\frac{\mathbbm{1}}{d}$ denote a family of quantum channels for $p\in[0,1]$. For all $p\in [0,\frac{d^2-1}{d^2}]$, the quotient of the entanglement-assisted capacity and the Holevo capacity of this channel is bounded by
\[\frac{C^{ea}(T_p)}{C_H(T_p)} \leq \lim_{p\rightarrow \frac{d^2-1}{d^2}}  \frac{C^{ea}(T_p)}{C_H(T_p)}= d+1.\]
For all $p\in [0, 1]$, the quotient of the entanglement-assisted capacity and the Holevo capacity of this channel is bounded by
\[\frac{C^{ea}(T_p)}{C_H(T_p)} \leq \frac{\log(\frac{d^2}{d^2-1})}{\log(d)+\frac{1}{d+1}\log(\frac{1}{d+1})+\frac{d}{d+1}\log(\frac{d}{d^2-1})}.\]
\end{theorem}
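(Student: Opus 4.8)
The plan is to work entirely with the closed-form capacity expressions for the depolarizing channel recorded in Example~\ref{ex-depo-channel}. First I would reparametrize: writing the channel of Example~\ref{ex-depo-channel} as $T_q(\rho)=(1-q)\rho+q\frac{\mathbbm{1}}{d}\trace(\rho)$, the channel $T_p$ in the statement is exactly $T_q$ with $q=\frac{d^2}{d^2-1}p$, so that $p=\frac{d^2-1}{d^2}$ corresponds to the completely depolarizing (replacer) channel $q=1$ and $p=1$ to the extreme completely positive point $q=\frac{d^2}{d^2-1}$. Substituting $q=\frac{d^2}{d^2-1}p$ into the two formulas of Example~\ref{ex-depo-channel} gives, after simplification,
\[C_H(T_p)=\log d+(1-\tfrac{d}{d+1}p)\log(1-\tfrac{d}{d+1}p)+\tfrac{d}{d+1}p\log(\tfrac{d}{d^2-1}p)\]
and
\[C^{ea}(T_p)=2\log d+(1-p)\log(1-p)+p\log\tfrac{p}{d^2-1}.\]
The crucial structural facts, verified by direct differentiation, are that both capacities \emph{and} their first derivatives vanish at $p_0:=\frac{d^2-1}{d^2}$ (the replacer channel, consistent with Lemma~\ref{thm-first-deriv}), and that their second derivatives there equal $\frac{\dd^2}{\dd p^2}C^{ea}\big|_{p_0}=\frac{d^4}{d^2-1}$ and $\frac{\dd^2}{\dd p^2}C_H\big|_{p_0}=\frac{d^4}{(d+1)^2(d-1)}$.

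For the first bound I would set $\Phi(p):=(d+1)C_H(T_p)-C^{ea}(T_p)$ and show $\Phi\ge0$ on $[0,p_0]$. Since $\Phi(0)=(d-1)\log d>0$ and $\Phi(p_0)=\Phi'(p_0)=0$, it suffices to prove $\Phi$ convex on $[0,p_0]$. I would compute $\Phi''(p)$ and clear the positive denominator $p\,((d+1)-dp)(1-p)$; I expect the quadratic-in-$p$ terms to cancel, leaving the numerator equal to $(d^2-1)-d^2p$, which is $\ge0$ exactly for $p\le p_0$. Convexity together with $\Phi'(p_0)=0$ forces a minimum value $\Phi(p_0)=0$, hence $\Phi\ge0$ and $C^{ea}\le(d+1)C_H$ on $[0,p_0]$; the equality $\lim_{p\to p_0}C^{ea}/C_H=d+1$ then follows from two applications of L'Hôpital's rule using the two second derivatives above, giving $\frac{(d+1)^2(d-1)}{d^2-1}=d+1$.

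For the bound valid on all of $[0,1]$ I would let $K:=C^{ea}(T_1)/C_H(T_1)$, which by the formulas above is precisely the claimed fraction $\log(\frac{d^2}{d^2-1})/(\log d+\frac{1}{d+1}\log\frac{1}{d+1}+\frac{d}{d+1}\log\frac{d}{d^2-1})$, and study $\Psi(p):=K\,C_H(T_p)-C^{ea}(T_p)$, aiming for $\Psi\ge0$ on $[0,1]$. Here $\Psi(0)=(K-2)\log d>0$, $\Psi(1)=0$ by the choice of $K$, and $\Psi(p_0)=\Psi'(p_0)=0$ (since both first derivatives vanish at $p_0$ regardless of $K$); note also $K>d+1$, which follows from the first-part computation because $\Phi''<0$ beyond $p_0$ makes $\Phi$ concave and hence negative on $(p_0,1]$. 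Clearing the same positive denominator in $\Psi''$, the quadratic terms again cancel and the numerator becomes \emph{affine} in $p$, namely $(Kd-d-1)-d(K-1)p$, with a single root $p^\ast=\frac{Kd-d-1}{d(K-1)}$; since $p^\ast$ is increasing in $K$ and equals $p_0$ when $K=d+1$, the inequality $K>d+1$ gives $p^\ast>p_0$. Thus $\Psi$ is convex on $[0,p^\ast]$ and concave on $[p^\ast,1]$. On $[0,p^\ast]$, convexity with $\Psi'(p_0)=0$ at the interior point $p_0$ gives a minimum $\Psi(p_0)=0$, so $\Psi\ge0$ there; on $[p^\ast,1]$, concavity means $\Psi$ lies above the chord joining $(p^\ast,\Psi(p^\ast))$ and $(1,0)$, which is nonnegative since $\Psi(p^\ast)\ge0$. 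Hence $\Psi\ge0$ throughout $[0,1]$, which is the second bound.

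The main obstacle is the sign analysis of the second derivatives $\Phi''$ and $\Psi''$: everything hinges on the algebraic fact that after multiplying through by $p\,((d+1)-dp)(1-p)$ the degree-two terms in $p$ cancel, reducing each numerator to an affine function whose sign is transparent. I would carry out this expansion carefully, since it is the one step where an error would be fatal; the convexity/concavity bookkeeping and the L'Hôpital limit are then routine. A minor point to check is that $C_H(T_p)>0$ strictly for $p\ne p_0$, so that the quotient is well defined away from the replacer channel.
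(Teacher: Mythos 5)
Your proposal is correct, and it is in fact more complete than the paper's own argument. The paper uses the same closed-form capacity expressions and computes the limit $\lim_{p\to\frac{d^2-1}{d^2}}C^{ea}(T_p)/C_H(T_p)=d+1$ by two applications of L'H\^opital's rule (equivalent to your second-derivative ratio $\frac{(d+1)^2(d-1)}{d^2-1}=d+1$), but it then simply \emph{asserts} the two maximality claims --- that the quotient on $[0,\frac{d^2-1}{d^2}]$ is dominated by its limiting value at the replacer point, and that on $[0,1]$ it is dominated by its value at $p=1$ --- without any monotonicity or comparison argument. Your convexity analysis supplies exactly the missing justification: the computation $\Phi''(p)\propto\frac{(d^2-1)-d^2p}{p\,(d+1-dp)(1-p)}$ for $\Phi=(d+1)C_H-C^{ea}$ does check out (the quadratic terms cancel as you predicted), and combined with $\Phi(p_0)=\Phi'(p_0)=0$ it gives $\Phi\ge0$ on $[0,p_0]$ by the tangent-line inequality; the strict concavity of $\Phi$ on $(p_0,1]$ then yields $K>d+1$, and the affine numerator $(Kd-d-1)-d(K-1)p$ of $\Psi''$ for $\Psi=K\,C_H-C^{ea}$, with its single root $p^\ast>p_0$, supports the convex-then-concave chord argument on $[0,1]$. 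The only cosmetic slip is that your stated second derivatives at $p_0$ omit a common factor of $1/\ln 2$ coming from $\log=\log_2$; since it cancels in the ratio and in every sign consideration, nothing is affected. Your observation that $C_H(T_p)>0$ for $p\neq p_0$ (so the quotient is well defined away from the replacer channel) is the right caveat and holds because $T_p$ is a constant channel only at $p=p_0$.
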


%\begin{figure}[h!]
   % \centering
    %\includegraphics[width=10cm]{}
   % \caption{The capacity quotient for depolarizing channels with depolarizing parameter $p$ in different dimensions $d$.}
    %\label{fig-depo_cap_quot}
%\end{figure}

\begin{proof}
Recall that the Holevo capacity of the channel
\[T_p(\rho)=(1-\frac{d^2}{d^2-1} p)\rho+\frac{d^2}{d^2-1} p\frac{\mathbbm{1}}{d}\]
is given by \cite[Eq.~1]{BSST99}, \cite[Ex.~8.9]{Holevo13}, \cite[Theorem 20.4.3]{Wilde13} to be \[ C_H(T_p) = \log d + (1-  p \frac{d }{d+1}) \log (1-  p \frac{d }{d+1}) +p \frac{d }{d+1} \log(\frac{dp}{d^2-1}) \]
%\[ C_H(T_p) = \log d + (1- \frac{d^2p}{d^2-1} \frac{d-1}{d}) \log (1- \frac{d^2p}{d^2-1} \frac{d-1}{d}) +\frac{d^2p}{d^2-1} \frac{d-1}{d} \log(\frac{d^2p}{(d^2-1)d}) \]
and that its entanglement-assisted capacity is given by \cite[Ex.~9.4]{Holevo13}, \cite[Eq.~3]{BSST99} to be \[C^{ea}(T_p) = \log d^2 + (1- p) \log (1- p) +p \log(\frac{p}{d^2-1}) . \]
%\[C^{ea}(T_p) = \log d^2 + (1- \frac{d^2p}{d^2-1} \frac{d^2-1}{d^2}) \log (1- \frac{d^2p}{d^2-1} \frac{d^2-1}{d^2}) +\frac{d^2p}{d^2-1} \frac{d^2-1}{d^2} \log(\frac{d^2p}{(d^2-1)d^2}) . \]

In the limit $p\rightarrow \frac{d^2-1}{d^2}$, where $T_p$ approaches a constant channel which always outputs the maximally mixed state, regardless of the input, we have 
\begin{align*}
  &  \lim_{p\rightarrow \frac{d^2-1}{d^2}} \frac{C^{ea}(T_p)}{C_H(T_p)} \\& = \lim_{p\rightarrow \frac{d^2-1}{d^2}} \Big\{\frac{\log d^2 + (1- \frac{d^2p}{d^2-1} \frac{d^2-1}{d^2}) \log (1- \frac{d^2p}{d^2-1} \frac{d^2-1}{d^2}) +\frac{d^2p}{d^2-1} \frac{d^2-1}{d^2} \log(\frac{d^2p}{(d^2-1)d^2}) }{\log d + (1-  p \frac{d }{d+1}) \log (1-  p \frac{d }{d+1}) +p \frac{d }{d+1} \log(\frac{dp}{d^2-1})} \Big\}
\\& =\lim_{p\rightarrow \frac{d^2-1}{d^2}} \Big\{ \frac{(d+1)}{d} \frac{   \log (\frac{d^2-1}{p} -(d^2-1))  }{   \log (\frac{d^2-1}{dp} -(d-1)) }  \Big\}
\\& =  d+1.
\end{align*}
The second equality is obtained by using L'Hopital's rule, and the third equality is obtained by using L'Hopital's rule a second time.

For all $p\in [0, 1]$, the quotient of the entanglement-assisted capacity and the Holevo capacity of this channel is bounded by
\begin{equation*}
    \begin{split}
        \frac{C^{ea}(T_p)}{C_H(T_p)} &\leq \frac{C^{ea}(T_1)}{C_H(T_1)}\\&=\frac{\log(\frac{d^2}{d^2-1})}{\log(d)+\frac{1}{d+1}\log(\frac{1}{d+1})+\frac{d}{d+1}\log(\frac{d}{d^2-1})}  \\&=\frac{(d+1)\log(\frac{d^2}{d^2-1})}{(2d+1)\log(d)-\log({d+1})-d\log({d^2-1})} 
.    \end{split}
\end{equation*} 
\end{proof}

\begin{theorem}  \label{thm-pauli2}
Let $T_{p_0,p_1,p_2,p_3}:\mathcal{M}_2\rightarrow \mathcal{M}_2$ denote a qubit Pauli channel, and let $p_1=p_2$.

For all $0\leq p_0+p_3\leq 1$, we have that 
$\frac{C^{ea}(T_{p_0,p_3})}{\chi(T_{p_0,p_3})} $ is maximal for $p_1=p_2=p_3$ or for $p_0=p_1=p_2$.

The analogue statement also holds under exchanges of the parameters $p_0,p_1,p_2$ and $p_3$.
% Let $T_{p_0,p_1,p_2,p_3}:\mathcal{M}_2\rightarrow \mathcal{M}_2$ denote a Pauli channel. For any $0\leq p_0+p_1+p_2+p_3\leq 1$, we have that $\frac{C^{ea}(T_{p_0,p_1,p_2,p_3})}{C_H(T_{p_0,p_1,p_2,p_3})}$ is maximal if the channel is covariant, i.e. $p_i=p_j$ for some $i\neq j$.
% The entanglement enhancement factor for covariant Pauli channels is bounded by the factor for depolarizing channel.   %The quotient for covariant Pauli channel reduces to the case of depolarizing channel.
%\[\frac{C^{ea}(T_p)}{\chi(T_p)} \leq 3\]
\end{theorem}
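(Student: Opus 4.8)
The plan is to convert this into a two-variable calculus problem and exploit a decoupling hidden in the closed-form capacity expressions. A covariant Pauli channel ($p_1=p_2$) is determined by $(p_0,p_3)$ with $p_1=p_2=(1-p_0-p_3)/2$, so I would first pass to coordinates $s=p_0+p_3$ and $\delta=p_0-p_3$ over the triangle $\{0\le\delta\le s\le 1\}$; restricting to $\delta\ge 0$ is harmless because both $C^{ea}$ and $\chi=C_H$ are invariant under $p_0\leftrightarrow p_3$, i.e. $\delta\mapsto-\delta$. In these coordinates $C^{ea}=2+(1-s)\log\frac{1-s}{2}+\frac{s+\delta}{2}\log\frac{s+\delta}{2}+\frac{s-\delta}{2}\log\frac{s-\delta}{2}$, while $C_H$ splits into the branch $C_1$, which depends only on $\delta$, and the branch $C_2=1-h(s)$, which depends only on $s$, the switch occurring exactly on the curve $|\delta|=|2s-1|$. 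The key preliminary observation I would record is that this transition curve is precisely the union of the two target lines: $\delta=2s-1$ is equivalent to $p_1=p_2=p_3$, and $\delta=1-2s$ is equivalent to $p_0=p_1=p_2$. Thus the theorem is equivalent to showing that the ratio $Q=C^{ea}/C_H$ attains its maximum on this branch-transition curve.

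Next I would run a convexity argument separately on each regime. Every summand of $C^{ea}$ has the form $x\log(x/\text{const})$ with $x$ affine, hence convex, so $C^{ea}$ is convex in $\delta$ for fixed $s$ and convex in $s$ for fixed $\delta$. In the $C_2$-regime $C_H=C_2(s)$ is a positive constant along the $\delta$-direction, so $Q$ is convex in $\delta$ and maximized at an endpoint of each $\delta$-segment; symmetrically, in the $C_1$-regime $C_H=C_1(\delta)$ is constant along $s$, so $Q$ is convex in $s$ and maximized at an endpoint of each $s$-segment. Consequently the global maximum of $Q$ lies on the boundary of the triangle together with the transition curve, and it remains to rule out the three outer edges. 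The edge $\delta=0$ is the $\delta$-minimum inside the $C_2$ regime and is therefore excluded by the same convexity; the edge $s=1$ forces $p_1=p_2=0$, and direct substitution gives $C_H=C_2(1)=1$ and $C^{ea}=2-h(p_0)\le 2$, far below the transition-curve values.

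The main obstacle is the remaining feasible edge $p_3=0$ (equivalently $\delta=s$), where neither capacity is constant and $C_H$ switches branches at $p_0=1/3$. Here I would reduce to the single-variable function $Q(p_0)$ using $C^{ea}=1+p_0-h(p_0)$ together with the $C_2$ expression $1-h(p_0)$ for $p_0\in[0,1/3]$ and the $C_1$ expression $1+\frac{1-p_0}{2}\log\frac{1-p_0}{2}+\frac{1+p_0}{2}\log\frac{1+p_0}{2}$ for $p_0\in[1/3,1]$, and show that $Q$ is increasing on the first piece and decreasing on the second, so that its maximum is attained at $p_0=1/3$. That point is exactly where this edge meets the transition curve (the channel $p_0=p_1=p_2=1/3$ on the second line), completing the reduction. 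I expect establishing the sign of $Q'(p_0)$ on the two pieces to be the genuinely technical step, handled by differentiating the explicit logarithmic expressions and bounding the resulting transcendental inequalities; the convexity bookkeeping and the identification of the transition curve with the two lines are routine by comparison. Finally, the invariance of $C^{ea}$ and $C_H$ under permutations of the Pauli weights transfers the entire argument verbatim to the other pairings of parameters, yielding the analogue statement.
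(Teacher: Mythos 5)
Your proposal is correct in outline and takes a genuinely different route from the paper. The paper works directly in $(p_0,p_3)$, sets the partial derivatives of the quotient to zero separately in each of the two capacity regimes, locates the resulting critical curves numerically, and checks numerically (pointwise values and Hessians) that these are minima or saddle points, concluding that the maximum sits on the boundary between the two regimes, i.e.\ the lines $p_1=p_2=p_3$ and $p_0=p_1=p_2$. You replace the critical-point hunt with a structural observation the paper does not exploit: in the coordinates $s=p_0+p_3$, $\delta=p_0-p_3$ the branch $C_1$ depends only on $\delta$ while $C_2$ depends only on $s$, so in each regime the quotient is convex along the coordinate in which the denominator is constant and is maximized at the endpoints of the corresponding segments; this pushes the maximum onto the transition curve $|\delta|=|2s-1|$ (exactly the two target lines) and the outer edges of the triangle, with no numerics. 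You also treat the outer edges explicitly, which the paper's closing sentences gloss over: the segment $p_3=0$, $p_0>1/3$ lies on the boundary of the Case-1 region but not on the transition curve, and your single-variable analysis there is the missing piece. Two points to tighten: (i) the convexity-of-quotient step needs $C_H>0$ on each segment, which holds because the only zeros of $C_2$ (at $s=1/2$) and of $C_1$ (at $\delta=0$) compatible with the respective regimes collapse to the completely depolarizing point, itself the intersection of the two target lines; (ii) the deferred monotonicity on $p_3=0$ is real work, but the increasing part on $[0,1/3]$ reduces cleanly to $1+\log(1-p_0)>0$ since there $Q=1+p_0/(1-h(p_0))$, and the decreasing part on $[1/3,1]$ is a similar elementary (if tedious) estimate. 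Net effect: your argument is fully analytic where the paper's is partly numerical, at the cost of one explicit one-dimensional monotonicity verification.
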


\begin{proof}
We refer to a qubit Pauli channel with $p_1=p_2= \frac{(1-p_0-p_3)}{2}$ as a covariant Pauli channel which acts on a state $\rho$ as
\[T_{p_0,p_3} (\rho)= p_0 \rho+ \frac{(1-p_0-p_3)}{2} (X\rho X + Y \rho Y) +p_3 Z\rho Z.\] 
This channel is covariant under rotations around the Z-axis; since X and Y have the same coefficient, this channel is described by two parameters.

Based on \cite{Siudzinska20}, \cite{PK22} give explicit expressions for the Holevo capacity and the entanglement-assisted capacity for channels from this family.

The entanglement-assisted capacity of a covariant Pauli channel with parameters $0\leq p_0+p_3\leq 1$ is given by
\[C^{ea}(T_{p_0,p_3})=2+p_0 \log(p_0)+p_3 \log(p_3) + (1-p_0-p_3)\log( \frac{1-p_0-p_3}{2}) .\]

The Holevo capacity is given by
\begin{equation*}
C_{H}(T_{p_0,p_3})= \begin{cases}
C_1(T_{p_0,p_3}) & \text{ if } (p_0-p_3)^2 \geq (2p_0+2p_3-1)^2 \\
C_2(T_{p_0,p_3}) &\text{else}
\end{cases}
\end{equation*}
where
\[C_1(T_{p_0,p_3})=1+\frac{1-p_0+p_3}{2}\log(\frac{1-p_0+p_3}{2}) + \frac{1+p_0-p_3}{2}\log(\frac{1+p_0-p_3}{2})\]
\[C_2(T_{p_0,p_3})=1+(p_0+p_3)\log(p_0+p_3) + (1-p_0-p_3)\log(1-p_0-p_3) .\]

There are two distinct parameter regions defined by the case distinction, which is illustrated in Figure~\ref{fig-map-of-cap} in light and dark grey. We will refer to the region where $C_1$ applies as Case 1, and the region where $C_2$ applies as Case 2.

The border between the two regions is parametrized by the two solutions of the quadratic equation $(p_0-p_3)^2 = (2p_0+2p_3-1)^2$, given by the two lines $p_3=\frac{1}{3}-\frac{1}{3}p_0$ and $p_3=1-3p_0$.

Note that $p_3=\frac{1}{3}-\frac{1}{3}p_0$ means that $p_0=1-3p_3$, and $p_1=p_2=p_3$. This channel would thus correspond to a Pauli channel where three of the four parameters coincide, and only $p_0$ differs, which is exactly the type of channel that appears in Theorem~\ref{thm-pauli1}.
Note that $p_3=1-3p_0$ means that $p_1=p_2=p_0$. This channel would thus correspond to a Pauli channel where three of the four parameters coincide, and only $p_3$ differs.
For the other solution, $p_3=1-3p_0$, we have $p_1=p_2=p_0$. Again, three out of the four parameters coincide, and this channel is related to the channel from Theorem~\ref{thm-pauli1} by unitary transformations. Then, because we will show that the quotient is maximal along these two lines, we can upper bound the quotient for all parameter pairs in terms of the bound in Theorem~\ref{thm-pauli1}.

\begin{figure}[h!]
    \centering
    \includegraphics[width=8cm]{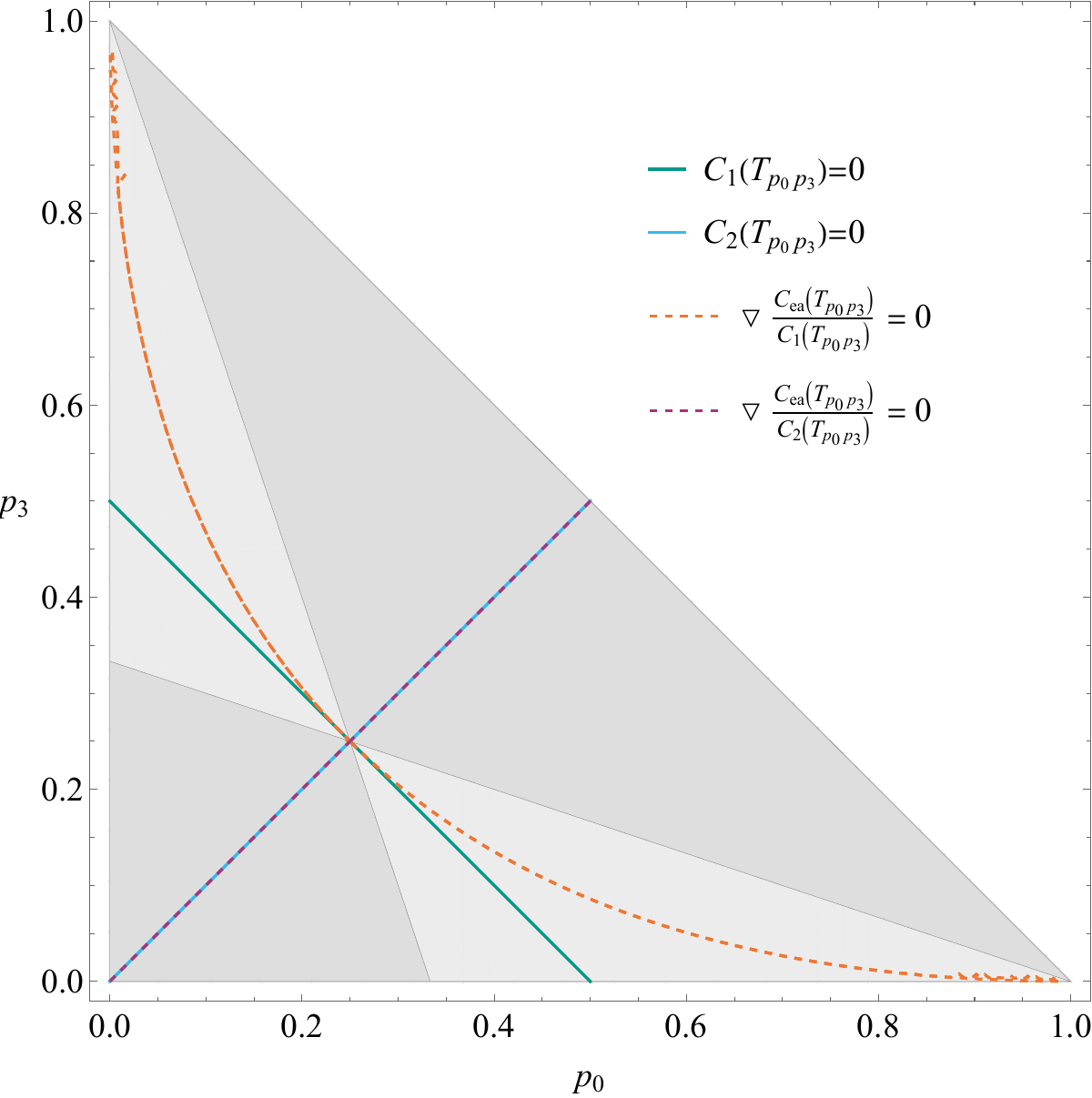}
    \caption{Plot of the parameter region for $p_0$ and $p_3$. The light gray region is the region where case 2 applies. The dark gray region is the region where case 1 applies.}
    \label{fig-map-of-cap}
\end{figure}

\textsc{Case 1:} The quotient is extremal when $\frac{\dd}{\dd{p_0}} \Big(\frac{C^{ea}(T_{p_0,p_3})}{C_1(T_{p_0,p_3})}\Big)=0=\frac{\dd}{\dd{p_3}} \Big(\frac{C^{ea}(T_{p_0,p_3})}{C_1(T_{p_0,p_3})}\Big)$. These conditions turn out to be equivalent. The points where they are equal to zero are described by the solutions to this equation:
\begin{equation*}
    \begin{split}
       &\frac{\log(1+p_0-p_3) \Big(- p_0\log(\frac{(1-p_0-p_3)^2}{4p_0p_3})+\log(4p_3) \Big)}{((1+p_0-p_3)\log(1+p_0-p_3)+(1-p_0+p_3)\log(1-p_0+p_3))^2}\\&- \frac{
       \log(1-p_0+p_3)  \Big( p_0\log(4p_0) +(1-p_0)\log(\frac{(1-p_0-p_3)^2}{p_3}) \Big) }{((1+p_0-p_3)\log(1+p_0-p_3)+(1-p_0+p_3)\log(1-p_0+p_3))^2} \\&=0
.    \end{split}
\end{equation*}

This can be solved numerically, and the solutions are plotted in Figure~\ref{fig-map-of-cap}. Pointwise numerical computation of the quotient along this line shows that it never exceeds 2.1, and pointwise numerical computation of the Hessian shows that the points along this line are minima or saddle points.
%Unfortunately, this is the case along an ugly line -
%Solving when these equations are equal zero numerically: it follows a curve through the domain that looks sort of like a circle (numerical fit?). Nonetheless, we can calculate the quotient for each point on this line with numerical fit, and the quotient seems to be approximately 2.00211 at most. Possibly even 2 with better numerical accuracy.

%In fact, numerically calculating the Hessian confirms that these points are minima or saddle points, not maxima.

Furthemore, we can numerically determine the tangential curve of this curve at $p_0=p_3=\frac{1}{4}$, obtaining $p_3=1/2-p_0$ (which is exactly the curve where $C_2(T_{p_0,p_3})=0$), and we can prove that it is bounded for a channel sequence along this line:
\begin{equation*}\begin{split}
 & \lim_{p_0\rightarrow 1/4} \frac{C^{ea}(T_{p_0,p_3})}{C_1(T_{p_0,p_3})}\Big|_{p_3=1/2-p_0} \\& = \lim_{p_0\rightarrow 1/4}   \frac{1+p_0 \log(p_0)+ (\frac{1}{2}-p_0)\log (\frac{1}{2}-p_0) }{1+  (\frac{3}{4}-p_0)\log (\frac{3}{4}-p_0)+ (\frac{1}{4}+p_0)\log (\frac{1}{4}+p_0)}\\
& =\lim_{p_0\rightarrow 1/4}   \frac{ \log(p_0)+p_0 \frac{1}{p_0} - (\frac{1}{2}-p_0) \frac{1}{(\frac{1}{2}-p_0)} 
-\log (\frac{1}{2}-p_0)}{   -(\frac{3}{4}-p_0) \frac{1}{(\frac{3}{4}-p_0)} - \log(\frac{3}{4}-p_0) + (\frac{1}{4}+p_0) \frac{1}{(\frac{1}{4}+p_0)} + \log(\frac{1}{4}+p_0)}
\\& =\lim_{p_0\rightarrow 1/4}   \frac{ \log(p_0) 
-\log (\frac{1}{2}-p_0)}{  - \log(\frac{3}{4}-p_0)  + \log(\frac{1}{4}+p_0)}
\\&  =\lim_{p_0\rightarrow 1/4} \frac{ \frac{1}{p_0}
+ \frac{1}{\frac{1}{2}-p_0} }{  \frac{1}{\frac{3}{4}-p_0 } + \frac{1}{ \frac{1}{4}+p_0)}}
=\lim_{p_0\rightarrow 1/4} \frac{ 4+4 }{  2+2 } =2.\end{split}
\end{equation*}

The maximum of the quotient is reached on the edge of the domain, which runs along the line which parametrizes a Pauli channel with three equal coefficients.

\textsc{Case 2:}  The quotient is extremal when $\frac{\dd}{\dd {p_0}} \Big(\frac{C^{ea}(T_{p_0,p_3})}{C_2(T_{p_0,p_3})} \Big)=0=\frac{\dd}{\dd {p_3}} \Big(\frac{C^{ea}(T_{p_0,p_3})}{C_2(T_{p_0,p_3})}\Big)$.
%$\frac{\partial}{\partial_{p_0}} \Big(\frac{C^{ea}(T_{p_0,p_3})}{C_2(T_{p_0,p_3})}\Big)=0$ and $\frac{\partial}{\partial_{p_3}} \Big(\frac{C^{ea}(T_{p_0,p_3})}{C_2(T_{p_0,p_3})}\Big)=0$. %Solving when the partial derivatives are zero in the second case:
%\begin{equation*}
   % \begin{split}
      % & \frac{\partial}{\partial_{p_0}} (\frac{C^{ea}(T_{p_0,p_3})}{C_2(T_{p_0,p_3})}) \\&= \frac{(2+p_3\log(p_3))\log(\frac{1-p_0-p_3}{p_0+p_3}) -(\log(p_0+p_3)+1)(\log(1-p_0-p3)-1)+
   %    \log(p_0)(\log(2(1-p_0-p_3))+p_3 \log(\frac{p_0+p_4}{1-p_0-p_3}) )
      % }{(\log(2(1-p_0-p_3)) +(p_0+p_3)\log(p_0+p_3)-(p_0+p_3)\log(1-p_0-p_3) )^2}
    %\end{split}
%\end{equation*}

%\begin{equation*}
   % \begin{split}
     %  & \frac{\partial}{\partial_{p_3}} (\frac{C^{ea}(T_{p_0,p_3})}{C_2(T_{p_0,p_3})}) \\&= \frac{\log(\frac{p_3}{(p_0+p_3)^2})+\log(1-p_0-p_3) (p_0 \log(\frac{p_0}{p_3})+\log(4p_3)) +p_0\log(p_0+p_3)\log(\frac{p_3}{p_0}) - (\log(1-p_0-p_3)-1)(1+\log(p_0+p_3))
    %   }{(\log(2(1-p_0-p_3)) +(p_0+p_3)\log(p_0+p_3)-(p_0+p_3)\log(1-p_0-p_3) )^2}
   % \end{split}
%\end{equation*}

Close inspection reveals that the quotients and thereby also their partial derivatives are the same under exchanging $p_0$ and $p_3$, which means they are equal for $p_0=p_3$ (which is the same line along which $C_1(T_{p_0,p_3})$ is zero). This line is sketched in Figure~\ref{fig-map-of-cap} in purple.

Along this line, we have 
\begin{equation*}
    \begin{split}
C^{ea}(T_{p_0,p_3}) \Big|_{p_3=p_0}& = 2+2p_0 \log(p_0)+ (1-2p_0)\log (\frac{1}{2}-p_0) \\&=1+  2p_0 \log (2p_0)+ (1-2p_0)\log (1-2p_0)\\&=C_2 (T_{p_0,p_3}) \Big|_{p_3=p_0}
    \end{split}
\end{equation*}
and the quotient is $1$. This line corresponds to a minimum. %, which can also be seen when evaluating the quotient numerically, or calculating the Hessian pointwise.
The quotient reaches its maximum on the edge of the domain, which is along the line where the channel reduces to a Pauli channel with three equal coefficients.
\end{proof}

\begin{figure}
\centering
\begin{subfigure}{.7\textwidth}
  \centering
  \includegraphics[width=\linewidth]{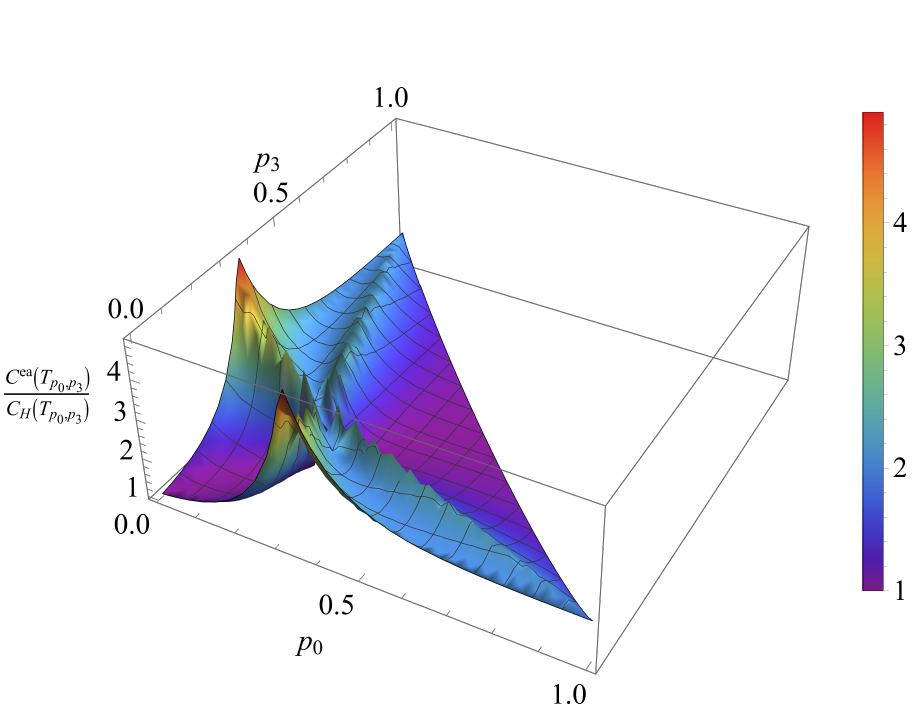}
  \caption{Side view.}
  \label{fig-ampdamp-cap}
\end{subfigure}%
\hspace{1cm}
\begin{subfigure}{.7\textwidth}
  \centering
  \includegraphics[width=\linewidth]{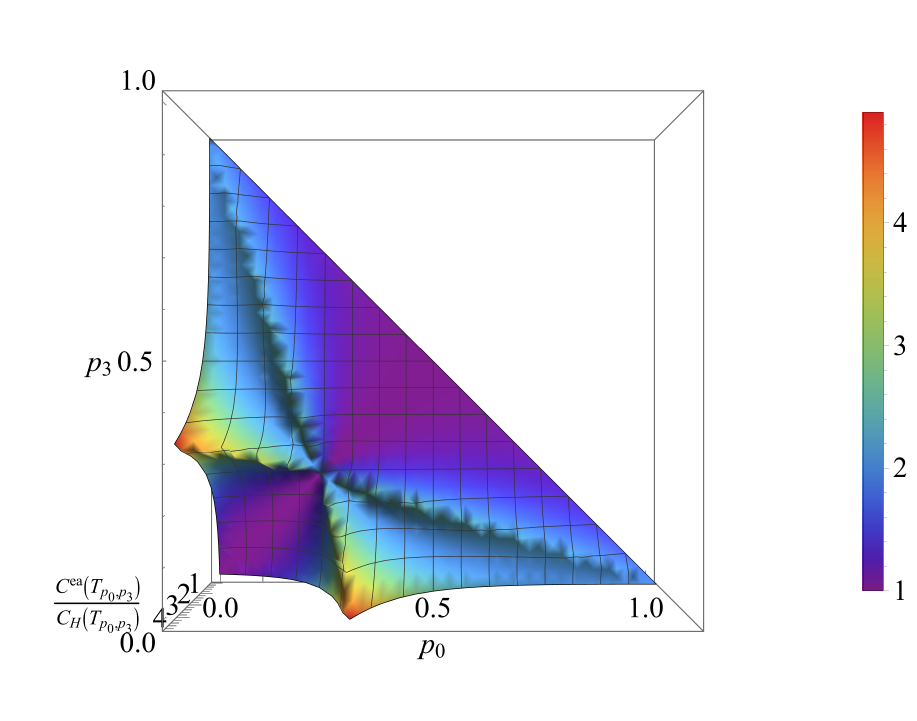}
  \caption{Top view.}
  \label{fig-ampdamp-center}
\end{subfigure}
\caption{For pairs of values $p_0$ and $p_3$, we plot the quotient of $C^{ea}(T_{p_0,p_3})$ and $C_H(T_{p_0,p_3})$. The quotient is maximal on the border between the domains highlighted in Figure~\ref{fig-map-of-cap}.}
\label{fig:test}
\end{figure}

\begin{theorem}
 \label{thm-pauli3}
  Let $T_{p_0,p_1,p_2,p_3}:\mathcal{M}_2\rightarrow \mathcal{M}_2$ denote a Pauli channel. For any $0\leq p_0+p_1+p_2+p_3\leq 1$, we have that $\frac{C^{ea}(T_{p_0,p_1,p_2,p_3})}{C_H(T_{p_0,p_1,p_2,p_3})}$ is maximal if the channel is covariant, i.e. $p_i=p_j$ for some $i\neq j$.
  % The entanglement-assistance factor for Pauli channels is bounded by the one for for covariant Pauli channels.
   % \[\eta^{PAU}\leq \eta^{COV PAU}\]
\end{theorem}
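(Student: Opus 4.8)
The plan is to exploit that both capacities are \emph{symmetric} functions of the four parameters and to reduce the two-dimensional optimization to a one-parameter family. First I would record the two closed forms underlying the argument. Applying $\id\otimes T_{p_0,p_1,p_2,p_3}$ to a maximally entangled state produces a Bell-diagonal state with eigenvalues $(p_0,p_1,p_2,p_3)$ and maximally mixed marginals, so by the same covariance reasoning used for Theorem~\ref{thm-pauli2} (the Pauli group makes the maximally entangled input optimal) one obtains
\[ C^{ea}(T_{p_0,p_1,p_2,p_3}) = 2 - H(p), \qquad H(p):=-\textstyle\sum_{i} p_i\log p_i . \]
On the other hand, writing the Bloch contraction factors
\[ \lambda_1 = p_0+p_1-p_2-p_3,\quad \lambda_2=p_0-p_1+p_2-p_3,\quad \lambda_3=p_0-p_1-p_2+p_3, \]
additivity of the Holevo capacity for unital qubit channels \cite{King02} yields the single-letter form $C_H(T_{p_0,p_1,p_2,p_3}) = 1 - h\big(\tfrac{1+\lambda_{\max}}{2}\big)$ with $\lambda_{\max}=\max_i|\lambda_i|$, consistent with the covariant formula $C_1$ of Theorem~\ref{thm-pauli2}. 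Crucially, both expressions are invariant under \emph{every} permutation of $(p_0,p_1,p_2,p_3)$ (the $|\lambda_i|$ are merely permuted), which legitimizes any relabelling below.

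Since $C_H$ depends on the parameters only through $\lambda_{\max}$, I would fix $C_H$ and maximize $C^{ea}$. By symmetry assume $\lambda_{\max}=|\lambda_3|$ and, after a further relabelling, that $s:=p_0+p_3\ge t:=p_1+p_2$, so that $\lambda_3=s-t\ge 0$ and $C_H$ is determined by $s$ alone. A short computation gives the splitting
\[ H(p) = h(s) + s\,h\big(\tfrac{p_0}{s}\big) + t\,h\big(\tfrac{p_1}{t}\big), \]
so for fixed $s$ maximizing $C^{ea}$ is the same as minimizing $s\,h(p_0/s)+t\,h(p_1/t)$. Reparametrizing by $u=p_0-p_3\in[-s,s]$ and $v=p_1-p_2\in[-t,t]$, this objective is a sum of a concave function of $u$ and a concave function of $v$, while the constraints $p_i\ge 0$ together with the requirement that $|\lambda_3|$ remain maximal, namely $|\lambda_1|=|u+v|\le s-t$ and $|\lambda_2|=|u-v|\le s-t$, carve out a convex polytope in the $(u,v)$-plane.

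A concave function attains its minimum over a convex polytope at an extreme point, so the optimizer sits at a vertex. At any vertex at least two defining constraints are tight, and I would verify case by case that this always forces two of the four parameters to coincide: $|u\pm v|=s-t$ active together give $u=s-t,\,v=0$, i.e.\ $p_1=p_2$; a facet $|u+v|=s-t$ meeting $v=t$ gives $p_2=0$ with $p_1=p_3=t$; and a corner $u=s,\,v=-t$ gives $p_1=p_3=0$. In every case the minimizer is covariant. Hence for each fixed $s$ the quotient $C^{ea}/C_H$ is maximized on the covariant locus, and taking the supremum over $s$ (the covariant locus being compact, away from the fully depolarizing channel where $C_H=C^{ea}=0$) shows that the global maximum is attained when $p_i=p_j$ for some $i\ne j$. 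This is precisely the reduction feeding Theorem~\ref{thm-pauli2}.

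The main obstacle will be the bookkeeping in the vertex enumeration: when $s>2/3$ the natural diamond $\{|u|+|v|\le s-t\}$ protrudes beyond the box $[-s,s]\times[-t,t]$ and is clipped, producing additional vertices that must each be checked to be covariant (as above they are), and the degenerate cases $s=t$ and $t=0$ must be handled separately, along with confirming that concavity survives the affine reparametrization so the extreme-point principle applies. A secondary point requiring care is the appeal to \cite{King02} to obtain the single-letter form of $C_H$ for \emph{non-}covariant Pauli channels, since it is this that makes the symmetric $\lambda_{\max}$-formula, and hence the whole fixed-$C_H$ strategy, legitimate.
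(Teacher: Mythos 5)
Your proposal is correct, and it takes a genuinely different route from the one in the thesis. Both arguments start from the same two closed forms, $C^{ea}=2-H(p)$ and $C_H$ a function of $\lambda_{\max}=\max_i|\lambda_i|$ alone, and both exploit the fact that $C_H$ depends on fewer parameters than $C^{ea}$. The thesis then argues by stationarity: it differentiates the quotient along the one direction in which $C_H$ is constant (varying $p_1,p_2$ at fixed $p_1+p_2$), finds that the partial derivative vanishes only at $p_1=p_2$, and concludes that interior critical points are covariant. You instead foliate the simplex by level sets of $C_H$ (fixing $s=p_0+p_3$ and imposing $|\lambda_1|,|\lambda_2|\le|\lambda_3|$), use the chain rule $H(p)=h(s)+s\,h(p_0/s)+t\,h(p_1/t)$ to write the objective as a concave function of $(u,v)=(p_0-p_3,\,p_1-p_2)$, and invoke the extreme-point principle for concave minimization over the polytope cut out by positivity and $|u\pm v|\le s-t$, checking that every vertex (including the clipped ones for $s>2/3$) has two coinciding parameters. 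Your version buys something real: along the line the thesis differentiates, $C^{ea}$ is concave with its \emph{minimum} at $p_1=p_2$, so the maximum of the quotient on that line sits at the endpoints of the feasible interval, i.e.\ on the boundary of the simplex, which is exactly what your vertex enumeration covers and what the first-order argument leaves unexamined; your setup also makes explicit why the side constraint that $|\lambda_3|$ remain maximal is what forces boundary vertices such as $p_2=0$ to come with a second coincidence $p_1=p_3=t$. One small correction: the appeal to King's additivity is not needed at this stage --- the closed form of the Holevo capacity of a Pauli channel is already a single-letter computation (two antipodal pure states along the most-contracted Bloch axis), and additivity only enters later when the bound on $C^{ea}/C_H$ is converted into a bound on $C^{ea}/C$. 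The degenerate point $s=t=1/2$ with $\lambda_{\max}=0$ is the completely depolarizing channel, where the quotient is $0/0$; you correctly set it aside.
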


\begin{proof}
   Let $T_{p_0,p_1,p_2,p_3}:\mathcal{M}_2\rightarrow \mathcal{M}_2$ be a Pauli channel, acting on a qubit state $\rho$ as
 \[T_{p_0,p_1,p_2,p_3} (\rho)= p_0 \rho+ p_1 X \rho X + p_2 Y \rho Y+p_3 Z\rho Z.\] 

Without loss of generality, let $p_0,p_2$ and $p_3$ be three independent parameters and let $p_1=1-p_0-p_2-p_3$. Note that since $\sum_k p_k =1$ (which is a condition for the channel to be trace preserving), the channel $T_{p_0,p_1,p_2,p_3}$ can be described by these three independent parameters.
 
Recall that this channel's entanglement-assisted capacity is known \cite{CCC21} to be
\[C^{ea}(T_{p_0,p_1,p_2,p_3})=2+p_0\log(p_0)+p_1\log(p_1)+p_2\log(p_2)+p_3\log(p_3)\]
and that its Holevo-capacity is given by \cite{Siudzinska20} as
\[C_H(T_{p_0,p_1,p_2,p_3})=\frac{1+\lambda^*}{2}\log(1+\lambda^*)+\frac{1-\lambda^*}{2}\log(1-\lambda^*)\]
with $\lambda^*=\max\{| \lambda_{min}|
, \lambda_{max}\}= \max \{ |p_0-p_1-p_2-p_3+p_{min} |,p_0-p_1-p_2-p_3+p_{max}\}$.

For a given channel with parameters $p_k$, $\lambda^*$ will only depend on two out of the three parameters. If we are in the case where $\lambda^*=\lambda_{max}=p_0-p_1-p_2-p_3+p_{max}=2p_0-1+p_{max}$, it depends only on $p_0$ and $p_{max}$, and if $\lambda^*=\lambda_{min}$, it depends only on $p_0$ and $p_{min}$.

Let us first assume that we are in the first case, $\lambda^*=\lambda_{max}$, and that $p_{max}=p_3$. Then, $p_2$ does not appear in the expression for the Holevo capacity, which is given by $C_H(T_{p_0,p_1,p_2,p_3})=1+(1-p_0-p_3) \log(2(1-p_0-p_3))+(p_0+p_3)\log(2(p_0+p_3))$. Then, when computing the gradient for the entanglement-assistance factor $\frac{C^{ea}(T_{p_0,p_1,p_2,p_3})}{C_H(T_{p_0,p_1,p_2,p_3})}$, the partial derivative with regard to $p_2$ is given by
\begin{equation*}
    \begin{split}
      &  \frac{\dd}{\dd{p_2}} \Big(\frac{C^{ea}(T_{p_0,p_1,p_2,p_3})}{C_H(T_{p_0,p_1,p_2,p_3})} \Big) \\&
      = \frac{-\log(1-p_0-p_2-p_3)+\log(p_2)}{1+(1-p_0-p_3) \log(2(1-p_0-p_3))+(p_0+p_3)\log(2(p_0+p_3))}.
    \end{split}
\end{equation*}

 This derivative can only be zero as a function of $p_2$ if $p_2=\frac{1}{2}(1-p_0-p_3)$. Thereby, the quotient can only be extremal in the case where $p_2=\frac{1}{2}(1-p_0-p_3)$, which implies $p_1=\frac{1}{2}(1-p_0-p_3)=p_2$, corresponding to a covariant Pauli channel. In this case, this corresponds to a Pauli channel that is covariant with respect to rotations around the $Z$-axis.
 
Note that our assumption on $p_3$ being the largest parameter is without loss of generality because the entanglement-assisted capacity is symmetric in the parameters. If $p_2$ was the largest eigenvalue instead, the Holevo capacity would not depend on $p_3$, and the above argument would yield that the quotient is extremal for $p_3=p_1=\frac{1}{2}(1-p_0-p_2)$, a Pauli channel which is covariant with respect to rotations around the $Y$-axis. The quotient of the capacities does not depend on this axis, as the different covariant Pauli channels are related by conjugation of Pauli matrices, which are unitary, and the capacities are unchanged under unitary transformations.

If the second case applies, i.e. $\lambda^*=|\lambda_{min}|$, the Holevo capacity will still only depend on two out of the three parameters. Assuming now that $p_{min}=p_3$ would lead to the same argument, i.e. that the channel is extremal for a parameter family that describes a covariant Pauli channel.
 \end{proof}

\chapter{Conclusion and open problems}

Quantum effects fascinate us and stump us. They give rise to exciting fundamental notions of communication and information, but they also introduce noise and no-go theorems that require highly specialized techniques. In this thesis, we hope to shed light on this dichotomy in the context of various topics in quantum Shannon theory and quantum fault-tolerance.

As pointed out in \cite{CMH20}, these observations raise a question of how well we can leverage quantum effects for communication in the presence of the noise that typically affects a quantum computer. In Chapter~\ref{chapter-fteacap}, we therefore study entanglement-assisted communication in the presence of noise on the encoding and decoding operation. We construct a coding scheme that can transmit information at almost the same rate as long as the gate error probability is below a threshold.

In Chapter~\ref{chapter-aws}, we study the noise on trapped ion quantum devices, finding that techniques from fault-tolerance can protect some computations, but that current devices cannot be considered fault-tolerant (yet?), in particular for near-term demonstrations of quantum communication. These results highlight the importance of studying the noise in current quantum circuits and coming up with realistic noise models and theoretical as well as experimental mitigation strategies.

%These results highlight the importance of studying the noise in current quantum circuits and improving %that it is integral to study noise models and 

In Chapter~\ref{chapter-divrad}, we take a step towards solving an open problem in quantum Shannon theory concerning the ratio of the entanglement-assisted capacity and the classical capacity of the same quantum channel. In particular, we prove that this quotient is bounded by a finite expression for many quantum channels. This implies that there is a fundamental limit on how much the effect of quantum entanglement can improve communication for these channels.

% This shows a fundamental limit on effect of quantum entanglement as a resource for communication. % It is left as an open problem whether our techniques can be extended to all finite-dimensional quantum channels.

%\addcontentsline{toc}{section}{References}

\bibliographystyle{marcotomPB}
\bibliography{phdbib}

\newpage\chapter*{Acknowledgements}
\addcontentsline{toc}{chapter}{Acknowledgements}
\thispagestyle{empty}

The final part of this thesis concerns itself with how to most efficiently and effectively communicate how incredibly grateful I am for all the support, advice and opportunities that have been offered to me along the way.

First and foremost, thanks to my supervisor Matthias Christandl for giving me the opportunity to pursue a PhD in my favorite field, and for his enthusiasm and insights.

Many thanks to my scientific collaborator Alexander Müller-Hermes for many discussions, riddles and mentorship. Thanks also to Christoph Hirche, Paweł Mazurek and Marco Túlio Quintino for many interesting discussions.

I would also like to thank all current and former members of QMATH for many, many, many great coffee breaks. Thanks also to my fellow men and women without qualities.

Thank you to Marco Tomamichel for hosting me at CQT at NUS and a great collaboration. Thanks to the members of the Tomamichel group, who showed me many sides of Singapore and science, and how fun it is to write on the wall. %and all the places for writing on all the walls.

Thank you to the commitee members Laura Mancinska, Omar Fawzi and Mario Berta for taking the time to read this thesis. Thanks also to the invaluable effort of the proofreaders Ashutosh Goswami, Freek Witteveen and Wolfgang Belzig for spotting many errors and giving many great suggestions! Mange tak til Sabiha og Emil for at oversætte mit resumé.

Thank you to Carsten Speckmann for understanding me better than any other human. Thank you to Mama, Papa, Josi, Oma Irene, Opa Herbert, Oma Gertrud and Opa Manfred for a whole lifetime of unconditional support. Thank you to my lovely friends all over the world.

\vspace{1cm}
This thesis was supported by the European Union’s Horizon 2020 research
and innovation programme under the Marie Skłodowska-Curie TALENT Doctoral fellowship (grant no. 801199). %I further acknowledge financial support from the European Research Council (ERC Grant Agreement no. 81876), VILLUM FONDEN via the QMATH Centre of Excellence (grant no. 10059) and the Novo Nordisk Foundation (grant NNF20OC0059939 ‘Quantum for Life’). 

%“I try all things, I achieve what I can.”[Herman Melville, Moby Dick]

\newpage
\null

\thispagestyle{empty}
%Also thanks to many people who are alive or long-dead. For example, Herman Melville
%As for me, I am tormented with an everlasting itch for things remote. 
%I promise nothing complete; because any human thing supposed to be complete, must for that very reason infallibly be faulty.
%Nie ist das, was man tut, entscheidend, sondern immer erst das, was man danach tut!
\end{document}